\documentclass[12pt]{article}

\usepackage{times}
\usepackage{fullpage}

\usepackage{amsfonts}
\usepackage{amssymb}
\usepackage{amsmath}
\usepackage{array}
\usepackage{needspace}
\usepackage{tcolorbox}
\usepackage{tikz}
\usepackage{latexsym}
\usepackage{url}
\usepackage[breaklinks,hidelinks,linktoc=all]{hyperref}

\newcommand{\ADVpm}{\mathrm{ADV}^\pm}

\newcommand{\identity}{\mathbf{1}}
\newcommand{\rk}{\mathrm{rk}}

\newcommand{\poly}{\mathrm{poly}}

\newcommand{\Ess}{\operatorname{Ess}}
\newcommand{\one}[1]{\mathbf{1}[#1]}
\newcommand{\cB}{\mathsf{B}}
\newcommand{\cR}{\mathsf{R}}

\long\def\rem#1{}

\newcommand{\ket}[1]{|#1\rangle}

\newcommand{\RR}{\mathcal{R}}
\newcommand{\LL}{\mathcal{L}}
\newcommand{\JJ}{\mathcal{J}}
\newcommand{\HH}{\mathcal{H}}
\newcommand{\dJ}{D_{\mathcal{J}}}
\newcommand{\dR}{D_{\mathcal{R}}}

\newtheorem{definition}{Definition}

\newtheorem{theorem}{Theorem}
\newtheorem{lemma}[theorem]{Lemma}
\newtheorem{proposition}[theorem]{Proposition}
\newtheorem{corollary}[theorem]{Corollary}

\newtheorem{remark}[theorem]{Remark}

\newenvironment{introexample}[1]{%
	\refstepcounter{theorem}%
	\par\medskip\noindent
	\textbf{Example~\thetheorem\ (#1).}\enspace\ignorespaces
}{\par\medskip}

\newcounter{introresult}
\newenvironment{introresult}{%
	\begin{tcolorbox}[
		colback=black!10,colframe=black!10,boxrule=0pt,sharp corners,
		boxsep=0pt,left=8pt,right=8pt,top=6pt,bottom=6pt,
		before skip=10pt,after skip=10pt]
	\refstepcounter{introresult}%
	\noindent\textbf{Result~\theintroresult.}\enspace\ignorespaces
}{\end{tcolorbox}}

\newcommand{\thmref}[1]{\hyperref[#1]{{Theorem~\ref*{#1}}}}
\newcommand{\lemref}[1]{\hyperref[#1]{{Lemma~\ref*{#1}}}}
\newcommand{\corref}[1]{\hyperref[#1]{{Corollary~\ref*{#1}}}}
\newcommand{\eqnref}[1]{\hyperref[#1]{{Equation~(\ref*{#1})}}}
\newcommand{\claimref}[1]{\hyperref[#1]{{Claim~\ref*{#1}}}}
\newcommand{\remarkref}[1]{\hyperref[#1]{{Remark~\ref*{#1}}}}
\newcommand{\propref}[1]{\hyperref[#1]{{Proposition~\ref*{#1}}}}
\newcommand{\factref}[1]{\hyperref[#1]{{Fact~\ref*{#1}}}}
\newcommand{\defref}[1]{\hyperref[#1]{{Definition~\ref*{#1}}}}
\newcommand{\exampleref}[1]{\hyperref[#1]{{Example~\ref*{#1}}}}
\newcommand{\hypref}[1]{\hyperref[#1]{{Hypothesis~\ref*{#1}}}}
\newcommand{\secref}[1]{\hyperref[#1]{{Section~\ref*{#1}}}}
\newcommand{\chapref}[1]{\hyperref[#1]{{Chapter~\ref*{#1}}}}
\newcommand{\apref}[1]{\hyperref[#1]{{Appendix~\ref*{#1}}}}

\newcommand{\ignore}[1]{}

\newenvironment{proof}[1][Proof.]{
	\par
	\noindent \textbf{#1}
}{
	\unskip
	\nobreak\hfill\penalty50\hskip3pt\hbox{}\nobreak\hfill
	\hbox{$\Box$}\par\bigskip
}

\begin{document}
\title{The quantum query complexity of the semigroup product problem}
\author{
  Troy Lee\thanks{Centre for Quantum Software and Information, University of Technology Sydney. \url{troyjlee@gmail.com}} \and
  Miklos Santha\thanks{Quantinuum, Singapore. \url{miklos.santha@quantinuum.com}}
}
\date{}
\maketitle

\begin{abstract}
We study the quantum query complexity of computing a semigroup product
$x_1\cdots x_n$, when one query reveals one input element and the multiplication
table is given.  For a finite aperiodic semigroup of size $N-1$, the
argument of Aaronson, Grier, and Schaeffer~\cite{AGS19} gives an upper bound of
$\sqrt n\,(N\log(nN+2))^{O(N)}$ queries.

To obtain query bounds that reflect algebraic structure, we study
the \emph{product breadth} $\beta$: the smallest bound such that every
input word has a subsequence of at most $\beta$ letters with the same
product.  For commutative monoids this is
the small Davenport constant~\cite{GHK06,Wang14}.  We extend this
definition to ordered products in noncommutative monoids.

\begin{itemize}
\item For nontrivial finite commutative aperiodic monoids, the bounded-error
quantum query complexity is
$\Theta(\min\{n,\sqrt{n\beta}\})$, and is thus characterized by product breadth.
We further show that if such a monoid $M$ has aperiodicity index $k$
(the least positive integer satisfying $x^k=x^{k+1}$ for every $x\in M$),
then $\beta=O(k\log(|M|+1)\log\log(|M|+2))$.
Products of capped counters nearly match this bound.
\item For monoids with a stable partial order in which the identity is
the minimum element, we prove that the bounded-error quantum query complexity is at most
$\sqrt{n+1}((\beta+2)\log(n+2))^{O(\log(\beta+2))}$.
Applications include quantum algorithms using $\widetilde O(\sqrt n)$
queries for the best time to buy and sell stock problem and for products
of unitriangular tropical matrices of fixed dimension.
\end{itemize}

For arbitrary finite aperiodic semigroups of order $N-1$, we improve
the bound of Aaronson, Grier, and Schaeffer, obtaining a bounded-error
quantum query complexity of at most
\[
\min\left\{n,\sqrt n\, \log^{O((N\log(N+2))^{1/3})}(n+2)\right\}.
\]
The dependence on semigroup size is nearly tight: the bounded-depth Dyck
lower bound of Ambainis et al.~\cite{ABIKPSSV20} yields aperiodic monoids
requiring $\sqrt n\,2^{\Omega(N^{1/3})}$ quantum queries in the relevant parameter range.
\end{abstract}

\clearpage
\begingroup
\fontsize{10}{10}\selectfont
\tableofcontents
\endgroup

\clearpage
\section{Introduction}\label{sec:overview}
Let $\mathbf{S}=(S,\otimes)$ be a semigroup.  In the \emph{semigroup product problem},
we are given query access to $x_1,\ldots,x_n\in S$ and asked to compute
$x_1\otimes\cdots\otimes x_n$.  The multiplication table of $\mathbf{S}$ is
assumed to be given, and a query reveals an entire element $x_i$ (which could
itself be a matrix, for example).  We study the bounded-error quantum query
complexity of this problem, in particular how it depends on both $n$ and the
algebraic structure of $\mathbf{S}$.

Our first motivation comes from divide-and-conquer algorithms.  For example,
mergesort uses \emph{input-independent splitting}: it divides the input into two
blocks without inspecting their values, then sorts the blocks recursively
and merges the results.  Quicksort uses \emph{input-dependent splitting},
since the partition into subproblems depends on the input values and the
choice of pivot.  Semigroup products provide an algebraic setting for
studying divide-and-conquer with input-independent splitting:
associativity allows us to split the input at any boundary and combine the
products of the two parts.

More generally, on input $x = x_1, \ldots, x_n$, it is common for an input-independent
divide-and-conquer algorithm to represent a contiguous block $w = x_i, x_{i+1}, \ldots, x_j$ by a summary
$\sigma(w)$, and to combine summaries of adjacent blocks by an associative
operation satisfying $\sigma(uv)=\sigma(u)\otimes\sigma(v)$.  Computing the
summary of the whole input is then a semigroup product problem.  For
example, to find the maximum of an array, the summary of a block is its
maximum, and the operation combining two summaries is again maximum.

A richer example is the \emph{best time to buy and sell stock} (BTBS)
problem.  Given integer prices $x_1,\ldots,x_n$, we seek the largest profit
from buying on one day and selling on a later day:
\[
  \mathrm{BTBS}(x)=\max_{i<j}(x_j-x_i).
\]
We can summarize each nonempty interval by its minimum price, maximum price, and
best profit.  If adjacent intervals have summaries $(a,b,c)$ and $(d,e,f)$,
their combined summary is
\[
  (a,b,c)\otimes(d,e,f)
  =\bigl(\min(a,d),\max(b,e),\max(c,f,e-a)\bigr).
\]
The third coordinate accounts for the best profit within either interval and
for buying at the minimum of the first interval and selling at the maximum
of the second.  A singleton has no feasible transaction, so its summary is
$(x_i,x_i,-\infty)$, using the convention $\max\varnothing=-\infty$.
This $\otimes$ operation is associative, and multiplying the singleton summaries
gives $\mathrm{BTBS}(x)$ in the third coordinate.
Later we see this operation can be realized as multiplication of $3$-by-$3$
matrices over the tropical semiring $(\max,+)$.

A second motivation for the semigroup product problem comes from regular languages.  A language
$L\subseteq\Sigma^*$ is regular if and only if it is recognized by a finite monoid, meaning there are a finite monoid
$(M,\otimes)$, a subset $F\subseteq M$, and a map $\phi:\Sigma\to M$ such that
\[
  x_1\cdots x_n\in L
  \quad\Longleftrightarrow\quad
  \phi(x_1)\otimes\cdots\otimes\phi(x_n)\in F.
\]
Here a monoid is a semigroup with an identity element, which is also the
product of the empty word.\footnote{Any semigroup $S$ can be made into a
monoid $S^1$ by adjoining an identity.  An upper bound for products in $S^1$
therefore also applies to products in $S$.}
Thus regular-language membership reduces to computing a finite monoid
product and testing whether the result belongs to $F$.

In this setting, Aaronson, Grier, and Schaeffer (AGS)~\cite{AGS19}
established a trichotomy theorem for quantum query complexity.  Informally,
every regular language has quantum query complexity either $O(1)$,
$\widetilde\Theta(\sqrt n)$, or $\Theta(n)$.\footnote{One precise
formulation takes the worst-case query complexity over input lengths at
most $n$.  For inputs of length exactly $n$, the statement must account for
periodic dependence on the length; see~\cite{AGS19}.  The $O(1)$ case
includes languages decidable without any queries.}
The central $\widetilde\Theta(\sqrt n)$ upper bound in their theorem concerns
star-free languages, languages which are recognized by \emph{aperiodic}
monoids.  A monoid $\mathbf M=(M,\otimes)$ is aperiodic if for every
element $a\in M$, there is a $k\geq1$ such that $a^k=a^{k+1}$.
The least $k$ that works for all elements simultaneously, when it exists,
is the \emph{aperiodicity index} $\iota(\mathbf M)$.  Every finite aperiodic
monoid has such an index.  For finite monoids, aperiodicity is equivalent
to containing no nontrivial subgroup.  AGS show the
quantum query upper bound on star-free languages by designing, for each fixed
finite aperiodic monoid $\mathbf{M}$ and target $m\in M$, an algorithm to test
whether $x_1\otimes\cdots\otimes x_n=m$ using
$\widetilde O_M(\sqrt n)$ queries.  Testing all possible targets, with
suitable error reduction, computes the product with the same asymptotic
bound when $|M|$ is constant.

As a direct consequence of the AGS upper bound for aperiodic monoids and elementary lower bounds,
one obtains the following trichotomy for the product problem in a fixed finite monoid
(\thmref{thm:fixed-monoid-trichotomy}):
\[
  Q_{1/3}(\operatorname{Prod}_{M,n})=
  \begin{cases}
    0, & |M|=1,\\[1mm]
    \widetilde\Theta_M(\sqrt n),
      & M\text{ is nontrivial and aperiodic},\\[1mm]
    \Theta_M(n), & M\text{ is nonaperiodic}.
  \end{cases}
\]
Here $Q_{1/3}$ denotes quantum query complexity with error probability
at most $1/3$ on every input, and
$\widetilde\Theta_M(\sqrt n)$ means an $\Omega(\sqrt n)$ lower bound and an
$O_M(\sqrt n\,\log^{O_M(1)}(n+2))$ upper bound.  The lower bounds have simple
explanations.  A nontrivial subgroup, with identity $e$ and an element $g$
of order $k>1$, allows products over $\{e,g\}$ to count modulo $k$, which
requires $\Omega(n)$ queries.  In a nontrivial aperiodic monoid, choose
$a\ne1$.  On inputs over $\{1,a\}$, the product is $1$ exactly when all
letters are $1$, giving the $\Omega(\sqrt n)$ lower bound for unstructured
search.

In this paper we therefore focus on the non-trivial aperiodic case.  
As in the AGS application the size of the monoid $M$ is constant, its influence
on the hidden constant and the power of $\log n$ is absorbed into the $\widetilde O(\cdot)$ notation.  
Tracking this dependence in their argument gives a bound of
the form $\sqrt n\,(|M|\log(n|M|+2))^{O(|M|)}$; we give a more precise
version in \thmref{thm:main-ags}.  Such a bound need not be sublinear
when $M$ grows with $n$.  Moreover, subexponential dependence on the monoid size is
unavoidable: the bounded-depth Dyck lower bounds of Ambainis et
al.~\cite{ABIKPSSV20} yield aperiodic families requiring
$\sqrt n\,2^{\Omega(|M|^{1/3})}$ queries in the parameter range described in
Section~\ref{sec:dyck-lb}.

On the other hand, we have examples like computing the maximum or
BTBS where the product problem can be solved over an \emph{infinite}
aperiodic monoid with only $O(\sqrt{n})$ or $O(\sqrt{n \log n})$ queries,
respectively.  This leads to our fundamental question:
\begin{quote}
  \emph{What structural properties of an aperiodic monoid determine the overhead beyond $\sqrt{n}$ in the quantum query
  complexity of its product problem?}
\end{quote}

We approach this question using a parameter called \emph{product breadth}.
Let $\mathbf{M} = (M, \otimes)$ be an aperiodic monoid and let $G\subseteq M$
be the set of allowed input elements.  For words $u,w\in G^*$, write
$u\preceq w$ if $u$ is a \emph{scattered subword}, or subsequence, of $w$.
Let $[w]$ denote the product of the letters of $w$, with the empty word
having product $1$.  A \emph{product core} of $w$ is a scattered subword
$u\preceq w$ with $[u]=[w]$.  We define the \emph{product breadth} by
\begin{equation}
  \beta_G(\mathbf{M})=\sup_{w\in G^*}
  \min\bigl\{|u|:u\preceq w,\ [u]=[w]\bigr\}.                    \label{eq:monoid-beta}
\end{equation}
Thus $\beta_G(\mathbf{M})$ is the smallest bound such that every word over $G$ has a
product core of at most that length; it may be infinite.  Write
$\beta(\mathbf M)=\beta_M(\mathbf M)$ when all monoid elements are allowed.

For a commutative monoid $M$ with the full alphabet $G=M$, product
breadth is exactly the \emph{small Davenport constant} $d(M)$ of
Geroldinger and Halter-Koch~\cite[Definition~2.8.12]{GHK06}; see also
Wang~\cite[Definition~C]{Wang14}.  With our empty-product convention,
for finite commutative monoids,
\[
  \beta(M)=d(M)=D(M)-1,
\]
where the \emph{large Davenport constant} $D(M)$ is the least integer
such that every word of that length has a proper product-preserving
subsequence~\cite[Definition~B and Proposition~D]{Wang14}.
Our definition also permits restricted input alphabets and retains the
original order of the letters when multiplication is noncommutative.
Our first main result characterizes quantum query complexity in terms
of this classical invariant for finite commutative aperiodic monoids.

For orientation, note that $\beta(\mathbf M)\geq\iota(\mathbf M)$ whenever
$\mathbf M$ is nontrivial and has an aperiodicity index.  Indeed, if
$k=\iota(\mathbf M)$, some $a\in M$ has distinct powers
$1,a,\ldots,a^k$, so the word of $k$ copies of $a$ has no shorter product
core.  The following examples illustrate product breadth; in particular,
the union example shows that this inequality can be strict.

\begin{introexample}{Maximum}\label{ex:max}
The monoid $\mathbf M=(\mathbb{N}_{\geq0},\max)$ is infinite, commutative,
and idempotent, with identity $0$.  Retaining a position attaining the
maximum preserves the product, so $\beta(\mathbf M)=1$.
Quantum maximum finding uses $\Theta(\sqrt n)$ queries, independently of
the magnitudes of the input values~\cite{DH96}.
This is $O(\sqrt{n\beta(\mathbf M)})$.
\end{introexample}

\begin{introexample}{Union}\label{ex:union}
Union is also commutative and idempotent.
For $(2^{[m]},\cup)$, one occurrence containing each atom preserves the
union, and the word of all singleton sets requires $m$ positions.
Thus $\beta=m=\log_2|M|$.  This problem is equivalent to finding all distinct letters
in the input, which has quantum query complexity $\Theta(\min\{n,\sqrt{nm}\})$.  Thus again we have an $O(\sqrt{\beta n})$ upper bound.
\end{introexample}

\begin{introexample}{Minimum spanning forests}\label{ex:spanning-forest}
Consider weighted edges on a fixed set of $v\geq2$ vertices, with a fixed
ordering to break weight ties.  Summarize an edge set by its minimum
spanning forest, and combine two summaries by taking the minimum spanning
forest of their union.  This operation is associative, commutative, and
idempotent, with the empty forest as identity.  The product of the
individual edges is their minimum spanning forest, so the product breadth
is at most $v-1$.  A spanning tree requires all of its edges, so
$\beta=v-1$.  For a graph with $m$ edges, D\"urr et al.~\cite{DHHM06}
give an $O(\sqrt{mv})$-query algorithm in the adjacency-array model
(random access to adjacency lists, with known vertex degrees).
The adjacency arrays have total size $n=2m$, so this is
$O(\sqrt{n\beta})$: the square root of input length times product breadth.
\end{introexample}

\begin{introexample}{Capped addition}\label{ex:capped-addition}
Let $\mathbf{M} = (\{0,\ldots,m\}, \oplus)$ where $a\oplus b=\min\{m,a+b\}$.  This is a commutative
monoid, but is not idempotent.  The aperiodicity index is $m$ as adding 1 to itself $i \le m$ times gives
distinct elements.  The product breadth is also $m$:
$m$ non-zero entries are necessary and sufficient in the worst case.  The product problem in this case is equivalent to
counting up to $m$ and has quantum query complexity $\Theta(\min\{n,\sqrt{mn}\})$, thus again an upper bound of $O(\sqrt{\beta n})$.
\end{introexample}

\begin{introexample}{Best time to buy and sell stock}\label{ex:btbs-breadth}
The BTBS monoid $\mathbf M$ introduced above is noncommutative and satisfies
$s^2=s^3$ for every $s\in M$, so it is aperiodic.  Take $G$ to be the
singleton price summaries.  Retaining a position attaining the minimum
price, one attaining the maximum price, and an optimal buy--sell pair
(when the input has at least two positions) preserves the full summary using
at most four positions.
This bound is attained by the price sequence $(10,2,5,0)$: its endpoints
are the unique extrema, and its middle pair gives the unique optimal
profit.  Thus $\beta_G(\mathbf M)=4$, even though prices can be arbitrary
integers.  Allcock et al.~\cite{ABBLS23} give an
$O(\sqrt{n\log n})$-query algorithm, independently of the price range.
\end{introexample}

\begin{introexample}{Bounded-depth Dyck}\label{ex:dyck-breadth}
Let $\mathbf M_k$ be the transition monoid of a counter on
$\{0,\ldots,k\}$, where $\mathtt{u}$ increments the counter,
$\mathtt{d}$ decrements it, and leaving the interval enters a dead state.
For $k\geq1$, this monoid is noncommutative and aperiodic.
Its product breadth is $\Theta(k)$, both for the parenthesis generators
$G=\{\mathtt{u},\mathtt{d}\}$ and for arbitrary monoid inputs
(\propref{prop:dyck-breadth} in \secref{sec:dyck-breadth}).  Nevertheless, the lower bound of
Ambainis et al.~\cite{ABIKPSSV20} gives
$\Omega(c^k\sqrt n)=\sqrt n\,2^{\Omega(\beta)}$ queries for a constant
$c>1$, even $n\geq2$, and $1\leq k\leq\log_2 n$, using only inputs over $G$.
Thus breadth alone does not ensure polynomial overhead beyond $\sqrt n$;
additional structural assumptions are needed.
\end{introexample}

The examples of maximum, union, and capped addition are all commutative
and aperiodic, with quantum query complexity
$\Theta(\min\{n,\sqrt{n\beta(\mathbf M)}\})$.
Our first result shows that this is true in general.

\begin{introresult}\label{res:commutative}
For every nontrivial finite commutative aperiodic monoid $\mathbf M$,
\[
  Q_{1/3}(\operatorname{Prod}_{M,n})
  =\Theta\!\left(\min\{n,\sqrt{n\beta(\mathbf M)}\}\right),
\]
with universal implicit constants.  Formal statement in
\thmref{thm:commutative-beta}.
\end{introresult}

We also bound breadth in terms of the size and aperiodicity index of a
finite commutative monoid.  Idempotence gives
$\beta(\mathbf M)\leq\log_2|M|$, while the identity $a^3=a^2$ for every
$a\in M$ gives $\beta(\mathbf M)\leq5\log_2|M|$.
More generally, for aperiodicity index $k=\iota(\mathbf M)$, we prove
\[
  \beta(\mathbf M)
  =O\!\left(k\log(|M|+1)\log\log(|M|+2)\right).
\]

The commutative theorem also gives an algorithm for finding a
minimum-weight basis among $n$ weighted input records from a known
rank-$r$ matroid using $O(\min\{n,\sqrt{nr}\})$ queries, where a query
reveals an element and its weight.  The reduction uses a commutative idempotent monoid of greedy
bases whose product breadth is at most $r$.  In particular, this recovers
the minimum spanning tree bound of D\"urr et al.~\cite{DHHM06} given in
\exampleref{ex:spanning-forest}.

The proofs of the commutative bounds appear in \secref{sec:lattice}; the
matroid construction and its applications are developed in
\secref{sec:matroid-bases}.

Our second result generalizes the quantum maximum-finding algorithm of
D\"urr and H\o yer~\cite{DH96} to noncommutative monoid products.
In the noncommutative case, \exampleref{ex:dyck-breadth} shows that the
overhead can be exponential in breadth, thus we need additional
assumptions to obtain efficient algorithms.  The D\"urr and H\o yer algorithm maintains a candidate maximum
and, in each successful round, uses quantum search to choose a uniformly
random input position whose value exceeds the current candidate, replacing
the candidate by that value.

To extend this idea to monoid products, we maintain the product of a
subsequence of the input as our candidate.  In each round, we sample
additional subsequences and merge their positions with those already
selected, obtaining a candidate at least as large as the
previous one.  The adversary method gives a quantum speedup for this
sampling procedure.  For the idea to work, we need that inserting additional
input elements leaves the candidate product unchanged or increases it.
This is guaranteed by a partial order
compatible with multiplication in which the identity is the minimum element.
The BTBS monoid satisfies this condition.  We now give the precise definition.

\begin{definition}[Stable order with minimum element $1$]\label{def:stable-least-order}
A partial order $\leq$ on a monoid $M$ is \emph{stable} if
\[
 a\leq b\quad\Longrightarrow\quad cad\leq cbd
 \qquad(a,b,c,d\in M).
\]
A monoid equipped with such an order is \emph{stably ordered}.
The identity is the \emph{minimum element} if $1\leq a$ for every $a\in M$.
\end{definition}

\begin{introresult}\label{res:ordered}
Let $M$ be a monoid with a known stable partial order in which the identity
is the minimum element, and let $G\subseteq M$ be a finite set of allowed inputs.  For
$1\leq\beta=\beta_G(M)<\infty$,
\[
 Q_{1/3}(\operatorname{Prod}_{M,G,n})
 \leq\min\left\{n,\sqrt{n+1}
       \bigl(C(\beta+2)\log(n+2)\bigr)^{C\log(\beta+2)}\right\},
\]
where $C$ is an absolute constant.
The bound is independent of $|M|$ and also applies to infinite monoids.
See \thmref{thm:ordered-beta-log-product} for the full statement and proof.
\end{introresult}

This result recovers the quantum query bound of Allcock et al.~\cite{ABBLS23}
for BTBS up to polylogarithmic factors, and applies to a much broader class
of problems.

\Needspace{5\baselineskip}
An important example is the monoid $U_k(\mathbb T)$ of upper
unitriangular matrices over the tropical semiring
$\mathbb T=(\mathbb R\cup\{-\infty\},\max,+)$: diagonal entries are zero,
and entries below the diagonal are $-\infty$.  Entrywise order is a stable
partial order in which the identity is the minimum element, and we show in
\corref{cor:unitriangular-beta} that
\[
  \beta\bigl(U_k(\mathbb T)\bigr)
  \leq\binom{k+1}{3}.
\]
Thus, for every fixed $k$, the product
can be computed with $\widetilde O_k(\sqrt n)$ queries, independently of
the magnitudes of the matrix entries.

These matrix products capture several natural optimization problems.
For example, the best profit from exactly $t$ successive stock
transactions is
\[
  \max_{1\leq i_1<j_1<\cdots<i_t<j_t\leq n}
  \sum_{r=1}^t(x_{j_r}-x_{i_r}).
\]
This is the $(1,2t+1)$ entry of a product in $U_{2t+1}(\mathbb T)$:
each price supplies alternating buy and sell weights $-x_i,+x_i$ along
a chain of $2t+1$ states.  More generally, the
fixed signed-sum problem of~\cite{ABBLS23}, which maximizes
$\sum_{j=1}^r\varepsilon_jx_{i_j}$ over $i_1<\cdots<i_r$ for prescribed
signs $\varepsilon_j\in\{-1,1\}$, is a product entry in
$U_{r+1}(\mathbb T)$.  In each case, one input value determines the entire
corresponding matrix, so the query bound is $\widetilde O(\sqrt n)$ for
every fixed number of transactions or signs.  The encodings are given in
\secref{sec:tropical}.

Finally, we give a bound that holds for every finite aperiodic monoid.
Let $N=|M|$.  A quantitative version of the AGS argument gives the upper bound
$\sqrt n\,(N\log(nN+2))^{O(N)}$ (\thmref{thm:main-ags}).
Although this is $\widetilde O(\sqrt n)$ for fixed $N$, the exponent is
linear in $N$, making the bound weak when the monoid grows with the input.
Our third result improves this to a cube-root dependence, up to
a logarithmic factor in the exponent.

\begin{introresult}\label{res:aperiodic-size}
For every finite aperiodic monoid $M$ of size $N$,
\[
 Q_{1/3}(\operatorname{Prod}_{M,n})
 \leq\min\left\{n,\;
   \sqrt n\,\log^{C(N\log(N+2))^{1/3}}(n+2)
 \right\},
\]
where $C$ is an absolute constant.  Formal statement in
\thmref{thm:ags-cuberoot-size}.
\end{introresult}

\Needspace{9\baselineskip}
The dependence on monoid size is nearly tight.  The depth-$k$ Dyck
monoids $M_k$ from \exampleref{ex:dyck-breadth} have
$N=|M_k|=\Theta(k^3)$ elements.  The lower bound of Ambainis et
al.~\cite{ABIKPSSV20} therefore gives
\[
 Q_{1/3}(\operatorname{Prod}_{M_k,n})
 =\Omega\!\left(\sqrt n\,2^{cN^{1/3}}\right)
\]
for an absolute constant $c>0$, $n\geq2$, and $1\leq k\leq\log_2 n$
(\corref{cor:dyck}).  Thus an exponential factor in $N^{1/3}$ is
unavoidable in general.  The remaining gap is the power of $\log n$,
whose exponent in our upper bound depends on $N$.

\paragraph{Lean formalization.}
We have developed a library for quantum query complexity in Lean~4,
registered on Palomar~\cite{palomar-2026-09-29-000001-v1}.
The library formalizes the quantum query model and proves strong duality,
exact composition, and the constant-factor adversary characterization of
bounded-error query complexity for total Boolean functions.  It also
includes the polynomial method.  For the upper bounds used here, the
library verifies the conversion of adversary dual solutions into
bounded-error quantum algorithms for arbitrary finite input alphabets and
output sets, with explicit constants independent of their sizes.

Using this library, we have developed a Lean formalization covering every
named result in the paper~\cite{MonoidProduct26}, save two intermediate
lemmas that are only formalized as part of the theorem in which they are used;
\remarkref{rem:ordered-formalization} explains this qualification.
In particular, all three main results are fully formalized with explicit constants.

\section{Overview of the proofs}\label{sec:proof-overview}

In this section we explain the proof ideas in our three main results.  
For a fixed input $x_1,\ldots,x_n$ and an index set
$U\subseteq[n]$, write $p(U)$ for the product of the selected letters in
their original order.  For the commutative and stably ordered cases, write
$\beta=\beta_G(M)$ and assume $1\leq\beta<\infty$.

\subsection{Commutative monoids: a sharp breadth bound}
\label{sec:overview-commutative}

The upper bound in \hyperref[res:commutative]{Result~\ref*{res:commutative}}
is an application of a new general technique for obtaining quantum query
upper bounds from classical computations
(\thmref{thm:essential-width}), which refines the decision-tree
framework of Beigi and Taghavi~\cite{BT20}.
Let $\Sigma$ and $O$ be finite input
and output alphabets, and let $f:D\to O$, where $D\subseteq\Sigma^{n}$
is the set of allowed input words.  Fix a finite set $Q$ of ``summary
states", possible states of the algorithm after partially reading the input.  
For each fixed input $x\in D$, the summary is a function
\[
 s_x:\mathcal P([n])\longrightarrow Q,
\]
where $\mathcal P([n])$ denotes the power set of $[n]$.  The value $s_x(T)$ is the summary state after revealing
the entries $x_i$ for $i\in T$.  In the commutative monoid application,
$Q=M$ and $s_x(T)=p(T)$, the product of those entries.

The summary starts at a fixed state $s_x(\varnothing)=q_\varnothing$.
When a new position is queried, it can be updated using only the current
state, the queried index, and the value revealed.  We require the
resulting state $s_x(T)$ to depend only on the revealed positions and
their values, independently of the order of queries.  A fixed readout
map $g:Q\to O$ recovers the answer from the final summary:
$f(x)=g(s_x([n]))$.

Call $i\in T$ \emph{essential} if
$s_x(T)\ne s_x(T\setminus\{i\})$.  The maximum number
of essential positions, over all inputs and revealed sets, is the
\emph{essential width of the summary}.  If the essential width is at most $B$, our theorem gives
\[
 \ADVpm(f)\leq16\sqrt{nB},
\]
where $\ADVpm$ is the general adversary bound, known to characterize bounded-error
quantum query complexity up to constant factors~\cite{HLS07,Rei11,LMRSS11}.
This is a general upper bound, allowing arbitrary finite input and output alphabets and
requiring no monoid structure.  If $B=0$, the function is constant and
the bound is immediate; assume $B>0$ in the proof sketch below.

To explain this theorem, we first recall the decision-tree method of Beigi and
Taghavi~\cite{BT20}.  Their basic bound converts a randomized classical
algorithm using $T$ queries into an $O(\sqrt{Tg})$ quantum algorithm
if a procedure predicting the next branch makes at most $g$ mistakes
on every fixed input, in expectation over the algorithm's internal randomness.
For an incremental summary, we can
scan positions in a uniformly random order and always predict that the
summary will stay unchanged.  Conditional on the first $t$ positions
forming a set $T_t$, the last position is uniform in $T_t$ and changes
the summary exactly when it is essential.  Thus
\[
 \Pr\{\text{the summary changes at step }t\mid T_t\}
 \leq\frac{B}{t}.
\]
Summing gives $O(B\log n)$ expected mistakes, so the Beigi-Taghavi bound yields
$O(\sqrt{nB\log n})$ queries.

We remove the logarithmic loss by using the change probability at each
depth before summing.  In the dual adversary construction, a pair of
inputs contributes at the first step where their summaries differ;
recording the full history prevents later contributions even if their
summaries become equal again.  At depth $t$, we assign weights
$\sqrt{t/B}$ to unchanged branches and $\sqrt{B/t}$ to changed branches.
The probability bound $B/t$ then makes the expected contribution at
this depth $O(\sqrt{B/t})$.  The total cost is
\[
 O\!\left(\sum_{t=1}^n\sqrt{\frac{B}{t}}\right)=O(\sqrt{nB}).
\]
This is the general upper bound of \thmref{thm:essential-width}.

For maximum finding, the summary is simply the largest revealed value,
and $B=1$: only a unique maximum can be essential.  The theorem therefore
recovers the optimal $O(\sqrt n)$ bound of D\"urr and H\o yer~\cite{DH96}.
Beigi and Taghavi obtain $O(\sqrt{n\log n})$ for the equivalent minimum
problem~\cite[Proposition~7]{BT20}, and ask whether other choices of
weights can improve their method~\cite[Section~6]{BT20}.  Our
construction answers this question for minimum and maximum finding,
removing the $\sqrt{\log n}$ loss within their decision-tree
framework.\footnote{Cornelissen, Mande, and Patro~\cite{CMP25} give an
optimal weighting scheme for decision trees querying Boolean inputs,
but do not obtain the $O(\sqrt n)$ bound for minimum or maximum finding over a
general ordered alphabet.}

To apply the theorem to a commutative aperiodic monoid, use the subset
product $p(T)$ as the summary.  Commutativity makes it independent of
the reveal order and allows updates by multiplication.  Its essential
width is exactly $\beta$.  Indeed, divisibility gives a stable partial
order: define $a\leq b$ if $b=ac$ for some $c\in M$.  Under this order,
inserting letters cannot decrease the product.  If $D\subseteq T$ indexes a core
of the revealed subword and $i\in T\setminus D$, then
\[
 p(T)=p(D)\leq p(T\setminus\{i\})\leq p(T).
\]
Every core therefore retains every essential position, giving width
at most $\beta$.  Conversely, all positions of a shortest core are
essential in that core.  Taking $B=\beta$ in the general theorem proves
the $O(\sqrt{n\beta})$ upper bound.

For the matching lower bound, choose a shortest core of length $\beta$
and retain its first $r=\min\{\beta,\lfloor(n+1)/2\rfloor\}$ letters.
This prefix has no proper preserving subword: replacing it by such a
subword would shorten the original core without changing its product.
Hide these $r$ letters among $n-r$ identities and permute all positions.
Commutativity makes the product independent of the permutation, while
deleting any one of the $r$ chosen letters changes it.  Each complete
input has $r$ such deletions, and each input missing a letter has
$n-r+1$ positions in which that letter could be restored.  The resulting
adversary bound is
$\Omega(\sqrt{r(n-r+1)})=\Omega(\min\{n,\sqrt{n\beta}\})$.
Together with the option of reading every position, this proves the
claimed tight bound.

The concrete upper bounds on $\beta$ come from counting subset products of
a shortest core.  In the idempotent case, let $D$ index a shortest core
and suppose two distinct subsets $A,B\subseteq D$ have $p(A)=p(B)$.
Swapping them if necessary, choose $i\in A\setminus B$.  Since $i\in A$,
commutativity and idempotence give $p(A)x_i=p(A)$, and hence
\[
 p(B)x_i=p(A)x_i=p(A)=p(B).
\]
Thus the letters indexed by $B$ already absorb $x_i$.  Since
$B\subseteq D\setminus\{i\}$, deleting position $i$ from the core leaves
its product unchanged, contradicting minimality.  All $2^{|D|}$
subset products are therefore distinct, giving $2^{|D|}\leq|M|$ and
hence $\beta\leq\log_2|M|$.

For the more general identity $x^{k+1}=x^k$, distinct subsets can have
the same product.  Let $D$ index a shortest core of length $d$, and
partition the $2^d$ subsets of $D$ into the
families
\[
 \mathcal F_z=\{S\subseteq D:p(S)=z\},
 \qquad z\in M.
\]
Bounding the size of each family therefore bounds $d$ in terms of $|M|$.

We rule out $(k+1)$-sunflowers in each family.  Such a sunflower
consists of sets $C\cup P_0,\ldots,C\cup P_k$, where the petals $P_j$
are pairwise disjoint and disjoint from $C$.  The sets are distinct, so at
least one petal is nonempty; relabel so that $P_0\ne\varnothing$.
Write $c=p(C)$ and $b_j=p(P_j)$.
Their equal products give $cb_0=\cdots=cb_k$.  Commutativity lets us
apply $cb_j=cb_k$ repeatedly: after replacing one factor, we retain $c$
and move the next factor next to it.  For example,
\[
 cb_0b_1=(cb_0)b_1=(cb_k)b_1=(cb_1)b_k=cb_k^2.
\]
Doing this for all $k+1$ petals, or for the $k$ petals other than $P_0$,
rewrites the two products as $cb_k^{k+1}$ and $cb_k^k$, respectively.
The stabilization identity $b_k^{k+1}=b_k^k$ therefore gives
\[
 c\prod_{j=0}^{k}b_j
 =cb_k^{k+1}
 =cb_k^k
 =c\prod_{j=1}^{k}b_j.
\]
Let $U=C\cup P_0\cup\cdots\cup P_k$ be the positions covered by the
sunflower, and let $R=D\setminus U$ be the remaining positions.
The preceding identity says that $p(U)=p(U\setminus P_0)$.
By commutativity, we can group the contribution of $R$ separately:
\[
 p(D)=p(R)p(U)=p(R)p(U\setminus P_0)=p(D\setminus P_0).
\]
Since $P_0$ is nonempty, $D\setminus P_0$ indexes a strictly shorter
preserving subword, contradicting minimality.  Thus each full product
fiber contains no such sunflower.  Applying the
sunflower bound of Bell, Chueluecha, and Warnke~\cite{BCW21} and counting
subsets of size approximately $\log_2|M|$ gives
\[
 \beta=O\!\left(k\log(|M|+1)\log\log(|M|+2)\right).
\]
The detailed arguments, and the complete proof of
\hyperref[res:commutative]{Result~\ref*{res:commutative}}, appear in
\secref{sec:lattice}.

\subsection{Stably ordered monoids: sampling and short subwords}
\label{sec:overview-ordered}

We explain \hyperref[res:ordered]{Result~\ref*{res:ordered}} in two steps:
first, in this subsection, an upper bound with exponent linear in $\beta$, then the improvement
to a logarithmic exponent in \secref{sec:overview-ordered-log}.
Let $M$ be an aperiodic monoid with a stable partial order in which the identity
is the minimum element.  We may assume $\beta<n$, since otherwise we can read the whole input.

The essential-width theorem requires a summary that can be updated as
positions are revealed in arbitrary order.  For stably ordered monoids
with minimum identity, the worst-case number of deletion-essential
positions still equals $\beta$.  However, the product of the revealed
subword need not support these updates: knowing $[x_1x_3]$ and $x_2$
need not determine $[x_1x_2x_3]$.  Retaining additional information may
permit updates, but can increase the essential width.  Thus the equality
of these parameters does not directly extend the commutative query bound
to this setting.

We generalize the D\"urr and H\o yer~\cite{DH96} quantum maximum-finding
algorithm to noncommutative products.  The algorithm proceeds in rounds, maintaining a candidate answer and
trying to increase it in each round.  In our case, the candidate is always a
product $p(K)$, which is initialized to $1 = p(\emptyset)$.  In a round, we update to a set
$K' \supseteq K$ with $p(K')\geq p(K)$, ideally with strict inequality.
The stable order guarantees monotonicity:
since $a1b\leq axb$, we have $p(U)\leq p(V)$ whenever
$U\subseteq V$.  Once the candidate dominates
the product of every subword of length at most $\beta$, we are done.
Indeed, let $D$ index a core of the input, with $|D|\leq\beta$.  Then
\[
 p(D)\leq p(K)\leq p([n])=p(D),
\]
so antisymmetry gives $p(K)=p(D)=p([n])$.

A difference with quantum maximum finding is that here we only have a partial order
rather than a total one.  An index set $A$
can be useful even when $p(A)$ and the current candidate $p(K)$ are incomparable: the product
$p(K\cup A)$ dominates both.  The analogue of an improvement is therefore
finding a subset $A$ with $p(K\cup A)>p(K)$.

We describe the algorithm as a randomized classical procedure.  A slight
generalization of the Beigi--Taghavi construction~\cite{BT20} to allow sub-routine calls (\thmref{thm:bt-subroutine-composition}) then gives
the quantum query upper bound .

To obtain a quantitative progress bound, the proof uses a batch of
$2\beta$ sampled records in each round.  A \emph{record} is a pair
$(V,(x_i)_{i\in V})$: an index set $V\subseteq[n]$
together with the revealed letters at those indices.  These data let us take unions of
index sets and compute their products in the original input order.
Fix the input $x$, and suppose a randomized classical procedure returns
a record with at most $\beta$ positions.  Let $\mu=\mu_x$ be the
distribution of these records over the procedure's random choices.
We write $V\sim\mu$ for the index set of a sampled record.
This distribution may depend on $x$ and need not be uniform.
The base case below samples a single input
position.  In the recursive cases, we combine the index sets returned
by recursive calls on smaller contiguous blocks of the input.  We then
retain at most $\beta$ positions with the same product as the combined set.

Initially, every record that the sampling procedure can return is
\emph{marked}.  A round independently draws $2\beta$ records from $\mu$
conditioned on being marked, keeping the marked set fixed throughout.
Let $A_1,\ldots,A_{2\beta}$ be their index sets.  We form their union
and update the candidate to $p(K')$, where
\[
 B=A_1\cup\cdots\cup A_{2\beta},\qquad K'=K\cup B.
\]
Save the full record of $K'$, and unmark every previously marked record
whose index set $V$ satisfies $p(K'\cup V)=p(K')$.  Once unmarked,
a record stays unmarked.  Every still-marked record has an index set
$V$ satisfying $p(K'\cup V)>p(K')$.  For a discarded record,
$p(V)\leq p(K'\cup V)=p(K')$, so its product remains dominated as more
positions are saved.

The batch size gives a simple progress bound.  For the analysis only,
hold the old set $K$ fixed and imagine one additional independent test
record from the same conditional distribution, with index set $V$.
It remains marked after the round exactly when
$p(K\cup B\cup V)\ne p(K\cup B)$.
Let $D$ index a core of the subword at positions $K\cup B\cup V$,
with $|D|\leq\beta$.  For each index in $D$ outside $K$, keep one sampled
record whose index set contains it.  This keeps at most $\beta$
records.  Omitting any other sampled record from the union leaves $D$ intact
and therefore preserves the product.  Thus at most $\beta$ sampled records
are essential.  Symmetry among the $2\beta+1$ sampled records gives a
survival probability of at most $\beta/(2\beta+1)<1/2$.  This is the expected fraction of the
current marked probability mass that survives the round.  Thus each round
halves the marked probability mass in expectation.  At the end, the subword
indexed by the saved union can be compressed to a core of length at most
$\beta$ without further queries.

We want a record sampled from $\mu$ conditioned on being marked.
We obtain one by classical rejection sampling: repeatedly sample
independent records from $\mu$, reject unmarked records, and accept
the first marked one.  Each sampled record is a \emph{proposal};
a \emph{draw} is the whole process of trying to obtain one marked record.
A complete round consists of $2\beta$ draws.
We allow at most $k$ proposals per draw, reporting failure if all are
rejected.  Conditional on success, the returned record has exactly the
desired conditional distribution.
Fix all random choices and predict that each proposal will be rejected.
A procedure requesting at most $d$ draws examines at most $dk$
proposals and makes at most $d$ prediction mistakes, since each draw
accepts at most one proposal.  If every fixed
proposal function has an adversary dual of cost at most $T$,
the prediction-tree construction and dual composition give cost
$O(T\sqrt{(dk)d})=O(Td\sqrt k)$.

Fix $1\leq h\leq n$; our goal is to reduce the marked mass below $1/(2h)$.
Taking a draw budget $d=\Theta(\beta\log(n+2))$ with a sufficiently
large constant allows $\ell=d/(2\beta)=\Theta(\log(n+2))$ rounds of
$2\beta$ draws each.
For ideal conditional draws, each round halves the marked mass in
expectation, so its expected value after $\ell$ rounds is at most $2^{-\ell}$.
By Markov's inequality, the probability that it is still at least
$1/(2h)$ is at most $2h\,2^{-\ell}=O(1/n)$.
Taking $k=\Theta(h\log(n+2))$ with a sufficiently large constant
also ensures that, over all draws, the probability of a failure while
the marked mass is at least $1/(2h)$ is $O(1/n)$.
Every unmarked record has its product dominated by the candidate, so
the undominated mass is at most the remaining marked mass.  Thus the
candidate index set $K$ satisfies, with probability at least $1-O(1/n)$,
\[
 \Pr_{V\sim\mu}\{p(V)\nleq p(K)\}<\frac1{2h},
 \qquad\text{at dual cost }O(\beta T\sqrt h\,\log^{3/2}(n+2)).
\]
To apply this bound, we now specify how to generate proposal records and
bound the corresponding adversary dual cost $T$, first for single letters
and then recursively for longer subwords.  We choose the sampling
rule so that failure to dominate the product of any target subword would
give probability at least $1/(2h)$ of generating a proposal whose product
is not dominated by the candidate.  The preceding guarantee therefore
ensures that the candidate dominates every target subword.

We do this by induction on the subword length $r$.  An \emph{$r$-summary}
of an interval is a record of at most $\beta$ positions from that interval
whose product dominates every subword of that interval of size at most $r$.
For $r=1$, apply the preceding sampling procedure to an interval of length
$m$ with $h=m$, generating each proposal by sampling a uniformly random
index and revealing its letter.
Each proposal reads just one input position, so $T=O(1)$.
With high probability, the resulting candidate record has undominated mass
less than $1/(2m)$.  Any undominated letter would contribute at least
$1/m$ to this mass, so on this event the record is a $1$-summary.

For larger $r$, split the interval using a balanced binary tree.  For each
nonleaf depth, we run a separate sampling procedure.  Fix one such depth,
containing $h$ nodes.  To generate a proposal, sample a node uniformly,
recursively obtain an $(r-1)$-summary from each of its two children, and
return a record of at most $\beta$ positions whose indexed subword is a
core of the subword on the union of the two children's index sets.
Stability ensures that the product of a
successful sampled record dominates every subword having at most $r-1$
positions in each child.  If any such subword remained undominated,
sampling its node and succeeding in both
children would give undominated mass of order $1/h$, exceeding the
allowed $1/(2h)$.  After running this procedure separately at every nonleaf
depth, we compress the union of the resulting candidates.  This gives an
$r$-summary: every subword of length between $2$ and $r$ splits across the
children of the smallest
tree interval containing its first and last positions, with at most $r-1$
positions in each child.
Child failures are already part of the distribution $\mu$; we need
no union bound over all recursive calls or all target subwords.

The square-root dependence on the interval length is preserved at each
inductive step.  With $h$ nodes, the recursive child intervals have
length $m/(2h)$, so the sampling and recursion costs cancel as
\[
 \sqrt h\,\sqrt{\frac{m}{2h}}=\sqrt{\frac m2}.
\]
Each step contributes a factor $O(\beta)$, a factor $\log^{3/2}(n+2)$
from rejection sampling, and one logarithm from summing over tree depths.
Reaching $r=\beta$ and converting the final dual to an algorithm thus
costs $(C\beta)^\beta\sqrt n\,\log^{5\beta/2}(n+2)$ and determines the full
product.  This intermediate bound is proved in \secref{sec:beta}.
We next show how to double the subword length covered by the summary at each step,
constructing $2r$-summaries from $r$-summaries.

\subsection{Stably ordered monoids: doubling the covered subword length}
\label{sec:overview-ordered-log}

The preceding argument increases the covered subword length $r$ one position at a time.
We now double it, reducing the number of recursive stages from $\beta$
to $O(\log(\beta+2))$ and obtaining
\hyperref[res:ordered]{Result~\ref*{res:ordered}}.  This makes the
dependence on $\beta$ quasipolynomial and the exponent of $\log n$
logarithmic in $\beta$.  At each stage, we construct a summary that
dominates every subword of size at most $2r$, using summaries that
dominate subwords of size at most $r$ on smaller intervals.
The main challenge is to combine these $r$-summaries into a
$2r$-summary without repeatedly paying to compute the same interval summary.

To see why this construction works, consider an arbitrary subword of
size between $2$ and $2r$, indexed by $U\subseteq[n]$, whose product
we need to dominate.  Partition $U$ into $U_1 = U \cap [k]$ and $U_2 = U \setminus U_1$ for
a $k$ such that $|U_1| = |U_2|$ if $|U|$ is even, and otherwise $|U_2| = |U_1| + 1$.
Put a balanced binary tree on $[n]$, and let
$v$ be the lowest common ancestor of $\max U_1$ and $\min U_2$.
The two children of $v$, together with the siblings off the
root-to-$v$ path, partition $[n]$ into intervals, called the
\emph{frontier} of $v$.  In Figure~\ref{fig:beta-median-frontier},
the shaded triangles represent these intervals, and $K_{[a,b]}$
denotes an $r$-summary of the interval $[a,b]$.
Each interval contains indices from at most one of $U_1$ and $U_2$,
and hence at most $r$ indices of $U$.
Replacing those positions by an $r$-summary of the interval can only
increase the product.  The union of these summaries is therefore a
record whose product dominates $p(U)$.
Singletons are covered at their leaf's parent.  Thus, if we combine
the records from all internal vertices and compress their union, we
obtain a $2r$-summary.

\begin{figure}[htbp]
\centering
\begin{tikzpicture}[x=0.78cm,y=0.78cm,font=\small,
  frontier/.style={draw=black!55,fill=black!7,line width=0.5pt},
  pathnode/.style={circle,draw,fill=white,inner sep=1.8pt}]
  \path[frontier] (2.5,2.6)--(0.5,0.7)--(4.5,0.7)--cycle;
  \path[frontier] (5.5,1.9)--(4.5,0.7)--(6.5,0.7)--cycle;
  \path[frontier] (7.5,1.9)--(6.5,0.7)--(8.5,0.7)--cycle;
  \path[frontier] (12.5,3.8)--(8.5,0.7)--(16.5,0.7)--cycle;
  \draw[line width=1.1pt] (8.5,5)--(4.5,3.8)--(6.5,2.6);
  \draw (8.5,5)--(12.5,3.8);
  \draw (4.5,3.8)--(2.5,2.6);
  \draw (6.5,2.6)--(5.5,1.9) (6.5,2.6)--(7.5,1.9);
  \node[pathnode,label=above:{$[1,16]$}] at (8.5,5) {};
  \node[pathnode,label=above left:{$[1,8]$}] at (4.5,3.8) {};
  \node[pathnode,fill=black,label=right:{$v=[5,8]$}] at (6.5,2.6) {};
  \node at (2.5,1.15) {$K_{[1,4]}$};
  \node[font=\footnotesize] at (5.5,0.96) {$K_{[5,6]}$};
  \node[font=\footnotesize] at (7.5,0.96) {$K_{[7,8]}$};
  \node at (12.5,1.15) {$K_{[9,16]}$};
  \node[anchor=west] at (10.1,4.85) {$r=3,\quad |U|=6$};
  \draw[black!45] (0.5,0.15)--(16.5,0.15);
  \foreach \i in {1,...,16} {
    \draw[black!45] (\i,0.07)--(\i,0.23);
    \node[below,font=\scriptsize] at (\i,0.02) {$\i$};
  }
  \foreach \i in {2,3,5,7,10,14} \fill (\i,0.15) circle (2.2pt);
  \draw[dashed] (6.5,0.55)--(6.5,-1.05);
  \draw (2,-0.50)--(2,-0.66)--(5,-0.66)--(5,-0.50);
  \node[below,font=\footnotesize] at (3.5,-0.70) {$U_1=\{2,3,5\}$};
  \draw (7,-0.50)--(7,-0.66)--(14,-0.66)--(14,-0.50);
  \node[below,font=\footnotesize] at (10.5,-0.70) {$U_2=\{7,10,14\}$};
\end{tikzpicture}
\caption{Doubling the covered subword length from $3$ to $6$.  The filled dots mark
$U=U_1\cup U_2$.  The middle positions $\max U_1=5$ and $\min U_2=7$ have lowest
common ancestor $v=[5,8]$.  The shaded frontier intervals contain
$2,1,1,2$ target positions, respectively, so their $3$-summaries
combine to dominate $p(U)$.  Reading the two child summaries at each
vertex on the bold path supplies the entire frontier; records on
other paths reuse the data at common ancestors.}
\label{fig:beta-median-frontier}
\end{figure}

To obtain all the summaries in the frontier of $v$, follow the path
from the root to $v$, computing the $r$-summaries of the two children
at each vertex.  This supplies every summary in the frontier.
Different paths share many of these computations, and our query bound
must account for this shared work.  Fix the random choices used to
construct each interval summary, so every reuse refers to the same summary.

We combine the frontier summaries in batches, as in
\secref{sec:overview-ordered}.  Write $C_v$ for the union of the index
sets of the frontier summaries at $v$, and let $K$ be the accumulated
index set for our current candidate.  Initially, $K=\varnothing$ and every
internal vertex is marked.  In each round, we independently sample $2\beta$
vertices uniformly from the currently marked set, with replacement,
keeping that set fixed throughout the batch.  We add their frontier unions
to $K$, then update the marks.  A previously marked vertex $v$ remains
marked exactly when
\[
 p(K\cup C_v)>p(K).
\]
If equality holds, we unmark $v$ permanently: its product satisfies
$p(C_v)\leq p(K\cup C_v)=p(K)$ and remains dominated as $K$ grows.
Thus, after each update, the marked vertices give frontier unions whose
insertion would improve the candidate.  Given the known history, the
marked status of $v$ depends only on the summaries supplied along its
root path.  We stop if no marked vertices remain.

The argument from \secref{sec:overview-ordered} shows that each round
halves the expected number of marked vertices.  Since there are $O(n)$
vertices, $O(\log(n+2))$ rounds suffice for all frontier products to be
dominated with constant success probability.  After these rounds, we
compress $K$ to at most $\beta$ positions while preserving its product.

To bound the complexity of combining the frontier summaries, we use
a general theorem for tree search with shared subcomputations
(\thmref{thm:weighted-tree-search}).  Suppose each vertex
$v$ has a deterministic function $g_v(x)$ with adversary dual cost
$t_v>0$, and whether a vertex is marked depends only on the function
values along its root path.  In our application, $g_v$ returns the two
child summaries at each internal vertex $v$.  Define the quantities
$\mathcal C_v$ recursively by
\[
 \mathcal C_v=t_v+
 \sqrt{\sum_{u\text{ child of }v}\mathcal C_u^2},
\]
so $\mathcal C_v=t_v$ at a leaf.  Using a tree instance of the
learning-graph construction of Belovs and Lee~\cite{BL11}, combined
with adversary composition, we prove that the decision problem of
determining whether any vertex is marked has an adversary dual of
cost at most $\mathcal C_\rho$, where $\rho$ is the root.
Adversary tightness then gives quantum query complexity $O(\mathcal C_\rho)$.
Shared work along a prefix adds once, while the costs of child branches
combine through the square root of the sum of their squares.
Different functions $g_v$ may depend on the same input letters,
as happens for nested intervals in our application.

To sample a uniform marked vertex, choose a random permutation of the
vertices and find its first marked entry by binary search.  Each decision
asks whether an initial segment contains a marked vertex.  For this test
only, we keep the whole tree but treat vertices outside the segment as
unmarked.  Since membership in the segment requires no input queries,
the marking condition still depends only on the function values along
the root path, so the same tree dual applies.
For each selected vertex, form its frontier union from the function values
along its path.  Repeating this search supplies the batches of $2\beta$
vertices used in the sampling rounds above.  With the permutations fixed, this is an exact
deterministic computation.  Keeping its transcript makes the dual costs
of its adaptive steps add, without an error-reduction factor per step.

At each stage, compress the union of $O(\log(n+2))$ independent copies
of the resulting summary.  If any copy succeeds, the union succeeds;
every copy describes a subword of the input, even when its domination guarantee fails.
This makes the child summaries in the next stage simultaneously correct
with high probability.  This amplification concerns classical seeds;
we still convert to a quantum algorithm only at the root.

To see the cost of one doubling, suppose the fixed-seed $r$-summary
functions on intervals of length $m$ have dual cost at most $A_r\sqrt m$.
At each tree depth the interval lengths sum to $O(n)$, so the squared
costs $t_v^2$ sum to $O(A_r^2n)$.  The tree has depth
$d=O(\log(n+2))$, so a convenient corollary of the tree-search theorem gives
\[
 \mathcal C_\rho\leq\sqrt{(d+1)\sum_v t_v^2}
 =O(A_r\sqrt n\log(n+2)).
\]
Binary search, the sampling rounds, and amplification
each add one further logarithm, giving the recurrence
\[
 A_{2r}\leq C\beta\log^4(n+2)\,A_r
\]
for the coefficient of $\sqrt m$, with an absolute constant $C$.
After $O(\log(\beta+2))$ stages, the summary describes a subword of the input
whose product dominates that of a core of the input.  Its product therefore
equals the full input product.  A single application of adversary tightness gives the claimed
query bound.  The full argument appears in
Sections~\ref{sec:beta-rank-doubling}--\ref{sec:beta-doubling-cost}.
We know of no example ruling out an $O(\sqrt{n\beta})$ query bound in
this setting; proving even a polynomial dependence on $\beta$ remains
an open problem.

We do obtain a tight dependence on breadth in an important special case:
the monoid $M=\operatorname{UT}_k(\mathbb B)$ of unitriangular matrices
over the Boolean semiring $\mathbb B=(\{0,1\},\vee,\wedge)$.
Equivalently, these are tropical unitriangular matrices with entries in
$\{-\infty,0\}$.  In \secref{sec:boolean-unitriangular}, we show that
$\beta(M)=\binom{k}{2}$ and, for $k\geq2$,
\[
 Q_{1/3}(\operatorname{Prod}_{M,n})
 =\Theta\!\left(\min\{n,\sqrt{n\beta(M)}\}\right)
 =\Theta\!\left(\min\{n,k\sqrt n\}\right).
\]
A consequence of Simon's theorem explains the importance of this
family: every finite $\JJ$-trivial monoid is a homomorphic
image of a submonoid of $\operatorname{UT}_k(\mathbb B)$ for some
$k$~\cite{Sim75,ST88}.  Here $\JJ$-trivial means that distinct elements
generate distinct two-sided ideals; this class includes every finite
monoid with a stable order in which the identity is the minimum element.

\subsection{General aperiodic monoids: the origin of the cube root}
\label{sec:general-aperiodic}

We first review the AGS proof for finite aperiodic monoids, then explain
how to refine it to obtain
\hyperref[res:aperiodic-size]{Result~\ref*{res:aperiodic-size}}.
Throughout, write $N=|M|$ and $L(n)=2+\log_2(n+2)$.

We partition $M$ into equivalence classes, called $\JJ$-classes, by
declaring $x$ and $y$ equivalent when they generate the same two-sided
ideal: $MxM=MyM$, where $MxM=\{axb:a,b\in M\}$.
Equivalently, each can be obtained from the other by multiplying on the
left and right by elements of $M$.
The $\JJ$-classes form a poset under inclusion of these ideals:
\[
 [x]_{\JJ}\leq_{\JJ}[y]_{\JJ}
 \quad\Longleftrightarrow\quad MxM\subseteq MyM.
\]
The class of $1$ is the greatest element, since $M1M=M$; other classes
need not be comparable.  Multiplication can only move downward in this
poset, since $M(xy)M\subseteq MxM$.  In particular, the classes of the
successive prefix products form a descending chain, allowing repetitions.

To test whether the input product $x_1\cdots x_n$ equals a given target
$m\in M$, the AGS proof proceeds by downward induction on the $\JJ$-class
of $m$, starting with the class of $1$.  For the base case, a product in an aperiodic
monoid equals $1$ if and only if all its factors equal $1$.  Standard
Grover search for a nonidentity input element therefore decides this
case using $O(\sqrt n)$ queries.

For the inductive step, fix a target $m\neq1$ and assume we can test
product equality with every target $t$ satisfying $MmM\subsetneq MtM$.
These tests can be applied to any interval of the input.  The induction
makes progress through the ideal order; the intervals themselves need
not become substantially shorter.

To describe the first part of the algorithm, write
$p_i=x_1\cdots x_i$ for the prefix products, with $p_0=1$.
The principal right ideal $p_iM=\{p_i a:a\in M\}$ consists of all
elements obtainable by extending this prefix.  These right ideals
can only shrink:
\[
 M=p_0M\supseteq p_1M\supseteq\cdots\supseteq p_nM.
\]
Just as equality of two-sided ideals defines $\JJ$-classes, equality
of right ideals defines $\RR$-classes:
\[
 R_p=\{q\in M:qM=pM\}.
\]
The prefix products therefore pass through a sequence of phases during
which their right ideal, and hence their $\RR$-class, stays fixed.
Once a phase ends, the product can never return to that class.
In a finite monoid, a strict drop of the right ideal also gives a
strict drop in the $\JJ$-order.

If the full product is $m$, there is a first prefix whose right ideal
is $mM$.  Write $r$ for the preceding prefix product and $a$ for the
letter at this transition.  Then
\[
 rM\supsetneq raM=mM.
\]
AGS enumerates the pairs $(r,a)$ satisfying this relation and tests
for the corresponding transitions.  For a fixed pair, the predicate
$r\in p_iM$ says that $r$ is still
reachable by extending the prefix.  It can change from true to false
only once, so binary search finds the last index $i$ for which it holds.
The required transition occurs exactly when $i<n$, $p_i=r$, and
$x_{i+1}=a$.  To evaluate the predicate, we use the inductively available
equality tests for the elements $s$ with $r\in sM$: all such targets satisfy
\[
 MmM\subsetneq MrM\subseteq MsM.
\]
Thus we can find the transition without computing every prefix product.
Symmetrically, a search from the right uses the descending left ideals
of suffix products to detect a transition into the target's left ideal
$Mm$.

These two boundary conditions ensure that the full product lies in
$mM\cap Mm$, but we must also check that it has not fallen strictly
below the $\JJ$-class of $m$.  Any contiguous interval with product
$z$ satisfying $m\notin MzM$ rules out the answer $m$.
A shortest such interval is either a single letter, which Grover search
can detect, or has the form $awb$, where $a,b\in M$ are its first and
last letters.  Writing $r=[w]$ for the product of the intervening word $w$
(with $r=1$ if $w$ is empty), minimality gives
\[
 m\in MarM,\qquad m\in MrbM,\qquad m\notin MarbM.
\]
The AGS decomposition (\thmref{thm:decomp}) shows that the absence of
these obstructions, together with the two boundary conditions,
characterizes product equality with $m$.  It also shows that every
middle product $r$ in such a triple satisfies $MmM\subsetneq MrM$,
so its equality test is available by induction.

It remains to search for a forbidden interval $awb$ for each
such triple $(a,r,b)$.  Its length is unknown, so we try the scales
$\ell=2,4,8,\ldots$ up to $n$.  If a forbidden interval exists, its
length lies between $\ell$ and $2\ell$ for one of these scales.
At that scale, every cut between consecutive letters of the interval
splits it as $aw_1\cdot w_2b$, where the products $u$ of $w_1$ and
$v$ of $w_2$ satisfy $uv=r$.  Both $u$ and $v$ generate ideals
containing $MrM$, so their equality tests are also available by
induction.  A cut can be tested by enumerating these
factorizations and applying suffix search to the window immediately
to its left and prefix search to the window immediately to its right,
each of length $O(\ell)$.  The forbidden interval supplies
$\Omega(\ell)$ suitable cuts, so quantum search finds one using
$O(\sqrt{n/\ell})$ iterations.
If the inductive equality tests cost $B\sqrt{k}$ on intervals of
length $k$, each cut test costs $B\sqrt\ell$ times a polynomial
factor in $NL(n)$.  The length factors cancel:
\[
 \sqrt{n/\ell}\,B\sqrt\ell=B\sqrt n.
\]
Trying all $O(\log(n+2))$ scales and all possible boundary pairs and
obstruction triples contributes only polynomial factors in $NL(n)$.
Thus each level of the induction preserves the $\sqrt n$ dependence,
while multiplying its coefficient by a polynomial in $NL(n)$.
The depth of this induction can be linear in $N$.  We carry out the
induction with exact adversary duals and combine the duals for all $N$
possible answers at a cost of twice their sum.  Converting the resulting
product dual to a quantum algorithm only at the end avoids error reduction,
giving the quantitative AGS bound (\thmref{thm:main-ags})
\[
 Q_{1/3}(\operatorname{Prod}_{M,n})
 \leq\min\left\{n,\;\sqrt n\,(NL(n))^{O(N)}\right\}.
\]

Our refinement also organizes the computation around phases with a
fixed right ideal.  An $\RR$- or $\JJ$-class is called \emph{regular}
if it contains an idempotent.  For an idempotent $e$, we strengthen the
inductive task to track multiplication within its $\RR$-class
\[
 R_e=\{p\in M:pM=eM\}.
\]
Starting from $e$, the algorithm returns either the final product
$ex_1\cdots x_n$, if it remains in $R_e$, or the first position where
the product leaves $R_e$, together with the product immediately before
that position.  One query to the exiting letter then gives the
starting product for the next right-ideal phase.

The main algebraic step shows that every phase, including one in a
nonregular class, can be simulated inside a regular $\RR$-class
(\lemref{lem:ags-owner-cover}).  There are at most $N$ phases, so
processing them sequentially requires at most $N$ phase computations,
with an additional logarithmic factor for error reduction.
Combining this simulation with the AGS step lets us charge the
recursive cost to chains of \emph{regular} $\JJ$-classes.
For a finite aperiodic monoid of $d$-dimensional matrices, strict
descent between regular $\JJ$-classes lowers matrix rank.  There are
therefore at most $d$ such steps, giving a product algorithm with
query bound $\sqrt n\,(NL(n))^{O(d+1)}$.

To use this dimension-sensitive bound for an arbitrary monoid, we need
suitable matrix representations.  A faithful representation of dimension
$N$ is easy to construct: take a vector space with basis
$\{v_s:s\in M\}$ and represent $m$ by the linear map $v_s\mapsto v_{ms}$.
Its action on $v_1$ distinguishes every element of $M$, but dimension
$N$ leaves us with an exponent linear in $N$.  We therefore seek smaller
representations, together with a way to recover any information they lose.

The rational monoid algebra provides the setting for this decomposition.
Its elements are formal linear combinations $\sum_{m\in M}\alpha_m m$,
with $\alpha_m\in\mathbb Q$, and multiplication is obtained by expanding
products of sums using the multiplication in $M$.  Thus we retain the
original multiplication while gaining the linear structure needed to
study kernels and quotients of representations.  Write $A=\mathbb Q[M]$
if $M$ has no zero.  If it has a zero, we identify its basis element
with the zero vector and write $A$ for the resulting \emph{contracted}
algebra.  In either case, $M$ embeds multiplicatively in $A$ and
$\dim A\leq N$.

Classical Munn--Ponizovski\u\i{} theory~\cite{Munn55,Munn57} gives all
irreducible matrix representations of $A$, whose restrictions to $M$
are maps $\rho_J:M\to M_{d_J}(\mathbb Q)$ indexed by
the nonzero regular $\JJ$-classes.  Extending these maps linearly and
collecting their outputs gives an algebra homomorphism
\[
 \pi:A\longrightarrow\prod_J M_{d_J}(\mathbb Q),
 \qquad R=\ker\pi.
\]
Computing all the matrix products
$\rho_J(x_1)\cdots\rho_J(x_n)$ therefore determines the input product
modulo $R$.  Equivalently, $A/R$ embeds in this product of matrix
algebras.  Each image $\rho_J(M)$ is still a finite aperiodic monoid,
and a query to $x_i$ determines its matrix image, so our matrix-product
bound applies to each coordinate.

The useful property of the lost information is that $R$ is
\emph{nilpotent}: every product of sufficiently many elements of $R$
is zero.  Here the factors need not be equal, and we have the uniform
bound $R^N=0$.  Classically, $R$ is the \emph{Jacobson radical}, the
common kernel of all irreducible representations; its nilpotence is
a general fact about finite-dimensional algebras.
\lemref{lem:ags-munn-decomposition} gives a direct proof of the
properties we need.  Quotienting by $R$ exposes the matrix coordinates,
while nilpotence lets us recover the missing information through a
bounded sequence of finer quotients.  We describe this recovery below.

The dimensions of the matrix coordinates satisfy $|J|\geq d_J^2$.
This gives a tradeoff between the cost of computing a coordinate by
the matrix algorithm and the size of the class associated with it.
Choose a dimension threshold $t$.  Blocks with $d_J\leq t$ can be
computed by the matrix algorithm at cost
$\sqrt n\,(NL(n))^{O(t)}$.  For a block with $d_J>t$, collapse the ideal
generated by $J$ to zero.  The resulting quotient monoid has at least
$t^2$ fewer elements, so it can be handled by induction on monoid size.

Collapsing this ideal loses information, and recovering it is a central
part of the proof.  To recover the $J$-block, we locate the first prefix
entering the ideal and track the product while it remains in $J$.
The recursive equality tests required for this reconstruction have
targets outside the collapsed ideal, where the quotient is injective.
Thus a product algorithm for the smaller quotient suffices, with only
polynomial overhead in $NL(n)$, independent of $d_J$.

After computing all the matrix coordinates, we know the product in
$A/R$.  We recover the original product by lifting it through
successively finer quotients:
\[
 A/R,\qquad A/R^2,\qquad A/R^4,\qquad\ldots,\qquad A.
\]
Each finer quotient retains more information, which we recover using
additional queries to the input.  The map from
$A/R^{2k}$ to $A/R^k$ has kernel $R^k/R^{2k}$, in which every product
of two elements is zero.  For the finite images of $M$, we prove that
an algorithm computing the product in the coarser quotient can be
lifted to one computing it in the finer quotient with only a
polynomial factor in $NL(n)$
(\lemref{lem:ags-square-zero-lift}).  This lifting statement is the
other main technical ingredient.  The underlying local-triviality theorem
is classical~\cite[Lemma~5.4]{AMSV09}; the new ingredient is the quantum
reconstruction and its quantitative bound.  Nilpotence makes the process
terminate: once $2^j\geq N$, the ideal $R^{2^j}$ is zero and the
quotient is $A$ itself.  There are therefore only $O(\log N)$ lifting
steps, for a total factor $(NL(n))^{O(\log N)}$ at each stage of the
induction on monoid size.

Every large-block reduction removes at least $t^2$ elements, so there
are at most $N/t^2$ such reductions along a recursive branch.  Combining
their costs with the small-block bound gives
\[
 Q_{1/3}(\operatorname{Prod}_{M,n})
 \leq\sqrt n\,(NL(n))^{O\left(t+(N/t^2+1)\log N\right)}.
\]
Balancing $t$ against $(N/t^2)\log N$ explains the cube root: take
$t$ of order $(N\log(N+2))^{1/3}$.  To put the resulting estimate in
the form of \hyperref[res:aperiodic-size]{Result~\ref*{res:aperiodic-size}},
compare it with reading all $n$ positions.  If $L(n)\geq N^{1/3}$,
powers of $N$ can be absorbed into powers of $L(n)$; otherwise
$n\leq\sqrt n\,2^{O(N^{1/3})}\leq\sqrt n\,L(n)^{O(N^{1/3})}$.
Either case gives the stated bound after increasing the absolute constant.

\secref{sec:ags-local} proves the localized AGS step and its general
upper bound.  \secref{sec:dyck-lb} derives the elementary monoid-size
lower bound from the Dyck lower bound of Ambainis et al.~\cite{ABIKPSSV20}.
The cube-root proof then develops regular actions and the matrix bound
in Sections~\ref{sec:ags-actions}--\ref{sec:ags-matrix-bound}, followed
by the quotient and lifting arguments in
Sections~\ref{sec:ags-apex-reduction}--\ref{sec:ags-radical-lift}.
\secref{sec:ags-size-recurrence} combines these ingredients and proves
the size bound.  Finally, \secref{sec:ags-structural-examples} explains
how the Dyck and Brandt examples relate to the proof.
\label{page:main-end}

\section{Algebraic preliminaries and basic lower bounds}\label{sec:prelim}

\paragraph{Query model.}
Let $S$ be a semigroup and $G \subseteq S$.   The input is $x = x_1 \cdots x_n \in G^n$.
A query to index $i$ returns the \emph{entire}
element $x_i$ at unit cost---for matrix semigroups this means all entries of the matrix at once.
We write $\operatorname{Prod}_{S,G,n}:G^n\to S$ for the ordered product map over an allowed alphabet
$G\subseteq S$, and abbreviate it to $\operatorname{Prod}_{S,n}$ when $G=S$.  We also write
$p_i = x_1 x_2 \cdots x_i$ for the prefix products, with $p_0 = 1$, and
$[x_i \cdots x_j] = x_i x_{i+1} \cdots x_j$ for the product of an infix.  We write $Q_{1/3}(f)$ for the minimum
number of quantum queries needed to compute $f$ with error probability at most
$1/3$ on every input.

A semigroup $R$ \emph{divides} a semigroup $S$, written $R\prec S$, if there is a
subsemigroup $T\subseteq S$ and a surjective homomorphism $\pi:T\to R$.

\begin{proposition}[Division monotonicity]
\label{prop:division-monotonicity}
  Let $R$ and $S$ be semigroups, let $T\subseteq S$ be a subsemigroup,
  and let $\pi:T\to R$ be a surjective homomorphism, so that $R\prec S$.
  Then, for every $n\ge1$,
  \[
    Q_{1/3}(\operatorname{Prod}_{R,n})\le 2Q_{1/3}(\operatorname{Prod}_{S,n}).
  \]
  More generally, if $G\subseteq R$ and $\sigma:G\to T$ is a section of
  $\pi$ over $G$, meaning that $\pi(\sigma(g))=g$ for every $g\in G$, then
  \[
    Q_{1/3}(\operatorname{Prod}_{R,G,n})
      \le 2Q_{1/3}(\operatorname{Prod}_{S,\sigma(G),n}).
  \]
\end{proposition}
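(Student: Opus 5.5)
The plan is a direct reduction: simulate an algorithm for the product in $S$ on the lifted input, then push its output through $\pi$. The first inequality is the special case of the second. Take $G=R$ and choose any section $\sigma:R\to T$ of the surjection $\pi$, which exists by surjectivity since $R$ is a set. Then $\sigma(R)\subseteq T\subseteq S$, and $\operatorname{Prod}_{S,\sigma(R),n}$ is a restriction of $\operatorname{Prod}_{S,n}$. Restricting the input domain cannot increase query complexity, so the second bound implies the first. It therefore suffices to prove the general statement.

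For the general statement, fix an optimal bounded-error algorithm $\mathcal A$ for $\operatorname{Prod}_{S,\sigma(G),n}$. On input $x\in G^n$, run $\mathcal A$ on the virtual input $\sigma(x_1),\ldots,\sigma(x_n)$. Each query of $\mathcal A$ to index $i$ is answered by one query to $x_i$, then applying the fixed map $\sigma$ reversibly in a fresh register and uncomputing. To keep this unitary, the oracle call is implemented as: query $x_i$ into a workspace, compute $\sigma(x_i)$, feed it to $\mathcal A$'s query register (XOR or phase, whichever model is fixed), then uncompute $\sigma(x_i)$ and query again to erase $x_i$. This uses two queries per simulated query, which accounts for the factor $2$ in the statement.

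Finally, apply $\pi$ to the output. With probability at least $2/3$, $\mathcal A$ outputs $\sigma(x_1)\cdots\sigma(x_n)$. This product lies in $T$ because $T$ is a subsemigroup containing $\sigma(G)$. Since $\pi$ is a homomorphism,
\[
\pi(\sigma(x_1)\cdots\sigma(x_n))=\pi(\sigma(x_1))\cdots\pi(\sigma(x_n))=x_1\cdots x_n .
\]
On the failure event the output may lie outside $T$; in that case map it to an arbitrary element of $R$, which does not affect the success probability. The error stays at most $1/3$, and the query count is at most $2Q_{1/3}(\operatorname{Prod}_{S,\sigma(G),n})$.

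The only step needing care is the oracle simulation. A query returns $x_i$, while the simulated algorithm expects $\sigma(x_i)$ in the same oracle format. The compute-copy-uncompute construction handles this and is the source of the constant $2$. Alternatively, one could avoid the factor by working with the adversary bound, which is invariant under relabeling the input alphabet via an injective $\sigma$. However, the stated factor of $2$ follows from the elementary simulation, so that is the route I would take.
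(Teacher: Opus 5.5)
Your proposal is correct and follows the paper's proof essentially step for step. You simulate each query to $\sigma(x_i)$ with two queries to $x_i$ (query into a workspace, reversibly add $\sigma(x_i)$, unquery). You then apply $\pi$ to the output, using that the product lies in $T$ and $\pi$ is a homomorphism. Finally, you get the first inequality by taking $G=R$ with any set-theoretic section and restricting an $S$-algorithm to $\sigma(R)^n$.

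Your closing aside about avoiding the factor $2$ is only accurate for $\ADVpm$ itself, which is indeed monotone here because $\sigma$ is injective. Converting back to $Q_{1/3}$ reintroduces an unspecified constant factor. Your proof does not rely on this aside, so the argument stands.
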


\begin{proof}
  Simulate a coherent query to the lifted letter by querying $x_i$ into a work register,
  adding $\sigma(x_i)$ to the algorithm's answer register by a fixed reversible lookup,
  and unquerying the work register.  Thus each lifted query costs at most two original
  queries.  An
  algorithm for the lifted product returns
  $y=\prod_i\sigma(x_i)\in T$, and free postprocessing gives
  \[
    \pi(y)=\prod_i\pi(\sigma(x_i))=\prod_i x_i.
  \]
  The section need not be multiplicative.  Taking $G=R$ and any set-theoretic section,
  then restricting an algorithm for the full alphabet $S$ to $\sigma(R)$, proves the first
  assertion.
\end{proof}

We use three standard primitives, and one composition principle.  Grover search over $n$ items
finds a marked item (or reports none), and if at least $t$ marked items are \emph{consecutive}
it succeeds with $O(\sqrt{n/t})$ iterations by searching a grid of spacing
$t/2$~\cite{Grover96,BBHT98}.  Minimum finding over $n$ items costs $O(\sqrt{n})$
comparisons~\cite{DH96}.  For evaluations of a fixed function by \emph{bounded-error}
subroutines, including recursive calls, we use the following standard
robust-composition principle: search and minimum finding among $N$ items evaluated by
coherent bounded-error subroutines of query cost $C$ can be performed with
$O(\sqrt{N}\, C \cdot \mathrm{polylog}(N/\epsilon))$ queries and error $\epsilon$, either by
the robust search of H\o yer, Mosca, and de~Wolf~\cite{HMdW03} or by reversible majority-vote
amplification of each evaluation.  We write $\lambda = O(\log(2 + n|M|))$ for a generic
amplification factor in the algorithmic subroutine estimates.  For the
product-breadth bounds and the quantitative AGS bound, we compose adversary
duals and convert to a quantum algorithm only at the end.  In the sampling
constructions, we first fix the random choices to obtain deterministic
functions; see Section~\ref{sec:beta-sampling}.

\paragraph{Ideals and Green's preorders.}
Let $M$ be a monoid.
For $m \in M$, the set $mM = \{m s : s \in M\}$ is the \emph{right ideal} generated by $m$,
$Mm$ the \emph{left ideal}, and $MmM = \{a m b : a, b \in M\}$ the \emph{two-sided ideal}.
Green's equivalences are defined by
\[
	x \mathrel{\RR} y \iff xM = yM, \qquad
	x \mathrel{\LL} y \iff Mx = My, \qquad
	x \mathrel{\JJ} y \iff MxM = MyM ,
\]
and $\HH = \RR \cap \LL$.  A monoid is \emph{$\RR$-trivial} if $xM = yM$ implies $x = y$,
and similarly for the other relations.  In particular, it is \emph{$\HH$-trivial} if
\[
	xM=yM \quad\text{and}\quad Mx=My \quad\Longrightarrow\quad x=y;
\]
the two ideal equalities are required simultaneously.  A finite monoid is aperiodic if and
only if it is $\HH$-trivial; below we only use the direction we prove
(\lemref{lem:singleton}).

Following \cite{AGS19}, define the \emph{rank} of $m \in M$ as
\[
	\rho(m) = |M| - |MmM| \enspace,
\]
the number of elements \emph{outside} the two-sided ideal generated by $m$.  Products generate
smaller ideals, so rank can only grow along a product:

\begin{proposition}\label{prop:rank-mono}
	For any $p, q \in M$ we have $\rho(p) \le \rho(pq)$ and $\rho(q) \le \rho(pq)$.
\end{proposition}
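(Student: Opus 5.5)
The plan is to prove ideal containment first and then pass to cardinalities. Since $\rho(m)=|M|-|MmM|$, the inequality $\rho(p)\le\rho(pq)$ is equivalent to $|MpqM|\le|MpM|$. So it suffices to show the set inclusions $MpqM\subseteq MpM$ and $MpqM\subseteq MqM$, because $M$ is finite and inclusion gives the cardinality bound.

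For the first inclusion, take an arbitrary element $a(pq)b$ of $MpqM$, with $a,b\in M$. Associativity lets us rewrite it as $a\,p\,(qb)$. Since $qb\in M$, this element lies in $MpM$.

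Symmetrically, $a(pq)b=(ap)\,q\,b$ with $ap\in M$, so it lies in $MqM$. This gives the second inclusion.

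Subtracting the corresponding cardinalities from $|M|$ then yields both claimed inequalities. The argument uses only associativity and closure of $M$ under multiplication, so neither the identity nor aperiodicity is needed. I expect no real obstacle here; the only point needing care is that finiteness of $M$, which is implicit because rank is defined as a count, is what lets us convert inclusions into inequalities of sizes.
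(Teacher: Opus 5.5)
Your proposal is correct and follows the paper's proof, which consists of exactly the two inclusions $MpqM\subseteq MpM$ and $MpqM\subseteq MqM$ that you derive by reassociating $a(pq)b$. Passing to cardinalities via $\rho(m)=|M|-|MmM|$ is the implicit final step in the paper as well.
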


\begin{proof}
	$MpqM \subseteq MpM$ and $MpqM \subseteq MqM$.
\end{proof}

Define $\dJ(M)$ to be the length of the longest chain
\[
	M m_1 M \subsetneq M m_2 M \subsetneq \cdots \subsetneq M m_k M
\]
of principal two-sided ideals, counted as the number of strict containments
$k - 1 \le |M| - 1$.  Similarly $\dR(M)$ denotes the
length of the longest chain of (principal) right ideals.  These are the depth parameters that will
appear in our bounds.  Note that $\dJ$ and $\dR$ can be exponentially smaller than $|M|$: in the
free semilattice $2^{[m]}$ we have $|M| = 2^m$ while $\dR = m$.

\paragraph{Conventions.}
Depth parameters ($\dJ$, $\dR$) count \emph{strict containments} in a chain of ideals; heights
of posets and semilattices ($h$) count the \emph{elements} of a chain.  Throughout, a
bare logarithm $\log x$ in an asymptotic complexity bound is understood
as $\log(2+x)$ when needed to cover small parameters.  An explicitly
shifted argument, such as $\log(n+2)$, is read literally.  Unsubscripted
logarithms are natural logarithms; $\log_2$ denotes base two.

\paragraph{Aperiodicity.}
The following two lemmas isolate what aperiodicity gives us.  The first is a standard
``sandwich'' lemma; the second says that an element of an aperiodic monoid is determined by its
right ideal, its left ideal, and its two-sided ideal.

\begin{lemma}[Sandwich Lemma]\label{lem:sandwich}
	Let $M$ be a finite aperiodic monoid.  If $pqr = q$ then $q = pq$ and $q = qr$.
\end{lemma}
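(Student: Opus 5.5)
We iterate the relation and then use aperiodicity to collapse the resulting powers. From $pqr=q$, substituting $q$ into itself repeatedly gives $q=p^kqr^k$ for every $k\ge1$.

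Since $M$ is finite and aperiodic, the aperiodicity index $k=\iota(M)$ exists, and every element satisfies $a^k=a^{k+1}$. In particular $p^{k+1}=p^k$ and $r^{k+1}=r^k$.

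Now compute $pq$:
\[
 pq=p\cdot p^kqr^k=p^{k+1}qr^k=p^kqr^k=q.
\]
Symmetrically,
\[
 qr=p^kqr^k\cdot r=p^kqr^{k+1}=p^kqr^k=q.
\]
This gives both claimed equalities.

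Most of this is routine. The one point to check is that a single uniform exponent exists, and it does: take $k$ to be the maximum of the individual stabilization exponents of $p$ and $r$. Any larger exponent also works, because once $a^k=a^{k+1}$ holds, multiplying by $a$ gives $a^j=a^{j+1}$ for all $j\ge k$. So no genuine obstacle arises, and the argument does not even need the $\HH$-triviality characterization.
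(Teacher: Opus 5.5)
Your proof is correct and follows the paper's argument: iterate to $q=p^kqr^k$, choose an exponent at which the powers of $p$ and $r$ have stabilized, and absorb the extra factor of $p$ (respectively $r$). Taking the global aperiodicity index rather than an exponent chosen for $p$ and $r$ alone is an inessential variation.
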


\begin{proof}
	If $pqr = q$, then $p^k q r^k = q$ for any $k > 0$.  Let $N$ be such that $p^N = p^{N+1}$
	and $r^N = r^{N+1}$.  Then $p(p^N q r^N) = p^{N+1} q r^N = p^N q r^N = q$, and
	$p^N q r^N = q$, so $pq = p(p^N q r^N) = q$.  We can similarly argue that $q = qr$.
\end{proof}

\begin{proposition}\label{prop:identity}
	Let $M$ be a finite aperiodic monoid.  If $p_1 p_2 \cdots p_n = 1$ then
	$p_1 = \cdots = p_n = 1$.  Consequently $\rho(m) = 0$ iff $m = 1$.
\end{proposition}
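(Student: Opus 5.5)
We prove the first assertion by induction on $n$, using the Sandwich Lemma (\lemref{lem:sandwich}). The key is to view the product as $p_1 \cdot (p_2\cdots p_n) = 1$ and to show that each factor must equal $1$.

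The base case $n=1$ is trivial. For the inductive step, write $q = p_2\cdots p_n$, so $p_1 q = 1$. Then
\[
 q p_1 \cdot q \cdot 1 = q(p_1q) = q,
\]
so the Sandwich Lemma with outer factors $q p_1$ and $1$ gives $q = q p_1 q$. To get more, consider the element $1$ itself. We have
\[
 p_1 \cdot 1 \cdot q = 1,
\]
and the Sandwich Lemma with $(p,q,r) = (p_1, 1, q)$ yields $1 = p_1\cdot 1 = p_1$ and $1 = 1\cdot q = q$. So $p_1 = 1$ and $p_2\cdots p_n = 1$, and induction gives $p_2=\cdots=p_n=1$. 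This is in fact the whole argument: the first display is unnecessary, and the direct application with middle element $1$ does everything. (In general, $p_1\cdots p_n = 1$ gives $p_i\cdot 1\cdot(\text{rest}) = 1$ after grouping, but the induction is cleaner.)

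For the consequence, note that $\rho(1) = |M| - |M1M| = 0$, since $M1M = M$. Conversely, suppose $\rho(m) = 0$. Then $MmM = M$ (it is a subset of the same finite size), so $1 \in MmM$, and hence $1 = amb$ for some $a,b\in M$. Applying the first part to the three-factor product $a\cdot m\cdot b = 1$ gives $m = 1$.

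There is no real obstacle here. The only point to check is that the Sandwich Lemma applies with middle element $1$, which is legitimate since it holds for arbitrary $p,q,r$ in a finite aperiodic monoid.
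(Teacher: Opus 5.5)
Your proof is correct and essentially the paper's: both apply the Sandwich Lemma to $p\cdot 1\cdot q=1$ to get $p=q=1$, induct on the number of factors, and get $\rho(m)=0\Rightarrow m=1$ by writing $1=amb$. The first display involving $qp_1q$ is, as you note yourself, redundant and can simply be deleted.
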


\begin{proof}
	It suffices to handle $n = 2$ and induct.  If $pq = 1$ then $p \cdot 1 \cdot q = 1$, so by
	the Sandwich Lemma $1 = p \cdot 1 = p$ and $1 = 1 \cdot q = q$.

	For the consequence: $\rho(1) = 0$ since $M 1 M = M$.  Conversely if $\rho(m) = 0$ then
	$M m M = M \ni 1$, so $1 = amb$ for some $a, b$, and by the first part $a = m = b = 1$.
\end{proof}

\begin{lemma}[Sch\"utzenberger {\cite[Remark~5]{Sch65}}]\label{lem:singleton}
	Let $M$ be a finite aperiodic monoid and $m \in M$.  Let
	$J_m = \{s \in M : m \not\in MsM\}$.  Then $\{m\} = (mM \cap Mm) \setminus J_m$.
\end{lemma}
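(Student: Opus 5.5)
Clearly $m\in mM\cap Mm$, and $m\notin J_m$ because $m\in MmM$, so $m$ lies in the right-hand set. The plan is to prove the reverse inclusion: take $s\in(mM\cap Mm)\setminus J_m$ and show $s=m$. The approach is to convert the three ideal conditions into a sandwich identity and then apply the Sandwich Lemma (\lemref{lem:sandwich}) twice, once on each side.

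First, unpack the hypotheses. From $s\in mM$ write $s=mu$, from $s\in Mm$ write $s=vm$, and from $s\notin J_m$ write $m=asb$ for some $a,b,u,v\in M$. Substituting $s=mu$ gives $m=a\,m\,(ub)$, so the Sandwich Lemma yields $m=am$ and $m=mub=sb$. Substituting $s=vm$ instead gives $m=(av)\,m\,b$, so the Sandwich Lemma yields $m=mb$ and $m=avm=as$.

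Next, combine these identities so that $s$ appears as a sandwich of itself. From $m=sb$ and $s=vm$ we get $s=vm=vsb$. Applying the Sandwich Lemma to $v\,s\,b=s$ gives $s=sb$. Combining with $m=sb$ gives $s=m$, which completes the proof. An alternative route to the same conclusion is $s=mu=mbu$ together with $m=sb$.

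The only delicate point is choosing the substitution that turns the hypotheses into a pattern $pqr=q$ with $q=s$. The identity $s=vsb$, obtained from $m=sb$, is the key step, and once it is in hand the rest follows from the Sandwich Lemma. Note that this argument uses only that $s$ lies in $Mm$ and $MmM$ and that $m\in MsM$; the hypothesis $s\in mM$ enters only through the first sandwich application, which is not strictly needed for the conclusion.
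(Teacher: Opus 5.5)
Your proof is correct and follows essentially the paper's argument: substitute $s=mu$, $s=vm$, $m=asb$ and apply the Sandwich Lemma twice. The paper sandwiches $s$ both times. From $s=a\,s\,(bu)$ it gets $s=as$, and from $s=(va)\,s\,b$ it gets $s=sb$, so $m=asb=s$. You instead sandwich $m$ to get $m=sb$, and then use $s=vsb$ to get $s=sb$. Your second application, yielding $m=mb$ and $m=as$, is never used.

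Your closing remark is wrong, however. It is the first application, the one using $s\in mM$, that produces $m=sb$, and your key step $s=vm=vsb$ depends on it. So the hypothesis $s\in mM$ cannot be dropped. For a counterexample, take $m=e_{11}$ and $s=e_{21}$ in $B_2^1$. Then $s=e_{21}e_{11}\in Mm$ and $m=e_{12}s\in MsM$, but $s\neq m$.

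The alternative route via $s=mu=mbu$ and $m=sb$ also does not close as stated. Adjoin an identity to the right-zero semigroup on $\{s,m\}$, so that $xy=y$ for $x,y\in\{s,m\}$. With $u=s$ and $b=m$, the identities $s=mu$, $m=sb$ and $m=mb$ all hold, yet $s\neq m$.
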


\begin{proof}
	$\{m\} \subseteq (mM \cap Mm) \setminus J_m$: clearly $m \in mM$, $m \in Mm$, and
	$m \in MmM$, meaning $m \not\in J_m$.

	$(mM \cap Mm) \setminus J_m \subseteq \{m\}$: let $y \in (mM \cap Mm) \setminus J_m$.  Then
	\begin{enumerate}
		\item As $y \in mM$ we know $y = mq$ for some $q \in M$.
		\item As $y \in Mm$ we know $y = pm$ for some $p \in M$.
		\item As $y \not\in J_m$ we know $m = rys$ for some $r, s \in M$.
	\end{enumerate}
	By (1) and (3) we have $y = m q = (rys)q = r \cdot y \cdot sq$, which by the Sandwich Lemma
	means $y = ry$.  By (2) and (3) we have $y = pm = p(rys) = pr \cdot y \cdot s$, which by the
	Sandwich Lemma means $y = ys$.  Combining with (3), $m = rys = ys = y$.
\end{proof}

\paragraph{Adversary bounds.}
For a function \(f\) with finite output set, we use the general adversary lower bound of
H\o yer, Lee, and \v{S}palek~\cite{HLS07} in the following form.
If \(\Gamma\) is a real symmetric matrix indexed by promise inputs, with
\(\Gamma_{xy}=0\) whenever \(f(x)=f(y)\), and
\[
  (\Delta_i)_{xy}=\mathbf 1[x_i\ne y_i],
\]
then
\[
  Q_{1/3}(f)=\Omega\!\left(
    \frac{\|\Gamma\|}{\max_i\|\Gamma\circ\Delta_i\|}
  \right).
\]
The supremum of the displayed ratio over all such matrices $\Gamma$ is the
negative-weight general adversary bound $\ADVpm(f)$.  The general adversary theorem also
gives the converse direction: under our fixed bounded-error convention, for every function
with finite input and output alphabets,
\begin{equation}
  Q_{1/3}(f)=\Theta\bigl(\ADVpm(f)\bigr),
  \label{eq:adversary-tightness}
\end{equation}
with universal implicit constants~\cite{Rei11,LMRSS11}.  Thus every feasible dual adversary
construction below is, without any further algorithmic conversion or output-size loss, a
quantum query upper bound.  We will sometimes display the sharper adversary estimate and
state its query consequence in the following sentence.

For a deterministic function $f$, an \emph{all-pairs dual of cost $T$}
(or simply a \emph{dual}) consists of vectors satisfying
\begin{equation}\label{eq:all-pairs-dual}
 \sum_{i:x_i\ne y_i}\langle u_{x,i},v_{y,i}\rangle
 =\one{f(x)\ne f(y)}
 \qquad\text{for every }x,y,
\end{equation}
with
\begin{equation}\label{eq:all-pairs-cost}
 \max\left\{\max_x\sum_i\|u_{x,i}\|^2,
              \max_y\sum_i\|v_{y,i}\|^2\right\}\leq T.
\end{equation}
The all-pairs condition includes inputs with the same output, for which
the sum must be zero.  Such vectors certify $\ADVpm(f)\leq T$;
conversely, for finite input and output alphabets a dual exists of cost
at most $2\ADVpm(f)$~\cite{LMRSS11}.  Adversary tightness converts a
cost-$T$ dual into a quantum algorithm with any fixed positive error
using $O(1+T)$ queries.  In the ordered-monoid recursion we compose the
vectors, so this error is incurred only at the final conversion.

We next record the fixed-algebra classification.

\begin{theorem}[Fixed finite-monoid trichotomy]
\label{thm:fixed-monoid-trichotomy}
  Let $S$ be a finite nonaperiodic semigroup.  Then
  \[
    Q_{1/3}(\operatorname{Prod}_{S,n})=\Theta_S(n).
  \]
  The lower bound already holds for a Boolean postprocessing of the product.
  Consequently, for every fixed finite monoid $M$,
  \[
    Q_{1/3}(\operatorname{Prod}_{M,n})=
    \begin{cases}
      0, & |M|=1,\\[2mm]
      \widetilde\Theta_M(\sqrt n),
        & M\text{ is nontrivial and aperiodic},\\[2mm]
      \Theta_M(n), & M\text{ is nonaperiodic}.
    \end{cases}
  \]
  Here $\widetilde\Theta_M(\sqrt n)$ means an $\Omega(\sqrt n)$ lower bound and an
  $O_M(\sqrt n\,(\log(n+2))^{O_M(1)})$ upper bound.
\end{theorem}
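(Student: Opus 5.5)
The upper bound $O_S(n)$ is trivial: read all $n$ positions and multiply. So for the nonaperiodic case the work is the lower bound $\Omega_S(n)$. Since $S$ is finite and nonaperiodic, some element $a\in S$ has powers that never stabilize. The sequence $a,a^2,\ldots$ is eventually periodic with index $i$ and period $k\geq2$ ($k=1$ would give $a^i=a^{i+1}$). Then $e=a^{j}$, for the unique $j\in[i,i+k)$ with $k\mid j$, is an idempotent. The set $\{a^i,\ldots,a^{i+k-1}\}$ is a cyclic group of order $k$ with identity $e$, and $g=ea=a^{j+1}$ generates it.

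Take inputs over the alphabet $\{e,g\}$. Every such word is a product of $n$ elements of this group, so it equals $g^{c}$, where $c$ is the number of $g$'s. The product thus determines $c\bmod k$. In particular the Boolean postprocessing ``is the product in $\{e,g^2,\ldots\}$ versus equal to $g^{c}$ for $c\equiv 1\pmod k$'' is available. More simply, for $n'$ positions, knowing $c \bmod k$ lets us decide whether $c\equiv0$ or $c\equiv1 \pmod k$ on inputs with $c\in\{\lfloor n/2\rfloor,\lfloor n/2\rfloor+1\}$ (adjusting to residues $0,1$ by padding a fixed number of $g$'s, which costs no queries). This distinguishes Hamming weight $w$ from $w+1$ near $n/2$. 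By the standard adversary or polynomial bound (e.g.\ parity/majority-type threshold at the middle), that requires $\Omega(n)$ queries. The cleanest instantiation is an adversary matrix $\Gamma$ with $\Gamma_{xy}=1$ for $x,y$ of weights $w,w+1$ differing in one bit. This gives $\|\Gamma\|/\max_i\|\Gamma\circ\Delta_i\|=\Omega(\sqrt{w(n-w)})=\Omega(n)$ at $w\approx n/2$. The constant depends on $k\le|S|$ only through padding, hence $\Theta_S(n)$. The lower bound thus holds for a Boolean function of the product, as claimed.

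For the trichotomy with $M$ a monoid: if $|M|=1$ the answer is known with zero queries. If $M$ is nonaperiodic, apply the first part. If $M$ is nontrivial and aperiodic, the lower bound is the search argument from the introduction. Pick $a\neq1$ and restrict to inputs over $\{1,a\}$. By \propref{prop:identity}, the product is $1$ iff every letter is $1$, so computing the product solves OR and needs $\Omega(\sqrt n)$ queries. The upper bound $O_M(\sqrt n\,\log^{O_M(1)}(n+2))$ is the quantitative AGS bound \thmref{thm:main-ags}, namely $\sqrt n\,(NL(n))^{O(N)}$ with $N=|M|$ fixed, which has the stated form.

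The only step needing care is the group extraction in the nonaperiodic case. One must check that the chosen $e$ is genuinely idempotent and that the product of any word over $\{e,g\}$ lies in the cyclic group and equals $g^c$, using $eg=ge=g$. I would verify these from the periodicity relations $a^{m+k}=a^m$ for $m\ge i$. The rest is standard.
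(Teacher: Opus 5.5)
Your proposal is correct and follows essentially the same route as the paper: restrict to the alphabet $\{e,g\}$ for a nontrivial subgroup, apply the adversary bound with the inclusion matrix between the Hamming layers of weights $\lfloor n/2\rfloor$ and $\lfloor n/2\rfloor+1$, and in the aperiodic case reduce \textsc{Or} to the product over $\{1,a\}$. The differences are cosmetic. You construct the cyclic subgroup explicitly from the powers of a non-stabilizing element, which the paper takes as known; the padding to residues $0,1$ is unnecessary, since $g^r\ne g^{r+1}$ already suffices and gives a universal constant; and citing \thmref{thm:main-ags} instead of \thmref{thm:ags-cuberoot-size} for the fixed-$M$ upper bound works equally well.
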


\begin{proof}
  A finite semigroup is nonaperiodic precisely when it contains a nontrivial subgroup.
  Choose in such a subgroup its identity $e$ and an element $g\ne e$.  Restrict the input
  letters to $e$ and $g$, and encode them by bits.  Put $r=\lfloor n/2\rfloor$ and promise
  that the Hamming weight is $r$ or $r+1$.  The two possible products are $g^r$ and
  $g^{r+1}$, which are distinct by cancellation in the subgroup; assign them different
  Boolean output values.

  Let $A$ be the inclusion matrix between the weight-$r$ and weight-$(r+1)$ layers of the
  Boolean cube, and use the symmetric adversary with off-diagonal blocks $A$ and
  $A^{\mathsf T}$.
  This bipartite graph is $(n-r,r+1)$-biregular, so
  \[
    \|A\|=\sqrt{(n-r)(r+1)}=\Omega(n).
  \]
  After filtering by any one queried coordinate, its edges form a matching, of norm at
  most $1$.  The adversary lower bound gives $\Omega(n)$ queries; reading the input gives
  the reverse inequality.

  Now let $M$ be a nontrivial aperiodic monoid and choose $a\ne1$.  On the alphabet
  $\{1,a\}$, Proposition~\ref{prop:identity} says that the product is $1$ exactly on the
  all-$1$ input.  Testing the exact product therefore contains \textsc{Or}, and costs
  $\Omega(\sqrt n)$.  The fixed-$M$ upper bound follows, for example, from
  \thmref{thm:ags-cuberoot-size}.  The trivial case is immediate, and the
  nonaperiodic case was proved above.
\end{proof}

Recall that the \emph{aperiodicity index} $\iota(M)$ of a finite aperiodic
monoid $M$ is the least integer $k\ge1$ such that $a^k=a^{k+1}$ for every
$a\in M$.

\begin{theorem}[Aperiodicity-index lower bound]
\label{thm:aperiodicity-index-lower}
	Let \(M\) be a nontrivial finite aperiodic monoid.  For every \(n\ge1\),
	\[
		Q_{1/3}(\operatorname{Prod}_{M,n})
		=\Omega\!\left(\sqrt{n\min\{n,\iota(M)\}}\right)
		=\Omega\!\left(\min\{n,\sqrt{n\iota(M)}\}\right).
	\]
	The lower bound holds on a two-letter alphabet and after Boolean
	postprocessing of the product.
\end{theorem}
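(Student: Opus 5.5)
Let $k=\iota(M)$. Since $M$ is nontrivial and aperiodic, $k\ge1$, and by minimality of $k$ some $a\in M$ satisfies $a^{k-1}\ne a^k$. In fact the powers $1,a,\ldots,a^k$ are pairwise distinct. If $a^i=a^j$ with $i<j\le k$, then $a^i=a^{i+(j-i)t}$ for every $t$, so $a^i$ equals $a^{N'}$ for all large $N'$, which is $a^k$. Then $a^{i}=a^{i+1}$ with $i<k$, and hence $a^{k-1}=a^k$, a contradiction. We restrict the input alphabet to $\{1,a\}$, encoded as bits with $a$ corresponding to $1$. The product on input $x$ is then $a^{\min\{|x|,k\}}$, where $|x|$ is the Hamming weight. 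Set $r=\min\{k,\lceil n/2\rceil\}$. We promise that the weight is either $r-1$ or $r$, and output the Boolean value $\one{[x]=a^r}$. This is a valid Boolean postprocessing, because $a^{r-1}\ne a^r$ when $r\le k$.

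Next we apply the general adversary bound exactly as in the proof of \thmref{thm:fixed-monoid-trichotomy}. Let $A$ be the inclusion matrix between the weight-$(r-1)$ and weight-$r$ layers of $\{0,1\}^n$, and take $\Gamma$ with off-diagonal blocks $A$ and $A^{\mathsf T}$. The bipartite graph is $(n-r+1,r)$-biregular, so $\|\Gamma\|=\|A\|=\sqrt{r(n-r+1)}$. For each coordinate $i$, the filtered matrix $\Gamma\circ\Delta_i$ keeps only the pairs differing exactly in bit $i$. Each input has at most one such partner, so this is a matching and has norm at most $1$. Hence $Q_{1/3}=\Omega(\sqrt{r(n-r+1)})$.

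Finally we do the arithmetic. Since $r\le\lceil n/2\rceil$, we have $n-r+1\ge n/2$, and therefore the bound is $\Omega(\sqrt{nr})$. Moreover $r\ge\min\{k,n\}/2$, which gives $\Omega(\sqrt{n\min\{n,k\}})$. The second form follows from $\sqrt{n\min\{n,k\}}=\min\{n,\sqrt{nk}\}$.

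The only delicate step is the distinctness of the powers, which guarantees that both promised weights give distinct products. This follows from minimality of $\iota(M)$ as argued above. The edge case $r=1$ works as well: the two layers are then weight $0$ and weight $1$, which is the \textsc{Or} lower bound.
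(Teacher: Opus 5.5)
Your proof is correct and is essentially the paper's argument: the same restriction to the alphabet $\{1,a\}$, the same adjacent weight layers $r-1,r$ with $r=\min\{k,\lceil n/2\rceil\}=\min\{k,\lfloor(n+1)/2\rfloor\}$, and the same $(n-r+1,r)$-biregular inclusion-matrix adversary whose coordinate filters are matchings. Two minor points: for $k=1$ the element $a$ with $a^{0}\ne a^{1}$ comes from nontriviality rather than minimality (the paper treats this case separately), and full pairwise distinctness of $1,a,\ldots,a^k$ is more than you need, since only $a^{r-1}\ne a^{r}$ is used.
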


\begin{proof}
	Put \(k=\iota(M)\).  If \(k>1\), minimality gives an \(x\in M\) such that
	\(x^{k-1}\ne x^k\).  Equality of two consecutive powers propagates under
	multiplication by \(x\), so
	\[
		x^{j-1}\ne x^j\qquad(1\le j\le k),
	\]
	where \(x^0=1\).
	If \(k=1\), choose any \(x\ne1\), and the same display holds for \(j=1\).

	Restrict the input alphabet to \(\{1,x\}\), encoding a Boolean \(1\) by \(x\)
	and a Boolean \(0\) by the monoid identity.  A Boolean string of Hamming weight \(j\) then has
	product \(x^j\).  For \(1\le r\le\min\{k,n\}\), restrict further to the
	weight-\((r-1)\) and weight-\(r\) layers and give their two distinct products
	opposite Boolean output values.  Let \(A_r\) be the inclusion matrix between
	these layers.  The corresponding bipartite graph is
	\((n-r+1,r)\)-biregular, and hence
	\[
		\|A_r\|=\sqrt{r(n-r+1)}.
	\]
	After filtering on any queried coordinate, its edges form a matching, of norm
	at most one.  The adversary bound therefore gives
	\(\Omega(\sqrt{r(n-r+1)})\) queries.

	Take
	\[
		r=\min\left\{k,\left\lfloor\frac{n+1}{2}\right\rfloor\right\}.
	\]
	If \(k\le(n+1)/2\), the bound is \(\Omega(\sqrt{nk})\); otherwise it is
	\(\Omega(n)\).  This proves both displayed forms.
\end{proof}

\section{General techniques for adversary upper bounds}\label{sec:width}

In this section we give two general techniques for proving adversary upper
bounds.  Both adapt the guessed-decision-tree model of Beigi and
Taghavi~\cite{BT20}.  The first, \thmref{thm:essential-width}, uses
incremental summaries and their essential width to obtain our sharp bound
for commutative monoid products.  The second,
\thmref{thm:bt-subroutine-composition}, allows the decision tree to call
subroutines and is used in the stably ordered case.

\subsection{Incremental summaries and essential width}\label{sec:essential-width}

\begin{definition}[Incremental summary and essential width]\label{def:essential-width}
	Let \(I\) be a finite index set with \(|I|=n\ge1\), let \(\Sigma\) and \(O\)
	be finite input and output alphabets, and let \(f:D\to O\) be a function
	on a nonempty promise domain \(D\subseteq\Sigma^I\).
	An \emph{incremental summary} for \(f\) has a finite state set \(Q\)
	with a fixed initial state \(q_\varnothing\in Q\).
	For each \(i\in I\), it has an update map
	\[
		\delta_i:Q\times\Sigma\longrightarrow Q.
	\]
	It also has a readout map \(g:Q\to O\) and, for each input \(x\in D\)
	and revealed index set \(T\subseteq I\), a summary state \(s_x(T)\in Q\), such that
	\[
		s_x(\varnothing)=q_\varnothing,\qquad
		s_x(T\cup\{i\})=\delta_i(s_x(T),x_i)\quad(i\notin T),\qquad
		f(x)=g(s_x(I)).
	\]
	For input \(x\), the essential positions of a revealed index set \(T\subseteq I\) are
	\[
		\Ess_x(T)=\{i\in T:s_x(T)\ne s_x(T\setminus\{i\})\},
	\]
	and the \emph{essential width} of the summary is
	\[
		\kappa=\max_{x\in D}\max_{T\subseteq I}|\Ess_x(T)|.
	\]
\end{definition}

The state is indexed by the revealed set rather than an ordering of that set; this is the
order-independence condition.

The parameter is deliberately backward-looking.  It does not bound the number of state
changes in every order.  It bounds how many members of an unordered prefix could have
arrived last and changed the state.  This is what makes a uniform random order useful.

We use the all-pairs dual formulation
\eqref{eq:all-pairs-dual}--\eqref{eq:all-pairs-cost}.  The following theorem
is a direct adversary upper bound; its proof does not pass through an
algorithm or the characterization of quantum query complexity.

\begin{theorem}[Essential-width adversary theorem]\label{thm:essential-width}
	If \(f\) has an incremental summary on \(n\ge1\) input positions
	of essential width at most \(B\ge0\), then
	\[
		\boxed{\ADVpm(f)\le16\sqrt{nB}.}
	\]
	Consequently $Q_{1/3}(f)=O(\sqrt{nB})$.
	If the essential width is zero, \(f\) is constant and
	\(\ADVpm(f)=Q_{1/3}(f)=0\).
\end{theorem}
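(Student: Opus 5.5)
We construct an explicit all-pairs dual in the sense of \eqref{eq:all-pairs-dual}--\eqref{eq:all-pairs-cost}, of cost at most $16\sqrt{nB}$. It comes from a guessed decision tree that scans the positions in a uniformly random order $\pi$ of $I$ and always predicts that the summary will not change.

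The zero-width case is immediate. If $B=0$, then $s_x(T)=s_x(T\setminus\{i\})$ for every $x$, $T$ and $i$. Peeling positions one at a time gives $s_x(I)=q_\varnothing$, so $f$ is constant and the zero dual works. Assume $B\ge1$ from now on.

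\textbf{The dual.} Fix an order $\pi$, and for an input $x$ let $T_t$ be the set of the first $t$ positions in $\pi$. The history of $x$ is the sequence of states $s_x(T_0),\dots,s_x(T_n)$. Step $t$ queries position $\pi(t)$.

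For two inputs $x,y$ with $x_{\pi(t)}\ne y_{\pi(t)}$, define $\tau$ as the first step at which their histories differ. Since both histories start at $q_\varnothing$ and the update uses only the current state and the revealed letter, the histories must agree up to step $\tau-1$, and $x_{\pi(\tau)}\ne y_{\pi(\tau)}$ at step $\tau$.

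We want the order-$\pi$ contribution to $\sum_i\langle u_{x,i},v_{y,i}\rangle$ to equal $\one{\text{histories differ}}$. Following Beigi--Taghavi, we give each step $t$ a register recording the history so far. The vectors $u_{x,\pi(t)}$ and $v_{y,\pi(t)}$ are nonzero only on the basis element labelled by the common history prefix. At the divergence step, at least one of the two inputs changes state. Write $w_t=\sqrt{B/t}$ and use the standard asymmetric split:
\begin{itemize}
\item If $x$ changes state at step $t$ (call this a changed branch), its vector gets weight $w_t$ on that coordinate.
\item If $x$ keeps its state (an unchanged branch), its vector gets weight $1/w_t$.
\end{itemize}
The output-label part is arranged so that the inner products sum to $\one{\text{histories differ}}$ (a $\pm$ tensor trick on the next state).

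The final history determines $f$, because $f(x)=g(s_x(I))$. So the vectors are composed with a trivial readout, and $\one{f(x)\ne f(y)}$ is obtained by tensoring with the output-dual vectors $e_{f(x)}$ together with a standard correction. I expect this to be the fiddly part: histories may differ while outputs agree, and the constant $16$ absorbs it.

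Finally, average over $\pi$ by taking the direct sum over all $\pi$ with weights $1/n!$. This preserves the equalities, and the cost becomes an expectation over $\pi$.

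\textbf{Cost bound.} For a fixed $x$, the cost contributed at step $t$ is
\[
\Pr[\text{change at }t]\,w_t+\Pr[\text{no change}]\,w_t^{-1}\cdot(\text{mass of differing }y).
\]
This needs care: the unchanged-branch term is charged only when the other input changes. The key estimate is the one in the overview. Conditional on the set $T_t$, the position $\pi(t)$ is uniform in $T_t$, and the state changes exactly when $\pi(t)\in\Ess_x(T_t)$. Hence
\[
\Pr[\text{change at }t\mid T_t]\le \frac{|\Ess_x(T_t)|}{t}\le \frac{B}{t}.
\]
Each step's contribution is therefore at most $2\sqrt{B/t}$, up to the balancing of the $\sqrt{t/B}$ term, which appears only as a factor on an event of probability $\le B/t$. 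Summing over $t$ gives $\sum_{t\le n}2\sqrt{B/t}\le 4\sqrt{nB}$, and tensoring with the output gadget gives $16\sqrt{nB}$.

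\textbf{Main obstacle.} The step I expect to be hardest is making the unchanged-branch weight $\sqrt{t/B}$ cost only $\sqrt{B/t}$ in expectation. To handle it, I would arrange that $x$'s unchanged-branch vector pairs only with the changed-branch vectors of $y$, so the cost is measured symmetrically per input. This uses the Beigi--Taghavi structure in which $u$ carries $w_t$ on "mistake" edges and $1/w_t$ on the "predicted" edge, and the predicted-edge weight is counted only when some mistake actually occurs.

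The consequence $Q_{1/3}(f)=O(\sqrt{nB})$ then follows from \eqref{eq:adversary-tightness}.
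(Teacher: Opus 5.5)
\textbf{There is a genuine gap, and it sits exactly at the step you flag as the main obstacle.} Your random scan, the unfolded-history first-divergence structure, and the hazard bound $\Pr[\text{change at }t\mid T_t]\le B/t$ are all correct. The color weighting is not.

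\emph{The weights are inverted.} You give the changed (mistake) branch weight $w_t=\sqrt{B/t}$ and the unchanged (predicted) branch weight $1/w_t$. The rescue you propose, that the predicted-edge weight is charged only when the partner changes, does not exist. The cost \eqref{eq:all-pairs-cost} is $\max_x\sum_i\|u_{x,i}\|^2$, which depends only on each input's own vectors. So the unchanged-branch norm is paid at every step where $x$ stays put, whatever $x$ is later paired with. That happens with probability at least $1-B/t$, so your scheme costs on the order of $\sum_t\sqrt{t/B}$, not $\sqrt{nB}$.

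\emph{A single scalar per branch also breaks feasibility.} At the first divergence, both inputs may move to different new states. The contribution is then $w_t\cdot w_t\ne1$, or $\sqrt{w_t}\sqrt{w_t}\ne1$ if your weights are squared norms.

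\textbf{The missing idea} is an asymmetric two-dimensional color gadget in which the predicted branch is cheap on both sides and only the mistake branch is expensive. With $W_t^{\cB}=\sqrt{t/B}$ and $W_t^{\cR}=\sqrt{B/t}$, put
\begin{itemize}
\item $\alpha_{\cB}=(W_t^{\cB})^{-1/2}e_{\cB}$ and $\alpha_{\cR}=(W_t^{\cR})^{-1/2}e_{\cR}$;
\item $\beta_{\cB}=(W_t^{\cR})^{1/2}e_{\cR}$ and $\beta_{\cR}=(W_t^{\cB})^{1/2}e_{\cB}+(W_t^{\cR})^{1/2}e_{\cR}$.
\end{itemize}
Then $\langle\alpha_c,\beta_d\rangle=\one{c=\cR\text{ or }d=\cR}$. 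This handles black--red, red--black and red--red divergences at once.

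Both black vectors have squared norm $\sqrt{B/t}$, and red vectors have squared norm at most $\sqrt{B/t}+\sqrt{t/B}$. Your hazard bound then gives expected color squared norm at most $2\sqrt{B/t}$ per step on each side. Tensoring with the state and output gadgets $\phi_a=e_\star+e_a$, $\psi_a=e_\star-e_a$ multiplies this by exactly $4$.

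With these gadgets the output step is not fiddly. The vectors $\widetilde\phi_{f(x)}$ and $\widetilde\psi_{f(y)}$ kill equal-output pairs exactly. Unequal outputs force the histories to diverge, so the first-divergence term contributes exactly one and no correction is needed. Summing gives $4\cdot2\sqrt B\sum_{t\le n}t^{-1/2}\le16\sqrt{nB}$.

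One small point: when averaging over $\pi$, scale each block by $1/\sqrt{n!}$, not $1/n!$.
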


\begin{proof}
	If the essential width is zero, every deletion leaves the summary unchanged.
	Repeated deletion gives \(s_x(I)=s_x(\varnothing)=q_\varnothing\) for every
	input \(x\), so \(f(x)=g(q_\varnothing)\) is constant and
	\(\ADVpm(f)=Q_{1/3}(f)=0\).  We may therefore assume that the essential
	width is positive, and hence \(B\ge1\).

	For a finite label set \(A\), let \(e_\star,(e_a)_{a\in A}\) be orthonormal and put
	\[
		\phi_a=e_\star+e_a,\qquad \psi_a=e_\star-e_a.
	\]
	Then
	\begin{equation}
		\langle\phi_a,\psi_b\rangle=\one{a\ne b},
		\qquad \|\phi_a\|^2=\|\psi_a\|^2=2.                              \label{eq:ineq-gadget}
	\end{equation}
	For state labels, take \(A=Q\), giving the vectors \(\phi_q,\psi_q\)
	for \(q\in Q\).  For output labels, use a separate space with orthonormal
	basis \(\widetilde e_\star,(\widetilde e_o)_{o\in O}\), and define
	\[
		\widetilde\phi_o=\widetilde e_\star+\widetilde e_o,\qquad
		\widetilde\psi_o=\widetilde e_\star-\widetilde e_o
		\qquad(o\in O).
	\]
	The tilded vectors satisfy the same inner-product and norm identities
	as in \eqref{eq:ineq-gadget}.

	Fix a permutation \(\pi=(i_1,\ldots,i_n)\) of \(I\), and write
	\[
		T_t=\{i_1,\ldots,i_t\},\qquad q_t(x)=s_x(T_t),\qquad
		q_0(x)=q_\varnothing.
	\]
	At time \(t\), the branch label is the after-state \(q_t(x)\).  Color the branch black
	if \(q_t(x)=q_{t-1}(x)\), and red otherwise; denote the color by
	\(\chi_t(x)\in\{\cB,\cR\}\).  There is at most one black branch at a node, namely the
	branch whose after-state is the old state.  There may be many red branches, and they
	remain distinguished by their different after-states.  We unfold the tree: a node is
	tagged by the full preceding branch transcript
	\[
		N_t(x)=(q_1(x),\ldots,q_{t-1}(x)),
	\]
	so histories that later reconverge still occupy orthogonal sectors.

	Choose positive weights \(W_t^{\cB},W_t^{\cR}\).  In a two-dimensional color space set
	\begin{align*}
		\alpha_{\cB}&=(W_t^{\cB})^{-1/2}e_{\cB},&
		\alpha_{\cR}&=(W_t^{\cR})^{-1/2}e_{\cR},\\
		\beta_{\cB}&=(W_t^{\cR})^{1/2}e_{\cR},&
		\beta_{\cR}&=(W_t^{\cB})^{1/2}e_{\cB}+(W_t^{\cR})^{1/2}e_{\cR}.
	\end{align*}
	These obey
	\begin{align}
		\langle\alpha_c,\beta_d\rangle
		&=\one{c=\cR\text{ or }d=\cR},                                    \label{eq:color-inner}\\
		\|\alpha_c\|^2&=1/W_t^c,
		&\|\beta_d\|^2&=W_t^{\cR}+\one{d=\cR}W_t^{\cB}.                 \label{eq:color-norms}
	\end{align}
	In mutually orthogonal node sectors define, for the coordinate \(i_t\),
	\begin{align}
		u^\pi_{x,i_t}
		&=e_{N_t(x)}\otimes\alpha_{\chi_t(x)}\otimes
		  \phi_{q_t(x)}\otimes\widetilde\phi_{f(x)},                    \label{eq:scan-u}\\
		v^\pi_{y,i_t}
		&=e_{N_t(y)}\otimes\beta_{\chi_t(y)}\otimes
		  \psi_{q_t(y)}\otimes\widetilde\psi_{f(y)}.                   \label{eq:scan-v}
	\end{align}

	We first check feasibility for this fixed order.  If \(f(x)\ne f(y)\), let \(t\) be
	the first time their after-states differ.  Their node transcripts agree, and the two
	branches cannot both be black, since their old state is common.  The color, state-label,
	and output-label inner products in \eqref{eq:scan-u}--\eqref{eq:scan-v} are therefore all
	one.  Moreover \(x_{i_t}\ne y_{i_t}\): deterministic update from the same old state with
	the same letter would give the same after-state.  Thus this term survives the query
	filter and contributes one.  Earlier terms vanish because their state labels agree;
	later terms vanish because their full node transcripts differ.  If \(f(x)=f(y)\), the
	output gadget kills every term.  Hence the fixed-order vectors satisfy
	\eqref{eq:all-pairs-dual}.  Their squared norms are
	\begin{align}
		\|u^\pi_{x,i_t}\|^2&=4/W_t^{\chi_t(x)},                            \label{eq:scan-u-norm}\\
		\|v^\pi_{y,i_t}\|^2&=4\bigl(W_t^{\cR}
		       +\one{\chi_t(y)=\cR}W_t^{\cB}\bigr).                       \label{eq:scan-v-norm}
	\end{align}

	It remains to average and choose the weights.  Put
	\(Z_t^\pi(x)=\one{\chi_t(x)=\cR}\).  Conditional on the unordered prefix
	\(T_t=T\), the last element \(i_t\) is uniform in \(T\), and
	\[
		Z_t^\pi(x)=1\quad\Longleftrightarrow\quad i_t\in\Ess_x(T).
	\]
	Consequently
	\begin{equation}
		\Pr_\pi[Z_t^\pi(x)=1\mid T_t=T]
		=\frac{|\Ess_x(T)|}{t}\le\frac Bt.                                \label{eq:backward-essential}
	\end{equation}
	Choose
	\[
		W_t^{\cB}=\sqrt{t/B},\qquad W_t^{\cR}=\sqrt{B/t}.
	\]
	Both \eqref{eq:scan-u-norm} and \eqref{eq:scan-v-norm} are bounded by
	\[
		4\left(\sqrt{B/t}+Z_t^\pi(x)\sqrt{t/B}\right).
	\]
	By \eqref{eq:backward-essential}, its expectation over \(\pi\) is at most
	\(8\sqrt{B/t}\).

	Finally take the orthogonal direct sum over all permutations, scaling every block by
	\(1/\sqrt{n!}\).  Equation~\eqref{eq:all-pairs-dual} is preserved, while squared norms
	become averages.  Thus both squared-norm bounds in \eqref{eq:all-pairs-cost} are at most
	\[
		8\sqrt B\sum_{t=1}^n t^{-1/2}\le16\sqrt{nB}.
	\]
\end{proof}

The same proof has a useful weighted form.  If coordinate \(i\) has positive cost \(c_i\),
then for \(B>0\), choosing
\(W^{\cB}_{i,t}=\lambda_t/c_i\), \(W^{\cR}_{i,t}=c_i/\lambda_t\), where
\(\lambda_t=\|c\|_2\sqrt{t/(nB)}\), gives
\begin{equation}
	\ADVpm_c(f)\le16\sqrt B\left(\sum_i c_i^2\right)^{1/2}.              \label{eq:weighted-width}
\end{equation}
For \(B=0\), the function is constant and the same bound is immediate.
The fixed-tree first-divergence vectors are a specialization of the generalized nonbinary
guessed-decision-tree construction of Beigi and Taghavi~\cite{BT20}.  Applying only their
coarse \(O(\sqrt{TG})\) estimate to the random scan would count an expected
\(G\le BH_n\) red branches and leave a \(\sqrt{\log n}\) loss.  The depth-dependent
weights above use the sharper hazard estimate \(\Pr[Z_t=1]\le B/t\) before summing the
norms, which removes that loss.

For maximum finding over a nonempty finite totally ordered alphabet $\Sigma$,
the summary is the largest revealed value, with a new least element for the
empty set.  Only a unique maximum can be essential, so the essential width
is at most one.  Thus \thmref{thm:essential-width} gives
$\ADVpm(\operatorname{MAX}_{n,\Sigma})\le16\sqrt n$ and hence
$Q_{1/3}(\operatorname{MAX}_{n,\Sigma})=O(\sqrt n)$.
The same argument applies to minimum finding; to return an attaining index,
we can include the index in the summary and break ties by index.
Beigi and Taghavi obtain $O(\sqrt{n\log n})$ for minimum finding and ask
whether other choices of weights can improve their
bounds~\cite[Proposition~7 and Section~6]{BT20}.
Our weights answer this question for minimum and maximum finding,
removing the $\sqrt{\log n}$ loss within their decision-tree framework.

There is also a fixed-order form in which the relevant parameter is the total number of
state changes rather than backward essential width.  We use this form for the
$\RR$-trivial product bound in \secref{sec:rtrivial} and the sharp Boolean
unitriangular bound in \secref{sec:boolean-unitriangular}.

\begin{lemma}[Bounded-change scan]\label{lem:bounded-change-scan}
	Let $\varnothing\ne\mathcal D\subseteq\Sigma^n$ be finite.  Suppose that
	$f:\mathcal D\to O$ is computed by a deterministic sequential summary
	\[
		q_0\in Q,\qquad q_i(x)=\delta_i(q_{i-1}(x),x_i),\qquad f(x)=g(q_n(x)).
	\]
	If every trajectory has at most $C\ge1$ indices $i$ for which
	$q_i(x)\ne q_{i-1}(x)$, then
	\[
		\ADVpm(f)\le8\sqrt{nC}.
	\]
	If $C=0$, then $f$ is constant and its adversary bound is zero.
\end{lemma}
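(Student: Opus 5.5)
The plan is to reuse the fixed-order construction from the proof of \thmref{thm:essential-width}, now with the identity scanning order $1,2,\ldots,n$ and no averaging over permutations. Instead of depth-dependent weights chosen from a hazard estimate, we choose uniform weights calibrated to the total change budget $C$.

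If $C=0$, then $q_n(x)=q_0$ for every $x$, so $f(x)=g(q_0)$ is constant and $\ADVpm(f)=0$. Assume $C\ge1$. Build the unfolded tree exactly as before. At step $i$ the node is tagged by the transcript $N_i(x)=(q_1(x),\ldots,q_{i-1}(x))$. The branch is colored black if $q_i(x)=q_{i-1}(x)$ and red otherwise. Define
\[
 u_{x,i}=e_{N_i(x)}\otimes\alpha_{\chi_i(x)}\otimes\phi_{q_i(x)}\otimes\widetilde\phi_{f(x)},\qquad
 v_{y,i}=e_{N_i(y)}\otimes\beta_{\chi_i(y)}\otimes\psi_{q_i(y)}\otimes\widetilde\psi_{f(y)},
\]
using the gadgets \eqref{eq:ineq-gadget} and the color vectors with weights $W^{\cB},W^{\cR}$, now independent of $i$.

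Feasibility follows verbatim from the fixed-order argument. Take $f(x)\ne f(y)$ and let $i$ be the first index where the after-states differ. The transcripts agree, and the two branches cannot both be black. We also have $x_i\ne y_i$, since $\delta_i$ is deterministic. So this term contributes one. Earlier terms are killed by equal state labels, and later terms by orthogonal node sectors. If $f(x)=f(y)$, the output gadget kills every term. This check needs only $Q$ and $O$ finite, which is implied by $\mathcal D$ finite: we may restrict $Q$ to the reachable states.

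For the cost, \eqref{eq:scan-u-norm}--\eqref{eq:scan-v-norm} give
\[
 \sum_i\|u_{x,i}\|^2\le 4\Bigl(\frac{n}{W^{\cB}}+\frac{C}{W^{\cR}}\Bigr),\qquad
 \sum_i\|v_{y,i}\|^2\le 4\bigl(nW^{\cR}+CW^{\cB}\bigr),
\]
because each trajectory has at most $C$ red steps. We would choose $W^{\cB}=\sqrt{n/C}$ and $W^{\cR}=\sqrt{C/n}$. Each bound then becomes $4(\sqrt{nC}+\sqrt{nC})=8\sqrt{nC}$, which gives the claimed constant.

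The only delicate point is the bookkeeping for $u$. For a black step the squared norm is $4/W^{\cB}$, and for a red step it is $4/W^{\cR}$; bounding the number of red steps by $C$ and the number of black steps by $n$ gives the stated sum. There is no real obstacle, since the bounded-change hypothesis replaces the random-order hazard bound $B/t$ with a deterministic count.
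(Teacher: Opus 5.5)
Your proposal is correct and follows the paper's proof essentially step for step: the same natural-order unfolding with transcript tags $N_i(x)$, the same black/red coloring, the same fixed weights $W^{\cB}=\sqrt{n/C}$ and $W^{\cR}=\sqrt{C/n}$, and the same first-divergence feasibility argument. The only difference is cosmetic: you bound the number of black steps by $n$ instead of $n-r$, and this still gives the constant $8\sqrt{nC}$.
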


\begin{proof}
	Use the state and output inequality gadgets from \eqref{eq:ineq-gadget}, and unfold the
	natural-order computation by the preceding transcript
	$N_i(x)=(q_1(x),\ldots,q_{i-1}(x))$.  Color step $i$ black when
	$q_i(x)=q_{i-1}(x)$ and red otherwise.  In \eqref{eq:color-inner}--
	\eqref{eq:color-norms}, take the fixed weights
	\[
		W^{\cB}=\sqrt{n/C},\qquad W^{\cR}=\sqrt{C/n},
	\]
	and define $u_{x,i},v_{y,i}$ exactly as in \eqref{eq:scan-u}--
	\eqref{eq:scan-v}, with $i_t=i$.  At the first divergence of two state trajectories,
	the old states agree, the current letters must differ, and the two branches cannot both
	be black.  That term contributes one; earlier terms vanish through the state-label
	gadget and later terms through the transcript tag.  The output gadget kills every term
	when the outputs agree.  Thus \eqref{eq:all-pairs-dual} holds.

	If a trajectory has $r\le C$ red steps, the two total squared norms are at most
	\begin{align*}
		4\left(\frac{n-r}{W^{\cB}}+\frac r{W^{\cR}}\right)
		&\le8\sqrt{nC},\\
		4\left(nW^{\cR}+rW^{\cB}\right)
		&\le8\sqrt{nC}.
	\end{align*}
	Equation~\eqref{eq:all-pairs-cost} proves the claim.  If $C=0$, every trajectory stays
	at $q_0$, so the output is constant.
\end{proof}

\subsection{Decision trees with subroutine calls}\label{sec:bt-subroutines}

In our stably ordered algorithm in \secref{sec:beta}, the classical
sampling procedure calls subroutines that summarize smaller input
intervals.  After fixing all random choices, these summaries are
deterministic functions whose adversary bounds are supplied by the
recursive analysis.  We need to use these bounds while counting the
prediction mistakes made by the surrounding procedure.  For example,
in the rejection sampler of \secref{sec:beta-sampling}, we predict that
each sampled summary will be rejected; accepting one is a prediction
mistake.

The following theorem combines the Beigi--Taghavi construction with
adversary dual composition to analyze such procedures.  We treat the
subroutine outputs as coordinates of a virtual input and apply the
decision-tree construction to the calls.  Composing with duals for the
subroutines then charges each call according to its adversary bound.
This preserves the quantum savings obtained at earlier recursive stages.
We compose the duals through the entire recursion and convert to a
quantum algorithm only at the end.

\begin{theorem}[Beigi--Taghavi with subroutine calls]
\label{thm:bt-subroutine-composition}
Let $f:\mathcal D\to O$, where $\mathcal D\subseteq\Sigma^n$, and let
$g_j:\mathcal D\to\Gamma_j$, for $j$ in a finite index set $J$, be
functions with $\ADVpm(g_j)\leq T$ for some $T\geq0$.
All input and output alphabets are finite.
Suppose a deterministic procedure computes $f(x)$, accessing $x$ only
through calls to the functions $g_j$.  The choice of which function to
call may depend on earlier answers.  At each call, designate at most
one outgoing branch as predicted.  Return values may be grouped into
one branch provided the subsequent computation, including retained
data, depends only on that branch.

Suppose every execution on $x\in\mathcal D$ makes at most $q$ calls
and takes an unpredicted branch at most $G$ times, where $q,G$ are
nonnegative integers.  Then $f$ has a dual of cost $O(T\sqrt{qG})$, and
\[
 Q_{1/3}(f)=O(T\sqrt{qG}).
\]
The called functions may depend on overlapping input positions.
If $q=0$, $G=0$, or $T=0$, then $f$ is constant and the cost is zero.
\end{theorem}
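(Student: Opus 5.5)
The plan is to build the dual in two layers. First, I will construct an all-pairs dual for $f$ over a \emph{virtual input} whose coordinates are the call outputs. Second, I will compose it with duals for the $g_j$.

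For the virtual input, index coordinates by $j\in J$ and assign $z_j(x)=g_j(x)\in\Gamma_j$. The procedure is then a deterministic adaptive decision tree on $z(x)$. I unfold it exactly as in the proof of \thmref{thm:essential-width}. A node is tagged by its full preceding branch transcript. Each call's branch is coloured black if it is the predicted branch and red otherwise, and grouped return values are treated as one branch label. At the node reached at call depth $t$, querying coordinate $j$, I define
\[
 u_{x,j}=\sum_{t} e_{N_t(x)}\otimes\alpha_{\chi_t(x)}\otimes\phi_{b_t(x)}\otimes\widetilde\phi_{f(x)},
\]
and $v_{y,j}$ analogously, using $\beta$, $\psi$ and $\widetilde\psi$. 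Here $b_t$ is the branch label. The sum ranges over those depths at which $j$ is the coordinate called; for a fixed execution each $j$ may be called at several depths, but the transcript tags keep these terms orthogonal.

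Feasibility follows the first-divergence argument. If $f(x)\ne f(y)$, the executions first diverge at some call. At that call the transcripts and the called $j$ agree, and the branches differ, so $z_j(x)\ne z_j(y)$. The two branches are not both black, so the contribution is one. Earlier terms vanish by the branch-label gadget, and later terms vanish by the transcripts. I have to make sure the grouping hypothesis is what justifies this: if two executions take the same grouped branch, their later behaviour and retained data coincide, so the first divergence genuinely happens at a branch difference. If $f(x)=f(y)$, the output gadget kills everything. With constant weights $W^{\cB}=\sqrt{q/G}$ and $W^{\cR}=\sqrt{G/q}$, and at most $q$ calls with at most $G$ red ones, the computation of \lemref{lem:bounded-change-scan} bounds both squared-norm totals by $8\sqrt{qG}$. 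This gives a virtual dual of cost $O(\sqrt{qG})$ satisfying $\sum_{j:z_j(x)\ne z_j(y)}\langle u_{x,j},v_{y,j}\rangle=\one{f(x)\ne f(y)}$.

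The second step, and the main obstacle, is composition. For each $g_j$, take an all-pairs dual $(a^{(j)}_{x,i},b^{(j)}_{y,i})$ of cost at most $2T$. Then set
\[
 U_{x,i}=\bigoplus_j u_{x,j}\otimes a^{(j)}_{x,i}\quad\text{and}\quad V_{y,i}=\bigoplus_j v_{y,j}\otimes b^{(j)}_{y,i}.
\]
Summing over $i$ with $x_i\ne y_i$ gives
\[
 \sum_j\langle u_{x,j},v_{y,j}\rangle\,\one{g_j(x)\ne g_j(y)}.
\]
This is exactly the virtual constraint, because the inner duals certify inequality of $g_j$ as the filter. The inner sum also includes pairs $x,y$ for which the function values agree, and those terms vanish correctly; this is why all-pairs duals, rather than plain adversary matrices, are required.

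The costs multiply: $\sum_i\|U_{x,i}\|^2=\sum_j\|u_{x,j}\|^2\sum_i\|a^{(j)}_{x,i}\|^2\le 2T\cdot O(\sqrt{qG})$. This exhibits an unbalanced pair, with $u$-side and $v$-side costs. I will rebalance by rescaling $u$ by $c$ and $v$ by $1/c$. Since the inner $a$- and $b$-costs are both at most $2T$, the total is $O(T\sqrt{qG})$.

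Overlapping input positions cause no trouble, because the direct sum is over $j$ and each inner dual handles its own dependence. The query bound then follows from \eqref{eq:adversary-tightness}. The degenerate cases are handled as follows. If $q=0$ or $G=0$, every execution follows one fixed path, so $f$ is constant. If $T=0$, each $g_j$ is constant, so $f$ is again constant.
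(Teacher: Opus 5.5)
Your proposal is correct and follows the paper's proof. You treat the call outputs $(g_j(x))_{j\in J}$ as a virtual input, build the Beigi--Taghavi first-divergence dual of cost $O(\sqrt{qG})$ on it, and compose with all-pairs duals of cost at most $2T$ for the $g_j$ via $\bigoplus_j u_{x,j}\otimes a^{(j)}_{x,i}$. Where the paper simply cites the Beigi--Taghavi construction, you spell it out with the bounded-change-scan weights; your rebalancing step is unnecessary, since both squared-norm sums are already at most $2T\cdot8\sqrt{qG}$.
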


\begin{proof}
If $q=0$ or $G=0$, all inputs follow the same path to a leaf.  If $T=0$,
every $g_j$ is constant, so the output is again constant.  Assume henceforth
that $q,G,T>0$.

Regard $g(x)=(g_j(x))_{j\in J}$ as a virtual input.  The procedure becomes
a decision tree computing $H$ on the promise domain
$Z=\{g(x):x\in\mathcal D\}$, with $f=H\circ g$.  Its execution uses at
most $q$ virtual queries and makes at most $G$ prediction mistakes.
The Beigi--Taghavi construction~\cite{BT20}, in the all-pairs form used
in \secref{sec:essential-width}, therefore gives a dual
$(a_{z,j},b_{z,j})$ for $H$ of cost $O(\sqrt{qG})$.
No bound on executions outside $Z$ is needed.

For each $j$, choose a dual $(c^j_{x,i},d^j_{x,i})$ for $g_j$ of cost
at most $2T$.  The standard composition construction~\cite{LMRSS11} is
\[
 u_{x,i}=\bigoplus_j a_{g(x),j}\otimes c^j_{x,i},
 \qquad
 v_{y,i}=\bigoplus_j b_{g(y),j}\otimes d^j_{y,i}.
\]
The inner all-pairs identities turn the query-filtered inner product into
\[
 \sum_{j:g_j(x)\ne g_j(y)}
      \langle a_{g(x),j},b_{g(y),j}\rangle
 =\one{f(x)\ne f(y)}.
\]
Each squared-norm sum grows by at most a factor $2T$, proving the dual
bound.  The construction uses no disjointness assumption on the inputs
to the called functions.  Weak duality and adversary tightness give the
query bound.
\end{proof}

For randomized procedures, we fix all seeds, including those of recursive
calls, before applying the theorem.  We bound the dual cost of each resulting
output function uniformly in the seeds, and analyze the classical success
probability separately.

\subsection{Bounded-change prefix tracking}\label{sec:rtrivial}

Every finite $\RR$-trivial monoid is aperiodic: distinct elements on a nontrivial cycle of
powers would generate the same principal right ideal.  In an $\RR$-trivial monoid, every
change in the prefix product strictly descends through the principal right ideals.  The
bounded-change scan therefore gives a log-free bound governed by the right-ideal depth
$\dR(M)$ rather than by $|M|$.

\begin{theorem}[Log-free $\RR$-trivial product bound]\label{thm:rtrivial}
	Let $M$ be a finite $\RR$-trivial monoid.  Then
	\[
		\ADVpm(\operatorname{Prod}_{M,n})
		\le8\sqrt{n\min\{n,\dR(M)\}},
		\qquad
		Q_{1/3}(\operatorname{Prod}_{M,n})
		=O\!\left(\min\{n,\sqrt{n\dR(M)}\}\right).
	\]
	More generally, let $\varnothing\ne\mathcal D\subseteq M^n$ and define
	$f:\mathcal D\to M$ by $f(x)=x_1\cdots x_n$.  Put
	\[
		C_{\mathcal D}=\max_{x\in\mathcal D}
		\bigl|\{0\le i<n:p_i(x)\ne p_{i+1}(x)\}\bigr|.
	\]
	Then
	$\ADVpm(f)\le8\sqrt{nC_{\mathcal D}}$,
	with value zero when $C_{\mathcal D}=0$.
\end{theorem}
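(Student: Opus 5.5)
The plan is to apply \lemref{lem:bounded-change-scan} directly to the natural left-to-right prefix scan. The general statement comes first, and the $\RR$-trivial bound is then a specialization.

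For the general statement, take $Q=M$ with initial state $q_0=1$, update $\delta_i(q,a)=qa$, and readout $g=\mathrm{id}$. Then $q_i(x)=p_i(x)$ and $f(x)=g(q_n(x))$. The number of indices $i$ with $q_i(x)\ne q_{i-1}(x)$ is exactly the number of $0\le i<n$ with $p_i\ne p_{i+1}$, which is at most $C_{\mathcal D}$. If $C_{\mathcal D}\ge1$, the lemma gives $\ADVpm(f)\le8\sqrt{nC_{\mathcal D}}$. If $C_{\mathcal D}=0$, every trajectory stays at $1$, so $f$ is constant and its adversary bound is zero. The domain $\mathcal D$ is finite because $M$ is finite, so the lemma's hypotheses hold.

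For the $\RR$-trivial bound, take $\mathcal D=M^n$ and bound $C_{\mathcal D}$ in two ways. Trivially, $C_{\mathcal D}\le n$, since there are only $n$ steps. For the other bound, note that $p_{i+1}M=p_ix_{i+1}M\subseteq p_iM$. If $p_{i+1}\ne p_i$, then $\RR$-triviality forbids $p_{i+1}M=p_iM$, so the containment is strict. The distinct prefix values therefore form a chain of principal right ideals with one strict containment per change, so the number of changes is at most $\dR(M)$. Hence $C_{\mathcal D}\le\min\{n,\dR(M)\}$, which gives $\ADVpm\le8\sqrt{n\min\{n,\dR(M)\}}$.

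For the query bound, adversary tightness \eqref{eq:adversary-tightness} converts this into $Q_{1/3}=O(\sqrt{n\min\{n,\dR(M)\}})=O(\min\{n,\sqrt{n\dR(M)}\})$. If $\dR(M)=0$, then $M$ is trivial and the product is constant, so the bound still holds.

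There is no real obstacle. The only point needing care is that "changes" in the lemma are counted per step, and each change is a strict descent in the right-ideal chain. That holds exactly because of $\RR$-triviality, and the length of such a chain is by definition at most $\dR(M)$.
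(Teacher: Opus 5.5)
Your proposal is correct and follows essentially the same route as the paper: you apply the bounded-change scan (\lemref{lem:bounded-change-scan}) to the prefix products, and you use $\RR$-triviality to make every change a strict descent $p_{i+1}M\subsetneq p_iM$. This gives $C_{\mathcal D}\le\min\{n,\dR(M)\}$, and adversary tightness then yields the query bound.
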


\begin{proof}
	Use the sequential summary $q_i(x)=p_i(x)=x_1\cdots x_i$.  We always have
	$p_iM\subseteq p_{i-1}M$.  If the prefix changes, equality of these two right ideals
	would contradict $\RR$-triviality, so every change produces a strict containment.
	Consequently
	$C_{\mathcal D}\le\min\{n,\dR(M)\}$.  \lemref{lem:bounded-change-scan} proves the
	adversary statements, adversary tightness gives the query bound, and querying all
	letters gives the cap by $n$.
\end{proof}

For a finite semilattice $L$, $\dR(L) + 1$ is its height (recall $\dR$ counts strict
containments while height counts elements), so \thmref{thm:rtrivial} gives
$O(\sqrt{n(h-1)})$.  For the union semilattice
$2^{[m]}$ this is $O(\sqrt{nm})$---tight up to constants, and exponentially better in
the monoid parameters than the original AGS bound in \thmref{thm:main-ags}.
The dependence on $\dR$ is optimal up to constants: the capped-addition monoid
$M_K=(\{0,\ldots,K\},\mathbin{\oplus})$, where $a\oplus b=\min\{a+b,K\}$, has
$\dR(M_K)=K$, and its product problem on binary inputs requires
$\Omega(\sqrt{nK})$ queries for $K\le n/2$ by
\propref{prop:jtrivial-log-fails}.

\section{Commutative monoids}\label{sec:lattice}

We now prove the sharp breadth bound for commutative monoids.  The upper
bound follows from the essential-width adversary theorem
(\thmref{thm:essential-width}).  For commutative products, essential width
equals product breadth.  We then bound breadth using commutativity and
stabilization identities.

\begin{theorem}[Commutative product bound]\label{thm:commutative-beta}
	For every finite commutative aperiodic monoid \(M\), alphabet \(G\subseteq M\), and
	\(n\ge1\),
	\begin{equation}
		\ADVpm(\operatorname{Prod}_{M,G,n})
		\le16\sqrt{n\min\{n,\beta_G(M)\}}.                             \label{eq:comm-width}
	\end{equation}
	Consequently,
	\[
		Q_{1/3}(\operatorname{Prod}_{M,G,n})
		=O\!\left(\min\{n,\sqrt{n\beta_G(M)}\}\right),
	\]
	with a universal implicit constant.
	If $1\in G$, this bound is tight:
	\[
		Q_{1/3}(\operatorname{Prod}_{M,G,n})
		=\Theta\!\left(\min\{n,\sqrt{n\beta_G(M)}\}\right).
	\]
\end{theorem}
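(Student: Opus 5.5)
The upper bound will come from \thmref{thm:essential-width} with the subset-product summary. Take $Q=M$, $q_\varnothing=1$, $s_x(T)=p(T)=\prod_{i\in T}x_i$, update maps $\delta_i(q,a)=qa$, and readout $g=\mathrm{id}$. Commutativity makes $p(T)$ independent of order, so $s_x(T\cup\{i\})=\delta_i(s_x(T),x_i)$ holds and $f(x)=p(I)$. If $\beta_G(M)\ge n$, the bound $16\sqrt{n\cdot n}$ follows from the trivial width bound $|\Ess_x(T)|\le n$. So the main step is to show that the essential width is at most $\beta=\beta_G(M)$.

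For that I would use the divisibility preorder $a\le b\iff b\in aM$. First, stability: if $b=ac$, then $bd=(ad)c$, which handles right multiplication; commutativity handles the left. Second, antisymmetry: if $b=ac$ and $a=bc'$, then $a=a(cc')$, so $a=a(cc')^k$ for all $k$. Aperiodicity with $k=\iota$ then gives $a c=a(cc')^kc$. Using commutativity this is $a(cc')^{k}c\,$ \ldots, and I would check that it reduces via $(cc')^k=(cc')^{k+1}$ to $ac=a$. Concretely, $b=ac=a(cc')^kc=a c^{k+1}c'^k$ and $a=a c^kc'^k$, and one can shuffle exponents using $x^k=x^{k+1}$ for $x=cc'$. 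I expect this algebra to be the only fiddly point.

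With the order in hand, fix $x$ and $T$, and let $D\subseteq T$ index a product core of the subword on $T$ with $|D|\le\beta$. For $i\in T\setminus D$ we have $D\subseteq T\setminus\{i\}\subseteq T$, so monotonicity gives $p(D)\le p(T\setminus\{i\})\le p(T)=p(D)$. Antisymmetry then gives $p(T\setminus\{i\})=p(T)$, so $i$ is not essential. Hence $\Ess_x(T)\subseteq D$ and the width is at most $\beta$. The adversary bound \eqref{eq:comm-width} follows, and adversary tightness \eqref{eq:adversary-tightness} together with reading every position gives the $O(\cdot)$ query bound.

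For the lower bound when $1\in G$, the case $\beta=0$ is trivial. Otherwise I would choose a word $w\in G^*$ whose shortest core $c$ has length exactly $\beta$; this is attained because $\beta$ is finite, bounded by the essential-width argument or by finiteness of $M$. Put $r=\min\{\beta,\lfloor(n+1)/2\rfloor\}$ and let $c'$ be the first $r$ letters of $c$. The word $c'$ has no proper product-preserving subword, since such a subword would shorten $c$ by commutativity.

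Inputs of the first kind place the $r$ letters of $c'$ in distinct positions with $1$ elsewhere; inputs of the second kind do the same but omit one letter. Deleting a letter of $c'$ changes the product, and all arrangements have equal products by commutativity. Call two inputs adjacent when one is obtained from the other by replacing one letter with $1$; adjacent inputs always have different products. The adversary matrix on this biregular bipartite graph has degrees $r$ and $n-r+1$. Filtered by any single coordinate, it becomes a matching up to the choice of which letter is present there, which gives norm at most $1$. The adversary bound is therefore $\Omega(\sqrt{r(n-r+1)})=\Omega(\min\{n,\sqrt{n\beta}\})$, exactly as in \thmref{thm:aperiodicity-index-lower}.

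One caveat concerns repeated letters in $c'$: I would treat positions as labeled by distinct slots so that the filtered graph is still a matching.
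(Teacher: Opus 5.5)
Your proposal follows the paper's proof: the subset-product summary whose essential width is bounded by a product core via antisymmetry of divisibility (the paper phrases this as $\JJ$-triviality, Proposition~\ref{prop:comm-jtrivial}, proved with the Sandwich Lemma), and the same $(r,n-r+1)$-biregular adversary built from permutations of a prefix of a shortest core padded with identities. Two small repairs are needed. First, in the antisymmetry step apply the index identity to $c$ itself, $b=ac=ac^{k+1}(c')^k=ac^k(c')^k=a$; the identity for $cc'$ alone cannot suffice, since it also holds in a group with $c'=c^{-1}$. Second, the labelled-slot device is unnecessary, because the multiset of letters of an input missing one letter determines that letter, so the coordinate-filtered graph on actual strings is already a matching.
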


\subsection{Commutative products and product breadth}

For the rest of this section, \(M\) is a finite commutative aperiodic monoid and
\(G\subseteq M\) is the allowed alphabet.  For \(x\in G^n\) and \(T\subseteq[n]\), put
\begin{equation}
	 p_x(T)=\prod_{i\in T}x_i.                                             \label{eq:subset-product}
\end{equation}
The empty product is the identity.  Commutativity makes this an order-independent summary,
with update \(\delta(p,a)=pa\), and its final state is the desired product.  Thus
\begin{equation}
	\Ess_x(T)=\left\{i\in T:p_x(T)\ne p_x(T\setminus\{i\})\right\}.       \label{eq:product-essential}
\end{equation}
We first recall the structural consequence of commutative aperiodicity.

\begin{proposition}\label{prop:comm-jtrivial}
	A finite commutative aperiodic monoid is \(\JJ\)-trivial (hence \(\RR\)- and
	\(\LL\)-trivial).
\end{proposition}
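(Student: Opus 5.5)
Since $M$ is commutative, $MxM=xM=Mx$, so the three Green's relations $\JJ$, $\RR$ and $\LL$ all coincide with $\RR$. It therefore suffices to show that $M$ is $\RR$-trivial: if $xM=yM$ then $x=y$. We then also get $\LL$-triviality, and $\JJ$-triviality coincides with both. The plan is to reduce to the Sandwich Lemma (\lemref{lem:sandwich}), which needs only finiteness and aperiodicity.

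Suppose $xM=yM$. Then there are $a,b\in M$ with $x=ya$ and $y=xb$. Substituting gives
\[
 x = xba = 1\cdot x\cdot (ba).
\]
The Sandwich Lemma, applied with $p=1$, $q=x$, $r=ba$, gives $x=x(ba)$. That alone is not yet $x=y$, so we sharpen the argument and instead use the commutativity of $ba=ab$.

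From $x=xab$, the Sandwich Lemma with $p=1$, $q=x$, $r=ab$ yields $x = x(ab)$. We want the individual factor $b$ to be absorbed. Using commutativity, write
\[
 x = xab = b\cdot x\cdot a.
\]
Now apply the Sandwich Lemma with $p=b$, $q=x$, $r=a$. This gives $x=bx=xb$. Since $y=xb$, we conclude $y=x$.

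The main point, and the only mild obstacle, is noticing that commutativity lets us split the absorbing element $ab$ into a left factor and a right factor. The Sandwich Lemma then removes each factor separately. Everything else is routine, and the parenthetical claim follows because the relations coincide in the commutative setting.
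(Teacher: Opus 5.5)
Your proof is correct and is essentially the paper's argument. You use commutativity to rewrite $x=xba$ as $b\cdot x\cdot a$, apply the Sandwich Lemma to get $x=bx$, and conclude $y=xb=bx=x$. The two preliminary applications of the Sandwich Lemma with $p=1$ only restate the hypothesis $x=x(ab)$, so you can delete them.
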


\begin{proof}
	In a commutative monoid \(xM=Mx=MxM\), so \(\RR=\LL=\JJ\).  Suppose \(xM=yM\), say
	\(x=yu\) and \(y=xv\).  Then \(x=xvu=v\cdot x\cdot u\) by commutativity, and the
	Sandwich Lemma (\lemref{lem:sandwich}) gives \(x=vx\).  Hence
	\(y=xv=vx=x\).
\end{proof}

\begin{lemma}[Preserving subwords and essential width]\label{lem:comm-beta-width}
	For the subset-product summary, define its essential width over all word lengths by
	\begin{equation}
		\kappa_G(M)=\sup_{s\ge0}\ \sup_{x\in G^s}|\Ess_x([s])|.          \label{eq:monoid-width}
	\end{equation}
	Then \(\kappa_G(M)=\beta_G(M)\).  More precisely, every deletion-essential position
	of a revealed subword is retained by every product-preserving subword of it.
\end{lemma}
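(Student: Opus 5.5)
The plan is to first prove the "more precisely" retention claim and then derive both inequalities from it. The key tool is the divisibility order on $M$: set $a\le b$ iff $b=ac$ for some $c\in M$. This relation is reflexive and transitive. It is antisymmetric because $a=bd$ and $b=ac$ mean $aM=bM$, and then \propref{prop:comm-jtrivial} ($\RR$-triviality) gives $a=b$. By commutativity it is stable under multiplication. Since $1$ divides everything, inserting letters can only increase a subset product: $U\subseteq V$ implies $p_x(U)\le p_x(V)$, because $p_x(V)=p_x(U)\,p_x(V\setminus U)$.

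For the retention claim, fix $x\in G^s$ and a set $D\subseteq[s]$ with $p_x(D)=p_x([s])$. Let $i\notin D$. Then $D\subseteq[s]\setminus\{i\}$, and monotonicity gives the chain
\[
 p_x([s])=p_x(D)\le p_x([s]\setminus\{i\})\le p_x([s]).
\]
Antisymmetry forces equality throughout, so $i\notin\Ess_x([s])$. Hence every product-preserving subword retains every essential position, that is, $\Ess_x([s])\subseteq D$. Applying this to a shortest core gives $|\Ess_x([s])|\le\beta_G(M)$ for every word, and so $\kappa_G(M)\le\beta_G(M)$. The same argument applies verbatim to a revealed subword indexed by any $T$, since that subword is itself a word over $G$.

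For the reverse inequality $\beta_G(M)\le\kappa_G(M)$, take any word $w\in G^*$ and a shortest core $u\preceq w$, and view $u$ as a word in its own right. Each position $i$ of $u$ is essential in $u$. Otherwise, deleting $i$ would give a strictly shorter subword $u'\preceq u\preceq w$ with $[u']=[u]=[w]$, contradicting minimality. Thus $|u|=|\Ess_u([|u|])|\le\kappa_G(M)$. Taking the supremum over $w$ gives $\beta_G(M)\le\kappa_G(M)$, and this covers the case where either quantity is infinite.

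I expect the main obstacle to be justifying antisymmetry of divisibility, since that is where aperiodicity is genuinely used. It is supplied by \propref{prop:comm-jtrivial}. Everything else is routine monotonicity together with minimality of a shortest core.
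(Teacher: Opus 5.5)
Your proof is correct and follows essentially the same route as the paper. The paper writes your divisibility chain as the principal right ideal inclusions $p_x(S)M\supseteq p_x(T\setminus\{i\})M\supseteq p_x(T)M=p_x(S)M$ and then invokes \propref{prop:comm-jtrivial}, which is exactly your antisymmetry step. Your reverse inequality, obtained from the minimality of a shortest core, is identical to the paper's.
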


\begin{proof}
	A shortest preserving subword has every position deletion-essential, since otherwise
	it could be shortened further.  Applying the definition of \(\kappa_G(M)\) to that
	subword gives \(\beta_G(M)\le\kappa_G(M)\).

	Conversely, fix \(x\) and \(T\), and let \(S\subseteq T\) preserve its product:
	\(p_x(S)=p_x(T)\).  If \(i\in T\setminus S\), commutativity gives the inclusions
	\[
		p_x(S)M\supseteq p_x(T\setminus\{i\})M
		\supseteq p_x(T)M=p_x(S)M.
	\]
	These principal ideals are equal, so Proposition~\ref{prop:comm-jtrivial} gives
	\(p_x(T\setminus\{i\})=p_x(T)\).  Thus \(\Ess_x(T)\subseteq S\).
	Choosing \(S\) shortest proves
	\(|\Ess_x(T)|\le |S|\le\beta_G(M)\), and hence the reverse inequality.
\end{proof}

The equality concerns worst-case parameters.  The size of a shortest preserving subword
of one full input need not bound the essential width of all its revealed subwords.
For example, the union word \(\{1\},\ldots,\{m\},[m]\) is preserved by its last
position alone, but revealing just the first \(m\) positions makes all of them essential.

\begin{proof}[Proof of \thmref{thm:commutative-beta}.]
	The subset-product summary has essential width at most \(\min\{n,\beta_G(M)\}\) by
	\lemref{lem:comm-beta-width}.  Apply \thmref{thm:essential-width} and adversary
	tightness.  When \(\beta_G(M)=0\), every input product is the identity and no query is
	needed.

	For the lower bound, suppose $1\in G$ and $\beta_G(M)\geq1$, and put
	\[
		r=\min\left\{\beta_G(M),\left\lfloor\frac{n+1}{2}\right\rfloor\right\}.
	\]
	Choose a word $w=a_1\cdots a_r\in G^r$ with no proper
	product-preserving subword.  Such a word is obtained by taking a prefix
	of length $r$ of a shortest product core of length at least $r$.
	In particular, none of the $a_i$ is the identity.
	Let $X$ consist of all length-$n$ permutations of the letters of $w$
	and $n-r$ identities.  Let $Y$ consist of the inputs obtained by
	replacing one nonidentity letter of an input in $X$ by the identity.
	Commutativity implies that every input in $X$ has product $[w]$,
	whereas every input in $Y$ has a different product by the choice of $w$.

	Let $A$ be the bipartite adjacency matrix joining inputs in $X$ and $Y$
	that differ at one coordinate.  Every input in $X$ has $r$ neighbours.
	The multiset of letters of an input in $Y$ uniquely determines its
	missing letter, which can be restored at any of its $n-r+1$ identity
	positions.  Thus the graph is $(r,n-r+1)$-biregular and
	\[
		\|A\|=\sqrt{r(n-r+1)}.
	\]
	Filtering by any queried coordinate leaves a matching, of norm at most
	one.  The adversary lower bound therefore gives
	$\Omega(\sqrt{r(n-r+1)})
	 =\Omega(\min\{n,\sqrt{n\beta_G(M)}\})$ queries, proving the claim.
\end{proof}

It remains to bound \(\beta_G(M)\).  In each argument below we take a shortest preserving
subword of an arbitrary input and bound its length.  No proper scattered subword of this
chosen word has the same product.

\subsection{The idempotent case}

Let \(M\) be idempotent.  A commutative idempotent monoid is a join-semilattice, with
\(a\le b\) when \(ab=b\); we write the operation as \(\vee\).  Let
\begin{equation}
	L_G=\left\{\bigvee H:\varnothing\ne H\subseteq G\right\}             \label{eq:nonempty-join-closure}
\end{equation}
be the nonempty join-closure of the allowed letters.  The monoid of reachable summary
states is contained in \(L_G\cup\{1\}\), where \(1\) represents the empty join.

\begin{lemma}[Subset joins of a shortest preserving word]\label{lem:critical}
	Let \(a_1\cdots a_B\) be a shortest preserving subword of a \(G\)-word.  Its
	\(2^B\) subset joins, including the empty join, are all distinct.
\end{lemma}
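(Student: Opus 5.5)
The plan is a proof by contradiction, following the idempotent argument sketched in \secref{sec:overview-commutative}. Index the shortest preserving subword by positions $D=\{1,\ldots,B\}$, and for $S\subseteq D$ write $p(S)=\bigvee_{i\in S}a_i$, with $p(\varnothing)=1$. Minimality means that no proper subset $S\subsetneq D$ satisfies $p(S)=p(D)$. Indeed, a proper scattered subword of $a_1\cdots a_B$ with the same product would also be a subword of the original $G$-word preserving its product, and it would be strictly shorter.

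Suppose two distinct subsets $A\ne A'$ of $D$ have $p(A)=p(A')$. After swapping the two names if necessary, there is an index $i\in A\setminus A'$. Since $i\in A$, idempotence and commutativity give $p(A)\vee a_i=p(A)$. Substituting $p(A')=p(A)$ then yields $p(A')\vee a_i=p(A')$.

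Next I transfer this absorption to the whole word. Let $R=D\setminus(A'\cup\{i\})$. By commutativity,
\[
 p(D)=p(R)\vee p(A')\vee a_i=p(R)\vee p(A')=p(D\setminus\{i\}),
\]
where the last equality uses $i\notin A'$, so that $A'\cup R=D\setminus\{i\}$. This shows that deleting position $i$ leaves the product unchanged, which contradicts the minimality established above. Hence the map $S\mapsto p(S)$ is injective on the $2^B$ subsets of $D$, and this includes the empty join.

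The only step that needs any care is the regrouping in the display. There I must make sure that $A'$ and $R$ are disjoint and together cover $D\setminus\{i\}$. Idempotence does make repeated letters harmless, but the disjoint decomposition avoids relying on that. The step also needs commutativity, so that the letters of $A'$ can be grouped together even though they appear interleaved with other letters in the word. I do not expect any genuine obstacle here. A consequence worth noting is that the distinct subset joins all lie in $L_G\cup\{1\}$, which gives $2^B\le|L_G|+1$.
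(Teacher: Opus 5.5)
Your proof is correct and follows essentially the same argument as the paper: from $p(A)=p(A')$ with $i\in A\setminus A'$, you deduce that $a_i$ is absorbed by the join of the remaining letters. Hence deleting position $i$ preserves the product, contradicting minimality. The paper expresses this absorption through the semilattice order, $a_i\le\bigvee_{j\in A'}a_j\le\bigvee_{j\ne i}a_j$, whereas you regroup explicitly; these are the same step.
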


\begin{proof}
	We show that the map
	\[
		A\longmapsto\bigvee_{i\in A}a_i,\qquad A\subseteq[B],
	\]
	with the empty set sent to the identity, is injective.  If $A\ne C$, choose, after
	interchanging the two sets if necessary, an $i\in A\setminus C$.  Equality of the two
	joins would imply
	\[
		a_i\le\bigvee_{j\in C}a_j
		\le\bigvee_{j\in[B]\setminus\{i\}}a_j,
	\]
	so deleting \(i\) would preserve the product, contradicting minimality.
\end{proof}

\begin{theorem}[Commutative idempotent product]\label{thm:semilattice-product}
	Let $M$ be finite, commutative, and idempotent.  Let $G\subseteq M$, and let $L_G$
	be its nonempty join-closure.  Then
	\begin{equation}
		\beta_G(M)\le\left\lfloor\log_2(|L_G|+1)\right\rfloor.          \label{eq:semilattice-beta}
	\end{equation}
	Consequently,
	\begin{equation}
		\ADVpm(\operatorname{Prod}_{M,G,n})
		\le16\sqrt{n\min\left\{n,\left\lfloor\log_2(|L_G|+1)\right\rfloor\right\}}.
		\label{eq:semilattice-adv}
	\end{equation}
	In particular,
	$Q_{1/3}(\operatorname{Prod}_{M,G,n})=
	O\!\left(\sqrt{n\min\{n,\lfloor\log_2(|L_G|+1)\rfloor\}}\right)$.
\end{theorem}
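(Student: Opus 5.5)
The plan is to bound the length $B$ of an arbitrary shortest preserving subword $a_1\cdots a_B$ of a $G$-word, and then obtain \eqref{eq:semilattice-adv} directly from \thmref{thm:commutative-beta}.

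By \lemref{lem:critical}, the $2^B$ subset joins $\bigvee_{i\in A}a_i$, for $A\subseteq[B]$, are pairwise distinct. It remains to locate them. The empty set maps to the identity $1$. Every nonempty $A$ yields a join of a nonempty set of letters of $G$, so that join lies in $L_G$.

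The one point needing care is whether $1$ already belongs to $L_G$. If it did, the images would lie in $L_G$ alone, which would only make the count stronger. In any case all $2^B$ distinct values lie in $L_G\cup\{1\}$, so $2^B\le |L_G|+1$.

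Because $B$ is an integer, this gives $B\le\lfloor\log_2(|L_G|+1)\rfloor$. The bound holds for every shortest preserving subword, so taking the supremum over all input words yields \eqref{eq:semilattice-beta}.

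For the consequences, note that a commutative idempotent monoid is aperiodic: $a^1=a^2$ for every $a$. Hence \thmref{thm:commutative-beta} applies, and it gives $\ADVpm\le16\sqrt{n\min\{n,\beta_G(M)\}}$. The right-hand side is monotone in $\beta_G(M)$, so substituting \eqref{eq:semilattice-beta} gives \eqref{eq:semilattice-adv}.

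Finally, adversary tightness \eqref{eq:adversary-tightness} converts this dual bound into the stated $Q_{1/3}$ bound. The degenerate case $\beta_G(M)=0$ is covered by \thmref{thm:commutative-beta}, and there is no substantive obstacle; the only delicate step is the counting of the empty join above.
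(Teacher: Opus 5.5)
Your proposal is correct and follows essentially the same route as the paper: use \lemref{lem:critical} to get $2^B$ distinct subset joins lying in $L_G\cup\{1\}$, conclude $2^B\le|L_G|+1$, and then apply \thmref{thm:commutative-beta}. Your remarks that idempotence gives aperiodicity and that the bound is monotone in $\beta_G(M)$ fill in small points the paper leaves implicit.
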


\begin{proof}
	For a shortest preserving word of length \(B\), \lemref{lem:critical} gives
	\(2^B\) distinct subset joins, all in \(L_G\cup\{1\}\).  Thus
	\(2^B\le |L_G|+1\), proving the bound on \(\beta_G(M)\).
	The adversary and query bounds follow from \thmref{thm:commutative-beta}.
\end{proof}

Here \(L_G\) is generated by the entire allowed alphabet.  A similar bound
holds under a promise on the inputs: if every input generates at most \(K\)
nonempty joins, then the same proof gives an adversary bound of
\(16\sqrt{n\min\{n,\lfloor\log_2(K+1)\rfloor\}}\).

\subsection{The identity \texorpdfstring{\(x^3=x^2\)}{x3=x2}}

We next bound product breadth when every element satisfies \(x^3=x^2\).
The proof is an elementary collision count on the Boolean cube.

\begin{theorem}[Index two]\label{thm:index-two-width}
	Let \(M\) be a finite commutative monoid satisfying \(x^3=x^2\) for every \(x\in M\).
	Then
	\begin{equation}
		\beta_G(M)\le5\log_2|M|,                                          \label{eq:index-two-width}
	\end{equation}
	and, when \(|M|\ge2\), the strict inequality \(\beta_G(M)<5\log_2|M|\) holds.
	Consequently
	\begin{equation}
		\ADVpm(\operatorname{Prod}_{M,G,n})
		\le16\sqrt{n\min\{n,5\log_2|M|\}}.                              \label{eq:index-two-adv}
	\end{equation}
	In particular,
	$Q_{1/3}(\operatorname{Prod}_{M,G,n})=O(\sqrt{n\min\{n,\log(|M|+1)\}})$.
\end{theorem}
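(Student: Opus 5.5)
The plan is to fix an arbitrary input, take a shortest preserving subword $a_1\cdots a_B$ indexed by $D$, and bound $B$ by counting collisions among its $2^B$ subset products $p(S)$, $S\subseteq D$. The adversary bound \eqref{eq:index-two-adv} then follows at once from \thmref{thm:commutative-beta}, since the bound on $\beta_G(M)$ caps the essential width. So everything rests on showing $2^B<|M|^5$, i.e.\ on showing that every fiber $\mathcal F_z=\{S\subseteq D:p(S)=z\}$ is small.

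\textbf{Step 1 (two structural restrictions on a fiber).} I will record two facts, both proved by deriving a deletion that preserves $p(D)$ and so contradicts minimality.
\begin{itemize}
\item[(a)] Each fiber is an antichain. Suppose $S\subsetneq T$ lie in $\mathcal F_z$, and put $b=p(T\setminus S)$. Then $z=zb$, so
\[
p(D)=p(D\setminus T)\,z\,b=p(D\setminus T)\,z=p(D\setminus(T\setminus S)).
\]
This shortens the core.
\item[(b)] Each fiber contains no $3$-sunflower. This is the $k=2$ case of the argument sketched in \secref{sec:overview-commutative}. From $cb_0=cb_1=cb_2$, commutativity gives $cb_0b_1b_2=cb_2^3$ and $cb_1b_2=cb_2^2$. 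The identity $x^3=x^2$ then lets us delete the nonempty petal $P_0$.
\end{itemize}

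\textbf{Step 2 (the cube count).} I want an elementary bound $|\mathcal F_z|\le 2^{4B/5}$, or a slightly weaker form that still yields a strict inequality. Summing over the at most $|M|$ fibers then gives $2^B\le|M|\,2^{4B/5}$, hence $B\le5\log_2|M|$, with strictness once $|M|\ge2$ by tracking constants. General $3$-sunflower-free bounds (of order $1.89^B$) are too weak for this, so I will exploit the extra structure that pairs already impose. If $S,T\in\mathcal F_z$ with $C=S\cap T$, then $cb_0=cb_1$. Multiplying gives $cb_0^2=cb_1^2=cb_0b_1$, and index two makes these squares stable. I will use this to show that a fiber is determined, up to a bounded ambiguity, by a few monoid values. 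For example, I would try to show that $S\mapsto(p(S\cap D_1),\dots,p(S\cap D_5))$ is injective on fibers for a suitable splitting of $D$, or else directly count pairs of fiber members that differ in small Hamming distance on the Boolean cube.

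\textbf{Main obstacle.} I expect Step 2 to be the main obstacle: obtaining the exponent $4/5$ (the constant $5$) from an elementary collision count rather than the sunflower machinery. My first attempt will be a double count of pairs $(S,T)$ in one fiber with $|S\triangle T|$ small. The idea is that each such pair yields an identity $cb_0=cb_1$ that, combined with (a) and (b), forces a deletion. I would then balance this against the number of pairs a large fiber must contain (by a Kruskal--Katona or Harper type isoperimetric estimate) to get the required fiber bound.
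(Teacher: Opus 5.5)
Your overall frame matches the paper's proof: take a shortest core, partition its $2^B$ subsets into product fibers $\mathcal F_z$, and bound the fibers. Step~1 is also correct. Part (a) is a valid deletion argument, part (b) is the case $k=2$ of \lemref{lem:no-product-sunflower}, and the adversary consequence does follow from \thmref{thm:commutative-beta}.

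The gap is Step~2, which carries the entire quantitative content of the theorem and is not supplied; neither route you sketch works as stated. Antichain plus no $3$-sunflower are exactly the properties for which you concede the general bounds (of order $1.89^B$) are too weak. You name no further monoid structure that a Kruskal--Katona or Harper estimate could exploit to reach an $|M|$-independent bound $2^{4B/5}$. The injectivity route would fall short even if it worked. Determining fiber members by five monoid values gives $|\mathcal F_z|\le|M|^5$, hence only $2^B\le|M|^6$ and the constant $6$. Finally, $2^B\le|M|\,2^{4B/5}$ would give only the non-strict inequality.

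The missing idea is a stronger collision lemma, which makes sunflowers unnecessary here. Suppose $A,C\subseteq R$ and $p(A)=p(C)$, and write $k,x,y,d$ for the products over $A\cap C$, $A\setminus C$, $C\setminus A$, and $R\setminus(A\cup C)$.
\begin{itemize}
\item From $kx=ky$ you get $p(R)=kxyd=ky^2d$. Then $y^3=y^2$ gives $p(R)y=p(R)$, and symmetrically $p(R)x=p(R)$.
\item For $i\in A\triangle C$, the elements $p(R)a_i$ and $p(R)$ now generate the same principal ideal. Proposition~\ref{prop:comm-jtrivial} then gives $p(R)a_i=p(R)$.
\item Hence the non-absorbed letters $N_R=\{i\in R:p(R)a_i\ne p(R)\}$ have pairwise distinct subset products, so $|N_R|\le L=\log_2|M|$.
\end{itemize}

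Minimality makes every edge of the subset cube strict. So each $S\in\mathcal F_z$ contains $A_z=\{i:za_i=z\}$, and $S\setminus A_z=N_S$. Every fiber therefore consists of $A_z$ plus at most $\lfloor L\rfloor$ further indices, which gives $2^B\le|M|\sum_{j\le\lfloor L\rfloor}\binom Bj$. When $\lfloor L\rfloor<B/2$, the Hamming-ball estimate $\sum_{j\le\lfloor L\rfloor}\binom Bj\le(eB/L)^L$ turns this into $t\le1+\log_2(et)$ for $t=B/L$, which forces $t<5$.

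So the natural fiber bound depends on $|M|$. The constant $5$, and the strict inequality, come from this inequality rather than from an $|M|$-free exponent $4/5$.
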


\begin{proof}
	Let \(a_1\cdots a_B\) be a shortest preserving subword of an arbitrary \(G\)-word, and put
	\[
		p(S)=\prod_{i\in S}a_i,\qquad S\subseteq[B].
	\]
	We have \(p([B])\ne p([B]\setminus\{i\})\) for every \(i\).  In fact every edge of
	the subset cube is strict: an equality \(p(S)=p(S\setminus\{i\})\), multiplied by the
	complement of \(S\), would collapse the corresponding top edge.

	We need a collision lemma.  Fix \(R\subseteq[B]\), and suppose \(A,C\subseteq R\) obey
	\(p(A)=p(C)\).  Write \(K=A\cap C\), \(X=A\setminus C\),
	\(Y=C\setminus A\), and \(D=R\setminus(A\cup C)\), and use the same lower-case letter
	for the product over a set.  The equality \(kx=ky\) gives
	\[
		p(R)=kxyd=ky^2d.
	\]
	Since \(y^3=y^2\), we have \(p(R)y=p(R)\); symmetrically \(p(R)x=p(R)\).
	If \(i\in Y\), write \(y=a_i y'\).  Then \(p(R)a_i\) and \(p(R)\) generate the same
	principal ideal: one divisibility is automatic and the reverse follows from
	\(p(R)=p(R)a_i y'\).  Proposition~\ref{prop:comm-jtrivial} therefore gives
	\(p(R)a_i=p(R)\).  The same holds for \(i\in X\).  Thus
	\begin{equation}
		p(A)=p(C)\quad\Longrightarrow\quad
		p(R)a_i=p(R)\quad(i\in A\mathbin\triangle C).                      \label{eq:index-two-collision}
	\end{equation}

	Define
	\[
		N_R=\{i\in R:p(R)a_i\ne p(R)\}.
	\]
	The map \(A\mapsto p(A)\) is injective on subsets of \(N_R\), by
	\eqref{eq:index-two-collision}.  Hence \(|N_R|\le L:=\log_2|M|\).

	Now partition all \(2^B\) subsets according to their product.  For \(z\in M\), let
	\[
		\mathcal F_z=\{S\subseteq[B]:p(S)=z\},
		\qquad A_z=\{i\in[B]:za_i=z\}.
	\]
	Every \(S\in\mathcal F_z\) contains \(A_z\), since an absent \(i\in A_z\) would collapse
	the edge from \(S\) to \(S\cup\{i\}\).  Moreover
	\(S\setminus A_z=N_S\), so \(|S\setminus A_z|\le\lfloor L\rfloor\).  Therefore
	\begin{equation}
		2^B\le |M|\sum_{j=0}^{\lfloor L\rfloor}\binom Bj.                 \label{eq:index-two-ball}
	\end{equation}

	If \(M\) is trivial then \(B=0\).  Otherwise \(L\ge1\).  Put
	\(\ell=\lfloor L\rfloor\).  If \(\ell\ge B/2\), then \(B\le2L\).  Otherwise the
	standard Hamming-ball estimate, and monotonicity of
	\(u\mapsto u\log(eB/u)\) for \(u<B\), give
	\[
		\sum_{j=0}^{\ell}\binom Bj
		\le(eB/\ell)^\ell\le(eB/L)^L.
	\]
	Taking base-two logarithms in \eqref{eq:index-two-ball}, and writing \(t=B/L\), yields
	\[
		t\le1+\log_2(et).
	\]
	This excludes \(t\ge5\): at \(t=5\) the right side is smaller than five because
	\(5e<16\), and thereafter its derivative is less than one.  Hence \(B<5L\).
	This bounds every shortest preserving subword, proving the assertion about
	\(\beta_G(M)\).  The query bounds follow from \thmref{thm:commutative-beta}.
\end{proof}

The special proof is doing more than bounding multiplicities.  It uses a collision at an
arbitrary subset to identify factors that already stabilize a larger product, and then
counts the remaining optional coordinates in every product fiber.  This is the step that
does not extend verbatim to larger aperiodicity index.

\subsection{General aperiodicity index}

Assume now that, for some \(k\ge1\),
\begin{equation}
	 x^{k+1}=x^k\qquad\text{for every }x\in M.                             \label{eq:index-k}
\end{equation}
The identity implies aperiodicity.  We use the Bell--Chueluecha--Warnke form of the modern
sunflower bound~\cite{BCW21}: there is an absolute constant \(C_{\rm sf}\) such that every
\(t\)-uniform family with no \(r\)-sunflower has size less than
\begin{equation}
	\bigl(C_{\rm sf}r\log(2t)\bigr)^t                                    \label{eq:bcw-bound}
\end{equation}
for \(r,t\ge2\).  This bound has also been formalized in Lean~\cite{Lee26},
with explicit constants that allow \(C_{\rm sf}=2^{61}\).

\begin{lemma}[Product fibers contain no large sunflower]\label{lem:no-product-sunflower}
	Let \(a_1\cdots a_B\) be a shortest preserving subword of a \(G\)-word, and put
	\(p(S)=\prod_{i\in S}a_i\).  For every \(z\in M\), the product fiber
	\[
		\{S\subseteq[B]:p(S)=z\}
	\]
	contains no \((k+1)\)-sunflower.
\end{lemma}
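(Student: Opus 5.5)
We argue by contradiction, following the sketch in \secref{sec:overview-commutative}. Suppose some fiber with value \(z\) contains a \((k+1)\)-sunflower \(C\cup P_0,\ldots,C\cup P_k\). Here the petals \(P_j\) are pairwise disjoint and disjoint from the core \(C\), and the \(k+1\) sets are distinct. Since the sets are distinct, at most one petal can be empty, so some petal is nonempty; we relabel so that \(P_0\neq\varnothing\). Write \(c=p(C)\) and \(b_j=p(P_j)\). By disjointness and commutativity, \(p(C\cup P_j)=cb_j\). Membership in the fiber therefore gives \(cb_0=cb_1=\cdots=cb_k=z\).

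The key algebraic step is the rewriting identity
\[
 c\,b_{j_1}b_{j_2}\cdots b_{j_m}=c\,b_k^{m}
\]
for any indices \(j_1,\ldots,j_m\). We prove it by induction on \(m\). By commutativity the left side equals \((cb_{j_1})\,b_{j_2}\cdots b_{j_m}\). Replacing \(cb_{j_1}\) by \(cb_k\) and commuting gives \((cb_{j_2})\,b_k\,b_{j_3}\cdots b_{j_m}\), and we repeat. Applying the identity with all \(k+1\) petals, and separately with the \(k\) petals \(P_1,\ldots,P_k\), yields \(cb_k^{k+1}\) and \(cb_k^{k}\). These are equal by \eqref{eq:index-k}. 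Let \(U=C\cup P_0\cup\cdots\cup P_k\). Disjointness lets us write \(p(U)=c\prod_{j=0}^k b_j\) and \(p(U\setminus P_0)=c\prod_{j=1}^k b_j\), so \(p(U)=p(U\setminus P_0)\).

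Finally, put \(R=[B]\setminus U\). By commutativity, \(p([B])=p(R)p(U)=p(R)p(U\setminus P_0)=p([B]\setminus P_0)\). Since \(P_0\) is nonempty, the letters indexed by \([B]\setminus P_0\) form a strictly shorter scattered subword of the original \(G\)-word with the same product. This contradicts the choice of \(a_1\cdots a_B\) as a shortest preserving subword. One point to check is that a subword of the chosen shortest preserving word is still a scattered subword of the original \(G\)-word with the same product; that follows from transitivity of \(\preceq\).

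The only step that needs care is the rewriting identity, and in particular its use of the common core factor \(c\). The replacement \(cb_{j}\to cb_k\) is legitimate only when \(c\) stays attached as a factor. Commutativity supplies this: we keep \(c\) and move the next petal product next to it. The remaining checks are routine bookkeeping: handling the possibly empty petal through the relabeling, and the case \(k=1\), where the identity reads \(cb_0b_1=cb_1^2=cb_1\).
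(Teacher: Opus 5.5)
Your proof is correct and follows the paper's argument essentially step for step: you relabel so that the petal $P_0$ is nonempty, and you use commutativity together with $cb_0=\cdots=cb_k$ to rewrite $c\prod_{j=0}^k b_j$ and $c\prod_{j=1}^k b_j$ as $cb_k^{k+1}$ and $cb_k^{k}$, which are equal since $x^{k+1}=x^k$; multiplying by the product over the remaining positions then contradicts minimality. Your explicit induction for the rewriting identity spells out a step that the paper compresses into a single display.
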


\begin{proof}
	Suppose \(C\cup P_0,\ldots,C\cup P_k\) were such a sunflower, with pairwise disjoint
	petals \(P_j\) disjoint from \(C\).  Since the \(k+1\ge2\) sets are distinct, at least
	one petal is nonempty; relabel so that \(P_0\ne\varnothing\).  Write \(c=p(C)\) and
	\(b_j=p(P_j)\), with \(p(\varnothing)=1\).  Equality of the fiber values gives
	\[
		cb_0=cb_1=\cdots=cb_k.
	\]
	Using commutativity, these equalities, and \eqref{eq:index-k} gives
	\[
		c\prod_{j=0}^k b_j
		=cb_k^{k+1}=cb_k^k
		=c\prod_{j=1}^k b_j.
	\]
	Let \(D=[B]\setminus(C\cup P_0\cup\cdots\cup P_k)\) be the remaining index set.
	Multiplying both sides by \(p(D)\) gives \(p([B])=p([B]\setminus P_0)\).
	Thus deleting the nonempty petal \(P_0\) preserves the full product, contradicting
	minimality of the chosen word.
\end{proof}

\begin{theorem}[General index \(k\)]\label{thm:index-k-width}
	There is an absolute constant \(C\) such that every finite commutative monoid satisfying
	\eqref{eq:index-k} obeys
	\begin{equation}
		\beta_G(M)
		\le C(k+1)(\log_2|M|+1)\log(\log_2|M|+2).                          \label{eq:index-k-width}
	\end{equation}
	Consequently
	\begin{equation}
		\ADVpm(\operatorname{Prod}_{M,G,n})
		=O\!\left(\sqrt{n\min\{n,
		k\log(|M|+1)\log\log(|M|+2)\}}\right).                           \label{eq:index-k-adv}
	\end{equation}
	The same asymptotic expression upper-bounds
	$Q_{1/3}(\operatorname{Prod}_{M,G,n})$.
\end{theorem}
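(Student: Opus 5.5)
Take a shortest preserving subword \(a_1\cdots a_B\) of an arbitrary \(G\)-word and bound \(B\) by counting subsets. Put \(L=\log_2|M|\); if \(M\) is trivial then \(B=0\), so assume \(L\ge1\). Choose a layer size \(t\), to be fixed near \(L\), and consider only the \(t\)-subsets of \([B]\). By \lemref{lem:no-product-sunflower}, for each \(z\in M\) the \(t\)-uniform family \(\{S\subseteq[B]:|S|=t,\ p(S)=z\}\) has no \((k+1)\)-sunflower. For \(t\ge2\) and \(r=k+1\ge2\), \eqref{eq:bcw-bound} bounds each such family by \((C_{\rm sf}(k+1)\log(2t))^t\). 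Summing over the at most \(|M|\) fibers gives
\[
 \binom Bt\le |M|\,\bigl(C_{\rm sf}(k+1)\log(2t)\bigr)^t .
\]

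Next compare this with the lower estimate \(\binom Bt\ge(B/t)^t\), valid for \(t\le B\). Taking \(t\)-th roots gives
\[
 \frac Bt\le |M|^{1/t}\,C_{\rm sf}(k+1)\log(2t).
\]
Take \(t=\lceil L\rceil+1\), so that \(t\ge2\) and \(|M|^{1/t}\le2\). Then \(B\le 2C_{\rm sf}(k+1)(L+2)\log(2L+4)\). This is at most \(C(k+1)(L+1)\log(L+2)\) after enlarging the absolute constant, since \(\log(2L+4)\le\log 2+\log(L+2)\le 2\log(L+2)\) and \(L+2\le 2(L+1)\).

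It remains to handle the case \(B<t\), where the binomial estimate is not available. In that case \(B\le L+1\), which is already within the bound. Every shortest preserving subword therefore has length at most the right side of \eqref{eq:index-k-width}, and \(\beta_G(M)\) is bounded by taking the supremum over all words.

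The adversary bound \eqref{eq:index-k-adv} then follows from \thmref{thm:commutative-beta}, since commutativity and \eqref{eq:index-k} imply aperiodicity. Replacing \(k+1\) by \(O(k)\) (because \(k\ge1\)), \(\log_2|M|+1\) by \(O(\log(|M|+1))\), and \(\log(\log_2|M|+2)\) by \(O(\log\log(|M|+2))\) gives the stated form. Adversary tightness then transfers the same bound to \(Q_{1/3}\).

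The main step is the choice of layer. Too small a \(t\) makes \(|M|^{1/t}\) large, and too large a \(t\) inflates the \(\log(2t)\) factor. Taking \(t\approx\log_2|M|\) balances the two, and that is where the single \(\log\log\) loss comes from.
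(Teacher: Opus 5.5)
Your proposal is correct and follows the paper's proof: restrict the sunflower-free product fibers of a shortest preserving subword to the \(t\)-th layer, apply the Bell--Chueluecha--Warnke bound to each of the at most \(|M|\) fibers, compare with \(\binom Bt\ge(B/t)^t\), and take \(t\approx\log_2|M|\) so that \(|M|^{1/t}\le2\). Your choice \(t=\lceil\log_2|M|\rceil+1\), in place of the paper's \(\max\{2,\lceil\log_2|M|\rceil\}\), together with your separate treatment of trivial \(M\) and of \(B<t\), is only a cosmetic variation.
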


\begin{proof}
	Take a shortest preserving subword of an arbitrary \(G\)-word, let its length be
	\(B\), and set
	\(t=\max\{2,\lceil\log_2|M|\rceil\}\).  If \(B<t\), the desired estimate is immediate.
	Otherwise, partition the \(t\)-subsets of \([B]\) by their product.  Each resulting
	family is \(t\)-uniform and contains no \((k+1)\)-sunflower by
	\lemref{lem:no-product-sunflower}, so \eqref{eq:bcw-bound} gives
	\[
		\binom Bt
		<|M|\bigl(C_{\rm sf}(k+1)\log(2t)\bigr)^t.
	\]
	Since \(\binom Bt\ge(B/t)^t\) and \(|M|^{1/t}\le2\),
	\[
		B<2C_{\rm sf}(k+1)t\log(2t),
	\]
	which bounds the length of every shortest preserving subword and proves
	\eqref{eq:index-k-width}.  Equation~\eqref{eq:index-k-adv} follows from
	\thmref{thm:commutative-beta}.
\end{proof}

For comparison, the classical Erd\H{o}s--Rado sunflower lemma gives the elementary estimate
\begin{equation}
	\beta_G(M)<2k(\log_2|M|+1)^2,                                        \label{eq:elementary-index-k}
\end{equation}
To see this, let \(B\) be the length of a shortest preserving subword and choose
\(t=\lfloor\log_2|M|\rfloor+1\).  If \(B<t\), the estimate is immediate.
Otherwise, partition the \(t\)-subsets of \([B]\) by their product.  Each family
contains no \((k+1)\)-sunflower by \lemref{lem:no-product-sunflower}, so the
classical lemma bounds its size by \(t!k^t\).  Consequently,
\[
	(B/t)^t\le\binom Bt\le |M|\,t!k^t.
\]
Taking \(t\)-th roots and using \(|M|^{1/t}<2\) and \((t!)^{1/t}\le t\) gives
\[
	B\le kt\,|M|^{1/t}(t!)^{1/t}
	 <2kt^2\le2k(\log_2|M|+1)^2.
\]
The factor \(t\) from \((t!)^{1/t}\le t\) accounts for the second \(\log|M|\).
The BCW bound gives \(O(\log(2t))\) in its place, yielding
\(\log\log(|M|+2)\).

\begin{remark}[Where the hypotheses enter]\label{rem:scope-records}
	Commutativity permits the subset-product manipulations and makes that product an
	order-independent incremental summary.  In the sunflower argument, the identity
	\(x^{k+1}=x^k\) is used only to delete a petal and contradict minimality.
	Idempotence and the case \(k=2\) admit stronger collision arguments than the general
	theorem.  Without a bounded aperiodicity index, a logarithmic bound on \(\beta\) is
	false: the capped counter of height \(K\) needs \(K\) copies of its unit generator
	to preserve their product, but has only \(K+1\) elements.
\end{remark}

\subsection{Capped counters: a matching lower bound}

The preceding dependence is close to optimal.  For integers \(k,r\ge1\), let
\[
	C_{k+1}=\{0,1,\ldots,k\},\qquad a\oplus b=\min\{a+b,k\},
\]
and put \(M_{k,r}=C_{k+1}^r\).  Then \(|M_{k,r}|=(k+1)^r\), every element satisfies
\(x^{k+1}=x^k\), and the least such uniform exponent is \(k\).

\begin{theorem}[Capped-counter products]\label{thm:capped-counter-product}
	The product-preserving subword parameter is
	\[
		\beta_{M_{k,r}}(M_{k,r})=kr.
	\]
	Among words of length \(n\), the largest length of a shortest preserving subword is
	\(\min\{n,kr\}\).
	Moreover,
	\begin{equation}
		Q_{1/3}(\operatorname{Prod}_{M_{k,r},n})
		=\Theta\!\left(\sqrt{n\min\{n,kr\}}\right).                       \label{eq:capped-counter-theta}
	\end{equation}
\end{theorem}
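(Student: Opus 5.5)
The plan has three parts: compute $\beta$ exactly, deduce the length-$n$ statement, and then read off the query bound from \thmref{thm:commutative-beta}.

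\emph{Upper bound $\beta\le kr$.} Take a word over $M_{k,r}$ with product $z=(z_1,\ldots,z_r)$. For each coordinate $c$ I greedily select positions whose $c$-th entries sum to at least $z_c$, unless that requires all positions. Concretely, if the true sum in coordinate $c$ is below $k$, every position with a nonzero $c$-entry must be kept, and there are at most $z_c\le k$ of them. If the sum reaches $k$, at most $k$ positions with nonzero $c$-entries suffice to reach $k$, since each nonzero entry is at least $1$. The union $U$ of the selected sets has size at most $kr$. I claim $[U]=z$ coordinatewise. Adding further positions can only increase each capped sum, and capped sums are bounded above by the full sums. Selecting exactly the needed positions in coordinate $c$ already attains $z_c$, either because all nonzero contributions are present or because the cap $k$ is reached.

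\emph{Lower bound $\beta\ge kr$.} Consider the word consisting of $k$ copies of each unit vector $e_c$, in total $kr$ letters, with product $(k,\ldots,k)$. Any subword with the same product needs at least $k$ copies of $e_c$ for every $c$, because only those letters contribute to coordinate $c$ and each contributes exactly $1$. So the whole word is forced, and $\beta=kr$.

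\emph{Length-$n$ statement.} The upper bound $\min\{n,kr\}$ is immediate. For the lower bound, take $m=\min\{n,kr\}$ unit-vector letters, distributing at most $k$ per coordinate, and pad with identities $0$ to length $n$. Each unit letter is then essential: its coordinate sum is at most $k$, so it is not capped below its true value and deleting the letter lowers that coordinate. Hence the shortest preserving subword has length $m$.

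\emph{Query bound.} \thmref{thm:commutative-beta} applies directly because $M_{k,r}$ is commutative and aperiodic and the identity lies in the alphabet. It gives $\Theta(\min\{n,\sqrt{n\,kr}\})$, which equals $\Theta(\sqrt{n\min\{n,kr}\})$. No step is a real obstacle. The only care needed is the greedy selection argument in the capped regime, which I handle coordinatewise as above.
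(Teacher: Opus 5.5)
Your proposal is correct and follows the paper's proof essentially step for step. Both arguments retain at most $k$ positions per coordinate for $\beta\le kr$, use $k$ copies of each unit vector for the matching lower bound, and pad at most $k$ unit vectors per coordinate with zeros for the length-$n$ statement. Both then read off the query bound from \thmref{thm:commutative-beta}, using that the identity lies in the alphabet.
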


\begin{proof}
	For each coordinate, retain at most \(k\) positions preserving its capped sum.
	If its uncapped total is less than \(k\), retain all positions positive in that
	coordinate; there are at most \(k-1\).  Otherwise retain positive positions until
	their sum reaches \(k\), which takes at most \(k\) positions.  The union of these
	choices has at most \(kr\) positions and preserves every coordinate: adding retained
	positions from other coordinates cannot lower a capped sum or exceed its full-input
	value.  Hence \(\beta_{M_{k,r}}(M_{k,r})\le kr\).

	For equality, take \(k\) copies of each coordinate-unit vector.  Any proper subword
	reduces at least one coordinate below \(k\), so all \(kr\) positions are required.
	For the length-\(n\) statement, take \(\min\{n,kr\}\) coordinate-unit vectors, at most
	\(k\) copies per coordinate, and pad with zero vectors to length \(n\).  Every unit
	vector must be retained.

	The monoid \(M_{k,r}\) is commutative and aperiodic, and its full alphabet contains
	the identity, the zero vector.  Thus \thmref{thm:commutative-beta} gives
	\eqref{eq:capped-counter-theta}.  By \eqref{eq:adversary-tightness}, the same
	asymptotic bound holds for \(\ADVpm(\operatorname{Prod}_{M_{k,r},n})\).
\end{proof}

We now use capped counters to assess the tightness of the general upper
bound on \(\beta\) obtained from sunflower bounds.
Since \(kr=(k/\log_2(k+1))\log_2|M_{k,r}|\), the nonsaturated regime of
\eqref{eq:capped-counter-theta} is
\begin{equation}
	\Theta\!\left(
	\sqrt{n\,\frac{k}{\log_2(k+1)}\log_2|M_{k,r}|}
	\right).                                                              \label{eq:capped-counter-size}
\end{equation}
Thus our general bound on \(\beta\), derived from the sunflower theorem
of~\cite{BCW21}, is optimal, as a function of \(k\) and \(|M|\), up
to a factor \(O(\log(k+1)\log\log(|M|+2))\); for query complexity the remaining ratio is
the square root of that factor.  For \(k=2\), \thmref{thm:index-two-width} removes the
\(\log\log |M|\) loss and matches the capped cube up to an absolute constant.

The one-coordinate case also rules out a bound depending only on the
logarithm of the monoid size.

\begin{proposition}[Logarithmic monoid size does not suffice]
\label{prop:jtrivial-log-fails}
	Let $K\ge 1$.  The capped counter $C_{K+1}$ is a finite commutative
	$\JJ$-trivial monoid
	with $\dR(C_{K+1})=K$, yet on inputs from $\{0,1\}^n$ its product problem requires
	$\Omega(\sqrt{nK})$ quantum queries whenever $K\le n/2$.
\end{proposition}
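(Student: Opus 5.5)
The proposition has three parts: the capped counter $C_{K+1}=(\{0,\ldots,K\},\oplus)$ is commutative and $\JJ$-trivial, $\dR(C_{K+1})=K$, and its product problem on $\{0,1\}$-inputs needs $\Omega(\sqrt{nK})$ quantum queries when $K\le n/2$. I will check the first two directly from the definitions and get the third from the aperiodicity-index lower bound, \thmref{thm:aperiodicity-index-lower}.

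\emph{Commutativity and $\JJ$-triviality.} Commutativity is clear. For $a\in C_{K+1}$ we have
\[
aM=\{a\oplus b:b\in M\}=\{a,a+1,\ldots,K\}.
\]
Distinct elements therefore generate distinct principal ideals. Since the monoid is commutative, $aM=Ma=MaM$, so the monoid is $\JJ$-trivial, and hence $\RR$-trivial as well.

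\emph{The depth $\dR(C_{K+1})=K$.} The principal right ideals are the $K+1$ intervals $\{a,\ldots,K\}$, and they are totally ordered by inclusion. The longest strict chain is
\[
\{K\}\subsetneq\{K-1,K\}\subsetneq\cdots\subsetneq M,
\]
which has $K+1$ members and therefore $K$ strict containments.

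\emph{The lower bound.} For $1\le a\le K$ the powers of $a$ under $\oplus$ are $\min\{ja,K\}$, and these stabilize once $ja\ge K$. Taking $a=1$, the powers $0,1,\ldots,K$ are distinct, and $1^{K}=1^{K+1}=K$. So the aperiodicity index is $\iota(C_{K+1})=K$. The monoid is nontrivial because $K\ge1$. \thmref{thm:aperiodicity-index-lower} then gives $\Omega(\sqrt{n\min\{n,K\}})=\Omega(\sqrt{nK})$, and its proof restricts the alphabet to $\{1,x\}$, with $x=1$ here and the identity $0$. These inputs are exactly the Boolean inputs $\{0,1\}^n$.

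The hypothesis $K\le n/2$ means we can take $r=K\le\lfloor(n+1)/2\rfloor$ in that proof. The adversary bound from the inclusion matrix between the weight-$(K-1)$ and weight-$K$ layers is then $\sqrt{K(n-K+1)}\ge\sqrt{nK/2}$. At these two layers the products are $K-1$ and $K$, which are distinct. One could also get the bound from \thmref{thm:capped-counter-product} with $k=K$ and $r=1$, but citing the index theorem makes the restriction to binary inputs explicit.

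None of the steps is a real obstacle. The only point that needs care is confirming that the lower-bound construction uses only the letters $0$ and $1$, and that the two output values at weights $K-1$ and $K$ are different.
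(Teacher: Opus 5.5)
Your proof is correct: the ideal computation giving $\JJ$-triviality and $\dR(C_{K+1})=K$ matches the paper's, and your choice $x=1$ (the only valid choice here) with $r=K$ in \thmref{thm:aperiodicity-index-lower} does keep the inputs in $\{0,1\}^n$. The paper instead computes $\beta_{\{0,1\}}(C_{K+1})=K$ and applies the lower bound of \thmref{thm:commutative-beta} with $G=\{0,1\}$. For the core $1^K$, that argument builds exactly the same adversary, the inclusion matrix between the weight-$(K-1)$ and weight-$K$ layers, so the two routes coincide.
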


\begin{proof}
	The monoid is commutative, and the principal right and two-sided ideals generated by
	$a$ both equal
	$aC_{K+1}=C_{K+1}aC_{K+1}=\{a,a+1,\ldots,K\}$.  Thus distinct elements generate distinct principal
	ideals, and these ideals form a chain with $K$ strict containments.  Hence $C_{K+1}$ is
	$\JJ$-trivial and $\dR(C_{K+1})=K$.  The capped-counter breadth calculation gives
    $\beta_{\{0,1\}}(C_{K+1})=K$: the word of $K$ copies of $1$ needs every
    letter, and the upper bound holds for all counter letters.  Applying
    \thmref{thm:commutative-beta} to the alphabet $\{0,1\}$ therefore gives
    $\Omega(\sqrt{nK})$ queries when $K\leq n/2$.
\end{proof}

Consequently, there is no uniform $\widetilde O(\sqrt{n\log |M|})$ bound
for finite $\JJ$-trivial monoids: taking $K=\lfloor\sqrt n\rfloor$ gives
an $\Omega(n^{3/4})$ lower bound, whereas $\sqrt{n\log|C_{K+1}|}$ times
any polylogarithmic factor is $n^{1/2+o(1)}$.

\section{Application: minimum-weight matroid bases}\label{sec:matroid-bases}

The commutative product theorem also applies to optimization problems.
In this section, we show that a minimum-weight basis of a known rank-$r$
matroid can be computed from $n$ weighted input records using
$O(\min\{n,\sqrt{nr}\})$ quantum queries.  The key observation is that
greedy bases can be merged: the greedy basis of a union is determined by
the greedy bases of its two parts.  This gives a commutative idempotent
monoid whose product breadth is the matroid rank.

\subsection{The monoid of greedy bases}

Recall that a finite matroid $\mathcal M=(E,\mathcal I)$ consists of a
finite ground set $E$ and a family $\mathcal I$ of independent sets.
This family contains the empty set, is closed under taking subsets, and
satisfies the exchange axiom: if $A,B\in\mathcal I$ and $|A|<|B|$, some
$e\in B\setminus A$ has $A\cup\{e\}\in\mathcal I$.  A \emph{basis of
$S\subseteq E$} is a maximal independent subset of $S$; all such bases
have the same size, denoted $\rk(S)$.  Write $r=\rk(E)$ and
$\operatorname{cl}(S)=\{e\in E:\rk(S\cup\{e\})=\rk(S)\}$ for the
closure of $S$.

Fix a total order on $E$.  For $S\subseteq E$, let $B(S)$ be the
\emph{greedy basis}: scan the elements of $S$ in increasing order,
retaining an element exactly when it preserves independence.  For every
initial segment $P$ of the order, $B(S)\cap P$ is a basis of $S\cap P$.
Indeed, each rejected element lies in the closure of the previously
retained elements.  Consequently,
\begin{equation}\label{eq:matroid-prefix-closure}
 \operatorname{cl}(B(S)\cap P)=\operatorname{cl}(S\cap P).
\end{equation}
In particular, membership in $B(S)$ is determined by the ranks of these
prefixes: $e$ is retained exactly when the rank increases between the
prefix strictly before $e$ and the prefix ending at $e$.

\begin{lemma}[Greedy bases merge]\label{lem:matroid-greedy-merge}
 For a finite matroid with a fixed total order on its ground set, the
 greedy bases satisfy
 \[
  B(S\cup T)=B\bigl(B(S)\cup B(T)\bigr)
  \qquad(S,T\subseteq E).
 \]
\end{lemma}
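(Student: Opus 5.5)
The plan is to use the characterization recorded just before the lemma.  Membership of an element $e$ in $B(S)$ depends only on the closures of the prefix sets of $S$: writing $P_{<e}$ and $P_{\le e}$ for the initial segments strictly before $e$ and ending at $e$, we have $e\in B(S)$ exactly when $e\in S$ and $\rk(S\cap P_{\le e})>\rk(S\cap P_{<e})$, equivalently $e\in S\setminus\operatorname{cl}(S\cap P_{<e})$.  Put $U=B(S)\cup B(T)$.  It then suffices to show two facts: that $U\subseteq S\cup T$ has the same prefix closures as $S\cup T$, namely $\operatorname{cl}(U\cap P)=\operatorname{cl}((S\cup T)\cap P)$ for every initial segment $P$, and that the greedy test accepts the same elements of $S\cup T$ and of $U$.

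The first step is the prefix-closure identity.  We have $U\cap P=(B(S)\cap P)\cup(B(T)\cap P)$.  By \eqref{eq:matroid-prefix-closure}, $\operatorname{cl}(B(S)\cap P)=\operatorname{cl}(S\cap P)$, and likewise for $T$.  We then use the standard closure fact $\operatorname{cl}(X\cup Y)=\operatorname{cl}(\operatorname{cl}X\cup\operatorname{cl}Y)$, which follows from monotonicity and idempotence of closure.  This gives
\[
\operatorname{cl}(U\cap P)=\operatorname{cl}(S\cap P\cup T\cap P)=\operatorname{cl}((S\cup T)\cap P).
\]

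The second step compares the greedy bases.  Take any $e\in E$.  If $e\in B(S\cup T)$, then $e\notin\operatorname{cl}((S\cup T)\cap P_{<e})$.  Since $e\in S\cup T$, say $e\in S$, it follows that $e\notin\operatorname{cl}(S\cap P_{<e})$, so $e\in B(S)\subseteq U$.  By the first step, $e\notin\operatorname{cl}(U\cap P_{<e})$, hence $e\in B(U)$.  Conversely, if $e\in B(U)$, then $e\in U\subseteq S\cup T$ and $e\notin\operatorname{cl}(U\cap P_{<e})=\operatorname{cl}((S\cup T)\cap P_{<e})$, so $e\in B(S\cup T)$.

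The only nonroutine point is the closure-of-union identity together with the reformulation ``retained iff not in the closure of the earlier elements of the set''.  Both are standard matroid facts: closure is a closure operator, and rank increases by one on adding $e$ exactly when $e\notin\operatorname{cl}$.  I expect the proof to be short once these are invoked.
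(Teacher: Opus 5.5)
Your proposal is correct and follows the paper's proof. Both arguments derive $\operatorname{cl}(U\cap P)=\operatorname{cl}((S\cup T)\cap P)$ for every initial segment $P$ from \eqref{eq:matroid-prefix-closure} and the compatibility of closure with unions, then read off greedy membership from these prefix closures. The only difference is cosmetic: the paper's rank-increase test already encodes membership in the set, since the two prefixes coincide when $e$ is absent, so equal prefix ranks give $B(U)=B(S\cup T)$ at once, whereas your closure test requires the extra (valid) step through $e\in B(S)\subseteq U$.
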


\begin{proof}
 Fix an initial segment $P$.  By~\eqref{eq:matroid-prefix-closure},
 $B(S)\cap P$ and $S\cap P$ have the same closure, as do $B(T)\cap P$
 and $T\cap P$.  Taking unions preserves equality of closures, so
 \[
  \operatorname{cl}\bigl((B(S)\cup B(T))\cap P\bigr)
  =\operatorname{cl}\bigl((S\cup T)\cap P\bigr).
 \]
 The two sets therefore have the same rank on every prefix.  Their
 greedy bases select exactly the same elements, by the rank-increase
 characterization above.
\end{proof}

\begin{proposition}[Greedy-basis monoid]\label{prop:matroid-greedy-monoid}
 Let $\mathcal M=(E,\mathcal I)$ be a finite rank-$r$ matroid with a
 fixed total order on $E$.  The independent sets form a commutative
 idempotent monoid $H$ under
 \[
  A*C=B(A\cup C),
 \]
 with identity $\varnothing$.  For the alphabet
 $G=\{\varnothing\}\cup\{B(\{e\}):e\in E\}$,
 \[
  \beta_G(H)=r.
 \]
\end{proposition}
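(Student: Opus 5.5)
We verify the monoid axioms first and then compute the breadth, using \lemref{lem:matroid-greedy-merge} as the single tool. The operation $*$ is well defined on $\mathcal I$, since $B(\cdot)$ is always independent. Commutativity is immediate from $A\cup C=C\cup A$.

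For the remaining axioms, note that an independent set $A$ satisfies $B(A)=A$: scanning $A$ in order, every prefix of $A$ is independent, so every element is retained. Hence $\varnothing*A=B(A)=A$, so $\varnothing$ is the identity, and $A*A=B(A)=A$ gives idempotence. For associativity, apply the merge lemma twice:
\[
 (A*C)*D=B\bigl(B(A\cup C)\cup B(D)\bigr)=B(A\cup C\cup D),
\]
and symmetrically $A*(C*D)=B(A\cup C\cup D)$. More generally, induction on word length shows that the product of letters $A_1,\ldots,A_s$ equals $B(A_1\cup\cdots\cup A_s)$.

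For the upper bound $\beta_G(H)\le r$, take a word over $G$ with letters $B(\{e_1\}),\ldots,B(\{e_s\})$, possibly interspersed with $\varnothing$. The letter $B(\{e\})$ is $\{e\}$ when $e$ is not a loop and $\varnothing$ otherwise. By the union formula, the word's product is $P=B(S)$, where $S$ is the set of non-loop $e_i$ occurring in the word. For each $f\in P\subseteq S$, retain one position carrying the letter $\{f\}$. The retained subword has product $B(P)=P$ because $P$ is independent. It has $|P|\le r$ letters, since $P$ is independent in a rank-$r$ matroid.

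For the lower bound, fix a basis $\{f_1,\ldots,f_r\}$ of $E$ and use the word $\{f_1\}\cdots\{f_r\}$. Its product is $B(\{f_1,\ldots,f_r\})$, which is the basis itself. Any proper subword is missing some $f_j$, and its product is the independent set it keeps, which does not contain $f_j$. So every product core has length $r$, and $\beta_G(H)=r$.

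The main point needing care is the union formula, which follows directly from the merge lemma together with $B(A)=A$ for independent $A$. The remaining steps are bookkeeping: handling loops and the $\varnothing$ letters correctly.
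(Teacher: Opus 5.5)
Your proposal is correct and follows essentially the same route as the paper. Greedy fixes independent sets, which gives identity and idempotence; the merge lemma gives associativity and the union formula for products; retaining one letter per element of the greedy basis gives $\beta_G(H)\le r$; and the singleton word of a basis of $E$ gives $\beta_G(H)\ge r$. Your handling of loops (where $B(\{e\})=\varnothing$) and your use of $B(D)=D$ in the associativity step are details the paper leaves implicit.
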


\begin{proof}
 Greedy leaves every independent set unchanged.  Thus $\varnothing$ is
 an identity and $A*A=A$.  Commutativity follows from that of union,
 and \lemref{lem:matroid-greedy-merge} identifies both bracketings of a
 triple product with $B(A\cup C\cup D)$, proving associativity.

 A word over $G$ has product $B(S)$, where $S$ is the union of its
 singleton letters.  Retain one occurrence of each element of $B(S)$.
 The resulting scattered subword has at most $r$ letters and the same
 product, so $\beta_G(H)\le r$.  Conversely, the singleton letters of
 a basis of $E$ give a word of length $r$ whose product changes under
 every proper deletion.  Hence $\beta_G(H)\ge r$.
\end{proof}

\subsection{Weighted records and the query bound}

Let $\mathcal M=(E,\mathcal I)$ be a known finite matroid of rank $r$,
and let $W\subseteq\mathbb R$ be a finite weight alphabet.  The input
consists of $n$ records $x_i=(e_i,w_i)\in E\times W$, with an optional
null record $\bot$.  One query reveals the whole record.  Matroid
computations and arithmetic on revealed weights are free in this query
model.

We seek a minimum-weight basis of the elements that occur in the input,
specified by input indices.  Repeated occurrences of an element are
treated as parallel copies: an index set $I$ is independent if all its
records are nonnull, the elements $e_i$ for $i\in I$ are distinct, and
$\{e_i:i\in I\}\in\mathcal I$.  A basis is a maximal such set, and its
weight is $\sum_{i\in I}w_i$.  We choose a canonical answer by applying
the greedy algorithm in increasing order of $(w_i,i)$.

To obtain a monoid fixed independently of the input, take the public
record universe
\[
 U=E\times W\times[n].
\]
The position label distinguishes repeated occurrences of the same record
$(e,w)$ and provides the input indices used for tie-breaking and for
specifying the output.
We define a matroid on $U$ where a set of records is independent exactly
when the corresponding elements of $E$ are distinct and form an independent
set in $\mathcal M$.
This matroid has rank at most $r$.  Order its elements lexicographically
by $(w,i,e)$, using any fixed total order on $E$ for the last tie-break.
On an actual input, this is precisely the order $(w_i,i)$, since each
position supplies only one record.  Let $B$ and $H$ be its greedy-basis
map and monoid from the preceding subsection.

\begin{theorem}[Minimum-weight matroid bases]\label{thm:matroid-basis-query}
 Let $\mathcal M$ be a known finite matroid of rank $r$, and let
 $W\subseteq\mathbb R$ be finite.  For $n\ge1$, the canonical
 minimum-weight basis of $n$ input records can be computed with bounded
 error using
 \[
  O\!\left(\min\{n,\sqrt{nr}\}\right)
 \]
 quantum queries.  The implicit constant is universal.  The same bound
 permits returning the selected records as well as their indices.
 If $r=0$, the answer is empty and no query is needed.
\end{theorem}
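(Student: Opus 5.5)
The plan is to reduce to \thmref{thm:commutative-beta} applied to the greedy-basis monoid $H$ of the record matroid on $U=E\times W\times[n]$, with alphabet
\[
 G=\{\varnothing\}\cup\{B(\{u\}):u\in U\}.
\]
First, I define the letter map.  Position $i$ holding record $(e_i,w_i)$ is sent to the letter $B(\{(e_i,w_i,i)\})$.  This is the singleton $\{(e_i,w_i,i)\}$ if $\{e_i\}$ is independent in $\mathcal M$, and $\varnothing$ otherwise (a loop, or the null record $\bot$).  This map depends only on $i$ and the revealed record.  Applying \thmref{thm:commutative-beta} therefore needs a position-dependent encoding.  I handle it by the coherent lookup trick from the proof of \propref{prop:division-monotonicity}: query $x_i$, write the letter into a register, and uncompute.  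This costs at most two queries per letter query.  Alternatively, I would note that the adversary dual of \thmref{thm:essential-width} composes with the coordinatewise map at no cost, since $x_i\ne y_i$ is implied whenever the letters differ.

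Second, I check the product.  By \propref{prop:matroid-greedy-monoid} and \lemref{lem:matroid-greedy-merge}, the product of the letters is $B(S)$, where $S$ is the set of nonloop records with their labels.  The order on $U$ is lexicographic by $(w,i,e)$, and on an actual input it restricts to the order $(w_i,i)$.  So $B(S)$ is exactly the greedy output in increasing $(w_i,i)$.  Two records at distinct positions with equal $e$ are parallel in the record matroid, which is exactly the parallel-copy convention.  Greedy in weight order on a matroid yields a minimum-weight basis; this is the standard matroid greedy theorem.  The indices and records are read off from the labels of $B(S)$, so returning them is free postprocessing.

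Third, I bound the breadth.  \propref{prop:matroid-greedy-monoid} gives $\beta_G(H)=\operatorname{rk}(U)\le r$; more simply, any $G$-word has a core consisting of one letter per element of $B(S)$.  $H$ is finite, commutative and idempotent, hence aperiodic.  \thmref{thm:commutative-beta} then gives an adversary bound of $16\sqrt{n\min\{n,r\}}$.  Adversary tightness, together with the factor two from the encoding, yields $O(\min\{n,\sqrt{nr}\})$ queries.  If $r=0$, every letter is $\varnothing$, so the answer is empty and no query is needed.

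The main obstacle is the formal mismatch between the input alphabet $E\times W\cup\{\bot\}$ and monoid letters that carry the position label $i$.  \thmref{thm:commutative-beta} is stated for $\operatorname{Prod}_{M,G,n}$ on $G^n$, so I must argue that position-dependent relabeling preserves the bound.  I would argue this directly: the incremental summary $s_x(T)=B(\{(e_i,w_i,i):i\in T\text{ nonloop}\})$ is order-independent, with updates $\delta_i$ that may depend on $i$, as \defref{def:essential-width} allows.  Its essential width is at most $r$ by the argument of \lemref{lem:comm-beta-width}, using $\JJ$-triviality from \propref{prop:comm-jtrivial}.  I would then apply \thmref{thm:essential-width} directly, which avoids the two-query overhead altogether.
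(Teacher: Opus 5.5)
Your proposal is correct and follows the paper's proof: encode position $i$ by $B(\{(e_i,w_i,i)\})$ (the identity for $\bot$), identify the product with the canonical greedy basis via \lemref{lem:matroid-greedy-merge}, bound the breadth by $r$ using \propref{prop:matroid-greedy-monoid}, and apply \thmref{thm:commutative-beta}, paying only a constant factor because the position label is public. Your alternative of applying \thmref{thm:essential-width} directly, with position-dependent updates $\delta_i$ as \defref{def:essential-width} permits, is also valid; it simply removes that constant factor.
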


\begin{proof}
 Encode a nonnull record $x_i=(e_i,w_i)$ by the monoid letter
 $B(\{(e_i,w_i,i)\})$, and a null record by the identity.  Their product
 is the greedy basis of all the labelled input records, by
 \lemref{lem:matroid-greedy-merge}.

 This basis has minimum weight.  To see this, compare it with any other
 basis of the input records.  In each order prefix, greedy retains a
 basis of that prefix and hence at least as many elements as the
 competing basis.  Its $j$th selected element therefore precedes or
 equals the competing basis's $j$th element.  The order respects
 weights, so summing these comparisons proves minimum total weight.

 By \propref{prop:matroid-greedy-monoid}, the encoding alphabet has
 product breadth at most $r$ in the finite commutative idempotent
 monoid $H$.  Apply \thmref{thm:commutative-beta}.  A query to an encoded
 letter costs only a constant number of input queries, since the
 position label $i$ is public.  The product supplies the selected
 records and their indices.  When $r=0$, every encoded letter is the
 identity.
\end{proof}

\paragraph{Examples.}
For a uniform rank-$r$ matroid on $n$ distinct input elements, this
computes the $r$ smallest weights.  A partition matroid with capacity one
in each part selects a minimum-weight representative of each occurring
type; truncating its rank to $r$ selects the $r$ cheapest distinct types,
or all types if fewer occur.

For the graphic matroid of the complete graph on a known set of $v$
vertices, the input records are weighted edges and the rank is at most
$v-1$.  The theorem finds a minimum spanning forest using
$O(\min\{n,\sqrt{nv}\})$ record queries.
It gives $O(v^{3/2})$ queries in the adjacency-matrix model and
$O(\sqrt{vm})$ in the adjacency-array model with $m$ edges and public
degrees.  These recover the bounds of D\"urr et al.~\cite{DHHM06}, who
also give the corresponding selection algorithms and matching lower
bounds.  Their time bounds require additional data structures; here we
count only queries.

For the vector matroid on $\mathbb F_q^d$, a query reveals a vector and
its weight.  We obtain a minimum-weight subset forming a basis of the
input span using $O(\min\{n,\sqrt{nd}\})$ queries, uniformly in $q$.
With equal weights, this also computes the span itself, represented by
a basis.  The cost is for whole-vector queries.

\paragraph{Relation to earlier methods.}
The construction gives a common algebraic derivation of these bounds.
It is also related to classical sampling methods for matroid
optimization.  If each element is sampled independently with probability
$1/2$, Karger~\cite[Theorem~2.2]{Karger98} shows that at most $2r$ elements
in expectation either belong to the sample's greedy basis or change that
basis when inserted.

A direct application of the randomized decision-tree theorem of Beigi
and Taghavi~\cite[Theorem~4]{BT20} gives a slightly weaker bound for hidden
weights.  In a uniformly random scan, the basis changes at step $t$
exactly when the newly revealed record belongs to the greedy basis of
the first $t$ records.  Conditional on the set of $t$ revealed positions,
the last position is uniform, so this happens with probability at most
$\min\{1,r/t\}$.  For $1\le r\le n$, the expected number of changes is
therefore
$O(r(1+\log(n/r)))$, giving
$O(\sqrt{nr(1+\log(n/r))})$ queries.  The commutative product theorem
removes this logarithmic loss.  If instead the weights are public and
queries test independence in an unknown matroid, greedy can scan in
weight order and accept at most $r$ elements; the same decision-tree
theorem already gives $O(\sqrt{nr})$ independence queries.

\section{Product breadth and ordered monoid products}\label{sec:beta}

In this section, we prove our second main result,
\hyperref[res:ordered]{Result~\ref*{res:ordered}}, which bounds the quantum
query complexity of products in stably ordered monoids whose identity is
the minimum element.  The bound depends on the product breadth $\beta$ and
the input length, independently of the size of the monoid.  We first prove
a bound with an exponent of $\log n$ linear in $\beta$, then improve this
exponent to $O(\log(\beta+2))$ in
Sections~\ref{sec:beta-rank-doubling}--\ref{sec:beta-doubling-cost}.

We obtain the bound with a linear exponent by analyzing a classical
sampling procedure through the adversary method.
Throughout this section, let $M$ be a possibly infinite monoid with known
multiplication and a known stable partial order in which the identity is the
minimum element.
Fix a finite $G\subseteq M$, and put $\beta=\beta_G(M)$.  The case $\beta=0$ is immediate because every
allowed letter is the identity.  Assume $1\leq\beta<\infty$.
Product breadth and stable order are defined in
\eqnref{eq:monoid-beta} and \defref{def:stable-least-order}, respectively.

\paragraph{The idea.}
We inductively build an algorithm, for integers $r=1,\ldots,\beta$, that finds
a short subword whose product is at least the product of every subword of
length at most $r$.
For $r=1$ this means dominating every individual letter.  To go from $r-1$
to $r$, apply the algorithm for $r-1$ on the two sides of a split and combine
the resulting subwords.  Rejection sampling selects candidate subwords to add
to a growing union, which we finally compress to at most $\beta$ positions.
A prediction-tree argument bounds the adversary cost of the resulting
function after all random choices have been fixed.  These bounds compose
through the recursion; we extract a quantum algorithm only at the root.
At $r=\beta$, dominating all short subwords means having the whole product.
As in quantum maximum finding, the saved candidate never decreases and
the marked sampling mass shrinks.

\subsection{Two elementary observations}

Fix an input $x_1,\ldots,x_n$.  For an index set $U\subseteq[n]$, write
\[
 p(U)=[x_U],
\]
where $x_U$ lists the selected letters in increasing index order.  Inserting a letter
can only increase a product: $1\leq a$ gives $cd\leq cad$.
Consequently
\begin{equation}\label{eq:beta-insertion}
 U\subseteq V\quad\Longrightarrow\quad p(U)\leq p(V).
\end{equation}
In particular, if $D\subseteq U\subseteq V$ and $p(D)=p(V)$, then
$p(U)=p(V)$ by antisymmetry.

Whenever the letters at a set $U$ are known, we can find a set
$D\subseteq U$ with $|D|\leq \beta$ and $p(D)=p(U)$ without further queries:
try subsets in a fixed order until one works.  We call this \emph{compressing}
$U$.

A \emph{record} consists of selected input positions together with their
queried values.  Even when a classical procedure fails to produce a dominating
subword, its returned record still describes a subword of the input.
We denote records by their index sets, with the revealed letters understood.
Every final classical summary below has at most $\beta$ positions, for every
choice of its random seed.  Intermediate records may be larger.  Unions and
compression are functions of these records alone.  We always multiply in
the original position order.

We retain the full records of the accumulated unions after each completed
round until all insertion tests are finished.  The products of the corresponding
subwords, or records describing their cores, need not contain enough information
to evaluate the effect of inserting another subword in the middle.

\begin{lemma}[Only $\beta$ members of a list can be essential]
\label{lem:beta-essential-subword-list}
Let $U_1,\ldots,U_k\subseteq[n]$ be any list of index sets, allowing
overlap and repetitions.  At most $\beta$ indices $j$ satisfy
\[
 p\!\left(\bigcup_{i\ne j}U_i\right)
 \ne p\!\left(\bigcup_iU_i\right).
\]
\end{lemma}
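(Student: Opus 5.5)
Put $V=\bigcup_iU_i$ and use the definition of product breadth on the subword $x_V$. Its letters lie in $G$, so it has a product core of length at most $\beta$. Thus there is $D\subseteq V$ with $|D|\le\beta$ and $p(D)=p(V)$.

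For each index $d\in D$, choose one list member $j(d)$ with $d\in U_{j(d)}$. Let $J=\{j(d):d\in D\}$; then $|J|\le|D|\le\beta$. It suffices to show that every $j\notin J$ fails the displayed inequality.

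Fix $j\notin J$ and put $W=\bigcup_{i\ne j}U_i$. Every $d\in D$ lies in $U_{j(d)}$ with $j(d)\ne j$, so $D\subseteq W\subseteq V$. By the insertion monotonicity \eqref{eq:beta-insertion} we get $p(D)\le p(W)\le p(V)=p(D)$. Antisymmetry then gives $p(W)=p(V)$, which is exactly the consequence recorded right after \eqref{eq:beta-insertion}. Hence only indices in $J$ can be essential, and there are at most $\beta$ of them.

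Overlaps and repetitions in the list cause no difficulty, because each index of $D$ is charged to a single list member. No step is a real obstacle. The only point needing care is that the product core is taken of the subword indexed by $V$, in the original position order, so that breadth applies: $x_V$ is itself a word over $G$.
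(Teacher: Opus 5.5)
Your proposal is correct and follows essentially the same route as the paper: take a core $D$ of the full union with $|D|\le\beta$, charge each index of $D$ to one list member containing it, and observe that deleting any uncharged member leaves $D$ inside the union, so insertion monotonicity and antisymmetry preserve the product. Your remark that $x_V$ is itself a word over $G$, read in the original position order, is exactly the point that makes the breadth bound applicable.
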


\begin{proof}
Choose a preserving set $D$ of at most $\beta$ positions in the full union.
For each index in $D$, keep one list member containing it.  This keeps
at most $\beta$ members.  Deleting any other member leaves $D$ in the union,
so cannot change its product by~\eqref{eq:beta-insertion}.
\end{proof}

\subsection{Combining a distribution of short subwords}\label{sec:beta-sampling}

We apply \thmref{thm:bt-subroutine-composition} to classical rejection
sampling after fixing all random seeds, including those of recursive
calls.  The following specialization bounds the dual cost of the resulting
deterministic procedure.

\begin{lemma}[Bounded rejection sampling]
\label{lem:beta-rejection-dual}
Consider a deterministic procedure making at most $D\geq1$ rejection
draws, each of which examines at most $k\geq1$ proposals and returns the
first marked record, or reports failure.  Each proposal is a fixed function
of the original input with a dual of cost at most $T\geq1$.  The marking
predicate may depend on previously accepted records.  Rejected record
values are forgotten and do not otherwise affect the continuation.
Then the output of the procedure has a dual of cost $O(TD\sqrt k)$.
The proposal functions may depend on overlapping input positions.
\end{lemma}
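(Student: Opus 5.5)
The plan is to apply \thmref{thm:bt-subroutine-composition} directly, with the proposal functions as the called subroutines $g_j$. After all seeds are fixed, the procedure is deterministic and accesses the input only through evaluations of proposal functions. There are at most $D$ draws of at most $k$ proposals each, so at most $q=Dk$ calls. Each proposal function has $\ADVpm\le T$ by hypothesis: a dual of cost $T$ certifies this, and weak duality applies. Overlap of input positions among the proposals is allowed by that theorem, so nothing further is needed there.

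The key step is specifying the predicted branches so that the hypotheses of \thmref{thm:bt-subroutine-composition} hold. At each call, the current state consists of the accepted records so far, together with the position within the current draw. The marking predicate, applied to the returned record, is then a fixed function of this state. Group all unmarked return values into a single branch and designate it as predicted. Each marked record value is its own unpredicted branch.

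The grouping is legitimate because rejected values are forgotten: the continuation after a rejection depends only on the fact that the proposal was rejected. The next step is always either the next proposal in the same draw or a failure report. This is exactly the retained-data condition in the theorem. After an acceptance, the accepted record is retained, so distinct marked values must remain distinct branches. Since each draw accepts at most one proposal, every execution takes at most $G=D$ unpredicted branches.

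The theorem then gives a dual of cost $O(T\sqrt{qG})=O(T\sqrt{Dk\cdot D})=O(TD\sqrt k)$, as claimed. The degenerate cases are covered by the theorem's own statement, and the assumptions $D,k,T\ge1$ ensure the bound is nonvacuous.

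The only subtle point is checking that the output function is well defined on the virtual input. The virtual input is the vector of all proposal values. Because the marking predicate depends only on previously accepted records, which are themselves determined by earlier branch outcomes, the decision tree's next call and its branch grouping are fixed functions of the transcript. I expect this verification to be routine, with the main care going into stating precisely why the rejected-branch grouping satisfies the condition of \thmref{thm:bt-subroutine-composition}.
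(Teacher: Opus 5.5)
Your proposal is correct and follows the paper's proof. You coalesce all rejected answers into one predicted branch, so \thmref{thm:bt-subroutine-composition} applies with $q=Dk$ calls and $G=D$ mistakes, giving $O(T\sqrt{(Dk)D})=O(TD\sqrt k)$. The paper adds one detail you omit: a fixed tie-breaking rule (keep the first recorded letter at each position), so that unions of possibly inconsistent virtual records are defined and the virtual procedure is total. This omission is harmless, because on virtual inputs of the form $(g_j(x))_j$ the records come from the same input and are therefore consistent.
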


\begin{proof}
At each proposal, coalesce all rejected answers into one branch and
predict that branch.  There are at most $Dk$ calls and at most $D$
prediction mistakes: each accepted record ends its draw.
Apply \thmref{thm:bt-subroutine-composition} with $q=Dk$ and $G=D$ to
obtain cost $O(T\sqrt{(Dk)D})=O(TD\sqrt k)$.
For adversary composition, a virtual input may specify conflicting letters
at the same position in different records.  Define their union by keeping
the first recorded letter at each position.  This makes the virtual procedure
total and has no effect on records produced from the same input.
\end{proof}

\begin{lemma}[A short subword dominating almost all samples]
\label{lem:beta-sampling}
Fix an ambient length $N\geq2$ that is a power of two, with
$\beta\leq N$, and put $L=1+\lceil\log_2 N\rceil$.
Suppose a finite random seed $\omega$
specifies a record $U_\omega(x)$ of at most $\beta$
positions.  For each fixed input $x$, write $\mu$ for the distribution
of $U_\omega(x)$ over the random choice of $\omega$.
Assume that, for every fixed $\omega$, the function
$x\mapsto U_\omega(x)$ has a dual of cost at most $T\geq1$.
Then for every integer $1\leq h\leq N$, a finite random seed $\theta$ can
specify a record $K_\theta(x)$ of at most $\beta$ positions such that:
for every fixed $\theta$, its dual cost is
\[
 O\!\left(\beta T\sqrt h\,L^{3/2}\right);
\]
and for each input $x$, with probability at least $1-1/(100N)$ over $\theta$,
\begin{equation}\label{eq:beta-small-mass}
 \Pr_{U\sim\mu}\{p(U)\nleq p(K_\theta)\}<\frac1{2h}.
\end{equation}
The implicit constant is absolute.  Neither the sampled record nor the output
record is required to be the same on different successful seeds.
\end{lemma}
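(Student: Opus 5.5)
We follow the round-based rejection sampler sketched in \secref{sec:overview-ordered}. The seed $\theta$ consists of three independent parts: the seeds $\omega$ for all proposals, an ordered list of $\ell=\lceil c_1L\rceil$ rounds with $2\beta$ draws per round, and the sequence in which proposals are examined. Each draw uses at most $k=\lceil c_2hL\rceil$ proposals. The procedure keeps an accumulated index set $K$, starting from $K=\varnothing$, together with its full record. A record $U$ counts as \emph{marked} at the start of a round when $p(K\cup U)>p(K)$. During a round the marking predicate is frozen at the value of $K$ from the start of that round. A draw inspects its proposals $U_\omega(x)$ in order and accepts the first marked one. If no proposal is marked, the draw reports failure and contributes nothing. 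After the $2\beta$ draws of a round, $K$ is replaced by $K$ together with all accepted index sets. At the end, $K$ is compressed to at most $\beta$ positions, which can be done without queries, and the result is $K_\theta$.

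\textbf{Dual cost.} The cost bound follows from \lemref{lem:beta-rejection-dual}. There are $D=2\beta\ell=O(\beta L)$ draws, each with at most $k$ proposals of cost $T$. The marking predicate depends only on the previously accepted records. Rejected values are forgotten. This gives cost $O(TD\sqrt k)=O(\beta T L\sqrt{hL})=O(\beta T\sqrt h\,L^{3/2})$ uniformly in $\theta$. Compression is free postprocessing.

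\textbf{Success probability.} Let $m_j$ be the $\mu$-mass of records still marked after round $j$. Marks are monotone: once $p(K\cup U)=p(K)$ holds, it persists as $K$ grows, by \eqref{eq:beta-insertion} and antisymmetry. Every unmarked $U$ satisfies $p(U)\le p(K)$, so the final undominated mass is at most $m_\ell$. It therefore suffices to show $m_\ell<1/(2h)$ with probability at least $1-1/(100N)$.

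First consider a round in which all $2\beta$ draws succeed. The accepted sets are then i.i.d.\ samples from $\mu$ conditioned on being marked. Take an extra independent test sample $V$ from the same conditional law. The test remains marked exactly when $V$ is essential in the list of $2\beta+1$ sets $K,A_1,\ldots,A_{2\beta},V$. Here $K$ is fixed and always present. By \lemref{lem:beta-essential-subword-list}, at most $\beta$ of the sampled members are essential, since the member $K$ can only absorb indices. By exchangeability, $V$ is essential with probability at most $\beta/(2\beta+1)<1/2$. Hence $\mathbb E[m_j\mid\text{past}]\le m_{j-1}/2$ on this event.

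Failures occur only when a draw's $k$ proposals all miss the marked set. If the current mass satisfies $m_{j-1}\ge1/(2h)$, this happens with probability at most $(1-1/(2h))^k\le e^{-c_2L/2}$. A union bound over the $D$ draws makes the total failure probability at most $1/(200N)$ for a suitable $c_2$, using $\beta\le N$ and $L\ge\log_2N$.

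To handle the interaction between the two arguments, define a stopped supermartingale. Set $\tilde m_j=m_j$ until either the mass drops below $1/(2h)$ or a failure occurs. A failure can only leave the mass larger, since fewer records are added. Then $\mathbb E\tilde m_\ell\le2^{-\ell}$. Markov's inequality gives $\Pr[\tilde m_\ell\ge1/(2h)]\le2h2^{-\ell}\le1/(200N)$ once $c_1$ is large, since $h\le N$.

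\textbf{Main obstacle.} The delicate step is this coupling. The halving argument needs exact conditional draws, while rejection sampling is only approximately conditional, and the marked set changes between rounds. Freezing the marks within a round, conditioning on the success of all draws, and stopping at the threshold $1/(2h)$ handle this. Conditional on acceptance, a draw is exactly a $\mu$-sample restricted to the marked set. Independence of the proposal seeds ensures that conditioning on success does not bias the accepted records.
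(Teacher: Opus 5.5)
Your construction, parameters, dual-cost accounting via \lemref{lem:beta-rejection-dual}, and leave-one-out halving all match the paper. However, one step is false. You define the marked set at the start of each round directly as $\{U:p(K\cup U)>p(K)\}$, and you assert that $p(K\cup U)=p(K)$ persists as $K$ grows, ``by \eqref{eq:beta-insertion} and antisymmetry.'' For $K'\supseteq K$, insertion monotonicity only gives $p(K')\le p(K'\cup U)$; nothing bounds $p(K'\cup U)$ from above.

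The claim already fails in $\operatorname{UT}_3(\mathbb B)$. Let $a=F_{23}$ and $b=F_{12}$, where $F_{ij}$ is the identity plus a single one at $(i,j)$. Take the input $a\,b\,a$, with $K=\{1\}$ and $U=\{3\}$. Then $p(K\cup U)=a^2=a=p(K)$. But once position $2$ joins $K$, $p(\{1,2,3\})=aba$ has a one at $(1,3)$, while $p(\{1,2\})=ab$ does not. So under your definition a discarded record can be re-marked.

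This breaks the halving step. The leave-one-out argument only bounds the mass of records that were marked at the start of a round and survive it. Re-marked mass is uncontrolled, so $\mathbb E[m_j\mid\text{past}]\le m_{j-1}/2$ does not follow. Nor does the fact, used implicitly in your stopping argument, that the mass stays below $1/(2h)$ once it gets there. Your remark that a failure ``can only leave the mass larger, since fewer records are added'' rests on the same false monotonicity.

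The repair is what the paper does: make unmarking permanent. A record remains marked after round $j$ only if it was marked before the round and $p(K_j\cup U)\ne p(K_j)$. For this you must retain the full records of all intermediate unions $K_j$. Domination of unmarked records then needs only insertion monotonicity: if $U$ was discarded in round $i$, then $p(U)\le p(K_i\cup U)=p(K_i)\le p(K_{\rm final})$.

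With this change, the marked set is decreasing by construction, and the leave-one-out bound gives a genuine halving. The marking predicate still depends only on previously accepted records, which \lemref{lem:beta-rejection-dual} allows. For failures, you have two options:
\begin{itemize}
\item Freeze your stopped process at its pre-round value. A failing round then contributes at most $\Pr\{\text{failure}\}\,m_{j-1}$, which the slack $\tfrac12-\tfrac{\beta}{2\beta+1}$ absorbs once $c_2$ is large.
\item As in the paper, couple each bounded draw to an ideal conditional draw until the first failure, and apply Markov's inequality to the ideal process.
\end{itemize}
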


\Needspace{10\baselineskip}
\begin{proof}
First consider ideal conditional draws.  Initialize the saved union
$S=\varnothing$, call all possible records \emph{marked}, and repeat
the following round:
\begin{enumerate}
 \item Draw $2\beta$ independent records from $\mu$ conditioned on being marked.
       Let $A$ be the union of their index sets, put $S'=S\cup A$,
       and save the full record of $S'$.
 \item A previously marked record with index set $U$ remains marked precisely when
       \[
        p(S'\cup U)\ne p(S').
       \]
       Then replace $S$ by $S'$.
\end{enumerate}
Once unmarked, a record stays unmarked.  All tests use only the
record being tested and the saved records of successive unions.
Keep the marked set fixed during the draws of a round, and stop the ideal
process at zero marked mass.  At the end, return a compression $K$ of
the final saved union $S$.

\smallskip
\noindent\emph{Products of discarded records are dominated.}
If a record with index set $U$ is discarded in a round, then
\[
 p(U)\leq p(S'\cup U)=p(S')\leq p(K).
\]
The last inequality holds because the final saved union contains $S'$
and has the same product as its compression $K$.
Thus the probability in~\eqref{eq:beta-small-mass} is at most the final marked
mass.

\smallskip
\noindent\emph{The expected marked mass halves.}
Fix the current saved union $S$ and marked set $V$ of positive mass
$w=\mu(V)$.
For the analysis, augment the round's $2\beta$ independent draws from
$\mu(\,\cdot\mid V)$ with one further independent test sample from the
same distribution.  The test sample survives the round exactly when
removing it changes the product of the union of $S$ and all $2\beta+1$
samples.  Apply \lemref{lem:beta-essential-subword-list} to the list
consisting of $S$ and these samples.  At most $\beta$ of the sampled sets
can be essential while $S$ is retained.  Symmetry among the $2\beta+1$
random samples gives
\begin{equation}\label{eq:beta-halving}
 \mathbb E\bigl[\mu(V_{\rm next})\mid S,V\bigr]
 \leq \frac{\beta}{2\beta+1}\,w<\frac w2.
\end{equation}
This argument permits nonuniform distributions, overlapping sets, and
duplicate samples.

\smallskip
\noindent\emph{Bounded rejection draws.}
Implement each draw by trying $k$ independent proposals from $\mu$ and
returning the first marked one.  If the current marked mass is $w>0$, then
\[
 \Pr\{\text{failure}\}=(1-w)^k,
 \qquad
 \Pr\{\text{output}=U\mid\text{success}\}
   =\frac{\one{U\text{ marked}}\mu(U)}w.
\]
Indeed, the unconditional probability of returning a particular marked
$U$ is $\mu(U)\sum_{j=0}^{k-1}(1-w)^j$.  Thus a successful draw has
exactly the required conditional law.

Set
\[
 a=\frac1{2h},\qquad
 R=\left\lceil\log_2(400Nh)\right\rceil,\qquad
 \ell=\left\lceil\log_2(400\beta NR)\right\rceil,
 \quad k=2h\ell.
\]
Run at most $R$ rounds.  If a draw fails, discard the incomplete batch,
stop, and compress the current saved union.  While $w\geq a$,
\[
 (1-w)^k\leq(1-a)^k\leq2^{-\ell}\leq\frac1{400\beta NR}.
\]
Couple each bounded draw to an ideal conditional draw until a failure.
There are at most $D=2\beta R$ draws, so the probability of a failure while
the marked mass is at least $a$ is at most $1/(200N)$.
For the ideal process,~\eqref{eq:beta-halving} and Markov's inequality give
\[
 \Pr\{\text{marked mass after $R$ rounds}\geq a\}
 \leq\frac{2^{-R}}a\leq\frac1{200N}.
\]
Once the mass is below $a$, stopping is harmless and further rounds
only decrease it.  The two error bounds prove~\eqref{eq:beta-small-mass}.
This coupling is step by step; conditioning an entire execution on all
draws succeeding would in general bias its history.

\smallskip
\noindent\emph{Dual cost.}
Preassign an independent sampling seed to every possible proposal slot.
For any fixed assignment, the procedure is a deterministic function of
the input, to which \lemref{lem:beta-rejection-dual} applies.  Since
$R,\ell=O(L)$ for $\beta,h\leq N$, its dual cost is
\[
 O(TD\sqrt k)=O(\beta T\sqrt h\,R\sqrt\ell)
             =O(\beta T\sqrt h\,L^{3/2}).
\]
Every output record describes a subword of the input, including on seeds
where a draw fails or the small-mass guarantee fails.  Keeping the square
root of the proposal budget is the source of the improved logarithmic exponent.
\end{proof}

\subsection{From single letters to the whole product}

Call a known set $K$ an \emph{$r$-summary} of an interval $I$ if
$|K|\leq \beta$, $K\subseteq I$, and
\begin{equation}\label{eq:beta-r-summary}
 p(U)\leq p(K)
 \qquad\text{for every }U\subseteq I\text{ with }|U|\leq r.
\end{equation}
The summary need not be unique.

\begin{theorem}[Ordered products: a linear exponent]\label{thm:ordered-beta-product}
Let $M$ be a monoid with a known stable partial order in which the identity
is the minimum element.  Let $G\subseteq M$ be finite, and put
$\beta=\beta_G(M)<\infty$.
If $\beta=0$, the product is the identity and no queries are needed.
Suppose $\beta\geq1$.  There is an absolute constant $C$ such that,
for every integer $n\geq1$,
\[
 Q_{1/3}(\operatorname{Prod}_{M,G,n})
 \leq\min\left\{n,\ (C\beta)^\beta\sqrt n\,
                 \bigl(1+\lceil\log_2(n+2)\rceil\bigr)^{5\beta/2}\right\}.
\]
Moreover, there is an algorithm achieving this bound which also returns
a product-preserving subword of at most $\beta$ positions, with probability
at least $9/10$.
\end{theorem}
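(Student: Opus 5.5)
We build, by induction on $r=1,\ldots,\beta$, seeded classical procedures producing $r$-summaries, and convert to a quantum algorithm only at the end. First we pad the input with identities to length $N$, a power of two with $N\geq\max\{2,n\}$. Padding preserves the product, and since every allowed letter can be assumed to lie in $G\cup\{1\}$, it preserves breadth as well. Put $L=1+\lceil\log_2 N\rceil=O(\log(n+2))$.

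We may assume $\beta<n$, since otherwise reading the whole input suffices. The inductive claim is as follows. For every dyadic interval $I$ of length $m$ in a balanced binary tree on $[N]$, there is a seeded procedure $K^{(r)}_{I,\theta}$ with the following properties. It always returns a record of at most $\beta$ positions inside $I$. For every fixed seed its dual cost is at most $A_r\sqrt m$. It is an $r$-summary of $I$ with probability at least $1-1/(100N)$. The coefficients satisfy $A_r\leq (C\beta)^r L^{5r/2}$.

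For the base case $r=1$, apply \lemref{lem:beta-sampling} with $h=m$ and $\mu$ uniform over single positions of $I$. Here $T=O(1)$. Any undominated letter carries mass $1/m>1/(2m)$, so success gives a $1$-summary at cost $O(\beta\sqrt m\,L^{3/2})$.

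For the inductive step, fix a nonleaf depth of the subtree of $I$ containing $h$ nodes; each child interval then has length $m/(2h)$. The proposal distribution samples a node $v$ uniformly, runs fresh seeded $(r-1)$-summaries on the two children of $v$, and compresses the union of their outputs. By induction this proposal has dual cost $T\leq 2A_{r-1}\sqrt{m/(2h)}$. \lemref{lem:beta-sampling} then gives a record $K_d$ of cost $O(\beta A_{r-1}\sqrt m\,L^{3/2})$, since the factors $\sqrt h$ and $\sqrt{m/(2h)}$ cancel.

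The correctness argument runs as follows. Let $U\subseteq I$ have $2\leq|U|\leq r$, and let $v$ be the lowest common ancestor of $\min U$ and $\max U$. Then $U$ splits across the two children of $v$ with at most $r-1$ positions on each side. With probability at least $(1-1/(100N))^2/h>1/(2h)$, the proposal samples $v$ and both child calls succeed. In that event, stability and \eqref{eq:beta-insertion} show the proposal's product dominates $p(U)$: replacing each side by a dominating summary and inserting letters only increases the product. Hence, if $K_d$ satisfies \eqref{eq:beta-small-mass}, we get $p(U)\leq p(K_d)$. If $U$ were not dominated, every such proposal would be undominated, so the undominated mass would reach $1/(2h)$, a contradiction.

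Singletons are handled by also taking the union with a $1$-summary. Taking the union over all $O(L)$ depths and compressing yields an $r$-summary. Its dual costs add, so $A_r=O(\beta L^{5/2}A_{r-1})$. The depth sum contributes one factor $L$, and rejection sampling contributes $L^{3/2}$; the factor $\beta$ comes from the lemma.

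The step I expect to be the main obstacle is the failure probability. The union bound over $O(L)$ depths and the base case costs $O(L/N)$ rather than $1/(100N)$. I would fix this with a repetition step at each stage: take the union of $O(1)$ independent copies and compress, noting that a union of records succeeds if any one copy succeeds. This multiplies the cost only by a constant absorbed into $C$. This works because child failures are already accounted for inside $\mu$, so only the top-level events at each stage need control, with no union bound over recursive calls. The constant in the guarantee at stage $r$ must be chosen so that the $1/(2h)$ threshold comparison still holds.

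At $r=\beta$ with $I=[N]$, let $D$ index a core of size at most $\beta$. Then $p(D)\leq p(K)\leq p([N])=p(D)$, so $K$ is a product-preserving subword. Its product, together with the subword itself, is a fixed-seed function of dual cost $(C\beta)^\beta\sqrt n\,L^{5\beta/2}$. Adversary tightness \eqref{eq:adversary-tightness} converts each fixed-seed dual into a quantum algorithm computing that function with error at most $1/20$. Randomizing over seeds adds the classical failure probability $O(1/N)$, giving success probability at least $9/10$ overall. The minimum with $n$ comes from reading the whole input.
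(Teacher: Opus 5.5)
Your proposal is correct and follows essentially the same route as the paper: a seeded induction on $r$, the base case from the sampling lemma with $h=m$, per-depth proposals that pick a node uniformly and compress its children's $(r-1)$-summaries (with the $\sqrt h\,\sqrt{m/(2h)}$ cancellation), and antisymmetry plus a single fixed-seed conversion at the root. Your repetition step is harmless but unnecessary. The paper keeps only a constant success probability $99/100$ at every rank. That suffices because the sampling lemma only needs successful-child mass $\frac{98}{100}\cdot\frac1h>\frac1{2h}$, and the fewer than $N$ depths each fail with probability at most $1/(100N)$; likewise singletons are already covered at the leaf's parent, without a separate $1$-summary.
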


\begin{proof}
If $\beta=0$, every allowed letter is the identity and no queries are needed.
If $\beta\geq n$, reading the input gives both the product and a preserving
subword of at most $\beta$ positions.  We may therefore assume $1\leq\beta<n$.

For each rank $r$ and interval $I$ of
length $m$, we construct a family of deterministic record functions
$F_{r,I,\omega}(x)$, indexed by a finite random seed.  Every seed returns
a record of at most $\beta$ positions in $I$.  For each input,
the record is an $r$-summary with probability at least $99/100$ over
the seed.  Separately, for every fixed seed we bound the dual cost of
the exact function $x\mapsto F_{r,I,\omega}(x)$ by $A_r(m)$.
This last requirement includes seeds on which domination fails.

Pad the input with known identities to a power-of-two length
$N$, where $n\leq N<2n$.  Padded positions cost no queries and are always
omitted from returned sets.  Use the balanced binary interval tree on
these $N$ positions, and put $L=1+\log_2 N$.

\paragraph{Base case: dominate all individual letters.}
On an interval of length $m$, generate a uniformly random position and
return its letter record.  On padding, return the empty record.  For
each fixed position this function has constant dual cost.  Apply
\lemref{lem:beta-sampling} with $h=m$ and $T=O(1)$.
If any letter were not dominated by the returned product, the probability
of an undominated sample would be at least $1/m$, contradicting
\eqref{eq:beta-small-mass}.  Thus we have a $1$-summary with probability
at least $99/100$, and dual cost
\begin{equation}\label{eq:beta-base-cost}
 A_1(m)\leq C_0\beta\sqrt m\,L^{3/2}.
\end{equation}
One-position intervals simply return their exact letter record, at every rank.

\paragraph{Induction: dominate subwords of length at most $r$.}
Suppose the $(r-1)$-summary procedure is available.  Consider one depth
of the interval tree on an interval $I$ of length $m$.  There are $h$
nodes at that depth, each with two children of length $m/(2h)$.
Generate a sample as follows:
\begin{enumerate}
 \item Choose one of the $h$ nodes uniformly.
 \item Independently choose a seed for the $(r-1)$-summary on each child.
 \item Compress the union of the two returned sets and output it.
\end{enumerate}
For each fixed choice of node and child seeds, combining the two child
duals gives cost $T=O(A_{r-1}(m/(2h)))$.  Here and when combining depths,
we use the elementary sum rule: a function of several outputs has dual
cost at most a constant times the sum of their dual costs.  Tagging
the witnesses by the preceding outputs makes only the first differing
component contribute; the output-inequality gadget
in~\eqref{eq:ineq-gadget} handles postprocessing.
With probability at least $98/100$, both child summaries are correct.
In that event the sample's product dominates $p(U)$ for every index
set $U$ in that node having at most $r-1$ positions in each child:
stability gives
\begin{equation}\label{eq:beta-split}
 p(U)=p(U_{\rm left})p(U_{\rm right})
 \leq p(K_{\rm left})p(K_{\rm right}).
\end{equation}
The right side is the product of the returned sample, because all left
positions precede all right positions.

Apply \lemref{lem:beta-sampling} to this distribution, with this value of
$h$, to obtain $K_d$ for the chosen depth.  On its success event, $K_d$
dominates \emph{all} subwords described in~\eqref{eq:beta-split}, at
\emph{all} nodes of that depth.  Indeed, if $p(U)$ for one such index
set were not dominated, every successful sample from its node would be undominated.
That has probability at least
\[
 \frac1h\cdot\frac{98}{100}>\frac1{2h},
\]
contradicting~\eqref{eq:beta-small-mass}.  This is why we need neither a
fixed successful output for each node nor a union bound over all its
possible target subwords.

Do this separately at every nonleaf depth, then compress the union of
their records $K_d$.  There are fewer than $N$ depths, each failing
with probability at most $1/(100N)$, so all their guarantees hold
together with probability at least $99/100$.
Every index set $U$ with $2\leq|U|\leq r$ has positions in both children
of the smallest tree interval containing its first and last positions.
Each child then contains at most $r-1$ of its positions.  A singleton
is covered at its leaf's parent, and the empty product is the minimum element.
Thus the returned union, and its compression, is an $r$-summary.
We have not taken a union bound over all child calls.  Unsuccessful
child seeds are already included in the distribution of sampled records,
and its mass of successful records is what the argument uses.  Thus the
success probability remains $99/100$ at every rank.

\paragraph{Dual cost.}
At a depth with $h$ nodes, \lemref{lem:beta-sampling} gives cost at most
\[
 C_1 \beta L^{3/2}\sqrt h\,A_{r-1}\!\left(\frac m{2h}\right),
\]
where $C_1$ absorbs the cost of combining the children.
For $r\geq2$ and $m\geq2$, summing over depths gives
\[
 A_r(m)\leq C_2\beta L^{3/2}
   \sum_{d<\log_2 m}\sqrt{2^d}\,
       A_{r-1}\!\left(\frac m{2^{d+1}}\right).
\]
The important cancellation is
\[
 \sqrt h\,\sqrt{\frac m{2h}}=\sqrt{\frac m2}.
\]
Consequently,~\eqref{eq:beta-base-cost} and induction give
\[
 A_r(m)\leq (C_3\beta)^r\sqrt m\,L^{5r/2}.
\]
Each rank adds $L^{3/2}$ from rejection sampling and at most $L$ from
summing over depths.  The constants are absolute; no carrier-size or
seed-count parameter appears.  All these bounds concern exact
deterministic functions with their seeds fixed, so no quantum errors
accumulate during this induction.

\paragraph{Finish at $r=\beta$.}
Choose an index set $U\subseteq[n]$ of size at most $\beta$ whose subword
preserves the full input product.
A successful $\beta$-summary $K$ satisfies
\[
 [x_1\cdots x_n]=p(U)\leq p(K)\leq[x_1\cdots x_n].
\]
The last inequality uses that $K$ is an actual subword.  Antisymmetry
therefore gives equality, so a random root seed returns a preserving
record with probability at least $99/100$.

For each fixed root seed, adversary tightness now gives a quantum
algorithm computing its exact record function with error at most
$1/16$ in $O(1+A_{\beta}(N))$ queries.  First choose a seed according to the
classical distribution, then run this algorithm.  Its probability of
returning a preserving record is at least
\[
 \frac{99}{100}\cdot\frac{15}{16}>\frac9{10}.
\]
The number of seeds has no effect on the query bound.  Padding coordinates
are fixed before this conversion, so the resulting algorithm uses the
original input oracle and makes no queries to padding.
If desired, replace any malformed output or one with more than $\beta$
positions by the empty record, then requery the reported positions.
This ensures that the recorded letters agree with the input even on error
branches, at an extra cost of at most $\beta$.
Putting $r=\beta$ and using $N<2n$ gives the claimed query bound, with this
extra cost absorbed by the absolute constant.  We may also read and compress
the whole input, giving the minimum with $n$.
\end{proof}

\subsection{Median frontiers and rank doubling}\label{sec:beta-rank-doubling}

The preceding construction raises the summary rank by one.  We now
raise it from $r$ to $2r$ at a cost polynomial in $\beta$ and $\log n$.
Only $O(\log(\beta+2))$ stages are then needed to obtain the whole product.

We continue to use the $r$-summaries defined in~\eqref{eq:beta-r-summary}.
As before, we construct finite families of deterministic record functions,
prove their success probability over the random seeds, and bound their
dual cost for every fixed seed.  The finite alphabet and input length
make the record spaces finite, even when $M$ is infinite.

For the next lemma, the reader may wish to refer to
Figure~\ref{fig:beta-median-frontier} in the overview, which illustrates
the frontier construction.

\begin{lemma}[Median-frontier coverage]\label{lem:beta-median-frontier}
Put a complete binary interval tree on a word of power-of-two length
$N\geq2$.  Assign an $r$-summary $K_I$ to each nonroot interval $I$.
For an internal vertex $v$, let its \emph{frontier} $\mathcal F(v)$ consist
of the two children of $v$ and all siblings off the root-to-$v$ path.
These intervals partition the word.  Let
$C_v=\bigcup_{I\in\mathcal F(v)}K_I$, retaining the full record of this union.
Then every subword of length at most $2r$ has product at most $p(C_v)$
for some internal vertex $v$.
\end{lemma}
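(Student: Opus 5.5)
The plan is to handle the partition claim first, then split an arbitrary subword at a median, choose the vertex $v$ as a lowest common ancestor, and finish with the monotonicity principle \eqref{eq:beta-insertion} together with stability.

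\emph{Partition.} Let $\rho=v_0,v_1,\ldots,v_s=v$ be the root-to-$v$ path. At each step, the interval of $v_{j-1}$ is the disjoint union of the interval of $v_j$ and its sibling. By induction on $j$, the siblings of $v_1,\ldots,v_j$ together with $v_j$ partition $[N]$. At $j=s$, replace $v$ by its two children to get $\mathcal F(v)$.

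\emph{The empty subword and singletons.} The empty subword is handled by $1\leq p(C_v)$ for any $v$. For a singleton $\{i\}$, take $v$ to be the parent of the leaf $\{i\}$. Then $\{i\}$ is a child of $v$, so $\{i\}\in\mathcal F(v)$, and $p(\{i\})\leq p(K_{\{i\}})$ because $r\geq1$. Insertion monotonicity then gives $p(K_{\{i\}})\leq p(C_v)$.

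\emph{Subwords with $2\leq|U|\leq 2r$.} Order $U$ and split it as $U_1\cup U_2$, where $U_1$ consists of the first $\lfloor |U|/2\rfloor$ positions. Then $|U_1|,|U_2|\leq r$, both parts are nonempty, and $\max U_1<\min U_2$. Let $v$ be the lowest common ancestor of the leaves $\max U_1$ and $\min U_2$; it is internal and these two leaves lie in different children of $v$.

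\emph{Each frontier interval meets only one half.} This is the key claim. The left child of $v$ ends after $\max U_1$ and before $\min U_2$, so it meets only $U_1$. Symmetrically, the right child meets only $U_2$. Now take an off-path sibling $I$ of some $v_j$ with $j\geq1$. If $I$ lies to the left of $v_j$, every element of $I$ is smaller than every element of $v$'s interval, and in particular smaller than $\max U_1$, so $I$ meets only $U_1$. If $I$ lies to the right, the same argument shows it meets only $U_2$. Hence $|U\cap I|\leq r$ for every $I\in\mathcal F(v)$.

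\emph{Concluding.} List $\mathcal F(v)$ as intervals $I_1<\cdots<I_q$ in left-to-right order. Because these intervals are contiguous and disjoint, the subword $x_U$ is the concatenation of the subwords $x_{U\cap I_t}$, so $p(U)=\prod_t p(U\cap I_t)$. The $r$-summary property gives $p(U\cap I_t)\leq p(K_{I_t})$ for each $t$. Applying stability one factor at a time yields
\[
p(U)=\prod_t p(U\cap I_t)\leq\prod_t p(K_{I_t}).
\]
Since $K_{I_t}\subseteq I_t$, the same concatenation argument shows that the right-hand side equals $p(C_v)$, which completes the proof.

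The only delicate step is the claim that each frontier interval meets only one half of $U$. It rests on the interval tree's nesting and on the choice of $v$ as the lowest common ancestor of the two median positions.
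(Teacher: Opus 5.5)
Your proposal is correct and follows the paper's proof essentially step for step. You use the same median split with $v$ the lowest common ancestor of the two middle positions, the same observation that each frontier interval meets only one half of $U$ (so at most $r$ target positions), multiplication of the summary inequalities in interval order via stability, and the leaf's parent for singletons; your explicit induction for the partition property is a small addition that the paper simply asserts in the statement.
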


\begin{proof}
Let $U=\{i_1,\ldots,i_s\}$ be the target index set, where
$i_1<\cdots<i_s$ and $2\leq s\leq2r$, and put $j=\lfloor s/2\rfloor$.
Take $v$ to be the lowest common
ancestor of $i_j$ and $i_{j+1}$.  Its left child contains at most $j$
target positions, and its right child at most $s-j$, both at most $r$.
Each sibling off the root-to-$v$ path lies entirely before or after $v$,
so contains at most $j-1$ or $s-j-1$ target positions.
Thus every frontier interval contains at most $r$ target positions.

In each frontier interval $I$, the summary property gives
$p(U\cap I)\leq p(K_I)$.  Multiplying these inequalities in interval
order and using stability gives
\[
 p(U)=\prod_{I\in\mathcal F(v)}p(U\cap I)
 \leq\prod_{I\in\mathcal F(v)}p(K_I)=p(C_v).
\]
For a singleton, take the parent of its leaf.  The empty product is the
minimum element and is therefore dominated by every $p(C_v)$.
The conclusion holds for all targets simultaneously
whenever all interval summaries are correct.
\end{proof}

The lemma reduces rank doubling to combining one record per internal
vertex.  These records share interval summaries.  We account for this
sharing directly in the dual, with all summary seeds fixed.

\subsection{A dual for search on a tree}\label{sec:beta-shared-search}

We first record why the costs of exact adaptive computations add.
All duals below satisfy the all-pairs identity~\eqref{eq:all-pairs-dual}.

\begin{lemma}[Adaptive sums]\label{lem:beta-adaptive-sum}
Suppose a deterministic computation makes $s$ calls.  For each fixed
transcript of earlier answers, the function called at step $j$ has a
dual of cost at most $T_j$.  Then the full answer transcript has a dual
of cost at most $\sum_{j=1}^sT_j$.  Any function of that transcript has
a dual of cost at most $2\sum_{j=1}^sT_j$.
\end{lemma}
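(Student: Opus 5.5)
We build the transcript dual by induction on $s$, using the same first-divergence construction as in the proof of \thmref{thm:essential-width}. Denote the transcript of the first $j-1$ answers by $\tau_{<j}(x)=(a_1(x),\ldots,a_{j-1}(x))$; it determines which function $h_{j,\tau}$ is called at step $j$. For every possible prefix $\tau$, fix an all-pairs dual $(c^{j,\tau}_{x,i},d^{j,\tau}_{x,i})$ for $h_{j,\tau}$ of cost at most $T_j$. Then define, in mutually orthogonal sectors indexed by the step $j$ and the prefix $\tau$,
\[
 u_{x,i}=\bigoplus_{j=1}^s e_{j,\tau_{<j}(x)}\otimes c^{j,\tau_{<j}(x)}_{x,i},
 \qquad
 v_{y,i}=\bigoplus_{j=1}^s e_{j,\tau_{<j}(y)}\otimes d^{j,\tau_{<j}(y)}_{y,i}.
\]

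First we check the all-pairs identity for the transcript function. Suppose the transcripts of $x$ and $y$ first differ at step $j^*$. For $j<j^*$ the prefixes agree, so both inputs use the same dual, and the filtered sum $\sum_{i:x_i\ne y_i}\langle c,d\rangle$ equals $\one{a_j(x)\ne a_j(y)}=0$. At $j^*$ the prefixes also agree, the called function is the same, and its answers differ, so the contribution is $1$. For $j>j^*$ the prefixes differ, so the sector tags are orthogonal and these terms vanish. If the transcripts agree, every step contributes zero. Summing over steps gives exactly $\one{\text{transcripts differ}}$.

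The cost bound is then direct. The squared norm satisfies $\sum_i\|u_{x,i}\|^2=\sum_j\sum_i\|c^{j,\tau_{<j}(x)}_{x,i}\|^2\le\sum_jT_j$, and the same holds for $v$. If some executions make fewer than $s$ calls, the missing steps simply contribute nothing.

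For a function $F$ of the transcript, we tensor each block with the output gadget $\widetilde\phi_{F(x)}$, $\widetilde\psi_{F(y)}$ from \eqref{eq:ineq-gadget}. The surviving first-divergence term then becomes $\one{F(x)\ne F(y)}$. When $F(x)=F(y)$ every term vanishes, and when the transcripts agree so do the outputs. Each squared norm is multiplied by $2$, which gives cost $2\sum_jT_j$.

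The only point needing care is that the dual used at step $j$ must be chosen as a function of the prefix alone. This is what makes the step-$j$ duals agree for inputs sharing a prefix, and that agreement is the reason the pre-divergence terms cancel. The statement assumes a dual exists for each fixed transcript, so this choice is available and no real obstacle arises.
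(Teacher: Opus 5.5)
Your proposal is correct and follows essentially the same route as the paper: a direct sum of the call duals tagged by the step and the full preceding transcript, the first-divergence argument showing that only the first differing answer contributes, additivity of the squared norms, and a single tensoring with the output-inequality gadget for the factor of two. The opening mention of induction on $s$ is unnecessary, since the construction you then give is direct.
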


\begin{proof}
Take the direct sum of the call duals, tagging each summand by the
step and the complete preceding answer transcript.  For two inputs
with different transcripts, only the first differing answer contributes:
earlier answers agree, and later calls have different tags.
The all-pairs identity makes the contribution exactly one.  Each input
uses one summand per step, so the squared-norm costs add.
For postprocessing, tensor with the output-inequality vectors
$\phi_a=e_\star+e_a$ and $\psi_a=e_\star-e_a$ from
\eqref{eq:ineq-gadget}.  Their inner product is $\one{a\ne b}$ and
their squared norms are two.  This incurs the stated factor only once,
after the full transcript has been formed.
\end{proof}

The following theorem specializes the learning-graph-to-adversary
construction of Belovs and Lee~\cite[Theorem~9]{BL11}, combined with
adversary composition~\cite{LMRSS11}.
Associate a subcomputation $g_v$ with each tree vertex $v$, and regard
its output as one coordinate of a virtual input.  Add an initial vertex
representing that no outputs are known, and let the edge entering $v$
represent evaluating $g_v$.  A root path ending at a marked vertex
provides a $1$-certificate for the existence of a marked vertex.
On each positive input, send one unit of flow along one such path.

For edge weights $a_v^2$, the negative learning-graph complexity is
$\sum_v a_v^2$, and the cost of the chosen positive flow is the sum
of $a_v^{-2}$ along its path.  Composing with a dual of cost $t_v$ for
each $g_v$ changes these charges to $t_va_v^2$ and $t_v/a_v^2$, respectively,
giving the weighted estimate~\eqref{eq:beta-tree-weighted-cost} below.
We explicitly optimize the resulting worst-path bound to obtain the
recursive formula~\eqref{eq:tree-recursive-cost}.  The direct proof
below enforces the all-pairs constraints and allows the functions $g_v$
to share input positions.

\begin{theorem}[Tree search with shared subcomputations]
\label{thm:weighted-tree-search}
Let $\mathcal T$ be a finite rooted tree with root $\rho$, and let
$P(v)$ be the root-to-$v$ path, including both endpoints.
For each vertex $v$, let $g_v:\mathcal D\to\Gamma_v$, where
$\mathcal D\subseteq\Sigma^n$, be a deterministic function
with an all-pairs dual of cost at most $t_v>0$.
All input and output alphabets are finite.
A vertex $v$ is marked when a fixed predicate of
$(g_w(x))_{w\in P(v)}$ holds.  Let $F(x)=1$ if some vertex is marked,
and $F(x)=0$ otherwise.  Define
\begin{equation}\label{eq:tree-recursive-cost}
 \mathcal C_v=t_v+
 \sqrt{\sum_{u\text{ child of }v}\mathcal C_u^2},
\end{equation}
with the empty sum equal to zero, so $\mathcal C_v=t_v$ at a leaf.
Then $F$ has an all-pairs dual of cost at most $\mathcal C_\rho$, and
\[
 Q_{1/3}(F)=O(\mathcal C_\rho).
\]
The functions $g_v$ may depend on overlapping input positions.
\end{theorem}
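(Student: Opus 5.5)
I will construct the all-pairs dual directly, by induction from the leaves, generalizing the first-divergence construction in the proof of \thmref{thm:essential-width} and \lemref{lem:beta-adaptive-sum}. For each vertex $v$ I define $F_v(x)$ to be $1$ if some vertex in the subtree of $v$ is marked. Whether this holds depends on $(g_w(x))_{w\in P(v)}$ together with the values in the subtree. I would prove the stronger, conditional statement: for each fixed tuple $\gamma$ of values on the strict ancestors of $v$, the function $F_v^\gamma$ restricted to inputs consistent with $\gamma$ has a dual of cost at most $\mathcal C_v$. Then I would tag the vectors by $\gamma$ so that pieces for different ancestor histories are orthogonal. At the root there are no strict ancestors, which gives the theorem. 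Adversary tightness \eqref{eq:adversary-tightness} then gives the query bound.

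\textbf{Inductive step.} Once $g_v$ is evaluated, $F_v$ is either $1$ outright (when $v$ itself is marked) or equals the OR of the child functions $F_u$. I would view this as a learning graph with one edge for $g_v$ followed by a star of child edges. The dual is the direct sum of a component of weight $a$, namely the dual of $g_v$ tensored with output-inequality gadgets, and the child duals tensored with weights, each child tagged by the value $g_v(x)$.

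For the OR over children with costs $\mathcal C_u$, the standard weighted OR composition gives cost $\sqrt{\sum_u \mathcal C_u^2}$. Concretely, on the negative side I would give child $u$ weight $\mathcal C_u/S$ with $S=\sqrt{\sum_u\mathcal C_u^2}$. On the positive side I would use a single marked child $u$ with the inverse weight. A positive input then pays $(S/\mathcal C_u)\mathcal C_u=S$, and a negative input pays $\sum_u (\mathcal C_u/S)\mathcal C_u=S$.

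The sequential step, evaluating $g_v$ and then running this OR, has cost adding as in \lemref{lem:beta-adaptive-sum}. So the total is $t_v+S=\mathcal C_v$. This is exactly the recursion \eqref{eq:tree-recursive-cost}. The weights are chosen per vertex and do not depend on the input, so both the $x$-side and the $y$-side norm sums stay bounded by $\mathcal C_v$.

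\textbf{Verifying the all-pairs identity.} Take a pair $x,y$ with $F(x)\ne F(y)$, say $F(x)=1$. Follow a fixed witness path for $x$ from the root to a marked vertex, and let $w$ be the first vertex on it where $g_w(x)\ne g_w(y)$. Before $w$ the tags agree and the child OR-duals recurse correctly. At $w$ the $g_w$-dual contributes exactly one, and after $w$ the tags differ, so nothing further contributes. If no such $w$ exists, $y$ would also have that vertex marked, a contradiction. Pairs with $F(x)=F(y)$ must be killed by output gadgets.

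I expect the main obstacle to be getting the exact constant $\mathcal C_\rho$, with no factor of $2$ per level, while still satisfying the all-pairs form (sums equal to $0$ for equal outputs and asymmetric vectors). The naive per-level output gadget doubles the cost at every level. To avoid this, I would first apply the output gadget once at the top only. I would then use the Belovs--Lee flow form directly, with negative vectors given by $\sum_v$ weight$\cdot$dual and positive vectors given by the flow path. I would check feasibility via the first-divergence argument on the path, as described above, rather than composing level by level.
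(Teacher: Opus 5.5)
Your final construction is the paper's: flow vectors on a witness path for positive inputs and on all vertices for negative inputs, tagged by the ancestor values $\tau_z(w)$, composed with the duals of the $g_w$, and checked by first divergence along the witness path. Your recursive OR weights also unroll to the paper's optimal weights. Put $a_\rho^2=1$ and $a_u^2=a_v^2\,\mathcal C_u/S_v$ for a child $u$ of $v$, where $S_v=\bigl(\sum_{u'\text{ child of }v}\mathcal C_{u'}^2\bigr)^{1/2}$. Induction then gives $\sum_{w\in P(v)}t_w/a_w^2\le\mathcal C_\rho$ for every $v$ and $\sum_w t_wa_w^2=\mathcal C_\rho$. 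This is what the paper obtains by optimizing $\lambda_w=t_w/a_w^2$. You should state these weights explicitly, because in the flat form nothing else fixes them.

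The real gap is at the step you flagged as the main obstacle. Applying the output gadget $\phi_a=e_\star+e_a$, $\psi_a=e_\star-e_a$ of \eqref{eq:ineq-gadget} even once, at the top, doubles both squared-norm sums. Your construction as described therefore certifies cost $2\mathcal C_\rho$, not the claimed $\mathcal C_\rho$. The query bound $O(\mathcal C_\rho)$ survives, but the exact dual bound does not.

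The missing idea is that for a Boolean $F$ you can kill equal-output pairs for free with two orthogonal sectors. When $F(x)=1$, put all $u_{x,\cdot}$ in sector $\ket{0}$ and all $v_{x,\cdot}$ in sector $\ket{1}$; when $F(x)=0$, reverse the sectors. Then $\langle u_{x,i},v_{y,i}\rangle$ can be nonzero only if $F(x)\ne F(y)$, and no norm changes. This is exactly how the paper's proof is set up.

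With this free gadget, your level-by-level induction also loses nothing. A positive input routed through child $u$ pays $t_v+(S_v/\mathcal C_u)\mathcal C_u$, and a negative input pays $t_v+\sum_u(\mathcal C_u/S_v)\mathcal C_u$; both equal $\mathcal C_v$. The per-level doubling you feared comes only from the generic postprocessing in \lemref{lem:beta-adaptive-sum}, so the detour to the flat form is not needed for the constant.
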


\begin{proof}
For a virtual table $z=(z_v)_{v\in\mathcal T}$, write
$\widehat F(z)=1$ if some vertex is marked, and let $\tau_z(w)$ be
the tuple of entries $z_u$ at the strict ancestors $u$ of $w$, in path order.
Choose a marked vertex $v_z$ for each positive table $z$.
For arbitrary positive weights $a_w$, use two orthogonal sectors and put
\[
\begin{array}{lll}
 \widehat F(z)=1:
 &u_{z,w}=\one{w\in P(v_z)}a_w^{-1}\ket{0,\tau_z(w)},
 &v_{z,w}=\one{w\in P(v_z)}a_w^{-1}\ket{1,\tau_z(w)},\\[2pt]
 \widehat F(z)=0:
 &u_{z,w}=a_w\ket{1,\tau_z(w)},
 &v_{z,w}=a_w\ket{0,\tau_z(w)}.
\end{array}
\]
Equal-output pairs are orthogonal.  If $\widehat F(z)=1$ and $\widehat F(z')=0$,
the tables must differ somewhere on $P(v_z)$, since otherwise $v_z$
would also be marked in $z'$.  Among the vertices $w$ on this path with
$z_w\neq z'_w$, only the first has $\tau_z(w)=\tau_{z'}(w)$.  Its contribution is
$a_w^{-1}a_w=1$, and all other contributions vanish.  The reverse
orientation is identical.  Thus these vectors satisfy the all-pairs
dual identity on virtual tables.

Compose with the duals for $g_w$ using the shared-input construction in the
proof of \thmref{thm:bt-subroutine-composition}.  The resulting cost is at most
\begin{equation}\label{eq:beta-tree-weighted-cost}
 \max\left\{
   \max_v\sum_{w\in P(v)}\frac{t_w}{a_w^2},\quad
   \sum_w t_wa_w^2
 \right\}.
\end{equation}
Put $\lambda_w=t_w/a_w^2$.  A common rescaling of the $\lambda_w$
balances the two terms in~\eqref{eq:beta-tree-weighted-cost}.
Thus optimizing this estimate over the weights gives
\begin{equation}\label{eq:tree-optimal-cost}
 \mathcal W(\mathcal T,t)=
 \inf_{\lambda_w>0}
 \sqrt{\left(\max_v\sum_{w\in P(v)}\lambda_w\right)
       \left(\sum_w\frac{t_w^2}{\lambda_w}\right)}.
\end{equation}
We show that the infimum is attained and equals $\mathcal C_\rho$.
By homogeneity, normalize the maximum path sum to one.
For a single vertex the minimum squared cost is $t_\rho^2$, attained
at $\lambda_\rho=1$.  Otherwise, set $\lambda_\rho=s\in(0,1)$.
Each child subtree then has path budget at most $1-s$.
By induction and scaling, its minimum contribution to the total sum
is $\mathcal C_u^2/(1-s)$.  Consequently,
\[
 \mathcal W(\mathcal T,t)^2
 =\min_{0<s<1}
   \left\{\frac{t_\rho^2}{s}
          +\frac{\sum_{u\text{ child of }\rho}\mathcal C_u^2}{1-s}\right\}
 =\left(t_\rho+
        \sqrt{\sum_{u\text{ child of }\rho}\mathcal C_u^2}\right)^2
 =\mathcal C_\rho^2.
\]
The minimum is attained at $s=t_\rho/\mathcal C_\rho$.
This proves the dual bound, and adversary tightness gives the query bound.
\end{proof}

The recurrence adds the work at a vertex once and combines the costs
of its child branches through the square root of the sum of their
squares.  For a chain it gives $\mathcal C_\rho=\sum_vt_v$; for a root
whose children are leaves it gives
$\mathcal C_\rho=t_\rho+\sqrt{\sum_{v\ne\rho}t_v^2}$.
The following simpler estimate suffices for our application.

\begin{corollary}[Depth bound for tree search]
\label{cor:beta-tree-search-depth}
In the setting of \thmref{thm:weighted-tree-search}, if $\mathcal T$
has depth at most $d$, then
\begin{equation}\label{eq:beta-tree-budget}
 \mathcal C_\rho\leq B=\sqrt{(d+1)\sum_{v\in\mathcal T}t_v^2}.
\end{equation}
In particular, $F$ has a dual of cost at most $B$.
\end{corollary}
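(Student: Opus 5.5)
The plan is to bound $\mathcal C_\rho$ directly from the recurrence \eqref{eq:tree-recursive-cost}; no new dual construction is needed. \thmref{thm:weighted-tree-search} already supplies a dual of cost $\mathcal C_\rho$, so the second sentence of the corollary follows once we show $\mathcal C_\rho\leq B$. There are two natural routes. One inducts on the recurrence with a Cauchy--Schwarz step. The other uses the variational characterization \eqref{eq:tree-optimal-cost} from the proof, where it is shown that $\mathcal C_\rho=\mathcal W(\mathcal T,t)$. I would take the variational route, since it gives a one-line proof.

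In \eqref{eq:tree-optimal-cost}, choose the uniform weights $\lambda_w=1$ for every vertex $w$. Every root-to-vertex path $P(v)$ has at most $d+1$ vertices, because the depth is at most $d$. Hence $\max_v\sum_{w\in P(v)}\lambda_w\leq d+1$. The second factor is $\sum_w t_w^2$. Therefore
\[
\mathcal C_\rho=\mathcal W(\mathcal T,t)\leq\sqrt{(d+1)\sum_{v\in\mathcal T}t_v^2}=B.
\]

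If one prefers not to rely on the infimum identity, an induction on height also works. The claim to prove is $\mathcal C_v^2\leq(h_v+1)S_v$, where $h_v$ is the height of the subtree at $v$ and $S_v$ is the sum of $t_w^2$ over that subtree. For the inductive step, write $X=\sum_u\mathcal C_u^2\leq h_vS'$, where $S'=S_v-t_v^2$. Then
\[
(t_v+\sqrt X)^2\leq(1+h_v)(t_v^2+X/h_v),
\]
which is the weighted Cauchy--Schwarz inequality $(a+b)^2\leq(1+h)(a^2+b^2/h)$. This yields $\mathcal C_v^2\leq(h_v+1)(t_v^2+S')$, as claimed.

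The only delicate point is that equality of $\mathcal C_\rho$ with the infimum is established inside the proof of the theorem rather than in its statement. For that reason I would present the direct induction as the safe argument and mention the uniform-weight choice as the intuition behind it. Either way, the dual bound follows from \thmref{thm:weighted-tree-search}.
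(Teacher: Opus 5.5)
Your uniform-weight argument, taking $\lambda_w=1$ in the variational formula and noting that every root path has at most $d+1$ vertices, is exactly the paper's proof. Your alternative height induction $\mathcal C_v^2\le(h_v+1)S_v$ via $(a+b)^2\le(1+h)(a^2+b^2/h)$ is also correct; it only saves you from citing the identity $\mathcal C_\rho=\mathcal W(\mathcal T,t)$ proved inside the theorem.
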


\begin{proof}
Take $\lambda_w=1$ in~\eqref{eq:tree-optimal-cost}.
Every root path has at most $d+1$ vertices.
\end{proof}

The recursive bound can be smaller: on a complete binary tree with
$N$ vertices, all leaves at depth $d$, and $t_v=1$ for every vertex,
it gives $\mathcal C_\rho=\sum_{h=0}^d2^{h/2}=O(\sqrt N)$,
whereas the depth bound is $O(\sqrt{N\log(N+1)})$.
Both bounds remain valid when only a specified set of vertices may be
marked, by incorporating that restriction in the marking predicates.
For a fixed vertex $v$, any record determined by the outputs along
$P(v)$ has dual cost at most
\[
 2\sum_{w\in P(v)}t_w\leq 2\mathcal C_\rho\leq 2B,
\]
by \lemref{lem:beta-adaptive-sum} and~\eqref{eq:tree-recursive-cost}.
This accounts for retrieving a record after finding its label.

\begin{lemma}[Seeded batch sampling on a tree]
\label{lem:beta-tree-sampling}
In the setting of \corref{cor:beta-tree-search-depth}, suppose there
are $q\geq1$ labelled vertices, each with a record $C_v(x)$
determined by $(g_w(x))_{w\in P(v)}$; its size need not be at most $\beta$.
For every integer $R\geq1$, a finite random seed specifies a record $K$
of at most $\beta$ positions such that
\[
 \Pr\{p(C_v)\leq p(K)\text{ for every labelled vertex }v\}
 \geq 1-q2^{-R}.
\]
For each fixed seed, the exact output function has dual cost
\begin{equation}\label{eq:beta-tree-combine}
 O\!\left(\beta RB\log(q+2)\right).
\end{equation}
These statements hold for any fixed choice of the functions $g_v$ returning
records of the input, whether or not they satisfy a summary guarantee.
On inconsistent virtual records, use the fixed tie-breaking rule
from the proof of \lemref{lem:beta-rejection-dual} to define unions.
\end{lemma}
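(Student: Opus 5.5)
The plan follows the batch-halving scheme of \lemref{lem:beta-sampling}, with the sampling distribution now uniform over the currently marked labelled vertices. Rejection sampling is replaced by exact marked-vertex search through \thmref{thm:weighted-tree-search}.

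\textbf{The procedure.} Fix a seed consisting of $R$ rounds. Each round has $2\beta$ independent uniformly random permutations of the $q$ labels. Initialize $S=\varnothing$ and mark every labelled vertex. In round $j$, let the marked set $V_j$ consist of the labelled $v$ with $p(S\cup C_v)\ne p(S)$; these are exactly the vertices not yet unmarked, by the monotonicity \eqref{eq:beta-insertion}. For each of the $2\beta$ permutations $\pi$, find the first $\pi$-entry lying in $V_j$ by binary search over prefixes of $\pi$. Each binary-search step asks whether some vertex in a given prefix is marked. That question is an instance of \thmref{thm:weighted-tree-search}: the marking predicate of $v$ is ``$v$ lies in the prefix and $p(S\cup C_v)\ne p(S)$.'' This depends only on $(g_w)_{w\in P(v)}$ together with the already known record $S$. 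Since $S$ is part of the earlier transcript, it is a fixed parameter of the question. Each decision therefore has dual cost at most $B$ by \corref{cor:beta-tree-search-depth}. Retrieving $C_v$ for each found label costs at most $2B$, by the remark after the corollary. After the $2\beta$ draws, set $S'=S\cup\bigcup C_v$ and pass to the next round. At the end, compress $S$ to $K$. If $V_j$ is empty, we stop.

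\textbf{Dual cost.} The whole computation is an exact adaptive sequence of calls. There are $R\cdot2\beta$ draws, each using $O(\log(q+2))$ decisions of cost $\le B$ plus one retrieval of cost $O(B)$. \lemref{lem:beta-adaptive-sum} adds these costs and pays the factor $2$ once for the final compression. This gives \eqref{eq:beta-tree-combine}. All of this holds for any fixed $g_v$, since only the predicate structure is used. Unions of inconsistent virtual records are handled by the first-letter rule, as in \lemref{lem:beta-rejection-dual}.

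\textbf{Correctness.} Every unmarked vertex satisfies $p(C_v)\le p(S\cup C_v)=p(S)\le p(K)$. It therefore suffices that no vertex is marked after $R$ rounds. Conditional on the history, the $2\beta$ draws are i.i.d.\ uniform on $V_j$. Adjoin one further independent uniform test sample and apply \lemref{lem:beta-essential-subword-list} to the list consisting of $S$ and the $2\beta+1$ sets $C_v$. By symmetry, exactly as in \eqref{eq:beta-halving}, we get $\mathbb E|V_{j+1}|\le\frac{\beta}{2\beta+1}|V_j|<|V_j|/2$. Iterating gives $\mathbb E|V_{R+1}|\le q2^{-R}$, and Markov's inequality yields the failure probability $q2^{-R}$.

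\textbf{Main obstacle.} The delicate step is justifying that each binary-search predicate fits \thmref{thm:weighted-tree-search}, given that $S$ is a record computed from earlier answers rather than a fixed function. I would handle this by conditioning on the transcript. For each fixed prefix of answers, $S$ is a constant, and the call at that step has a dual of cost at most $B$. This is exactly the hypothesis that \lemref{lem:beta-adaptive-sum} requires.
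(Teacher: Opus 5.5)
Apart from one step, your route is the paper's: uniform marked draws via a random permutation and binary search with the tree dual of \corref{cor:beta-tree-search-depth}, retrieval along the root path, $2\beta$ draws per round for $R$ rounds, leave-one-out halving with Markov, and cost accounting by \lemref{lem:beta-adaptive-sum}. The gap is your definition of the marked set. You take $V_j=\{v:p(S\cup C_v)\ne p(S)\}$ for the \emph{current} union $S$, and you claim by \eqref{eq:beta-insertion} that this is exactly the set of vertices not yet unmarked. That claim is false. Insertion monotonicity gives $p(S)\le p(S\cup C_v)$. It does not say that a record absorbed by $S_i$ stays absorbed by a larger $S\supseteq S_i$, because the new positions of $S$ can interleave with those of $C_v$.

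A counterexample in $U_3(\mathbb T)$: take positions $1<2<3$, where $x_1$ has only the strict entry $(1,2)=3$, $x_2$ has only $(2,3)=100$, and $x_3$ has $(1,2)=5$ and $(1,3)=10$. Then $p(\{1,3\})=p(\{3\})$. However, $p(\{1,2,3\})$ has $(1,3)$-entry $103$, while $p(\{2,3\})$ has $(1,3)$-entry $10$. So under your rule a vertex outside $V_j$ can reenter $V_{j+1}$. The leave-one-out argument bounds only $\mathbb E|V_{j+1}\cap V_j|$ by $\frac{\beta}{2\beta+1}|V_j|$ and says nothing about re-marked vertices. Hence the recursion $\mathbb E|V_{j+1}|<|V_j|/2$, and with it the bound $q2^{-R}$, does not follow.

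The fix is the paper's rule: unmark permanently. Declare $v$ marked in round $j$ if and only if $p(S_i\cup C_v)\ne p(S_i)$ for every earlier saved union $S_i$ with $1\le i<j$. This is why the full records of all successive unions must be retained, not just the current one. For a fixed transcript, this predicate is still a fixed function of $(g_w)_{w\in P(v)}$. Each binary-search decision therefore still has dual cost $B$, and your cost bound is unchanged. Now $V_{j+1}\subseteq V_j$, so the halving bound holds. A vertex unmarked after round $i$ satisfies $p(C_v)\le p(S_i\cup C_v)=p(S_i)\le p(K)$. This gives domination by the final record rather than absorption into the final union, which is all the statement requires.
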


\begin{proof}
Given a marked set, choose a uniformly random permutation of its ambient
$q$ labels.  Binary-search for the first marked label in that permutation,
using an initial test for emptiness.  Each of the $O(\log(q+2))$
decisions asks whether some vertex is marked and belongs to a specified
initial segment.  For fixed earlier answers and saved records, its
predicate depends only on the outputs along that vertex's root path, so
\corref{cor:beta-tree-search-depth} supplies a dual of cost $B$.
Obtain the selected record by evaluating the functions $g_w$ along its root path.
By \lemref{lem:beta-adaptive-sum}, this draw, with its permutation fixed,
has dual cost $O(B\log(q+2))$.  Under a random permutation the first
marked label is exactly uniform on the marked set; no rejection cutoff is
needed here.

Apply the sampling process from \lemref{lem:beta-sampling} for $R$
rounds, starting with $S_0=\varnothing$.  In round $j$, make $2\beta$
independent uniform marked draws with replacement, keeping the marked
set fixed until the round is complete.  Let $A_j$ be their union and
put $S_j=S_{j-1}\cup A_j$.  A previously marked record survives
precisely when $p(S_j\cup C_v)\ne p(S_j)$.  Once unmarked, a record
stays unmarked.  Stop if the marked set is empty.  Retain the full
records of successive saved unions for all subsequent insertion tests,
and return a compression of the final saved union.

The leave-one-out argument in~\eqref{eq:beta-halving} gives
$\mathbb E|V_R|\leq q2^{-R}$.  Markov's inequality bounds the
probability of a nonempty final marked set by $q2^{-R}$.
The product of every discarded record is dominated by the final saved product.
This proves the success guarantee, including for overlapping or
repeated records.

Fix all $2\beta R$ permutations in advance.  The entire procedure is now
a deterministic adaptive computation.  Its transcript has the sum of
its draw costs; projecting once to the compressed output costs a
factor of at most two.  This proves~\eqref{eq:beta-tree-combine}.
Every returned record describes a subword of the input, for every seed.
The probability of failure concerns domination alone, not any inexact
evaluation of the fixed-seed functions.
\end{proof}

\subsection{The cost of doubling the rank}\label{sec:beta-doubling-cost}

\begin{lemma}[Rank doubling]\label{lem:beta-rank-doubling}
Let $M$ be a monoid with known multiplication and a known stable partial
order whose minimum is the identity.  Let $G\subseteq M$ be finite and
put $\beta=\beta_G(M)$, with $1\leq\beta<\infty$.
Fix integers $L,r\geq1$, and put
$N=2^L$ and $\delta=100^{-(L+2)}$.
Suppose that, for some $A\geq1$, every dyadic interval $I\subseteq[N]$
of length $m$ has a finite family of deterministic record functions on $G^I$,
indexed by a random seed whose distribution is independent of the input,
with the following properties:
\begin{enumerate}
 \item For every input and every seed, the output $K$ consists of at most
       $\beta$ positions in $I$ and their input letters.
 \item For every input, with probability at least $1-\delta$ over the seed,
       $p(U)\leq p(K)$ simultaneously for every $U\subseteq I$ with
       $|U|\leq r$, where $p(U)$ is the product of the input letters at
       positions in $U$, in input order.
 \item For every fixed seed, the output function has an all-pairs dual
       of cost at most $A\sqrt m$.
\end{enumerate}
Then every dyadic interval $I\subseteq[N]$ of length $m$ admits a finite family of
deterministic record functions with an input-independent seed distribution
whose outputs always contain at most $\beta$ positions in $I$ and their
input letters, and which, for every input, return a record dominating
every subword of length at most $2r$ with probability at least $1-\delta$.
For every fixed seed, the output function has an all-pairs dual of cost
at most
\begin{equation}\label{eq:beta-rank-doubling}
 C\beta AL^4\sqrt m,
\end{equation}
where $C$ is an absolute constant independent of $r$, $\beta$, and $M$.
\end{lemma}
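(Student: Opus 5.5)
Fix a dyadic interval $I$ of length $m$ and the complete binary interval tree $\mathcal T_I$ on it, of depth $\log_2 m\le L$. If $m=1$, the exact letter record works at any rank, so assume $m\ge2$. The plan has three layers. First, amplify the given $r$-summaries. Second, combine them by seeded tree sampling (\lemref{lem:beta-tree-sampling}). Third, amplify the combined record once more, so that the failure probability returns to $\delta$.

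\emph{Amplified child summaries.} For every nonroot interval $J$ of $\mathcal T_I$, take $c=\lceil\log_{1/\delta}(100\cdot 2m)\rceil=O(1)$ independent seeds. Since $\delta=100^{-(L+2)}$ and $m\le 2^L$, a constant $c$ already gives failure probability $\delta^{c}\le 1/(100\cdot 2m)$. Compress the union of the $c$ outputs into a record $K_J$. If any copy dominates every $r$-subword of $J$, so does the compressed union, by \eqref{eq:beta-insertion} and antisymmetry. The fixed-seed cost is $O(A\sqrt{|J|})$ by \lemref{lem:beta-adaptive-sum}. A union bound over the fewer than $2m$ intervals shows that all $K_J$ are $r$-summaries with probability at least $99/100$.

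\emph{Tree combination.} At each internal vertex $v$, let $g_v$ return the pair $(K_{\text{left}},K_{\text{right}})$ for its children. This function has dual cost $t_v=O(A\sqrt{|v|})$. The frontier union $C_v$ is determined by the outputs $(g_w)_{w\in P(v)}$, since each sibling off the path is a child of some ancestor. At each depth the squared costs sum to $O(A^2m)$. There are $d+1\le L+1$ depths, so $\sum_v t_v^2=O(A^2 mL)$, and \corref{cor:beta-tree-search-depth} gives $B=O(A\sqrt m\,L)$.

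Apply \lemref{lem:beta-tree-sampling} with $q<m$ labelled internal vertices and $R=\lceil\log_2(100m)\rceil=O(L)$. This yields $K$ with $p(C_v)\le p(K)$ for all $v$, except with probability $1/100$. Its fixed-seed cost is $O(\beta RB\log(q+2))=O(\beta A\sqrt m\,L^3)$. When all $K_J$ are correct and all $C_v$ are dominated, \lemref{lem:beta-median-frontier} shows that $K$ dominates every subword of length at most $2r$. The first claim of that lemma covers lengths $2$ through $2r$; singletons and the empty word are handled as stated there. Hence one copy succeeds with probability at least $98/100$.

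\emph{Final amplification.} Run $c'=\lceil (L+2)\log 100/\log 50\rceil=O(L)$ independent copies and compress their union. The failure probability is at most $(1/50)^{c'}\le\delta$, and success of a single copy suffices as before. By \lemref{lem:beta-adaptive-sum}, the cost multiplies by $O(L)$, giving $C\beta AL^4\sqrt m$. Every seed still outputs at most $\beta$ genuine positions with their letters, since unions of actual input records, compressed, remain input subwords.

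The main obstacle is the accounting in the tree step. The records $C_v$ overlap and share the interval summaries, and the marking predicate must depend only on root-path outputs. This is handled by defining $g_v$ as the children's summaries rather than $C_v$ itself. One must also confirm that the inner amplification costs only $O(1)$ rather than an extra factor of $L$, which holds because $\delta$ is exponentially small in $L$. Two further points need checking: shared-input composition must be valid even though the amplified $K_J$ reuse overlapping inputs, and the inconsistent-virtual-record tie-break convention must be in place.
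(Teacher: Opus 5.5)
Your proposal is correct and follows the paper's proof. Both use one $r$-summary per nonroot interval and let $g_v$ return the two child summaries, so that $C_v$ depends only on root-path outputs. Both then apply \corref{cor:beta-tree-search-depth} with $B=O(A\sqrt m\,L)$, \lemref{lem:beta-tree-sampling} with $R=O(L)$ rounds, and \lemref{lem:beta-median-frontier}, and finally compress $O(L)$ independent copies to get the fourth logarithm. The differences are cosmetic. Your inner amplification is vacuous: your $c$ equals $1$, because $2m\delta\le 1/400$ already, which is how the paper bounds the child failures directly. You also label only internal vertices, whereas the paper adds the leaves as dummy vertices with $C_v=\varnothing$.
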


\begin{proof}
Set
\begin{equation}\label{eq:beta-doubling-parameters}
 c=L+2,\qquad R=L+11,\qquad \delta=100^{-c}.
\end{equation}
These parameters are shared by all recursive intervals and stages.
A one-position interval returns its letter.  On an interval of length
$m=2^j\geq2$, independently choose one seed for the $r$-summary of
each nonroot interval in its binary tree.  At an internal vertex $v$,
let $g_v$ return the two child summaries.  Include leaves as dummy
vertices: at each leaf $v$, let $g_v$ return an empty output and set
$C_v=\varnothing$.  For each internal vertex $v$, the outputs along
its root path determine the full frontier record $C_v$
from \lemref{lem:beta-median-frontier}.

For each fixed choice of child seeds, the function $g_v$ at depth $h$
has a dual of cost at most
\begin{equation}\label{eq:beta-frontier-lookup-cost}
 t_v=C_0A\sqrt{m/2^h},
\end{equation}
where $C_0$ absorbs the cost of combining the two child records.
The same positive bound is valid for the constant functions at the leaves.
There are $2^h$ vertices at depth $h$, so
\[
 \sum_vt_v^2\leq C_0^2A^2m(j+1).
\]
The weighted tree dual therefore has cost
\begin{equation}\label{eq:beta-frontier-search-cost}
 B\leq C_0A\sqrt m\,(j+1)=O(A\sqrt m\,L).
\end{equation}
Apply \lemref{lem:beta-tree-sampling} with the $q=2m-1$ vertices and
$R$ rounds.  This gives one raw doubling output of dual cost
$O(\beta A\sqrt m\,L^3)$.

We now analyze its classical success probability.  There are fewer
than $2N$ child summaries, each failing with probability at most
$\delta$.  Thus all are correct except with probability at most
$2N\delta\leq1/400$.  For every fixed table of child records, the
sampling failure probability is at most
$q2^{-R}\leq2N2^{-R}\leq1/400$.
When both guarantees hold, \lemref{lem:beta-median-frontier} makes the
raw output a $2r$-summary.  Its failure probability is therefore at
most $1/200<1/100$.

Finally, take $c$ independent copies of this whole construction and
compress the union of their outputs.  If any copy is a $2r$-summary,
so is this union, by insertion monotonicity.  Every copy returns a subword
of the input, including unsuccessful copies, so their union is also a subword.
Independence makes the failure probability at most
$(1/100)^c=\delta$.  Combining the fixed-seed output duals costs
$O(c)=O(L)$ times the cost of one copy.  This is the fourth logarithm
in~\eqref{eq:beta-rank-doubling}.
\end{proof}

We can now prove the logarithmic-exponent bound of
\hyperref[res:ordered]{Result~\ref*{res:ordered}}.

\begin{theorem}[Ordered products: a logarithmic exponent]
\label{thm:ordered-beta-log-product}
Let $M$ be a monoid with a known stable partial order in which the identity
is the minimum element.  Let $G\subseteq M$ be finite, and put
$\beta=\beta_G(M)$, where $1\leq\beta<\infty$.
For an absolute constant $C$ and every $n\geq1$,
\begin{equation}\label{eq:beta-log-product}
 Q_{1/3}(\operatorname{Prod}_{M,G,n})
 \leq\min\left\{n,\sqrt{n+1}
       \bigl(C(\beta+2)\log(n+2)\bigr)^{C\log(\beta+2)}\right\}.
\end{equation}
Moreover, there is an algorithm achieving this bound which also returns
a product-preserving subword of at most $\beta$ positions, with probability
at least $9/10$.  Its recorded letters agree with the input on every branch.
The constant is independent of $M$, $G$, and $\beta$.
\end{theorem}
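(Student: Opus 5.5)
We would prove \thmref{thm:ordered-beta-log-product} by iterating \lemref{lem:beta-rank-doubling} from a base rank-one family and then converting the final dual to an algorithm exactly as in the finish of \thmref{thm:ordered-beta-product}. First dispose of trivial cases: if $\beta\geq n$, read and compress the whole input. Otherwise pad the input with known identities to length $N=2^L$ with $n\leq N<2n$, so that $L=O(\log(n+2))$. Fix $\delta=100^{-(L+2)}$ as in the lemma.

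\emph{Base stage.} For every dyadic interval $I$ of length $m$ we need a seeded family of record functions producing $1$-summaries with failure probability at most $\delta$ and fixed-seed dual cost $A_1\sqrt m$. We take the base case of \thmref{thm:ordered-beta-product}: \lemref{lem:beta-sampling} with $h=m$ and $T=O(1)$ gives success $99/100$ at cost $O(\beta\sqrt m\,L^{3/2})$. We then amplify as in the last paragraph of the proof of \lemref{lem:beta-rank-doubling}. Take $L+2$ independent copies and compress their union. Insertion monotonicity keeps any successful copy's domination, and every copy is an actual subword. This reaches failure at most $\delta$ with $A_1=O(\beta L^{5/2})$. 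One-position intervals return their letter exactly.

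\emph{Doubling.} Apply \lemref{lem:beta-rank-doubling} with $r=1,2,4,\ldots,2^{s-1}$, where $s=\lceil\log_2\beta\rceil$, so that $2^s\geq\beta$. Each application keeps the hypotheses (records of at most $\beta$ positions, failure at most $\delta$, input-independent seeds) and multiplies the coefficient by $C\beta L^4$. Hence
\[
 A_{2^s}\leq A_1(C\beta L^4)^{s}\leq \bigl(C'(\beta+2)L\bigr)^{C'\log(\beta+2)}.
\]
At rank $2^s\geq\beta$, the argument of the finish of \thmref{thm:ordered-beta-product} applies to the root interval $[N]$. Pick a core $U$ of size at most $\beta$, so $p(U)\leq p(K)\leq p([n])=p(U)$, and antisymmetry makes the returned record a preserving subword with probability at least $1-\delta$. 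For each fixed root seed, adversary tightness gives an algorithm with error $1/16$ using $O(1+A_{2^s}\sqrt N)$ queries. Randomizing over the seed gives success above $9/10$, and $\sqrt N\leq\sqrt{2n}$, which yields the $\sqrt{n+1}$ factor. Padding positions are fixed, so they cost nothing. Requerying the at most $\beta$ reported positions, and replacing malformed outputs by the empty record, makes the recorded letters agree with the input on every branch. The extra $\beta$ queries are absorbed into the constant. Taking the minimum with $n$ (read everything) gives \eqref{eq:beta-log-product}.

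\emph{Main obstacle.} The delicate step is the bookkeeping that lets the doubling lemma be iterated without accumulating error. The failure parameter $\delta$ must be restored at every stage, which the lemma guarantees by construction. The cost must stay of the form $A\sqrt m$ uniformly for every dyadic sub-interval and for every fixed seed, including failing seeds, so that the next stage's hypothesis (iii) holds. We would check that all constants are independent of $r$ and $M$. We would also check that the exponent $s=O(\log(\beta+2))$ absorbs both the factor $\beta$ and the factor $L^4$ into the stated form $(C(\beta+2)\log(n+2))^{C\log(\beta+2)}$, with the base-stage factor $L^{5/2}$ contributing only one extra power.
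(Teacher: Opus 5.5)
Your proposal is correct and follows the paper's proof essentially step by step. The shared steps are: amplify the rank-one base family by compressing $L+2$ independent copies, apply \lemref{lem:beta-rank-doubling} $\lceil\log_2\beta\rceil$ times at a factor $O(\beta L^4)$ per stage, use antisymmetry against a core at the root, convert to a quantum algorithm only once, and requery at most $\beta$ positions so the recorded letters agree with the input.
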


\begin{proof}
Assume $1\leq \beta<n$; the remaining cases are handled by reading and
compressing the input.  Put
\[
 L_n=\lceil\log_2(n+1)\rceil,\qquad N=2^{L_n},
\]
so $n<N\leq2n$.  Pad with known identities, which incur no queries,
and use $L=L_n$ and $\delta=100^{-(L+2)}$ throughout the recursion.
The singleton-summary family from~\eqref{eq:beta-base-cost} fails with
probability at most $1/100$.  Compressing the union of $c=L+2$
independent copies reduces this to $\delta$ and gives dual cost
$O(\beta\sqrt m\,L^{5/2})$ on every dyadic interval of length $m\leq N$.

Choose a sufficiently large absolute $C_1$ and set $K=C_1\beta L^4$.
The base cost is at most $K\sqrt m$, and
\lemref{lem:beta-rank-doubling} gives, by induction, a rank-$2^s$
summary family of dual cost at most $K^{s+1}\sqrt m$.
All stages retain failure probability at most $\delta$.
Take $k=\lceil\log_2 \beta\rceil$.  A successful rank-$2^k$ summary
$U$ dominates a preserving subword of the input.  Since $U$ is itself
a subword, antisymmetry gives $p(U)=[x_1\cdots x_n]$.

For each fixed root seed, convert the exact record dual to a quantum
algorithm with error at most $1/16$.  Choosing the seed according to
its finite classical distribution then returns a preserving
record of at most $\beta$ positions with probability at least
\[
 (1-\delta)\frac{15}{16}\geq\frac{99}{100}\frac{15}{16}>\frac9{10}.
\]
This is the only conversion to a quantum algorithm in the recursion.
Remove any padding positions from its output.  If it reports more than
$\beta$ positions, return the empty record; otherwise requery those positions
and replace the recorded letters by their input values.  This ensures that
all recorded letters agree with the input, uses at most $\beta$ extra queries,
and preserves every successful output.

The constant conversion cost, padding factor, and final verification
are absorbed by one further factor of $K$, after increasing $C_1$.
In particular, for an absolute constant $C_2$ the construction gives
\begin{equation}\label{eq:beta-doubling-iterated}
 Q_{1/3}(\operatorname{Prod}_{M,G,n})
 \leq\min\left\{n,\sqrt n\,
   \bigl(C_2(\beta+2)L_n^4\bigr)^{\lceil\log_2(\beta+2)\rceil+2}\right\}.
\end{equation}
The read-all alternative also proves this bound when $\beta\geq n$.
Finally, $L_n=O(\log(n+2))$ and
$\lceil\log_2(\beta+2)\rceil+2=O(\log(\beta+2))$.
Absorbing the fourth power into the exponent gives
\eqref{eq:beta-log-product}, with an absolute constant independent of
$M$, $G$, and $\beta$.
\end{proof}

\begin{remark}[Formalization]
\label{rem:ordered-formalization}
With $L_n=\lceil\log_2(n+1)\rceil$, the Lean formalization proves the
linear-exponent bound
\[
 Q_{1/3}(\operatorname{Prod}_{M,G,n})
 \leq\min\{n,(2^{27}\beta)^\beta\sqrt n\,L_n^{5\beta/2}\}.
\]
For the logarithmic-exponent bound it proves
\eqref{eq:beta-doubling-iterated} with $C_2=2^{15}$ and
\eqref{eq:beta-log-product} with $C=2^{17}$.
These constants use natural logarithms in
\eqref{eq:beta-log-product} and the formal library's value oracle,
which swaps a blank answer with the queried letter and fixes other
answers.  All bounds also hold at error $1/10$ in that model.

Lemmas~\ref{lem:beta-sampling} and~\ref{lem:beta-rank-doubling} are not
stated as standalone lemmas in Lean in the form given here.  However,
\thmref{thm:ordered-beta-log-product} is fully formalized, and its proof
verifies the cost bounds needed from these lemmas for its recursive
construction, with explicit constants.

The formalized verification uses at most $\beta$ additional queries to
ensure that every output lists at most $\beta$ original positions and
their actual input letters, with product preservation holding with
probability at least $9/10$.
\end{remark}

\section{Applications: stock and unitriangular products}\label{sec:applications}

In this section, we apply the product-breadth bounds for stably ordered
monoids whose identity is the minimum element.  We use the linear-exponent
bound of \thmref{thm:ordered-beta-product} for stock summaries and the
logarithmic-exponent bound of \thmref{thm:ordered-beta-log-product} for
unitriangular tropical matrices.
In each case, we bound the product breadth to obtain quantum query bounds
independent of the size of the input alphabet.

\subsection{Best time to buy and sell stock}\label{sec:stock-beta}

Allcock et al.~\cite{ABBLS23} studied the best time to buy and sell stock
problem and obtained an $O(\sqrt{n\log(n+2)})$ quantum query bound.
The problem can be formulated as a monoid product problem in a stably
ordered monoid with identity as the minimum element and product breadth
at most four, so it fits the general framework of
\hyperref[res:ordered]{Result~\ref*{res:ordered}}.
For this fixed breadth, the linear-exponent bound of
\thmref{thm:ordered-beta-product} gives a quantum query upper bound
of $O(\sqrt n\log^{10}(n+2))$, recovering the square-root dependence
on $n$ with a worse logarithmic factor.

Fix a finite set of integer prices $P\subseteq\mathbb Z$.  Recall the stock summaries from
Example~\ref{ex:btbs-breadth}: a nonempty price word $w=x_1\cdots x_m$
over $P$ has summary
\[
 \sigma(w)=\bigl(\min_i x_i,\max_i x_i,\max_{i<j}(x_j-x_i)\bigr),
\]
recording its minimum price, maximum price, and best profit, with
$\max\varnothing=-\infty$.  For adjacent words $u,v$, we have
$\sigma(uv)=\sigma(u)\otimes\sigma(v)$, where
\[
 (a,b,c)\otimes(d,e,f)
 =\bigl(\min(a,d),\max(b,e),\max(c,f,e-a)\bigr).
\]
The third coordinate accounts for transactions within either word and
for buying in the first and selling in the second.  This operation is
associative because it represents concatenation.

Let $M_P$ be the monoid of these summaries, with an identity adjoined for
the empty word, and let $G_P=\{(p,p,-\infty):p\in P\}$ be the singleton
price summaries.  Multiplying the singleton summaries of an input word
gives its best profit in the third coordinate.  The monoid is finite:
the minimum and maximum lie in $P$,
and a profit is either $-\infty$ or a difference of two prices in $P$.

\begin{proposition}[A four-position stock core]\label{prop:stock-beta}
The monoid $M_P$ has a stable partial order in which the identity is the
minimum element, and
\[
 \beta_{G_P}(M_P)\leq4.
\]
Consequently, the complete stock summary can be computed with
$O(\sqrt n\log^{10}(n+2))$ queries, with an absolute implicit constant
independent of $P$.
\end{proposition}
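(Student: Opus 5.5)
We will construct an explicit order, verify stability and minimality of the identity, bound the breadth by four, and then apply \thmref{thm:ordered-beta-product} with $\beta\le4$.

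\textbf{The order.} For non-identity summaries $(a,b,c)$ and $(d,e,f)$, declare
\[
 (a,b,c)\le(d,e,f)\iff d\le a,\ b\le e,\ c\le f,
\]
that is, the minimum is reversed and the other two coordinates are ordered naturally, with $-\infty$ least. We also declare the adjoined identity below every element. This is a partial order, since it is a product of total orders on the non-identity part.

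\textbf{Stability.} Suppose $s\le s'$. We must show $tsu\le ts'u$, and it suffices to treat one-sided multiplication. If $s=1$, then $tu\le ts'u$ follows from insertion monotonicity, which we check directly: inserting a word's summary in the middle can only lower the minimum, raise the maximum, and raise the profit. The reason is that the profit of a concatenation is a maximum over more candidate pairs, and $\otimes$ is a concatenation summary. If both $s,s'$ are non-identity, the formula $\bigl(\min(a,d),\max(b,e),\max(c,f,e-a)\bigr)$ is monotone in each argument with respect to our order. It is nonincreasing in the first coordinates $a,d$, nondecreasing in $b,e$, and nondecreasing in $c,f$. The term $e-a$ is nondecreasing when $e$ increases or $a$ decreases. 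Hence multiplying on either side preserves $\le$.

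\textbf{Breadth.} Given a price word of length $m\ge2$, pick a position $i_{\min}$ attaining the minimum and a position $i_{\max}$ attaining the maximum. If some pair $i<j$ exists, also pick an optimal pair $(i^*,j^*)$. Let $D$ be the set of these at most four positions, kept in original order. The product $p(D)$ has the same minimum and the same maximum as the whole word. Its profit is at least $x_{j^*}-x_{i^*}$, since $i^*<j^*$ remain in order within $D$. Its profit is also at most the full profit, because $D$ is a subword. So $p(D)$ equals the full summary. For $m\le1$ the word itself has length at most one. Either way $\beta_{G_P}(M_P)\le4$.

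\textbf{Query bound.} Apply \thmref{thm:ordered-beta-product} with $\beta\le4$. This gives $(4C)^4\sqrt n\,(1+\lceil\log_2(n+2)\rceil)^{10}=O(\sqrt n\log^{10}(n+2))$, and the constant is independent of $P$ since that bound does not depend on $|M|$. Because $\beta$ enters monotonically, using the bound $4$ in place of the exact breadth is legitimate.

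\textbf{Main obstacle.} The only delicate point is the careful verification of stability at the identity and with $-\infty$ entries. I would handle it uniformly via the concatenation interpretation: $s\le s'$ can be reduced to comparing the set of realized candidate values, and the inequalities propagate through $\min$, $\max$ and $e-a$.
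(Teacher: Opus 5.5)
Your proposal is correct and follows essentially the same route as the paper. Both use the coordinatewise order with the minimum reversed and the identity adjoined at the bottom, get stability from monotonicity of the multiplication formula together with the fact that a product dominates both factors, take the same core of at most four positions (minimum, maximum, optimal buy--sell pair), and finish by applying \thmref{thm:ordered-beta-product} with $\beta\le4$.
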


\begin{proof}
Order nonempty summaries by
\[
 (a,b,c)\leq(d,e,f)
 \quad\Longleftrightarrow\quad a\geq d,\quad b\leq e,\quad c\leq f,
\]
and place the empty summary below every nonempty summary.  The multiplication
formula above is monotone in both factors for this order.
For example, on multiplying on the right, decreasing the minimum $a$
can only increase the cross profit $e-a$.  The other coordinates and
left multiplication are checked in the same way.  A product dominates
both factors, so the comparisons involving the empty summary also hold.

Given a price word of length at least two, retain a position attaining its
minimum, one attaining its maximum, and a buy--sell pair attaining its
best profit.  There are at most four retained positions.  Their subword
has the same minimum and maximum and still attains the optimal profit.
Every feasible transaction in the subword is also feasible in the
original word, so its profit cannot be larger.  This preserves the entire
summary whether the best profit is positive, zero, or negative.  A
singleton word is its own core, and the empty word needs no positions.
Apply \thmref{thm:ordered-beta-product} using $\beta_{G_P}(M_P)\leq4$.
\end{proof}

Four positions are sometimes necessary.  The word $(10,2,5,0)$ has
summary $(0,10,3)$.  The first and last positions supply the unique
maximum and minimum; the middle pair supplies the unique positive profit.
Deleting any position changes the summary.  Thus the breadth bound is
attained whenever the alphabet contains these prices.

\subsection{Unitriangular tropical matrices}\label{sec:tropical}

Let $\mathbb T=(\mathbb R\cup\{-\infty\},\max,+)$ be the max-plus tropical
semiring, and let $U_k(\mathbb T)$ be the monoid of $k$-by-$k$ matrices
with diagonal entries zero and entries below the diagonal $-\infty$.
Matrix multiplication is
\[
 (AB)_{st}=\max_v(A_{sv}+B_{vt}).
\]
The identity has $-\infty$ off the diagonal.  Entrywise order is stable,
and this identity is the minimum element.

For $k=2$, the single strict entry computes maximum.  The stock monoid
from the preceding subsection (with integer prices) embeds into
$U_3(\mathbb T)$ by
\[
 (a,b,c)\longmapsto
 \begin{pmatrix}0&-a&c\\-\infty&0&b\\-\infty&-\infty&0\end{pmatrix},
\]
with the empty summary mapped to the identity.  Matrix multiplication
gives the three coordinates of the stock-summary product in
Example~\ref{ex:btbs-breadth}.

More generally, a fixed sign pattern $\varepsilon_1,\ldots,\varepsilon_t$
is represented by a chain of $t+1$ states, with edge $j\to j+1$ at
position $i$ carrying weight $\varepsilon_jx_i$ and all other strict
entries equal to $-\infty$.  The first-to-last entry
is $\max_{i_1<\cdots<i_t}\sum_j\varepsilon_jx_{i_j}$.
The pattern $(-1,+1,\ldots,-1,+1)$ gives the maximum total profit
from a fixed number of transactions.

To apply \hyperref[res:ordered]{Result~\ref*{res:ordered}} to these
problems, we bound the product breadth in terms of $k$ alone using
the following path description.

\begin{lemma}[Short paths preserve the matrix product]\label{lem:path}
For $A_1,\ldots,A_n\in U_k(\mathbb T)$ and $s<t$, the entry
$(A_1\cdots A_n)_{st}$ is the maximum weight of a path from $s$ to $t$
whose strict state transitions occur at increasing input positions.
Every such path has at most $t-s$ strict transitions.
\end{lemma}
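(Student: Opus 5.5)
I will prove the path formula by induction on $n$, unrolling the max-plus product into a maximum over state sequences. Expanding $(A_1\cdots A_n)_{st}$ by the definition of multiplication gives
\[
 (A_1\cdots A_n)_{st}=\max_{s=v_0,v_1,\ldots,v_n=t}\ \sum_{i=1}^n (A_i)_{v_{i-1}v_i},
\]
where the maximum ranges over all state sequences of length $n+1$ from $s$ to $t$. This identity follows by induction on $n$ from $(AB)_{st}=\max_v(A_{sv}+B_{vt})$. It uses only that addition distributes over maximum in $\mathbb T$, with the convention $-\infty+a=-\infty$.

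Next I will interpret each sequence as a path. At position $i$ the path either stays, with $v_i=v_{i-1}$, or makes a strict transition, with $v_i\ne v_{i-1}$. A stay contributes the diagonal entry $(A_i)_{vv}=0$. A downward move $v_i<v_{i-1}$ contributes $-\infty$, so any sequence containing one has weight $-\infty$ and never exceeds the maximum over the other sequences. Sequences with an upward move are unaffected, since such moves carry the strict upper entries of the matrices. I will therefore restrict to nondecreasing sequences. In such a sequence the weight is the sum of the entries $(A_i)_{v_{i-1}v_i}$ over the positions $i$ where a strict transition occurs, and these positions are increasing because they are distinct indices taken in order.

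Conversely, every path from $s$ to $t$ whose strict transitions $s=w_0<w_1<\cdots<w_q=t$ occur at positions $i_1<\cdots<i_q$ extends to a full state sequence. The sequence stays at $w_{j-1}$ until position $i_j$ and then moves to $w_j$. Its weight is $\sum_j (A_{i_j})_{w_{j-1}w_j}$, so the two maxima coincide. The bound on the number of transitions follows because each strict transition raises the state by at least one, and the state must rise from $s$ to $t$, so $q\le t-s$.

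The only delicate point is the treatment of $-\infty$. I must check that sequences with a downward step are dominated. This also covers the case where every path has weight $-\infty$: then both sides of the formula equal $-\infty$, so the claim holds trivially. No further obstacle arises, since the argument is a routine expansion of the matrix product.
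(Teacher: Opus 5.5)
Your proposal is correct and follows the same route as the paper: expand the max-plus product over state sequences, discard downward moves (weight $-\infty$), drop zero-weight diagonal stays, and bound the number of strict transitions by $t-s$ because the states strictly increase. You spell out the $-\infty$ bookkeeping and the extension of a transition path back to a full state sequence, which the paper's one-line proof leaves implicit.
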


\begin{proof}
The matrix product expands over paths with nondecreasing state indices.
A repeated state contributes a diagonal entry of weight zero.  Removing
these stays leaves a sequence of strict transitions through distinct
states between $s$ and $t$, so there are at most $t-s$ of them.
\end{proof}

\begin{corollary}[Product breadth of unitriangular matrices]
\label{cor:unitriangular-beta}
Let $G$ be a finite alphabet in $U_k(\mathbb T)$.  Its generated monoid
$H=\langle G\rangle$ is finite and
\[
 \beta_G(H)\leq V_k:=\sum_{s<t}(t-s)=\binom{k+1}{3}.
\]
For $k\geq2$, an absolute constant $C$ therefore gives
\[
 Q_{1/3}(\operatorname{Prod}_{H,G,n})
 \leq\min\left\{n,\sqrt{n+1}
     \bigl(C(V_k+2)\log(n+2)\bigr)^{C\log(V_k+2)}\right\}.
\]
The exponent is $O(\log k)$, and the bound is
$\widetilde O_k(\sqrt n)$ for every fixed $k$, uniformly over finite
matrix alphabets.  The case $k=1$ is query-free.
\end{corollary}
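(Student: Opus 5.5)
We need to prove three things about $H$: it is finite, $\beta_G(H)\le V_k$, and the query bound follows.

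\emph{Finiteness.} We use \lemref{lem:path}. Each strict entry of a product is the weight of some path with at most $t-s\le k-1$ strict transitions, and each transition weight is a strict entry of a letter in $G$. Hence every finite entry of an element of $H$ is a sum of at most $k-1$ strict entries drawn from the finite set of entries of $G$. This gives finitely many possible values, so $H$ is finite. A finite subset of $U_k(\mathbb T)$ is all we need, since the ordered-product theorem only requires $G$ to be finite.

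\emph{Breadth bound.} Take any word $A_1\cdots A_n$ over $G$ with product $P$. For each pair $s<t$ with $P_{st}>-\infty$, choose a maximizing path as in \lemref{lem:path}. Retain the at most $t-s$ input positions where its strict transitions occur. If $P_{st}=-\infty$, retain nothing. Let $D$ be the union of these positions over all pairs, so that
\[
 |D|\le\sum_{s<t}(t-s)=\sum_{d=1}^{k-1}d(k-d)=\binom{k+1}{3}.
\]
Write $Q$ for the product of the subword indexed by $D$.

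\begin{itemize}
\item[(a)] $Q\le P$ entrywise. This holds by insertion monotonicity \eqref{eq:beta-insertion}, because entrywise order is stable and the identity is its minimum. Alternatively, every path in the subword is a path in the full word.
\item[(b)] $Q_{st}\ge P_{st}$ for every $s<t$. The chosen maximizing path is still available in the subword: its transitions occur at retained positions in increasing order, its stays contribute diagonal zeros, and the transition weights are unchanged.
\end{itemize}

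Together, (a) and (b) give $Q=P$, so $D$ is a product core of size at most $V_k$.

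\emph{Query bound.} The entrywise order is a stable partial order on the submonoid $H$ with minimum $1$. Apply \thmref{thm:ordered-beta-log-product} with $\beta=\beta_G(H)$, using monotonicity of the right-hand side of \eqref{eq:beta-log-product} in $\beta$ to replace $\beta$ by $V_k$. If $\beta_G(H)=0$, no queries are needed, which agrees with the bound. Since $V_k=O(k^3)$, we get $\log(V_k+2)=O(\log k)$, and the bound is $\widetilde O_k(\sqrt n)$ uniformly in $G$. For $k=1$ the monoid is trivial, so no queries are needed.

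\emph{Main obstacle.} The only delicate point is the equality step in the breadth bound. It requires both directions, domination (a) and path survival (b), and it requires reusing the same subword for all pairs $(s,t)$ simultaneously.
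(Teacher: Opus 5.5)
Your proposal is correct and follows the paper's proof step for step. Finiteness comes from the path description in \lemref{lem:path}, the core is the union of the strict-transition positions of one maximizing path per finite entry, and equality comes from path survival plus domination (the paper phrases your part (a) as extending subword paths by zero-weight stays). The query bound then follows from \thmref{thm:ordered-beta-log-product} applied with $\beta_G(H)\le V_k$, and your remark that the bound is monotone in $\beta$ spells out a step the paper leaves implicit.
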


\begin{proof}
For each finite entry above the diagonal, choose a maximizing path and
retain the input positions of its strict transitions.  Their union has
size at most $V_k$.  Every chosen path survives, and every path through
the retained subword extends to the full word using zero-weight stays.
Thus the subword preserves every entry, including those equal to $-\infty$.

For finite $G$, each entry of a product belongs to the finite set of sums
of at most $k-1$ entries of matrices in $G$, together with $-\infty$.
Hence $H$ is finite.  Its inherited order is stable, with the identity as
the minimum element.  If $\beta_G(H)=0$, the product is the identity and
no queries are needed; otherwise \thmref{thm:ordered-beta-log-product}
applies with $\beta_G(H)\leq V_k$.
\end{proof}

The uniform dependence on $k$ allows the matrix dimension to grow with
the input length.  Since $V_k=O(k^3)$, the query bound above is
\[
 \sqrt n\,\exp\!\left(O\!\left((\log k)^2
       +\log k\,\log\log(n+2)\right)\right).
\]
This is $n^{1/2+o(1)}$ whenever $\log k=o(\sqrt{\log n})$, in particular
when $k$ is any fixed power of $\log n$.  Thus the same
$n^{1/2+o(1)}$ query bound applies when the number of stock transactions
or prescribed signs grows polylogarithmically with $n$.

\section{Sharp bounds for Boolean unitriangular products}
\label{sec:boolean-unitriangular}

Let $\mathbb B=(\{0,1\},\vee,\wedge)$, and let
$\operatorname{UT}_k(\mathbb B)$ be the monoid of upper triangular Boolean
matrices with diagonal entries one.  Each strict entry of a prefix product
can change only once.  This gives a sharper dependence on $k$ than the
general breadth theorem.  We give both bounds here, including the
independent block-search lower bound.

\begin{theorem}[Boolean unitriangular products]\label{thm:boolean-unitriangular}
	The monoid $\operatorname{UT}_k(\mathbb B)$ has product breadth $\binom{k}{2}$.
	For every $k,n\ge1$,
	\begin{align*}
		\ADVpm\bigl(\operatorname{Prod}_{\operatorname{UT}_k(\mathbb B),n}\bigr)
		&\le 8\sqrt{n\min\left\{n,\binom{k}{2}\right\}},\\
		\ADVpm\bigl(\operatorname{Prod}_{\operatorname{UT}_k(\mathbb B),n}\bigr)
		&\ge \sqrt{\frac n2
		\min\left\{n,\left\lfloor\frac{k^2}{4}\right\rfloor\right\}}.
	\end{align*}
	Consequently
	\[
		Q_{1/3}\bigl(\operatorname{Prod}_{\operatorname{UT}_k(\mathbb B),n}\bigr)
		=\begin{cases}
			0,&k=1,\\[1mm]
			\Theta\!\left(\min\{n,k\sqrt n\}\right),&k\ge2,
		\end{cases}
	\]
	with universal implicit constants.  The lower bound already holds for a Boolean
	postprocessing of the full product on a promise.
\end{theorem}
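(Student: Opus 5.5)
Four items need proof: the breadth equals $\binom k2$, the adversary upper bound, the adversary lower bound, and the resulting $\Theta$ statement. I handle them in that order.

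\emph{Breadth.} For the upper bound, fix an input word $x_1\cdots x_n$ and a strict entry $(s,t)$ of the product that equals one. By the Boolean analogue of \lemref{lem:path}, some path from $s$ to $t$ has its strict transitions at increasing positions. Take, for each $(s,t)$, the path whose first transition occurs as early as possible, and keep only the position of that first transition $s\to u$. Equivalently: for each pair $s<t$ with product entry one, keep the \emph{first} position $i$ at which the prefix entry $(p_i)_{st}$ becomes one. This is at most $\binom k2$ positions.

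I claim this subword preserves the product. At the first time $i$ that $(p_i)_{st}$ becomes one, we have $(p_{i-1})_{su}=1$ and $(x_i)_{ut}=1$ for some $u$ (allowing $u=s$). By induction on the time of first change, the entry $(s,u)$ was already produced by retained positions earlier than $i$. So the retained word realizes every entry, and since the product is monotone under insertion, the entries equal to zero stay zero.

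For the matching lower bound, use letters $E_{s,s+1}$ arranged so that every pair's first change is distinct and essential. For instance, take for each $t$ the sequence $E_{t-1,t}$, then repeat so that each pair $(s,t)$ gets a fresh completing letter. I would verify $\binom k2$ necessity by counting: deleting any retained letter loses the entry it first created. The example needs care. A simple choice uses single-entry letters $E_{st}$ ordered by increasing $t-s$? No: I would instead use letters ordered so that no entry can be composed from others, e.g.\ all $E_{st}$ in an order where $E_{st}$ precedes every $E_{su}$ and follows every $E_{ut}$, which prevents composite paths. Ordering by decreasing $s$, then increasing $t$, should work.

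\emph{Upper adversary bound.} Apply \lemref{lem:bounded-change-scan} to the sequential summary $q_i=p_i$. Each strict entry of the prefix product changes at most once, from $0$ to $1$, and each change of $q_i$ flips at least one entry. So every trajectory has at most $\min\{n,\binom k2\}$ changes, which gives $8\sqrt{n\min\{n,\binom k2\}}$.

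\emph{Lower adversary bound.} Restrict to the bipartite structure $\{1,\dots,a\}\to\{a+1,\dots,k\}$ with $a=\lfloor k/2\rfloor$. Letters $E_{st}$ with $s\le a<t$ pairwise multiply to zero, so their product is just their Boolean union. The problem thus contains union over $m=\lfloor k^2/4\rfloor$ atoms, a commutative idempotent setting. Mimic the lower bound in \thmref{thm:commutative-beta} with $r=\min\{m,\lfloor (n+1)/2\rfloor\}$ distinct atoms hidden among identities. Let $X$ be the inputs containing all $r$ atoms, $Y$ the inputs missing one of them, and use the $(r,n-r+1)$-biregular graph between them. This gives $\sqrt{r(n-r+1)}\ge\sqrt{(n/2)\min\{n,m\}}$. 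A Boolean postprocessing (testing whether all $r$ entries are one) suffices.

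\emph{Conclusion.} Combining the two adversary bounds with \eqref{eq:adversary-tightness}, and using $\binom k2=\Theta(k^2)$ for $k\ge2$, yields the $\Theta(\min\{n,k\sqrt n\})$ claim. For $k=1$ the monoid is trivial and no queries are needed.

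The main obstacle is the exact breadth lower-bound example. If the ordering above fails, I would use the explicit bipartite family plus induction for the remaining pairs.
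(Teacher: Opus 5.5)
Your adversary upper bound (the bounded-change scan on prefix products) and your breadth argument agree with the paper's. Ordering the single-entry letters by decreasing row index is exactly the paper's example, and it works for a one-line reason you should state. A composite $s\to u\to t$ would need the row-$s$ letter $E_{su}$ to come before the row-$u$ letter $E_{ut}$. Since $u>s$, every row-$u$ letter comes earlier, so the product of any subword is just the union of its entries. Your intermediate ordering condition, and the ``equivalently'' relating first transitions of paths to first changes of entries, are garbled, but you do not use them. Keeping the positions where $p_i\ne p_{i-1}$ gives the breadth upper bound directly.

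The genuine gap is the displayed adversary lower bound. Your estimate $\sqrt{r(n-r+1)}\ge\sqrt{(n/2)\min\{n,m\}}$ is correct when $r=m$. It fails as soon as $\min\{n,m\}>n/2+1$. For example, $n=k=4$ gives $m=4$ and $r=2$, so your bound is $\sqrt{6}$, while the claimed value is $\sqrt{8}$. Choosing $r$ differently does not help. Your matrix certifies exactly $\sqrt{r(n-r+1)}\le(n+1)/2$ for every $r$. For $m\ge n$ the statement requires $n/\sqrt2$, which is larger for all $n\ge3$. So your construction proves $\Omega(\min\{n,\sqrt{nm}\})$. That suffices for the $\Theta$ conclusion, but not for the stated inequality.

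The missing idea is to stop letting every atom range over all positions. Instead, give each of $d=\min\{n,m\}$ cross edges its own balanced block of positions, of sizes $m_j\ge n/(2d)$. Promise that each block contains at most one copy of its edge letter, and postprocess by the parity of the $d$ corresponding product entries. Let $\Gamma_j$ be the block stars, and take the Kronecker sum $\Gamma=\sum_j I\otimes\cdots\otimes\Gamma_j\otimes\cdots\otimes I$. It has norm $\sum_j\sqrt{m_j}\ge\sqrt{nd/2}$, and every edge flips the parity. Each query filter leaves a single star edge tensored with identities, of norm one.
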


\begin{proof}
	Apply \lemref{lem:bounded-change-scan} to the prefix products
	$p_i=x_1\cdots x_i$ in $\operatorname{UT}_k(\mathbb B)$.  Since every input matrix has
	ones on the diagonal, $p_i\le p_{i+1}$ entrywise.  Each of the $\binom{k}{2}$ strict-upper entries
	can therefore flip from $0$ to $1$ at most once, while the diagonal and lower entries
	are fixed.  Hence the prefix state changes at most
	$C=\min\{n,\binom{k}{2}\}$ times.  The lemma and the algorithm that
	queries every input give the stated upper bound.

	Retaining only the positions where the prefix product changes also gives
	$\beta(\operatorname{UT}_k(\mathbb B))\leq\binom{k}{2}$.
	For the reverse inequality, take one matrix $F_{ij}$ for each $i<j$,
	with ones on the diagonal and at $(i,j)$ and zeros elsewhere.
	Order these matrices by nonincreasing row index $i$.
	No two strict edges can compose in this order, since composing
	$(i,j)$ with $(j,\ell)$ would require an increase in the row index.
	The product of any subword therefore has exactly the strict entries
	contributed by its letters.  All $\binom{k}{2}$ letters are necessary
	to preserve the full product, proving the breadth identity.

	For the lower bound, the case $k=1$ is immediate.  For $k\geq2$, split the ordered states into consecutive sets
	\[
		A=\{1,\ldots,a\},\qquad B=\{a+1,\ldots,k\},
		\qquad a=\lfloor k/2\rfloor,
	\]
	and put $r=|A||B|=\lfloor k^2/4\rfloor$.  For $R\subseteq A\times B$, let $X_R$
	be the Boolean matrix with ones on the diagonal and in the positions of $R$, and zeros
	elsewhere.  No two edges in $A\times B$ compose, so
	\[
		X_RX_S=X_{R\cup S}.
	\]
	Thus these matrices form a free union semilattice of rank $r$ inside
	$\operatorname{UT}_k(\mathbb B)$.

	Choose $d=\min\{r,n\}$ cross edges.  Partition the $n$ query positions into $d$
	balanced nonempty blocks, of sizes $m_1,\ldots,m_d$.  In block $j$, restrict each input
	to the identity or the singleton-edge matrix belonging to edge $j$, with the promise
	that the block contains either zero or one singleton-edge letter.  The corresponding
	entry of the full product is the promised \textsc{Or} of that block.
	Take the parity of these $d$ entries as a Boolean postprocessing.
	For block $j$, let $\Gamma_j$ be the adjacency matrix of the star with
	center the all-identity block and one leaf for each possible location of
	its unique singleton-edge letter.  Its norm is $\sqrt{m_j}$.
	On the Cartesian product of the block promises, set
	\[
	 \Gamma=\sum_{j=1}^d
	 I\otimes\cdots\otimes\Gamma_j\otimes\cdots\otimes I.
	\]
	Every edge changes the parity, so $\Gamma$ is a valid adversary.
	The product of the stars' top eigenvectors shows that
	$\|\Gamma\|=\sum_j\sqrt{m_j}$; the reverse inequality follows from
	the triangle inequality.  Filtering by one input position leaves only
	the corresponding star edge, tensored with identities, and has norm one.
	Since the balanced block sizes satisfy $m_j\geq n/(2d)$,
	\[
	 \ADVpm\geq\sum_{j=1}^d\sqrt{m_j}
	 \geq\sqrt{nd/2}
	 =\Omega\!\left(\min\{n,k\sqrt n\}\right).
	\]
	Extending this promise adversary by zero outside the promise gives an adversary for the
	full product map.
	This Boolean postprocessing lower-bounds computation of the full matrix and completes
	the proof.
\end{proof}

Since $|\operatorname{UT}_k(\mathbb B)|=2^{\binom{k}{2}}$, the theorem may equivalently
be written
\[
	Q_{1/3}\bigl(\operatorname{Prod}_{\operatorname{UT}_k(\mathbb B),n}\bigr)
	=\Theta\!\left(\min\left\{n,
	\sqrt{n\log\bigl|\operatorname{UT}_k(\mathbb B)\bigr|}\right\}\right)
\]
for $k\ge2$.

For a finite $\JJ$-trivial monoid $M$, define its unitriangular division degree by
\[
  \tau(M)=\min\{k\ge1:M\prec\operatorname{UT}_k(\mathbb B)\}.
\]
This minimum exists by Simon's division theorem~\cite{Sim75,ST88}.

\begin{corollary}[Finite $\JJ$-trivial products]
\label{cor:jtrivial-division}
  Every finite $\JJ$-trivial monoid $M$ satisfies
  \[
    Q_{1/3}(\operatorname{Prod}_{M,n})
      =O\!\left(\sqrt{n\min\left\{n,\binom{\tau(M)}2\right\}}\right)
      =O\!\left(\min\{n,\tau(M)\sqrt n\}\right).
  \]
\end{corollary}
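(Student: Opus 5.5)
The proof combines \propref{prop:division-monotonicity} with the upper bound of \thmref{thm:boolean-unitriangular}. Put $k=\tau(M)$. By the definition of $\tau$ and Simon's theorem, $M\prec\operatorname{UT}_k(\mathbb B)$. So there are a submonoid (in particular a subsemigroup) $T\subseteq\operatorname{UT}_k(\mathbb B)$ and a surjective homomorphism $\pi:T\to M$.

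\propref{prop:division-monotonicity}, applied with $R=M$ and $S=\operatorname{UT}_k(\mathbb B)$, gives
\[
Q_{1/3}(\operatorname{Prod}_{M,n})\le 2\,Q_{1/3}(\operatorname{Prod}_{\operatorname{UT}_k(\mathbb B),n}).
\]
That proposition already handles the simulation: we choose any set-theoretic section $\sigma:M\to T$, and each lifted query costs at most two queries. We then run the product algorithm for the full alphabet of $\operatorname{UT}_k(\mathbb B)$ on the lifted letters, whose values lie in $\sigma(M)\subseteq T$. The computed product lies in $T$, and applying $\pi$ recovers $x_1\cdots x_n$ because $\pi$ is a homomorphism on $T$.

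Next we insert the bound of \thmref{thm:boolean-unitriangular}. Its adversary estimate $\ADVpm\le 8\sqrt{n\min\{n,\binom k2\}}$, together with adversary tightness \eqref{eq:adversary-tightness}, gives
\[
Q_{1/3}(\operatorname{Prod}_{\operatorname{UT}_k(\mathbb B),n})=O\!\left(\sqrt{n\min\{n,\tbinom k2\}}\right).
\]
This yields the first displayed form. If $k=1$, then $M$ is a quotient of a submonoid of the trivial monoid, so $M$ is trivial. The bound $0$ is then consistent with $\binom12=0$.

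For the second form, we use $\binom k2\le k^2$ to get
\[
\sqrt{n\min\{n,\tbinom k2\}}\le\min\{n,\sqrt{n}\,k\}.
\]

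The only point that needs care is the identity requirement. Simon's theorem provides division by a submonoid, and a submonoid is in particular a subsemigroup, which is all that \propref{prop:division-monotonicity} requires. Apart from this check, the argument is a direct composition of earlier results with no real obstacle.
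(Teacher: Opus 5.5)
Your proposal is correct and follows the paper's proof. The paper likewise combines Simon's division $M\prec\operatorname{UT}_{\tau(M)}(\mathbb B)$ with Proposition~\ref{prop:division-monotonicity} and Theorem~\ref{thm:boolean-unitriangular}, and it also notes that the case $\tau(M)=1$ is trivial.
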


\begin{proof}
  Simon's theorem gives $M\prec\operatorname{UT}_{\tau(M)}(\mathbb B)$.
  Proposition~\ref{prop:division-monotonicity} and
  \thmref{thm:boolean-unitriangular} give the claimed bound.  (When $\tau(M)=1$,
  the divisor is trivial.)
\end{proof}

\section{General aperiodic monoids: the AGS bound and Dyck obstruction}\label{sec:ags}

In this section we present the star-free upper bound of Aaronson, Grier, and
Schaeffer~\cite{AGS19} in the language of the product problem.  We follow
their argument and make its dependence on $|M|$ explicit.  This is not
claimed as a new upper bound: our purpose is to establish a baseline for
comparison with the improved size dependence proved in the next section.

Throughout this section $M$ is a finite aperiodic monoid and the input is
$x = x_1 \cdots x_n \in M^n$.  For $m \in M$ let
\[
	P_m = \{x \in M^* : [x] = m\}
\]
denote the set of inputs whose product is $m$.  We construct an adversary dual
for each membership test and combine these $|M|$ duals to compute the product.

For $m \in M$ let $d(m)$ denote the length of the longest
strictly increasing chain of principal two-sided ideals starting at $MmM$:
\[
	d(m) = \max \{ t : MmM \subsetneq Mr_1M \subsetneq \cdots \subsetneq Mr_tM
	\text{ for some } r_1, \ldots, r_t \in M \} \enspace.
\]
Recall from \secref{sec:prelim} that $\dJ(M)$ is the maximum number of strict
containments in a chain of principal two-sided ideals of $M$.  Thus
$d(m) \le \dJ(M)$ for all $m$.  We also have $d(m) = 0$ iff $m = 1$ (if $m \neq 1$ then
$MmM \subsetneq M = M1M$ by \propref{prop:identity}), and $MmM \subseteq MrM$ implies
$d(r) \le d(m)$, strictly if the containment is strict.

We induct on $d(m)$.  A \emph{strict parent} of a target $m$ means any
$s$ with $MmM\subsetneq MsM$, not necessarily an immediate cover in the
ideal order.  The decomposition and searches below use only such targets.
The base case is \propref{prop:identity}: $x \in P_1$ iff every $x_i = 1$,
an unstructured search problem with dual cost $O(\sqrt n)$.
For the inductive step we use Sch\"utzenberger's decomposition of $P_m$
into pieces whose equality targets are strict parents of $m$.  In \cite{AGS19} this decomposition (their Theorem~22) is
quoted from Sch\"utzenberger's proof; since our whole purpose is bookkeeping, we give a
self-contained proof from the Sandwich Lemma.

\subsection{The decomposition theorem}
\label{sec:ags-decomposition}

The first two conditions below detect entry into the target's right and
left ideals.  The other two exclude a letter or an interval whose product
can no longer be extended to the target, even with factors on both sides.

\begin{theorem}[Decomposition~\cite{Sch65,AGS19}]\label{thm:decomp}
	Let $M$ be a finite aperiodic monoid and $m \in M$ with $\rho(m) > 0$.  Define
	\begin{align*}
		E &= \{(r, a) \in M \times M : raM = mM,\ rM \neq mM\} \\
		F &= \{(a, r) \in M \times M : Mar = Mm,\ Mr \neq Mm\} \\
		C &= \{a \in M : m \notin MaM\} \\
		G &= \{(a, r, b) \in M \times M \times M :
			m \in (MarM \cap MrbM) \setminus MarbM\} \enspace.
	\end{align*}
	Then $x \in P_m$ if and only if all four of the following conditions hold:
	\begin{enumerate}
		\item[(U)] some prefix $x_1 \cdots x_i$ with $[x_1 \cdots x_i] = r$ is followed
			by the letter $x_{i+1} = a$, for some $(r, a) \in E$;
		\item[(V)] some suffix $x_{j} \cdots x_n$ with $[x_j \cdots x_n] = r$ is preceded
			by the letter $x_{j-1} = a$, for some $(a, r) \in F$;
		\item[(C)] no letter $x_i$ belongs to $C$;
		\item[(W)] no infix $x_i x_{i+1} \cdots x_j$ with $i < j$ has $x_i = a$,
			$[x_{i+1} \cdots x_{j-1}] = r$, and $x_j = b$ for some $(a, r, b) \in G$
			(when $j = i + 1$ the infix product is $r = 1$).
	\end{enumerate}
	Furthermore, for every $r \in M$ appearing in $E$, $F$, or $G$ we have
	$MmM \subsetneq MrM$, and in particular $\rho(r) < \rho(m)$.
\end{theorem}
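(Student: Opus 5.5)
The plan is to prove the two directions separately, using only the Sandwich Lemma, \propref{prop:identity}, and Sch\"utzenberger's characterization \lemref{lem:singleton}: $[x]=m$ iff $[x]\in mM\cap Mm$ and $m\in M[x]M$.

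\emph{Necessity.} Suppose $[x]=m$. Condition (C) holds because every letter $a=x_i$ divides $m$ two-sidedly, so $m\in MaM$. For (W), any infix $awb$ has product dividing $m$, so $m\in MarbM$; hence no triple from $G$ occurs. For (U), the prefix right ideals $p_iM$ descend from $M$ to $mM$. Since $\rho(m)>0$, we have $m\ne1$, so $p_0M=M\ne mM$ by \propref{prop:identity}. Take the first $i+1$ with $p_{i+1}M=mM$, and put $r=p_i$, $a=x_{i+1}$. Then $raM=mM$ and $rM\ne mM$, so $(r,a)\in E$. (V) is symmetric, using the left ideals of suffixes.

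\emph{Sufficiency.} Assume all four conditions and put $y=[x]$. From (U), $y=ra\cdot s$ for the remaining suffix, so $y\in raM=mM$; symmetrically (V) gives $y\in Mm$. By \lemref{lem:singleton} it remains to show $m\in MyM$. Suppose not. Choose a shortest infix whose product $z$ has $m\notin MzM$. If it is a single letter, (C) is violated. Otherwise write it as $awb$ and put $r=[w]$. By minimality, $m\in MarM$ and $m\in MrbM$, while $m\notin MarbM$. Thus $(a,r,b)\in G$, violating (W).

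\emph{The ``furthermore'' clause.} This is the step needing the Sandwich Lemma, and I expect it to be the main obstacle.

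For $(r,a)\in E$: from $mM=raM\subseteq rM$ we get $MmM\subseteq MrM$. Suppose equality held, so $r=umv$ for some $u,v\in M$. Write $m=rat$, so $r=u\,rat\,v$. The Sandwich Lemma then gives $r=r\cdot(atv)$, and hence $r\in raM=mM$. Combined with $mM\subseteq rM$, this gives $rM=mM$, a contradiction. $F$ is the mirror image.

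For $(a,r,b)\in G$: $m\in MarM\subseteq MrM$. If $MrM=MmM$, then $r\in MmM\subseteq MarM$, so $r=u\,ar\,v$. The Sandwich Lemma yields $r=ar$, and symmetrically $r\in MrbM$ yields $r=rb$. Then $arb=rb=r$, so $MarbM=MrM\ni m$, contradicting $m\notin MarbM$. Strict containment $MmM\subsetneq MrM$ gives $|MrM|>|MmM|$, that is, $\rho(r)<\rho(m)$.
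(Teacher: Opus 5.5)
Your necessity argument, your sufficiency argument and your treatment of $E$ and $F$ are correct and essentially the paper's. Your shortest bad infix is the paper's induction on infix length in contrapositive form. For $E$ you apply the Sandwich Lemma to $r=u\,r\,(atv)$ rather than to $m=u\,m\,(vaq)$, which works equally well.

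\textbf{The gap: the $G$ case of the ``furthermore'' clause.} From $r=u\,ar\,v$ the Sandwich Lemma applies only to the factorization $r=(ua)\,r\,v$. It yields $r=(ua)r$ and $r=rv$. The outer factor $u$ cannot be discarded, so $r=ar$ does not follow. Likewise $r=u'\,r\,(bv')$ yields $r=u'r$ and $r=r(bv')$, not $r=rb$.

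The identity you claim is false in general. In the aperiodic monoid $\{1,c_1,c_2\}$ with $c_ix=c_i$ for all $x$, we have $c_1c_2c_1=c_1$, but $c_2c_1=c_2\ne c_1$. So the step ``$arb=rb=r$'' is unjustified.

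\textbf{The repair.} Use exactly what the Sandwich Lemma does give. The identities $r=(ua)r$ and $r=r(bv')$ combine to
$r=ua\,r\,bv'=u(arb)v'\in MarbM$.
Hence $m\in MmM=MrM\subseteq MarbM$, which is the desired contradiction. With this correction your proof is complete.

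For comparison, the paper avoids left stabilization altogether. It writes $m=xary$ and $m=x'rby'$, obtains the two right-stabilizations $r=r(yv)$ and $r=r(by'v)$, and substitutes to get $m=xarb\cdot y'vy\in MarbM$.
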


\begin{proof}
	First suppose $x \in P_m$, so $p_n = m$.

	(U): The right ideals of prefixes descend:
	$M = p_0 M \supseteq p_1 M \supseteq \cdots \supseteq p_n M = mM$.  Since $\rho(m) > 0$ we
	have $m \neq 1$, and $p_0 M = M \neq mM$: indeed if $mM = M$ then $1 = mq$ for some $q$ and
	$m = 1$ by \propref{prop:identity}.  Let $i + 1$ be minimal with $p_{i+1} M = mM$.  Then
	$r = p_i$ and $a = x_{i+1}$ satisfy $raM = p_{i+1}M = mM$ and $rM = p_i M \neq mM$ by
	minimality, so $(r, a) \in E$ and condition (U) holds.

	(V): symmetric, using suffix products and left ideals.

	(C): every letter $x_i$ satisfies $m = p_n \in M x_i M$, so $x_i \notin C$.

	(W): if some infix had $x_i = a$, $[x_{i+1} \cdots x_{j-1}] = r$, $x_j = b$ with
	$(a, r, b) \in G$, then $m = p_n \in M a r b M$, contradicting $m \notin MarbM$.

	Conversely, suppose (U), (V), (C), (W) all hold; we show $p_n = m$.  We first claim that
	\begin{equation}\label{eq:window}
		m \in M [x_i \cdots x_j] M \qquad \text{for every infix } x_i \cdots x_j .
	\end{equation}
	The proof is by induction on the length $j - i + 1$ of the infix.  For a single letter this
	is exactly condition (C).  For an infix $x_i \cdots x_j$ of length at least $2$, let
	$a = x_i$, $r = [x_{i+1} \cdots x_{j-1}]$, and $b = x_j$.  By induction on the two
	infixes $x_i \cdots x_{j-1}$ and $x_{i+1} \cdots x_j$ we know $m \in MarM$ and
	$m \in MrbM$.  If $m \notin MarbM$ then $(a, r, b) \in G$, contradicting (W).  So
	$m \in MarbM = M[x_i \cdots x_j]M$, proving \eqref{eq:window}.

	Now by (U) there is $(r, a) \in E$ and a prefix with $p_i = r$, $x_{i+1} = a$, so
	$p_{i+1} M = raM = mM$ and hence $p_n \in p_n M \subseteq p_{i+1} M = mM$.  Symmetrically,
	(V) gives $p_n \in Mm$.  By \eqref{eq:window} applied to the whole input,
	$m \in M p_n M$, i.e., $p_n \notin J_m$.  By \lemref{lem:singleton},
	$p_n \in (mM \cap Mm) \setminus J_m = \{m\}$.

	It remains to prove the rank claim.  Let $(r, a) \in E$.  Since $raM = mM$ we have
	$m = raq$ for some $q$, so $m \in rM$ and $MmM \subseteq MrM$, i.e.,
	$\rho(r) \le \rho(m)$.  Suppose for contradiction that $MrM = MmM$, so $r = umv$ for some
	$u, v \in M$.  Substituting $m = raq$ gives $m = umv \cdot aq = u \cdot m \cdot vaq$, so by
	the Sandwich Lemma $m = um$ and $m = m \cdot vaq$.  But then $r = umv = mv \in mM$, so
	$rM \subseteq mM \subseteq rM$, contradicting $rM \neq mM$.  Hence $\rho(r) < \rho(m)$.
	The argument for $F$ is symmetric.

	Finally let $(a, r, b) \in G$.  From $m \in MarM \subseteq MrM$ we get
	$\rho(r) \le \rho(m)$.  Suppose $MrM = MmM$, so $r = umv$ for some $u, v$.  From
	$m \in MarM$ write $m = x a r y$.  Then
	$r = umv = u (xary) v = (ux a) \cdot r \cdot (yv)$, so by the Sandwich Lemma
	$r = r(yv)$.  Similarly from $m \in MrbM$ write $m = x' r b y'$; then
	$r = umv = (ux') \cdot r \cdot (by'v)$, and the Sandwich Lemma gives $r = r(by'v)$.
	Now
	\[
		m = xary = x a \big(r (by'v)\big) y = xarb \cdot y'vy \in MarbM \enspace,
	\]
	contradicting $m \notin MarbM$.  Hence $\rho(r) < \rho(m)$.
\end{proof}

Two comments.  First, all four conditions refer only to products of infixes of the input, so the
theorem really is a statement about the product problem; the alphabet plays no role beyond the
monoid itself.  Second, the index sets have sizes
\[
	|E|, |F| \le |M|^2, \qquad |C| \le |M|, \qquad |G| \le |M|^3 \enspace,
\]
and these sizes are exactly the loop bounds of the algorithm below.  This is one of the two places
the monoid size enters; the other is the depth of the recursion, bounded by $\dJ(M)$.

\subsection{Prefix, suffix, and infix search}
\label{sec:ags-searches}

Throughout, ``an algorithm for $P_r$'' means a bounded-error quantum algorithm deciding whether
the product of its input equals $r$, applicable to any infix of $x$ (an infix of the input is
again oracle input of the same kind).
In this subsection, put
\[
 H_n=\lceil\log_2(n+1)\rceil+2.
\]

\paragraph{Prefix search.}
For a fixed pair $(r,a)\in E$, condition (U) asks whether a prefix has
product $r$ and is followed by the letter $a$.  The next lemma detects
this transition.  Testing all pairs in $E$ then decides (U).

\begin{lemma}[Prefix search, cf.~Lemma~24 of \cite{AGS19}]\label{lem:prefix}
	Let $n\geq0$ and let $r,a\in M$ satisfy $rM \supsetneq raM$.  Suppose we have bounded-error
	algorithms for $P_s$ for every $s$ with $r\in sM$, each of cost at most
	$T(n)$ on every input length at most $n$.  Then there is a bounded-error algorithm which decides, with
	\[
		O\big( |M| \cdot T(n) \cdot \lambda H_n \big)
	\]
	queries, whether some $0\leq i<n$ satisfies $[x_1 \cdots x_i] = r$ and
	$x_{i+1} = a$.
\end{lemma}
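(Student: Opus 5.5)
The plan follows the overview in \secref{sec:general-aperiodic}: binary search on the monotone predicate ``$r\in p_iM$''. Since the right ideals $p_iM$ descend, the predicate $\Phi(i)=[r\in p_iM]$ is monotone: once false it stays false. It is true at $i=0$ because $p_0=1$. Let $i^\ast$ be the largest $i\in\{0,\ldots,n\}$ with $\Phi(i)$ true.

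The first key step is an algebraic claim: a transition $p_i=r$, $x_{i+1}=a$ exists if and only if $i^\ast<n$, $p_{i^\ast}=r$ and $x_{i^\ast+1}=a$. The backward direction is immediate. For the forward direction, suppose $p_i=r$ and $x_{i+1}=a$. Then $\Phi(i)$ holds. Moreover $p_{i+1}M=raM\subsetneq rM$, so $r\notin p_{i+1}M$; otherwise $rM\subseteq raM$, contradicting $rM\supsetneq raM$. Hence $i^\ast=i$. This shows the transition, if it exists, is unique and located at $i^\ast$.

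The second step evaluates $\Phi(i)$ on the prefix $x_1\cdots x_i$. We have $r\in p_iM$ if and only if $p_i\in\{s:r\in sM\}$. This set has at most $|M|$ elements, and each of its members has an available $P_s$ test of cost $T(n)$. We run all these tests, each amplified to error $O(1/(|M|H_n))$ at a factor $\lambda$, and take their OR. This gives an evaluation of $\Phi(i)$ with error $O(1/H_n)$ at cost $O(|M|T(n)\lambda)$.

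Next, binary search over $\{0,\ldots,n\}$ for $i^\ast$ uses $O(H_n)$ evaluations. A union bound over these evaluations gives constant total error.

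Finally, we verify the candidate $i^\ast$. If $i^\ast=n$ we reject. Otherwise we run the $P_r$ test on $x_1\cdots x_{i^\ast}$; this is legitimate because $r\in rM$, so $r$ belongs to the allowed targets. We then make one query to check $x_{i^\ast+1}=a$. The total cost is $O(|M|T(n)\lambda H_n)$.

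The main obstacle is error bookkeeping, namely keeping the per-step amplification within the single factor $\lambda$ rather than $\lambda\log H_n$. Since $\lambda=O(\log(2+n|M|))$ already dominates $\log(|M|H_n)$, amplifying each subtest to error $1/(100|M|H_n)$ costs $O(\lambda)$ repetitions. This suffices for the union bound over the $|M|$ targets and the $H_n$ binary-search steps.
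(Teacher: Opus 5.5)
Your proposal is correct and follows the paper's proof essentially step for step. It uses the same prefix-closed predicate $r\in p_iM$, evaluated by an amplified OR of the $P_s$ tests over $\{s: r\in sM\}$, followed by binary search for the last true index, the same uniqueness argument from $rM\supsetneq raM$, and a final $P_r$ test plus one letter query. The paper adds only two trivial edge cases: for $n=0$ no query is needed, and for $n\ge1$ one has $T(n)\ge1$, which lets the final letter query be absorbed into the stated $O(|M|\,T(n)\,\lambda H_n)$ bound.
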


\begin{proof}
	For $n=0$, the event is false and no queries are needed.  For $n\geq1$,
	the supplied $P_1$ test on length-one inputs is nonconstant, so $T(n)\geq1$.
	For $n=1$,
	query $x_1$ and accept exactly when $r=1$ and $x_1=a$.
	Below assume $n\geq2$.

	Let $H = \{s \in M : r \in sM\}$.  These are exactly the supplied
    equality targets; each satisfies $MrM\subseteq MsM$.  Membership of a prefix in
	$K = \{ \text{prefixes } x_1 \cdots x_i : r \in p_i M \}$
	can be tested by running the $P_s$ algorithms for all $s \in H$ ($|H| \le |M|$ of them,
	each amplified to error $O(1/(|M|H_n))$, a factor $\lambda$) at cost
	$O(|M| \cdot T(n) \cdot \lambda)$.

	$K$ is prefix closed: if $r \in p_i M$ then $r \in p_{i-1} M$ since
	$p_i M \subseteq p_{i-1} M$.  (The empty prefix is in $K$ as $p_0 = 1$.)  So we can binary
	search for the largest $i^*$ with $x_1 \cdots x_{i^*} \in K$ using at most $H_n-2$ membership
	tests.

	We claim the event holds iff $i^* < n$, $[x_1 \cdots x_{i^*}] = r$, and
	$x_{i^*+1} = a$, which is one more amplified $P_r$ test plus one query (if $i^* = n$ there is no
	next letter and the event is declared false).  If the event holds at position $i$, then
	$x_1 \cdots x_i \in K$; moreover no longer prefix is in $K$: for $j > i$,
	$p_j M \subseteq p_{i+1} M = raM$, and $r \in raM$ would give $rM \subseteq raM$,
	contradicting $rM \supsetneq raM$.  Hence $i = i^*$.
	Including the final equality and letter tests, there are at most $H_n$ steps.
\end{proof}

The symmetric statement (suffix search) detects event (V) with the same
cost, using equality targets $s$ with $r\in Ms$.
Note where the hypothesis $rM \supsetneq raM$ comes from: for $(r, a) \in E$ we have
$raM = mM \subseteq rM$ ($m \in rM$ was shown in the proof of \thmref{thm:decomp}) and
$rM \neq mM$, so the hypothesis holds automatically.

\paragraph{Splitting.}
For the infix event (W) we need to detect an infix whose product lands in a given set, without
knowing where the infix is.  The tool is the splitting observation:

\begin{proposition}[Splitting, cf.~Theorem~23 of \cite{AGS19}]\label{prop:split}
	Let $r \in M$ and let $w$ be a word with $[w] = r$.  For every way of cutting $w$
	into $w = w_1 w_2$ (with $w_1, w_2$ possibly empty) there exist $p, q \in M$ with $pq = r$,
	$[w_1] = p$, $[w_2] = q$, and
    $MrM\subseteq MpM\cap MqM$.  In particular,
    $\rho(p),\rho(q)\leq\rho(r)$.  There are at most $|M|^2$ pairs $(p, q)$ with $pq = r$.
\end{proposition}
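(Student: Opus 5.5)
We take $p=[w_1]$ and $q=[w_2]$, with the empty word having product $1$. The whole argument then comes down to associativity plus the two-sided ideal calculus used throughout this section.

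\emph{Existence and the factorization.} Since multiplication is associative and $w=w_1w_2$, we have $pq=[w_1][w_2]=[w_1w_2]=[w]=r$. This holds for every cut point, including the degenerate cuts where $w_1$ or $w_2$ is empty.

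\emph{Ideal containment.} From $r=pq=p\cdot q\cdot 1$ we get $r\in MpM$ and also $r=1\cdot p\cdot q\in MpM$. Since $MpM$ is a two-sided ideal, $MrM\subseteq MpM$. Symmetrically, $r=p\cdot q\cdot 1$ exhibits $r\in MqM$, so $MrM\subseteq MqM$. This is exactly the reasoning of \propref{prop:rank-mono}: $MpqM$ is contained in both $MpM$ and $MqM$.

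\emph{Rank inequalities.} Because $\rho(s)=|M|-|MsM|$, an inclusion $MrM\subseteq MpM$ gives $|MrM|\le|MpM|$, hence $\rho(p)\le\rho(r)$. The same argument gives $\rho(q)\le\rho(r)$.

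\emph{Counting factorizations.} Each pair $(p,q)$ with $pq=r$ is an element of $M\times M$. So there are at most $|M|^2$ of them; in fact $q$ need not be determined by $p$, so we do not attempt anything sharper.

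No step is a real obstacle. The only point requiring care is the empty-part convention: it ensures that the cuts at either end of $w$ are covered, with $(p,q)=(1,r)$ or $(r,1)$, both of which satisfy the ideal condition trivially.
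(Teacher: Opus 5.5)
Your proof is correct and is essentially the argument the paper intends: the paper states this proposition without a separate proof, because it follows at once from associativity, $[w_1][w_2]=[w]$ with the empty product equal to $1$, together with the inclusion $MpqM\subseteq MpM\cap MqM$ of \propref{prop:rank-mono}. That is exactly the route you take, and the rank inequalities and the $|M|^2$ count follow as you describe.
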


\paragraph{Infix search.}
\begin{lemma}[Infix search, cf.~Theorem~18 of \cite{AGS19}]\label{lem:infix}
	Let $n\geq0$ and fix $(a, r, b) \in G$ (from \thmref{thm:decomp}, for target $m$).
	Suppose that, for every $s$ with $MrM\subseteq MsM$ and every length
	$0\leq j\leq n$, we have a bounded-error algorithm for $P_s$ of cost
	at most $B\sqrt j$.  Then event (W)
	restricted to the triple $(a, r, b)$---the existence of an infix $a w b$ of $x$ with
	$[w] = r$---can be decided with
	\[
		O\big( |M|^3 \cdot B \sqrt{n} \cdot \lambda^2 H_n^2 \big)
	\]
	queries.
\end{lemma}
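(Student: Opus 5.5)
We plan to follow the scale-and-cut strategy sketched in \secref{sec:general-aperiodic}.  Fix $(a,r,b)\in G$.  The infix $awb$ we are looking for has some length $\ell^*+2$, with $|w|=\ell^*\ge0$.  The case $\ell^*=0$ asks whether some adjacent pair equals $(a,b)$ with $r=1$.  A single Grover search over the $n-1$ adjacent pairs handles it with $O(\sqrt n)$ queries.

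For $\ell^*\ge1$ we try the dyadic scales $\ell=1,2,4,\ldots$ up to $n$, giving at most $H_n$ scales.  At scale $\ell$ we target witnesses with $\ell\le\ell^*<2\ell$.  Each cut position $c$, lying between two letters of $x$, defines a left window of length $2\ell$ ending at $c$ and a right window of length $2\ell$ starting after $c$.  Call $c$ \emph{good} if there is a factorization $pq=r$ with both of the following:
\begin{itemize}
\item some suffix of the left window, of the form $a w_1$, has $[w_1]=p$;
\item some prefix of the right window, of the form $w_2 b$, has $[w_2]=q$.
\end{itemize}
If a witness $awb$ exists at this scale, every cut inside $w$ (and at its ends) is good, by \propref{prop:split} applied to $w$.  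This gives at least $\ell^*+1\ge\ell$ consecutive good cuts.

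Conversely, we must check that a good cut produces a genuine witness.  The suffix and prefix found at the cut are adjacent, so their concatenation $a w_1 w_2 b$ is an infix with $[w_1w_2]=pq=r$.  Soundness therefore holds automatically, even if the witness found is longer than the scale.

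To test one cut, we enumerate the at most $|M|^2$ pairs $(p,q)$ with $pq=r$.  By \propref{prop:split}, both $p$ and $q$ satisfy $MrM\subseteq MpM\cap MqM$.  For each pair we run the suffix-search analogue of \lemref{lem:prefix} on the reversed orientation of the left window.  The relevant event is that there exists a suffix $x_{j}\cdots$ of the window with product $p$, preceded by the letter $a$.  The targets this search needs are the $s$ with $p\in Ms$, hence $MpM\subseteq MsM$ and so $MrM\subseteq MsM$.  By hypothesis these are available at cost $B\sqrt{2\ell}$.

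We expect the main obstacle to be the hypothesis $Mp\supsetneq Map$ of \lemref{lem:prefix}, which need not hold here.  To get around it, we will rerun the binary-search argument of that lemma directly.  The set of suffixes $y$ with $p\in My$ (that is, $p\in M[y]$ in the left-ideal sense) is suffix-closed.  Binary search therefore locates the longest such suffix, and the question becomes whether the letter $a$ precedes a suffix with product exactly $p$.

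This alone is not enough when the suffix with product $p$ is not the longest reachable one.  To handle that case, we replace the search by a more direct one: a Grover search over the $2\ell$ possible start positions inside the window.  At each start position we use one amplified $P_p$ test plus one query to check for the preceding letter $a$.  This costs $O(\sqrt{\ell}\cdot B\sqrt{\ell}\,\lambda)$.  That exceeds the budget by $\sqrt\ell$, which is unacceptable, so the binary-search route must be made to work.  We would first try to show that, given the preceding letter $a$ and the fact that the full infix has product $r$ with $(a,r,b)\in G$, the maximal reachable suffix is the correct one.  Failing that, we would use the bound $|E|$ from the lemma exactly as \thmref{thm:decomp} permits.  The prefix side is handled symmetrically.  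Either way, each cut test will cost $O(|M|^3 B\sqrt\ell\,\lambda H_n)$, counting the $|M|^2$ factorizations times the $|M|$ targets in a search, with $\lambda$ for amplification and $H_n$ for the binary search.

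Finally, at scale $\ell$ we Grover-search over all cuts on a grid of spacing $\ell/2$, which guarantees that some grid cut lands in the consecutive block of good cuts.  This uses $O(\sqrt{n/\ell})$ iterations, and robust composition adds a factor $\lambda$.  The cost of one scale is then
\[
\sqrt{n/\ell}\cdot |M|^3B\sqrt\ell\,\lambda^2H_n=|M|^3B\sqrt n\,\lambda^2H_n.
\]
Taking an OR over the $H_n$ scales, each amplified, gives the claimed bound $O(|M|^3B\sqrt n\,\lambda^2H_n^2)$.
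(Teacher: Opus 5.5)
There is a genuine gap, at exactly the step you flag. You assert that the hypothesis $Mp\supsetneq Map$ of \lemref{lem:prefix} (and, on the prefix side, $qM\supsetneq qbM$) ``need not hold here.'' You then leave the cut test unresolved. The direct Grover search over start positions loses a factor $\sqrt\ell$. Neither fallback (``try to show the maximal reachable suffix is the correct one,'' ``use the bound $|E|$'') is an actual argument. Without strictness the binary search really can fail: a suffix strictly longer than the witnessing $w_1$ may still have $p$ in its left ideal. The search then overshoots, the cut test returns false negatives at good cuts, and completeness of your scale search is lost. So as written, the per-cut cost $O(|M|^3B\sqrt\ell\,\lambda H_n)$ is not justified.

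The missing idea is that both strict containments hold for \emph{every} factorization $pq=r$, purely because $(a,r,b)\in G$. Suppose $Map=Mp$, and write $p=uap$. Then $rb=pqb=u(arb)$, so $m\in MrbM\subseteq MarbM$, contradicting $m\notin MarbM$. Symmetrically, suppose $qbM=qM$, and write $q=qbv$. Then $ar=apqbv=(arb)v$, so $m\in MarM\subseteq MarbM$, again a contradiction. This is your first fallback made rigorous. Every suffix of the window strictly longer than $w_1$ contains the preceding $a$, so its product lies in $Map$, and having $p$ in its left ideal would force $Mp=Map$. With these containments, \lemref{lem:prefix} and its suffix analogue apply verbatim to the windows. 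The rest of your plan then matches the paper's proof: dyadic scales, consecutive good cuts via \propref{prop:split}, automatic soundness, grid Grover search, and the cost accounting. The paper uses windows of length $4\ell$ and witnesses of total length in $[\ell,2\ell]$ with $\ell\geq2$, which also covers your separate $|w|=0$ case.
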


\begin{proof}
	For $n<2$, the event is false and no queries are needed.  Below assume $n\geq2$.

	We first verify that prefix/suffix search applies to the pieces.  Split $r = pq$ as in
	\propref{prop:split}.  We must check the ideal-descent hypotheses: $qM \supsetneq qbM$ and
	$Mp \supsetneq Map$.  Suppose toward contradiction that $qbM = qM$.  Then $q \in qbM$, say
	$q = qbw$, and so
	\[
		ar = apq = ap(qbw) = (apqb)w = (arb)w \in MarbM \enspace,
	\]
	giving $MarM \subseteq MarbM$ and hence $m \in MarM \subseteq MarbM$, contradicting
	$(a, r, b) \in G$.  Hence $qM \supsetneq qbM$, and symmetrically $Mp \supsetneq Map$.

	Now the search.  For a length scale $\ell \in \{2, 4, 8, \ldots\}$ with $\ell\leq n$ (a witness infix $awb$
	has length at least $2$, so these scales suffice; all windows below are clipped to
	$[1, n]$), call an index $j$
	\emph{marked} if for some $(p, q)$ with $pq = r$: the window $x_{j - 4\ell} \cdots x_{j-1}$
	contains a suffix of the form $a w_1$ with $[w_1] = p$ (a suffix-search on a window
	of length at most $4\ell$), and the window $x_j \cdots x_{j + 4\ell - 1}$ contains a prefix of the
	form $w_2 b$ with $[w_2] = q$ (a prefix-search on a window of length at most $4\ell$).
	Every window has length at most $\min\{4\ell,n\}$, so all equality tests
	used within it cost at most $B\sqrt{4\ell}$, and its search depth is at most $H_n$.
	By \lemref{lem:prefix}, testing whether $j$ is marked therefore costs
	\[
		O\big( |M|^3 B\sqrt\ell\,\lambda H_n \big)
	\]
	queries (at most $|M|^2$ split pairs, each a prefix plus a suffix search of cost
	$O(|M|B\sqrt\ell\,\lambda H_n)$).

	If $x$ contains an infix $awb$ with $[w] = r$ of total length in $[\ell, 2\ell]$,
	occupying positions $[s, e]$, then every $j \in (s, e]$ is marked: cutting $awb$ at $j$
	splits it into $a w_1 \cdot w_2 b$ with both parts of length at most $2\ell \le 4\ell$,
	and \propref{prop:split} provides the pair $(p, q)$.  So at least
	$e - s \ge \ell - 1 \ge \ell/2$ \emph{consecutive} indices are marked, and Grover search
	over a grid of integer spacing $\max\{1,\lfloor\ell/4\rfloor\}$ finds a marked index with $O(\sqrt{n/\ell})$ iterations.
	Conversely any marked index exhibits an infix $a w_1 w_2 b$ with
	$[w_1 w_2] = pq = r$, so the test is sound.

	Summing over at most $H_n$ scales, with an amplification factor $\lambda$ for the
	marked test inside Grover:
	\[
		\sum_{\ell} O\Big(\sqrt{\tfrac{n}{\ell}}\Big) \cdot
		O\big(|M|^3 B\sqrt\ell\,\lambda^2 H_n\big)
		= O\big( |M|^3 B \sqrt{n}\, \lambda^2 H_n^2 \big) ,
	\]
	since $\sqrt{n/\ell}\sqrt\ell=\sqrt n$ at every scale.
\end{proof}

\subsection{The localized AGS step and the global bound}
\label{sec:ags-local}

We now analyze the preceding searches by composing exact adversary duals.
This removes the amplification factors from the recursive bound and
isolates the local reduction that we will also use for the cube-root bound.

We use the following dual composition estimates.  An OR of $k$ Boolean
functions, each with dual cost at most $T$, has dual cost at most
$24\sqrt{k+1}\,T$; this follows, with room in the constant, from
\thmref{thm:weighted-tree-search} applied to a star.
Complementing a Boolean output leaves its dual unchanged.  Any function
of a finite family of outputs has dual cost at most twice the sum of
their costs, and adaptive calls obey the same estimate by
\lemref{lem:beta-adaptive-sum}.  A predicate of one input letter has
dual cost at most two.

\begin{lemma}[Localized AGS step]\label{lem:ags-local-step}
  Let $M$ be a finite aperiodic monoid, put $q=|M|$, and fix
  $m\neq1$ and a length bound $n_0\geq1$.  Write
  \[
    H=\lceil\log_2(n_0+1)\rceil+2,
    \qquad A_{n_0}=2^{40}(q+1)^5H^2.
  \]
  Suppose $B\geq1$ and, for every $s$ satisfying $MmM\subsetneq MsM$
  and every $1\leq\ell\leq n_0$, the membership function of
  $P_s\cap M^\ell$ has an all-pairs dual of cost at most $B\sqrt\ell$.
  Then membership in $P_m\cap M^n$, for every $1\leq n\leq n_0$,
  has an all-pairs dual of cost at most
  \[
    A_{n_0}B\sqrt n.
  \]
  There is also an absolute constant $c_0$ such that, if the same
  strict-parent tests instead have query cost at most $B\sqrt\ell$
  for some $B\geq0$, then membership in $P_m\cap M^n$ has query cost at most
  \[
    q^{c_0}L(n_0)^{c_0}(B+1)\sqrt n,
    \qquad L(u)=2+\log_2(u+2).
  \]
\end{lemma}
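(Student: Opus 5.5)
The plan is to redo the AGS reduction of \thmref{thm:decomp} at the level of exact all-pairs duals, replacing every amplified bounded-error subroutine in \lemref{lem:prefix} and \lemref{lem:infix} by dual composition. By \thmref{thm:decomp}, membership in $P_m$ is the conjunction of (U), (V), $\neg$(C), $\neg$(W). Each is an OR over a finite family: $|E|,|F|\leq q^2$ pairs, $|C|\leq q$ letters, $|G|\leq q^3$ triples. I will bound the dual cost of each constituent event by $O(q^{O(1)}H^2B\sqrt n)$. I will then combine the four conditions with the OR estimate ($24\sqrt{k+1}\,T$) and the ``function of finitely many outputs'' rule (factor $2\sum$). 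Complementation is free. The bookkeeping should give the explicit $2^{40}(q+1)^5H^2$. The rank claim in \thmref{thm:decomp} guarantees that every target used is a strict parent of $m$, so the hypothesis applies.

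\emph{Prefix event for fixed $(r,a)\in E$.} For a fixed prefix length $i$, membership in $K$ is an OR over $s\in H=\{s: r\in sM\}$ of the $P_s$ tests on $x_1\cdots x_i$. Each such $s$ is a strict parent: $MmM\subsetneq MrM\subseteq MsM$, with the strictness coming from \thmref{thm:decomp}. This OR has dual cost $O(\sqrt q\,B\sqrt n)$. The binary search is a deterministic adaptive computation with at most $H$ steps, followed by one $P_r$ test and one letter predicate. By \lemref{lem:beta-adaptive-sum}, its cost is $O(H\sqrt q\,B\sqrt n)$, with no amplification factor. The suffix event is symmetric.

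\emph{Infix event for fixed $(a,r,b)\in G$.} I fix a scale $\ell$ and a grid of spacing about $\ell/4$. The marked-index predicate at $j$ is an OR over at most $q^2$ split pairs $(p,q')$. Each term is an AND of a suffix search and a prefix search on windows of length at most $4\ell$. The descent hypotheses hold as shown in the proof of \lemref{lem:infix}, and the targets there are strict parents of $m$ since $MrM\subseteq MsM$ and $MmM\subsetneq MrM$. So each marked test costs $O(q^{5/2}HB\sqrt\ell)$, using the prefix bound with $\sqrt q$ and the OR over $q^2$ pairs. The event at this scale is the OR over the $O(n/\ell)$ grid points, with dual cost $O(\sqrt{n/\ell})$ times that, i.e.\ $O(q^{5/2}HB\sqrt n)$. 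Soundness and completeness are as in \lemref{lem:infix}. Taking the OR over the $\leq H$ scales contributes only $\sqrt H$, and the OR over the $q^3$ triples contributes $q^{3/2}$. All of these factors fit within $(q+1)^5H^2$.

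The main obstacle is keeping exact duals rather than bounded-error subroutines throughout. I expect this to work because every subproblem above is a deterministic function of the input, composed by OR, by adaptive sums, and by postprocessing. The only care needed is in clipped windows of length below $\ell$, where the hypothesis with $\ell'\leq n_0$ still applies, and in checking that window lengths never exceed $n_0$.

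For the query-cost version, I will instead use the bounded-error primitives directly, exactly as in \lemref{lem:prefix} and \lemref{lem:infix}. There the amplification factors $\lambda=O(\log(2+nq))$ are polynomial in $qL(n_0)$, and the $+1$ in $B+1$ absorbs the direct letter queries. These give the cost $q^{c_0}L(n_0)^{c_0}(B+1)\sqrt n$.
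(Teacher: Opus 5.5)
Your argument for the dual bound is the paper's proof: the decomposition of \thmref{thm:decomp}, exact OR duals in place of Grover, \lemref{lem:beta-adaptive-sum} for the binary search, strict parents via the rank claim and the split inclusions, and the cancellation $\sqrt{n/\ell}\,\sqrt\ell=\sqrt n$ at each scale. Two small accounting remarks. Your $q^{5/2}$ for a marked test overcounts, since the OR over the $q^2$ split pairs costs only a factor $24\sqrt{q^2+1}$; the overcount is harmless. The explicit constant still needs the tracking the paper carries out, which gives $2^{34}(q+1)^4H^2$.

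The genuine difference is in the query-cost statement. The paper does not rerun the bounded-error \lemref{lem:prefix} and \lemref{lem:infix}. Instead it uses adversary tightness to convert the assumed strict-parent algorithms into all-pairs duals of cost $O(B\sqrt\ell)$, enlarges the coefficient to $O(B+1)\geq1$ so the hypothesis $B\geq1$ holds, applies the dual statement, and converts back once. Your route through the amplified searches and the robust-composition principle also works, because each amplification factor $\lambda=O(\log(2+n_0q))$ is polynomial in $qL(n_0)$. However, it accumulates $\lambda^{O(1)}$ factors and depends on robust search. The paper's route loses only the constants of two applications of adversary tightness and reuses the explicit dual bound unchanged.
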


\begin{proof}
  Apply \thmref{thm:decomp} and the prefix, suffix, and infix
  subroutines of \hyperref[lem:prefix]{Lemmas~\ref*{lem:prefix}}
  and~\ref{lem:infix}, including
  the splitting Proposition~\ref{prop:split}.  The proof of
  \thmref{thm:decomp} gives $MmM\subsetneq MrM$ for every middle
  product $r$ in $E$, $F$, or $G$.  Prefix search replaces such an $r$
  only by $s$ with $r\in sM$, suffix search only by $s$ with $r\in Ms$,
  and hence only by targets with $MrM\subseteq MsM$.

  For the infix routine, a split $uv=r$ gives
  $MrM\subseteq MuM$ and $MrM\subseteq MvM$.  The subsequent suffix
  search from $u$ uses only targets $s$ with $u\in Ms$, while the prefix
  search from $v$ uses only targets $s$ with $v\in sM$.  Thus in both
  cases the two-sided ideal can only grow.  In all cases
  \[
    MrM\subseteq MsM
  \]
  for every equality target $s$ actually queried.  Thus every recursive
  target satisfies the displayed strict containment in the statement;
  no equality test in the $\JJ$-class of $m$ occurs.  In the splitting
  routine we enumerate the set of at most $q^2$ pairs $(u,v)$ with
  $uv=r$; no sharper factorization bound is needed.

For (U), implement the binary search in \lemref{lem:prefix} with exact
dual composition.  Its prefix-membership predicate is an OR over at
most $q$ targets, and its depth, including the final equality and
letter tests, is at most $H$.  The cost for one pair $(r,a)$ is therefore
at most $2H(24\sqrt{q+1}\,B\sqrt n+2)$.  Taking the OR over at most
$q^2$ pairs gives cost at most
\[
 2H(24\sqrt{q+1}\,B\sqrt n+2)\,24\sqrt{q^2+1}
 \leq2^{11}B(q+1)^2H\sqrt n.
\]
Condition (V) has the same bound.  Condition (C) is the complement of
an OR over the positions, with cost at most $48\sqrt{n+1}\leq96\sqrt n$.

For (W), use the windows and cuts from \lemref{lem:infix}, replacing
each search by an OR dual.  At scale $\ell$, the two boundary searches
for a fixed split pair cost at most
$8H(24\sqrt{q+1}\,B\sqrt{4\ell}+2)$.
The windows are clipped to the input, so all recursive lengths remain
at most $n_0$.  The grid of cuts can be chosen with
$4\ell(|\mathrm{grid}|+1)\leq68n$ and
$|\mathrm{grid}|+1\leq n+2$.  Composing the searches over grid cuts,
at most $q^2$ split pairs, and at most $\lfloor\log_2 n\rfloor$ scales
gives the following bound for one triple:
\[
\begin{aligned}
 &8H\cdot24\bigl(24B\sqrt{68(q+1)n}+2\sqrt{n+2}\bigr)\\
 &\qquad{}\times24\sqrt{q^2+1}\cdot
       24\sqrt{\lfloor\log_2 n\rfloor+1}
 \leq2^{25}B(q+1)^2H^2\sqrt n.
\end{aligned}
\]
Here we used $\sqrt{68n}\leq9\sqrt n$,
$\sqrt{n+2}\leq2\sqrt n$, and
$\sqrt{\lfloor\log_2 n\rfloor+1}\leq H$.
Taking the OR over at most $q^3$ triples adds a factor
$24\sqrt{q^3+1}\leq24(q+1)^2$.  Complementing this OR gives (W)
with cost at most $2^{30}B(q+1)^4H^2\sqrt n$.

This last bound also covers (U), (V), and (C).  Combining the four
conditions by a balanced binary AND costs at most sixteen times their
largest dual cost.  The resulting dual has cost
\[
 T_m(n)\leq2^{34}B(q+1)^4H^2\sqrt n
 \leq A_{n_0}B\sqrt n.
\]

For the query statement, adversary tightness converts the assumed
algorithms to all-pairs duals of cost $O(B\sqrt\ell)$.  Enlarge their
common coefficient to $O(B+1)\geq1$ and apply the dual statement.
Converting the resulting dual to a quantum algorithm proves the claim,
since $H=O(L(n_0))$ and $q\geq2$.  The added $1$ covers the letter tests
in condition (C).
\end{proof}

Every recursive equality target in this lemma has a strictly larger
two-sided ideal than the current target.  Iterating the lemma along an
ideal chain gives the global bound below.  In the cube-root argument,
we will instead organize the recursion around regular classes.

\begin{theorem}\label{thm:main-ags}
Let $M$ be a finite aperiodic monoid.  For every $n\geq1$,
\[
 Q_{1/3}(\operatorname{Prod}_{M,n})
 \leq\min\left\{n,\sqrt n\,
   \bigl((|M|+1)\log(n+2)\bigr)^{O(\dJ(M)+1)}
 \right\}.
\]
The implicit constant in the exponent is absolute.
\end{theorem}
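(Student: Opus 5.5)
The plan is to iterate \lemref{lem:ags-local-step} by induction on $d(m)$, building all-pairs duals for membership in $P_m$, and only at the end to combine the $|M|$ membership duals into a dual for the product and convert once to an algorithm. Fix $n_0=n$ and put $q=|M|$, $H=\lceil\log_2(n+1)\rceil+2$, and $A=A_{n_0}=2^{40}(q+1)^5H^2$. The inductive claim is: for every $m$ and every $1\le\ell\le n$, membership in $P_m\cap M^\ell$ has an all-pairs dual of cost at most $B_{d(m)}\sqrt\ell$, where $B_0=c$ is an absolute constant and $B_{j+1}=AB_j$.

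For the base case $d(m)=0$ we have $m=1$. By \propref{prop:identity}, membership in $P_1$ is the complement of an OR over positions of the one-letter predicate $x_i\ne1$. By the OR dual estimate (each letter predicate costs at most $2$), this has cost at most $48\sqrt{\ell+1}\le96\sqrt\ell$; take $B_0=96$.

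For the inductive step, let $m\ne1$. Every strict parent $s$ with $MmM\subsetneq MsM$ satisfies $d(s)<d(m)$. Hence $d(s)\le d(m)-1$ and, by monotonicity of $B_j$ in $j$, its tests cost at most $B_{d(m)-1}\sqrt\ell$ for every $\ell\le n_0$. \lemref{lem:ags-local-step} then gives cost $AB_{d(m)-1}\sqrt\ell$ for $P_m$ on every length $\ell\le n_0$, which closes the induction. Since $d(m)\le\dJ(M)$, every membership dual has cost at most $96A^{\dJ(M)}\sqrt n$.

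To compute the product, note that $\operatorname{Prod}_{M,n}(x)$ is a function of the $q$ Boolean outputs $\one{x\in P_m}$, exactly one of which is true. By the composition estimate ("any function of a finite family of outputs has dual cost at most twice the sum of their costs"), the product has a dual of cost at most $2q\cdot96A^{\dJ(M)}\sqrt n$. Adversary tightness \eqref{eq:adversary-tightness} converts this into an algorithm with $O(q A^{\dJ(M)}\sqrt n)$ queries. Since $A\le((q+1)\log(n+2))^{O(1)}$, because $H=O(\log(n+2))$ for $n\ge1$, the bound is $\sqrt n\,((q+1)\log(n+2))^{O(\dJ(M)+1)}$; the extra factor $q$ is absorbed by the $+1$ in the exponent. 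Reading the whole input gives the alternative bound $n$.

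The main obstacle is bookkeeping. The local lemma needs the parent bounds uniformly for all lengths $\ell\le n_0$ and for all strict parents at once, so the induction must be stated on the coefficient $B_j$ indexed by $d$ rather than by $m$, with a single $n_0=n$ throughout. This keeps $A$ fixed so the recursion does not compound logarithms. The trivial case $|M|=1$ costs zero queries, and for $\dJ(M)=0$ with $M$ nontrivial the bound is still consistent because the exponent is $O(1)$.
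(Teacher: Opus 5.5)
Your proposal is correct and follows the paper's proof essentially step for step. It uses strong induction on $d(m)$ with a single horizon $n_0=n$, applies the localized AGS step with the strict-parent coefficient, and combines the $|M|$ one-hot membership duals at twice the sum of their costs before a single conversion by adversary tightness. The only difference is cosmetic: you use base coefficient $96$, so you get $96A^{j}$ where the paper gets $A_{n_0}^{j+1}$.
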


\begin{proof}
Fix a length bound $n_0\geq1$ and take $q$, $H$, and $A_{n_0}$ as in
\lemref{lem:ags-local-step}.  We prove by strong induction on $d(m)$ that,
for every $0\leq n\leq n_0$, membership in $P_m\cap M^n$ has an
all-pairs dual of cost
\begin{equation}\label{eq:ags-target-dual}
 T_m(n)\leq A_{n_0}^{d(m)+1}\sqrt n.
\end{equation}
The empty input has cost zero.  Below assume $1\leq n\leq n_0$.

If $d(m)=0$, then $m=1$.  By \propref{prop:identity}, membership in
$P_1$ is the complement of an OR of the predicates $x_i\neq1$.
The OR estimate above gives a dual of cost
\[
 48\sqrt{n+1}\leq96\sqrt n\leq A_{n_0}\sqrt n.
\]
Thus the base case uses one factor of $A_{n_0}$.

If $d(m)=k>0$, every strict-parent target $s$ satisfies $d(s)<k$.
The induction hypothesis bounds its dual cost on any interval of length
$\ell\leq n_0$ by $A_{n_0}^k\sqrt\ell$.  Apply
\lemref{lem:ags-local-step} with $B=A_{n_0}^k\geq1$ to obtain
$T_m(n)\leq A_{n_0}^{k+1}\sqrt n$, completing the induction.

The vector of all $q$ membership answers determines the product:
exactly one target accepts.  Combining their duals and using
$d(m)\leq\dJ(M)$ gives a product dual of cost at most
\[
 2q\,A_{n_0}^{\dJ(M)+1}\sqrt n.
\]
No error reduction is needed in this construction.  Taking $n_0=n$,
adversary tightness~\eqref{eq:adversary-tightness} gives
\[
 Q_{1/3}(\operatorname{Prod}_{M,n})
 =O\!\left(qA_n^{\dJ(M)+1}\sqrt n\right).
\]
Since $A_n=O((q+1)^5\log^2(n+2))$, substituting this estimate and
absorbing absolute constants into the exponent gives the second bound
in the theorem.  Reading all input letters gives the alternative bound $n$.
\end{proof}

\begin{remark}
	No attempt has been made to optimize the exponents; the point is the
	\emph{shape} of the bound.  Two features are essential to the method.  First, the
	recursion reduces the level of the target without requiring the input length to
	decrease, so the per-level overhead multiplies across levels.  The factor
	$\dJ(M)+1$ in the exponent counts the strict ideal descents together with the base case.  Second, for fixed
	$M$ this is the AGS bound $\tilde O(\sqrt n)$.  If $M$ varies with $n$,
	write $q=|M|$ and $D=\dJ(M)$.  Under the condition
	\[
	 (D+1)\bigl(\log(q+1)+\log\log(n+2)\bigr)=o(\log n),
	\]
	the multiplicative overhead in \thmref{thm:main-ags} is $n^{o(1)}$, giving
	\[
	 Q_{1/3}(\operatorname{Prod}_{M,n})\leq n^{1/2+o(1)}.
	\]
	The condition accounts for both the monoid-size factor and the logarithmic
	overhead at each level, and is sufficient rather than necessary for a
	sublinear bound.
\end{remark}

\begin{remark}
	The union semilattice (\exampleref{ex:union}) shows some dependence on the monoid is
	necessary: there $\dJ = m$ and the true complexity is
    $\Theta(\min\{n,\sqrt{mn}\})$.  The gap between this bound and
	\thmref{thm:main-ags} is enormous.  The breadth bounds in Sections~\ref{sec:lattice} and~\ref{sec:beta} explain how additional structure
	can give much smaller coefficients.  For general aperiodic monoids,
	the exponential dependence on $\dJ$ is necessary
	(\secref{sec:dyck-lb}).
\end{remark}

\subsection{The Dyck lower bound and its transition monoid}
\label{sec:dyck-lb}

The AGS bound has an exponential dependence on ideal-chain depth.  We
next show that this dependence cannot in general be replaced by a
polynomial in the monoid size, using bounded-depth Dyck languages.

Ambainis, Balodis, Iraids, Khadiev, K\c{l}evickis, Pr\=usis, Shen, Smotrovs, and
Vihrovs~\cite{ABIKPSSV20} proved the following lower bound for depth-bounded
Dyck languages.  Let $\textsc{Dyck}_{k,n}$ denote the problem of deciding whether
a word of length $n$ over $\{\mathtt{u}, \mathtt{d}\}$ (open/close parenthesis) is balanced with
all prefix heights in $[0, k]$.

\begin{theorem}[\cite{ABIKPSSV20}]\label{thm:dyck-lb}
There is a constant $c>1$ such that, for every even $n\geq2$ and every
$1\leq k\leq\log_2 n$,
\[
 Q_{1/3}(\textnormal{\textsc{Dyck}}_{k,n})=\Omega(c^k\sqrt n).
\]
Moreover, for every fixed $\epsilon>0$ there is a constant
$c'_\epsilon>0$ such that, for every even $n\geq2$ and every
$k\geq c'_\epsilon\log_2 n$,
\[
 Q_{1/3}(\textnormal{\textsc{Dyck}}_{k,n})=\Omega_\epsilon(n^{1-\epsilon}).
\]
\end{theorem}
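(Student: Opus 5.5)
This theorem is quoted from Ambainis et al.~\cite{ABIKPSSV20}, so the plan is not to reprove it from scratch. Instead I would show how both statements follow from their main theorem, and sketch the lower-bound mechanism they use. Their result is stated for $\textsc{Dyck}_{k,n}$ on words over $\{\mathtt u,\mathtt d\}$. The first step is to check that their normalization of $\textsc{Dyck}_{k,n}$ matches ours: balanced words with every prefix height in $[0,k]$. With that matched, the two displayed bounds follow directly, once the restrictions to even $n$ and the ranges of $k$ are made to agree with their theorem, so that the implicit constants are absolute.

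For the first bound, the core of the argument is a recursive composition lower bound. Depth-$k$ Dyck with height window $k$ is related to depth-$(k-1)$ Dyck by a substitution. A height excursion from level $h$ to $h+1$ and back can be embedded as an inner Dyck instance of smaller depth, nested inside an outer search for an unbalanced block. I would set up a family of hard instances $X_k,Y_k$ on length $n$ by induction on $k$.

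For $k=1$, the base case is an OR/search instance. Inputs are concatenations of $\mathtt{ud}$ blocks, with one block possibly replaced by $\mathtt{uu}$ or $\mathtt{dd}$. This gives $\Omega(\sqrt n)$ via the standard adversary.

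For the inductive step, one replaces each letter of an outer instance by a gadget of length $\ell$. The gadget's balance or imbalance is decided by a depth-$(k-1)$ instance, and the outer instance is a constant-size Boolean formula, such as a two-level AND-OR, on a few such gadgets.

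The general adversary bound is multiplicative under composition~\cite{HLS07,LMRSS11}: $\ADVpm(f\circ g)=\ADVpm(f)\ADVpm(g)$. Each level therefore gains a constant factor $c>1$ beyond the $\sqrt{n}$ coming from the outer search. Keeping the total length at $n$ while recursing $k$ times needs $2^{O(k)}\le n$, which is where $k\le\log_2 n$ enters.

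The main obstacle is the gadget itself. One needs a constant-size function, realizable by Dyck height constraints, whose adversary value strictly exceeds that of a plain OR/AND relative to the length it consumes, so that $c>1$ survives. This is where I would defer to the explicit construction in~\cite{ABIKPSSV20} rather than rederive it.

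The second statement uses the same recursion, but with $k=\Theta(\log n)$ levels and gadget parameters tuned to $\epsilon$. The accumulated factor $c^{k}$ then becomes $n^{1/2-\epsilon}$, giving $\Omega_\epsilon(n^{1-\epsilon})$. Taking $c'_\epsilon$ large enough ensures the depth is sufficient. Monotonicity in $k$, obtained by padding a smaller-depth instance so its heights stay inside the larger window, covers all larger $k$.
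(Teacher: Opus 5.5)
Deferring to \cite{ABIKPSSV20} is what the paper does. It gives no proof in the text, noting only that both bounds were re-derived in Lean from explicit block encodings and adversary lower bounds.

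The one step you supply yourself fails as written: extending the second bound to every $k\geq c'_\epsilon\log_2 n$ by ``monotonicity via padding.'' Keeping accepted words inside the larger window is automatic. The real problem is that rejected instances whose only defect is exceeding height $k_0$ become accepted in the window $[0,k]$. Raising the baseline with $\mathtt{u}^{k-k_0}\cdots\mathtt{d}^{k-k_0}$ restores those rejections, but it has two costs:
\begin{itemize}
\item It uses $2(k-k_0)$ letters, which are unavailable once $k$ is comparable to $n$, and the statement includes all $k\geq n$.
\item It lowers the bottom wall to $-(k-k_0)$, so instances rejected for dipping below zero may now be accepted.
\end{itemize}
Use reflection and two calls instead. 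Let $h_i$ be the prefix heights of $w$, and let $\bar w$ be $w$ with $\mathtt{u}$ and $\mathtt{d}$ swapped. For every $k\geq k_0$,
\[
 w\in\textnormal{\textsc{Dyck}}_{k_0}
 \iff
 w\in\textnormal{\textsc{Dyck}}_{k}\ \text{and}\
 \mathtt{u}^{k_0}\bar w\,\mathtt{d}^{k_0}\in\textnormal{\textsc{Dyck}}_{k},
\]
because the second word has heights $k_0-h_i$ along $\bar w$. Padding $w$ by $(\mathtt{ud})^{k_0}$ to the same length gives
$Q_{1/3}(\textnormal{\textsc{Dyck}}_{k_0,n'})=O\bigl(Q_{1/3}(\textnormal{\textsc{Dyck}}_{k,n'+2k_0})\bigr)$.
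This loses only $2k_0=O(\log n)$ letters.

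Two remarks in your mechanism sketch are also off, although you defer the real gadget to the citation.
\begin{itemize}
\item Your suggested outer formula cannot supply $c>1$. We have $\ADVpm(\textnormal{\textsc{And}}_a\circ\textnormal{\textsc{Or}}_b)=\sqrt{ab}$, exactly the square root of its length, as for every read-once \textsc{And}/\textsc{Or} formula. Moreover, a height violation is absorbing, so the validity of sub-blocks combines only by \textsc{And}. Sub-results must therefore be carried by the blocks' height profiles and combined by a non-read-once counting gadget.
\item For the second bound, the depth beyond $\log_2 n$ cannot buy more levels. Each level of your recursion at least doubles the length, so only $\log_2 n$ levels fit. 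The extra depth must instead buy better gadgets, whose adversary value is nearly linear in their length. So ``$c^k$ becomes $n^{1/2-\epsilon}$'' is not the first recursion run for longer.
\end{itemize}
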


Both bounds are also proved in Lean by \texttt{dyckLB\_qQuery}, from
explicit block encodings and adversary lower bounds, without assuming
the cited theorem.  The formalization permits the choices
$c=2^{1/20}$ and $c'_\epsilon=2^{\lceil1/\epsilon\rceil+1}+6$.

The language $\textsc{Dyck}_{k}$ is star free, and its syntactic monoid is small and completely
explicit:

\begin{proposition}\label{prop:dyck-monoid}
	Let $M_k$ be the transition monoid of the minimal automaton of $\textnormal{\textsc{Dyck}}_k$ (states
	$\{0, \ldots, k\}$ plus a dead state; $\mathtt{u}: h \mapsto h+1$, $\mathtt{d}: h \mapsto
	h - 1$, falling off $[0,k]$ is death).  Then every nonzero element of $M_k$ is the partial
	map $h \mapsto h + e$ with domain $\{h : a \le h \le k - b\}$, for parameters
	$(a, b, e)$ with $a, b \ge 0$, $a + b \le k$, $-a \le e \le b$ (for a word $w$: $a$ is the
	maximum prefix deficit, $b$ the maximum prefix height, $e$ the total sum).  Consequently
	\[
		|M_k| = \sum_{s=0}^{k} (s+1)^2 + 1 = \frac{(k+1)(k+2)(2k+3)}{6} + 1 = \Theta(k^3)
		\enspace,
	\]
	$M_k$ is aperiodic, and its $\JJ$-classes are exactly the level sets of $a + b$ together
	with the zero element: they form a chain, and $\dJ(M_k) = k + 1$.  For $k\ge1$, $M_k$
	is neither $\RR$-trivial nor $\LL$-trivial.
\end{proposition}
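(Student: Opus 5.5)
The plan is to give an explicit normal form for nonzero elements and read every claim off it. For a word $w$ over $\{\mathtt u,\mathtt d\}$, write $e(w)$ for its total height change. Let $a(w)=\max_{\text{prefixes}}(-\text{height})\ge0$ be the maximum prefix deficit and $b(w)=\max_{\text{prefixes}}(\text{height})\ge0$ the maximum prefix height, both measured from starting height $0$ and including the empty prefix. Starting from state $h$, the run stays in $[0,k]$ exactly when $h-a\ge0$ and $h+b\le k$, and then it ends at $h+e$. So the transition of $w$ is either the empty map, which is the zero element, or the partial shift $h\mapsto h+e$ on $[a,k-b]$. This shift is nonempty iff $a+b\le k$, and the constraint $-a\le e\le b$ is automatic. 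Conversely, every triple with $a,b\ge0$, $a+b\le k$, $-a\le e\le b$ is realized, for instance by $\mathtt d^a\mathtt u^{a+b}\mathtt d^{b-e}$. Distinct triples give distinct partial maps, since $a,b$ are recovered from the domain and $e$ from the shift. Hence, with $s=a+b$, there are $(s+1)$ choices of $(a,b)$ and $(s+1)$ choices of $e$, which gives $|M_k|=\sum_{s=0}^k(s+1)^2+1$. The closed form is the standard sum of squares.

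Next I derive the multiplication rule. For concatenation $w_1w_2$, the triples combine as
\[
a=\max(a_1,a_2-e_1),\qquad b=\max(b_1,b_2+e_1),\qquad e=e_1+e_2,
\]
with zero whenever $a+b>k$. Aperiodicity follows by iterating this rule. If $e\neq0$, the width $a+b$ grows linearly with the power, so some power is zero and stays zero. If $e=0$, then $x^2=x$. Either way $x^{k+1}=x^{k+2}$.

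For the $\JJ$-structure, write $s=a+b$. The combination rule gives $s(xy)\ge\max(s(x),s(y))$, so $MxM$ contains only elements of width at least $s(x)$, together with zero. Conversely, I show that any element of width $t\ge s(x)$ lies in $MxM$. An element $(a,b,e)$ of width $s$ can be sandwiched to reach every triple $(a',b',e')$ of width $t\ge s$. To do this, prepend $\mathtt d^{i}$ or $\mathtt u^{i}$ and append suitable letters: prefixing $\mathtt u$ or $\mathtt d$ shifts or widens the window, and appending extends it on the other side. Concretely, the element $\mathtt d^{a'}\mathtt u^{a'+b'}\mathtt d^{b'-e'}$ of width $t$ should factor as $\alpha\,w\,\beta$ for any $w$ of width $s\le t$, with $w$ inserted in the middle of the long $\mathtt u$-run. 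This step, writing the explicit $\alpha,\beta$ and checking the window arithmetic, is the fiddliest part, and it is where I expect errors. The result is that the ideals are exactly $I_s=\{\text{width}\ge s\}\cup\{0\}$. So the $\JJ$-classes are the width level sets plus $\{0\}$, they form a chain of $k+2$ classes, and $\dJ(M_k)=k+1$.

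Finally, for non-$\RR$-triviality when $k\ge1$, take $x=\mathtt u\mathtt d$ and $y=\mathtt u\mathtt d\mathtt u$. These have triples $(0,1,0)$ and $(0,1,1)$, so $x\ne y$. We have $y=x\mathtt u$, and also $y\,\mathtt d=\mathtt u\mathtt d\mathtt u\mathtt d=x$ because both have triple $(0,1,0)$. Hence $xM=yM$ with $x\neq y$, so $M_k$ is not $\RR$-trivial. The mirror pair $\mathtt d\mathtt u$ and $\mathtt u\mathtt d\mathtt u$, using left multiplication, handles $\LL$ symmetrically.
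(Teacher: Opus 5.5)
Your proof is correct. For the normal form, the count, the composition rule, aperiodicity, and the non-$\mathcal R$/non-$\mathcal L$ witnesses it is the paper's argument; note that $\mathtt{udu}$ and $\mathtt u$ are the same element $(0,1,1)$.

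The genuine difference is the $\mathcal J$-class step.

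\emph{The paper's route.} It shows that each width level is a single class by moving inside the level one letter-block at a time. Right multiplication by $\mathtt u^j$ or $\mathtt d^i$ runs $e$ through $[-a,b]$. Left multiplication by $\mathtt d^i$ or $\mathtt u^j$ slides $(a,b)$ along $a+b=s$. Monotonicity of $s$ then makes the levels unions of classes, and the chain order between levels is asserted rather than checked.

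\emph{Your route.} You compute every principal ideal outright, $MxM=I_{s(x)}$. This gives at once that the levels are exactly the classes and that they are totally ordered, including the comparability of adjacent levels that the paper leaves implicit. It also identifies the ideals $I_s$ used later in Proposition~\ref{prop:dyck-munn-peel}.

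\emph{Closing the step you flagged.} The price of your route is the sandwich you left unexecuted, but it closes in one line. Suppose $w$ has triple $(a,b,e)$ and the target $(a',b',e')$ has $a'+b'\ge a+b$. Pick an offset $h_0$ with $a-a'\le h_0\le b'-b$; this range is nonempty exactly because $a+b\le a'+b'$. Then take
\[
 \alpha=\mathtt d^{a'}\mathtt u^{a'+h_0},\qquad
 \beta=\mathtt u^{\,b'-h_0-e}\,\mathtt d^{\,b'-e'}.
\]
All exponents are nonnegative, since $a'+h_0\ge a$, $b'-h_0-e\ge b-e$, and $e'\le b'$. During $w$ the height stays in $[h_0-a,h_0+b]\subseteq[-a',b']$. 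The walk of $\alpha w\beta$ therefore has minimum $-a'$, maximum $b'$, and final height $e'$, so its element is $(a',b',e')$.

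Add that $0\in MxM$, for instance $0=x\,\mathtt u^{k+1}$ because $\mathtt u^{k+1}$ is already zero. With that, the rest of your argument goes through as written.
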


\begin{proof}
	Words act on the left: the element of a word $w = w_1 \cdots w_\ell \in
	\{\mathtt{u},\mathtt{d}\}^*$ sends $h$ to the state reached after reading $w$ from $h$.
	Write $\Sigma_j$ for the sum of the first $j$ letters ($\mathtt{u} = +1$,
	$\mathtt{d} = -1$), and set $a(w) = -\min_j \Sigma_j \ge 0$, $b(w) = \max_j \Sigma_j \ge
	0$, $e(w) = \Sigma_\ell$.  Reading $w$ from $h$ survives iff $0 \le h + \Sigma_j \le k$
	for all $j$, i.e., iff $a(w) \le h \le k - b(w)$, and then the result is $h + e(w)$.  So
	the element of $w$ is the claimed partial map with parameters
	$(a, b, e) = (a(w), b(w), e(w))$, which satisfy $a, b \ge 0$ and $-a \le e \le b$; it is
	the zero element (empty domain) iff $a + b > k$.  Distinct valid triples give distinct
	partial maps (the domain determines $(a, b)$ and the shift determines $e$).  Every valid
	triple is realized: the word $\mathtt{d}^a \mathtt{u}^{a+b} \mathtt{d}^{b-e}$ has prefix
	minimum $-a$, prefix maximum $b$, and sum $e$.  Counting triples with $a + b = s$ gives
	$(s+1)$ choices of $(a,b)$ and $s + 1$ choices of $e \in [-a, b]$, whence the sum
	$\sum_{s=0}^{k}(s+1)^2$ plus the zero element.

	Composition (apply $x = (a_1, b_1, e_1)$ then $y = (a_2, b_2, e_2)$) is
	\[
		x \cdot y = \big( \max(a_1,\, a_2 - e_1),\ \max(b_1,\, b_2 + e_1),\ e_1 + e_2 \big)
	\]
	(saturating to zero when the new parameters exceed budget), since a prefix of the
	concatenated word is a prefix of $x$'s word or $x$'s word followed by a prefix of
	$y$'s.  Aperiodicity: $x^N$ has $e$-parameter $N e_1$; if $e_1 < 0$ its $a$-parameter
	$\max_{1 \le j \le N}(a_1 - (j-1)e_1)$ grows without bound, and if $e_1 > 0$ its
	$b$-parameter does, so $x^N = 0$ for large $N$; if $e_1 = 0$ then $x^2 = x$.  Either way
	$x^{N+1} = x^N$ eventually.

	$\JJ$-classes.  From the composition formula, the parameter $s = a + b$ of any product
	is at least the $s$-parameter of each factor: multiplying $x$ on the left by
	$(a_0, b_0, e_0)$ yields $a' + b' \ge (a_1 - e_0) + (b_1 + e_0) = s(x)$, and similarly
	on the right.  So $s$ never decreases and the level sets are unions of $\JJ$-classes,
	ordered in a chain (with the zero element at the bottom).  Conversely all of level $s$
	is one $\JJ$-class: right-multiplying $(a, b, e)$ by $\mathtt{u}^j$ gives
	$(a, b, e + j)$ as long as $e + j \le b$, and by $\mathtt{d}^i$ gives $(a, b, e - i)$ as
	long as $e - i \ge -a$, so $e$ ranges over all of $[-a, b]$ within a fixed $(a, b)$;
	and left-multiplying by $\mathtt{d}^i$ slides $(a, b, e) \mapsto (a + i, b - i, e - i)$
	while left-multiplying by $\mathtt{u}^j$ slides back, covering all $(a, b)$ with
	$a + b = s$.  Hence the $\JJ$-order is a chain of $k + 2$ classes (levels
	$s = 0, \ldots, k$ and zero) and $\dJ(M_k) = k + 1$.

	For non-$\RR$-triviality, assume $k\ge1$ and let
	$x = \mathtt{ud} = (0, 1, 0)$ and $y = \mathtt{u} = (0,1,1)$.
	Then $y \cdot \mathtt{d} = x$ and $x \cdot \mathtt{u} = y$, so $xM_k = yM_k$ with
	$x \neq y$.  For non-$\LL$-triviality, let $x' = \mathtt{du} = (1, 0, 0)$.
	Then $\mathtt{d} \cdot y = x'$ and $\mathtt{u} \cdot x' = y$, so $M_kx' = M_ky$
	with $x' \neq y$.
\end{proof}

The product of the generator images in $M_k$ determines the state reached
from the initial state $0$.  A word belongs to $\textsc{Dyck}_k$ exactly
when that state is again $0$.  Thus computing the product decides Dyck
membership without further queries.  Combining this reduction with the
size and ideal-depth formulas gives the following lower bound.

\begin{corollary}\label{cor:dyck}
	The family $M_k$ of aperiodic monoids, with $|M_k| = \Theta(k^3)$ and
	$\dJ(M_k) = k + 1$, satisfies, for $n\geq2$ and $1\leq k\leq\log_2 n$,
	\[
		Q_{1/3}(\operatorname{Prod}_{M_k,n}) \ = \
		\Omega\big( \sqrt{n} \cdot c^{\,\dJ(M_k)} \big) \ = \
		\Omega\big( \sqrt{n} \cdot 2^{\Omega(|M_k|^{1/3})} \big) \enspace.
	\]
	In particular: (i) no algorithm achieves $\poly(|M|) \cdot \sqrt{n}$ for all finite
	aperiodic monoids---any bound of the form $f(|M|) \sqrt{n}$ requires
	$f(|M|) = \exp(\Omega(|M|^{1/3}))$; (ii) the exponential dependence on $\dJ$ in
	\thmref{thm:main-ags} is qualitatively necessary: the upper bound is
	$\sqrt{n} \cdot 2^{O(\dJ \log(|M| \log n))}$ against the lower bound
	$\sqrt{n} \cdot 2^{\Omega(\dJ)}$; (iii) for every fixed $0<\epsilon<1$,
	a suitable choice $k=O_\epsilon(\log n)$ gives an aperiodic
	monoid of size $O_\epsilon(\log^3 n)$ whose product problem requires
	$\Omega_\epsilon(n^{1-\epsilon})$
	queries---once the monoid has merely polylogarithmic size, the star-free
	$\tilde O(\sqrt n)$ phenomenon disintegrates entirely.
\end{corollary}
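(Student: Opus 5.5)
The corollary is a reduction from bounded-depth Dyck membership to the product problem in $M_k$, followed by bookkeeping with Proposition~\ref{prop:dyck-monoid}. First I establish the reduction. Given a word $w\in\{\mathtt u,\mathtt d\}^n$, map each letter to its image in $M_k$. Any algorithm computing $\operatorname{Prod}_{M_k,n}$ on this restricted alphabet returns the element $[w]$, a partial map on $\{0,\ldots,k\}$. By free postprocessing we then check whether $0$ is in its domain and is sent to $0$; this holds exactly when $w\in\textsc{Dyck}_{k,n}$. Thus
\[
Q_{1/3}(\operatorname{Prod}_{M_k,n})\ \ge\ Q_{1/3}(\operatorname{Prod}_{M_k,\{\mathtt u,\mathtt d\},n})\ \ge\ Q_{1/3}(\textsc{Dyck}_{k,n}).
\]
The first inequality holds because restricting the input alphabet can only make the problem easier.

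The paper also needs $n\ge2$ without assuming $n$ even. For odd $n$, I append a fixed identity letter, or equivalently use the bound for $n-1$, which is even. This loses only a constant factor, since $\sqrt{n-1}=\Theta(\sqrt n)$ and $k\le\log_2 n$ gives $k\le\log_2(n-1)+1$. The single possible excess unit of $k$ changes $c^k$ by only a constant. Applying Theorem~\ref{thm:dyck-lb} then gives $\Omega(c^k\sqrt n)$. Since $\dJ(M_k)=k+1$, this equals $\Omega(c^{\dJ(M_k)}\sqrt n)$ after absorbing a factor $c$. From $|M_k|=\Theta(k^3)$ we get $k=\Theta(|M_k|^{1/3})$, which gives the $2^{\Omega(|M_k|^{1/3})}$ form.

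The three consequences follow directly.
\begin{itemize}
\item[(i)] If $f(|M|)\sqrt n$ were an upper bound for all aperiodic monoids, take $k=\lfloor\log_2 n\rfloor$ with $n$ large. Then $f(|M_k|)\ge c^k=\exp(\Omega(|M_k|^{1/3}))$.
\item[(ii)] Compare this lower bound with Theorem~\ref{thm:main-ags}, whose upper bound has the form $2^{O(\dJ\log(|M|\log n))}\sqrt n$.
\item[(iii)] Use the second part of Theorem~\ref{thm:dyck-lb} with $k=\lceil c'_\epsilon\log_2 n\rceil$. Then $|M_k|=O_\epsilon(\log^3 n)$ and the lower bound is $\Omega_\epsilon(n^{1-\epsilon})$. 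The same odd-$n$ padding applies.
\end{itemize}

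The only step needing care is handling the parity and range constraints of Theorem~\ref{thm:dyck-lb}. I expect the padding argument above to cover them routinely.
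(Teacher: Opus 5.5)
Your proposal is correct and follows the paper's proof. Take a Dyck instance of even length $n_0=2\lfloor n/2\rfloor\in\{n-1,n\}$, map its letters into $M_k$, pad with identities to length $n$, and read off the displayed bound and (i)--(iii) from Theorem~\ref{thm:dyck-lb} and Proposition~\ref{prop:dyck-monoid}. The only difference is that the ``excess unit of $k$'' you worry about never occurs. For integer $k\geq1$ the power $2^k$ is even, so $2^k\leq n$ already forces $2^k\leq n_0$, which is how the paper handles the range condition. As written, your fallback would also need an unstated reduction from $\textsc{Dyck}_{k-1}$ to the $M_k$ product, but it is never needed.
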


\begin{proof}
Put $n_0=2\lfloor n/2\rfloor$, so $n/2\leq n_0\leq n$ and $n_0$ is even.
Map a Dyck input of length $n_0$ to its letter images in $M_k$ and append
known identities to reach length $n$.  Computing this product decides the
Dyck instance with constant query overhead.  Thus the lower bounds at
length $n_0$ transfer to the monoid product at length $n$.

For integer $k\geq1$, the condition $k\leq\log_2 n$ implies
$2^k\leq n_0$.  The first part of \thmref{thm:dyck-lb}, together with
the size and depth formulas for $M_k$, proves the displayed bound and
consequences (i) and (ii).
For (iii), take $k=\lceil c'_\epsilon\log_2 n\rceil$.  This choice satisfies
$k\geq c'_\epsilon\log_2 n_0$, so the second part of the theorem gives
$\Omega_\epsilon(n_0^{1-\epsilon})=\Omega_\epsilon(n^{1-\epsilon})$.
The monoid size is $O_\epsilon(\log^3 n)$.
\end{proof}

\subsection{Why polynomial dependence on breadth fails}
\label{sec:dyck-breadth}

The same family shows why the breadth upper bounds for commutative and
stably ordered monoids need their structural hypotheses.  We first
compute its product breadth.

\begin{proposition}[Product breadth of the Dyck monoid]\label{prop:dyck-breadth}
  For $k\geq1$ and $G=\{\mathtt{u},\mathtt{d}\}\subseteq M_k$,
  \[
    \beta_G(M_k)=\beta_{M_k}(M_k)=\max\{2k,3k-2\}.
  \]
  In particular, the breadth is $2$ for $k=1$ and $3k-2$ for $k\geq2$.
\end{proposition}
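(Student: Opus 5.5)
The plan is to work entirely in the triple coordinates of \propref{prop:dyck-monoid}. A nonzero product $(a,b,e)$ of a word is read off from the walk of prefix sums. A subword preserves the product exactly when its own walk has the same prefix minimum $-a$, prefix maximum $b$ and endpoint $e$. If the product is zero, the subword need only again satisfy $a+b>k$ in the saturating composition law. So the task is to determine how long a shortest subwalk with prescribed $(a,b,e)$ can be forced to be.

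\emph{Lower bound on $G=\{\mathtt u,\mathtt d\}$.} For $k=1$, the word $\mathtt{ud}$ has product $(0,1,0)$. Neither single letter has this product, so the breadth is at least $2$. For $k\ge2$, I would exhibit a word of length $3k-2$ in which every letter is needed. The candidate is of the form $\mathtt u^{b}\mathtt d^{k}\mathtt u^{c}$ with $a+b=k$, with $a,b\ge1$ and $c$ tuned so that the total length is $3k-2$, for instance $b=k-1$, $a=1$, $c=k-1$, suitably adjusted. Deleting any letter lowers the maximum, raises the minimum or shifts the endpoint. The intended argument is that the word is strictly monotone between consecutive events (the maximum, then the minimum, then the endpoint), so a subword attaining all three in any order needs at least the full number of steps between them. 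The subtle point is ruling out a subword that realizes the extremes in the \emph{other} order, that is, minimum before maximum. For a walk starting at $0$, I would handle this by a counting argument on the numbers of $\mathtt u$'s and $\mathtt d$'s available before and after each cut. The value $2k$, relevant only at $k=1$ but also true generally, comes from $\mathtt u^k\mathtt d^k$, or from a zero product witness.

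\emph{Upper bound on $G$.} Given any generator word, let $i_{\min}$ and $i_{\max}$ be the first times the extremes are attained. I would retain a monotone run from $0$ to the first extreme, then a monotone run to the second extreme, then a monotone run to $e$. Since the original walk passes through these heights in this order, such letters exist as a subsequence. The length is at most $|\text{first}|+(a+b)+|\text{last leg}|$, and a short case analysis using $a+b\le k$ and $-a\le e\le b$ should bound this by $\max\{2k,3k-2\}$. The $-2$ appears because the first leg and the last leg cannot both have length $k$ when $a,b\ge1$. For zero products, I would keep a minimal prefix walk achieving $a+b=k+1$, which has length at most $2k+1$. I expect that one can do better, taking the monotone path $d^{a'}u^{k+1}$ of length $\le 2k$ (or symmetric), and otherwise show the max is still attained within budget.

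\emph{Full alphabet $M_k$.} This is where I expect the main obstacle. Letters are now arbitrary triples, so the walk has jumps with internal excursions, and one letter may contribute to the minimum, the maximum and the endpoint simultaneously. I would first try a reduction: replace each letter by a shortest generator word representing it, namely $\mathtt d^{a}\mathtt u^{a+b}\mathtt d^{b-e}$ or its mirror. Then I would apply the generator-case construction to the concatenated walk. Finally, I would pull the selected generator positions back to the letters containing them. This gives a subsequence of letters whose product dominates the needed events. Because each selected letter reproduces its full excursion, the extra letters can only overshoot, and I would argue that any overshoot is impossible since the extremes are global. The number of letters kept is at most the number of generator steps kept, so the same bound $\max\{2k,3k-2\}$ follows. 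Together with $\beta_{M_k}(M_k)\ge\beta_G(M_k)$, this gives equality. The delicate check is that the pulled-back letters still realize the endpoint exactly. Shifts from included letters are additive, so I would select letters along the legs to keep the net displacement correct, possibly adding one boundary letter per leg and absorbing it into the slack.
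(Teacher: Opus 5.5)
Your lower bound is correct and uses the paper's witnesses $\mathtt u^k\mathtt d^k$ and $\mathtt u^{k-1}\mathtt d^k\mathtt u^{k-1}$. The reversed-order worry is unnecessary. Every subword has the form $\mathtt u^i\mathtt d^j\mathtt u^l$, and matching the minimum $-1$, the endpoint $k-2$ and the maximum $k-1$ forces $i=k-1$, $j=k$, $l=k-1$.

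The upper bound is where the proposal breaks. Even over $G$, ordering the legs by \emph{first} attainment only gives $b+(a+b)+(e+a)\le 3k-1$, not $3k-2$. For $k=2$, the word $\mathtt{udduu}$ has product $(1,1,1)$. Your legs keep all five letters, yet $\max\{2k,3k-2\}=4$, and $\mathtt{duu}$ is already a core. This slip is repairable. If some minimum precedes some maximum and some maximum precedes some minimum, take the cheaper order, which costs $2(a+b)-|e|\le 2k$. If instead every maximum precedes every minimum, then $a\ge1$ and the walk never returns to height $b$, so $e\le b-1$ and $2(a+b)+e\le 3k-2$; the opposite case is symmetric.

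The genuine gap is the full-alphabet step, which carries the whole upper bound since $\beta_G\le\beta_{M_k}$. Pulling selected generator steps back to the letters containing them does not preserve the product. An omitted letter shifts the heights of every later letter, so the subword can overshoot a global extreme; your claim that overshoot is impossible ``since the extremes are global'' is false. Also, a kept letter contributes its full displacement, not just the chosen step. For a counterexample, take $k\ge3$ and letters $x_1=\mathtt{ud}$, $x_2=\mathtt u$, $x_3=\mathtt{dd}$, with product $(1,1,-1)$. Your legs must use the first $\mathtt u$ (inside $x_1$) and two $\mathtt d$'s, and every such choice pulls back to $x_1x_3$. Its walk is $0,1,0,-1,-2$, giving $(2,1,-2)$. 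The actual core is $x_2x_3$, which avoids the letter holding the first maximum. Adding ``boundary letters'' does not address this, and would break the tight count anyway.

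The missing idea, which is the paper's argument, is to work with the heights at \emph{factor boundaries} of a shortest core. If two boundary heights coincide, deleting the block of factors between them leaves the walk of every other factor unchanged. So nothing overshoots and the endpoint is kept, and minimality forces each such deletion to remove all occurrences of the minimum or all occurrences of the maximum. Counting repetitions of boundary heights under this constraint gives the bound. Interior heights occur at most three times, extremes at most twice, and fewer when an extreme repeats; this yields $\ell\le 3s-2$ or $\ell\le 2s$ with $s\le k$. Zero products need a separate argument. Keep a factor carrying a minimum, a later factor carrying a maximum (or symmetrically), and just enough positive-displacement factors between them; this gives a core of at most $k+1$ factors.
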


\begin{proof}
  The word $\mathtt{u}^k\mathtt{d}^k$ has maximum prefix height $k$
  and final height zero, so every product core retains all $2k$ letters.
  For $k\geq2$, the word
  $\mathtt{u}^{k-1}\mathtt{d}^k\mathtt{u}^{k-1}$ has maximum prefix
  height $k-1$, minimum $-1$, and final height $k-2$.
  Any subword with the same product must attain its maximum in the first
  run, forcing all $k-1$ initial letters.  Preserving the minimum then
  forces all $k$ middle letters, and preserving the final height forces
  all $k-1$ final letters.  These words give the lower bounds already
  over $G$.

  For the upper bound, first consider a shortest product core
  $x_1\cdots x_\ell$ with nonzero product, allowing arbitrary factors
  $x_i\in M_k$.  Represent each factor by a word over $G$ and concatenate
  these representations.  Let $A$ and $B$ be the minimum and maximum
  heights of the resulting walk, and put $s=B-A\leq k$.
  If $s=0$, the product is the identity and the empty core suffices.
  Otherwise, consider the $\ell+1$ heights at factor boundaries.
  A segment between two equal boundary heights has net displacement
  zero.  Deleting its factors leaves all other heights unchanged, so
  minimality implies that its deletion removes every occurrence of
  $A$ or every occurrence of $B$.

  Each interior height therefore occurs at most three times among the
  factor boundaries: four occurrences would give three disjoint closed
  segments, each of which would have to contain all occurrences of one
  of the two extrema.  Each extreme height occurs at most twice, since
  deleting a closed segment based at that extreme preserves it and must
  remove every occurrence of the other extreme.
  If neither extreme repeats at factor boundaries, then
  $\ell+1\leq3(s-1)+2$, giving $\ell\leq3s-2$.
  If an extreme repeats, every occurrence of the other extreme lies
  between its two visits.  No height can then occur three times at
  factor boundaries: the two resulting closed segments would have to
  contain all occurrences of the two extrema separately, which this
  ordering precludes.  The other extreme cannot repeat either.
  Thus $\ell+1\leq2(s-1)+3$, giving $\ell\leq2s$.
  In either case, $\ell\leq\max\{2k,3k-2\}$.

  Finally, every zero product has a core of length at most $k+1$.
  A zero factor gives a one-letter core, so suppose all factors are
  nonzero, with parameters $(a_i,b_i,e_i)$ as in
  \propref{prop:dyck-monoid}.  In the expanded walk the maximum and
  minimum differ by more than $k$.  By symmetry, suppose a minimum
  occurs in factor $i$ before a maximum in factor $j$.  The two factors
  are distinct, since each factor has span at most $k$, and
  \[
    a_i+e_i+\sum_{r=i+1}^{j-1}e_r+b_j>k.
  \]
  Delete the intervening factors with nonpositive displacement.  The
  remaining contributions to this rise are positive integers, apart
  from possible zero contributions $a_i+e_i$ and $b_j$, whose factors
  can also be omitted.  Retain factors only until these contributions
  first sum to more than $k$.  This uses at most $k+1$ factors and still
  gives a walk of span greater than $k$, hence a zero product.
  Since $k+1\leq\max\{2k,3k-2\}$ for $k\geq1$, this completes the
  upper bound for arbitrary monoid inputs.
\end{proof}

Together with Corollary~\ref{cor:dyck}, the proposition gives a lower bound
$\sqrt n\,2^{\Omega(\beta_{M_k}(M_k))}$ in the range
$1\leq k\leq\log_2 n$.  Thus a bound with only a polynomial factor in
product breadth cannot hold for all finite aperiodic monoids.

The Dyck monoids $M_k$ have nontrivial $\JJ$-classes, so they are not
$\JJ$-trivial.
Consequently the lower bound does not contradict the ordered-monoid
breadth theorem.  Indeed, a stable partial order in which the identity is
the minimum element forces $\JJ$-triviality: if $x=uyv$ and $y=u'xv'$, insertion monotonicity gives
$y\leq x\leq y$, and hence $x=y$.
For finite $\JJ$-trivial monoids, the bound
$O(\min\{n,\sqrt{n|M|}\})$ already follows from
\thmref{thm:rtrivial}, since $\JJ$-triviality implies
$\RR$-triviality and $\dR(M)\leq |M|-1$.
The capped counters in \propref{prop:jtrivial-log-fails} show that
replacing $|M|$ by $\log|M|$ is nevertheless impossible in general.

\section{A cube-root bound in terms of monoid size}
\label{sec:ags-cuberoot-size}

This section proves a bound whose exponent grows at the cube-root scale
in the monoid size, up to a logarithmic factor.  We use the localized AGS
step from \secref{sec:ags-local} in two ways.  First, we organize the
recursion through regular classes to obtain a bound for finite monoids
of matrices in terms of their matrix dimension.  We then represent a
general monoid by several matrix coordinates: small coordinates are
handled by that bound, while a large coordinate can be recovered using
products in a smaller quotient monoid.  Finally, we recover the
information lost by the matrix coordinates and combine the costs into
a recurrence on monoid size.

Throughout, write $N=|M|$ and $L(n)=2+\log_2(n+2)$.
Our goal is \thmref{thm:ags-cuberoot-size}, proved in
\secref{sec:ags-size-recurrence}: for every finite aperiodic monoid $M$
and every $n\geq1$,
\[
 Q_{1/3}(\operatorname{Prod}_{M,n})
 \leq\min\left\{n,\sqrt n\,L(n)^{C(N\log(N+2))^{1/3}}\right\},
\]
for a universal constant $C$.

The one-element monoid requires no queries, so the constructions below may assume
$N\geq2$.  The Dyck lower bound in Corollary~\ref{cor:dyck} already
shows that an overhead exponential in $N^{1/3}$ is unavoidable in the
relevant range.  After proving the upper bound, we return to this family
to interpret its algebraic structure.

\subsection{Regular actions and their simulation}
\label{sec:ags-actions}

A right action of a monoid $H$ on a set $X$ assigns a state $ya$ to
$y\in X$ and $a\in H$, with $y1=y$ and $(ya)b=y(ab)$.  The orbit
$vH$ consists of the states reachable from $v$.  The action graph has
edges $y\longrightarrow ya$; two states lie in the same strongly
connected component when each is reachable from the other.

We use the order
\[
  [x]_{\JJ}\leq_{\JJ}[y]_{\JJ}\quad\Longleftrightarrow\quad
  MxM\subseteq MyM.
\]
A $\JJ$-class is \emph{regular} if it contains an idempotent.  For a
regular class $K$, let
\[
  h(K)=\max\{k:K=K_0<_{\JJ}K_1<_{\JJ}\cdots<_{\JJ}K_k,
                  \text{all $K_i$ regular}\}.
\]
Thus $h(K)$ counts strict comparisons along a longest chain of regular
classes above $K$.  It differs from $d(m)$ in the AGS induction, which
also counts nonregular classes.
For an idempotent $e$, write
$R_e=\{r:r\mathrel{\RR}e\}$ for its $\RR$-class in the monoid under
discussion.  Its
\emph{killed right action} has live states $R_e$ and one absorbing state
$\dagger$:
\[
  r\mathbin{\cdot}a=
  \begin{cases}
    ra,&ra\mathrel{\RR}e,\\
    \dagger,&ra\not\mathrel{\RR}e.
  \end{cases}
  \qquad \dagger\mathbin{\cdot}a=\dagger.
\]
This is a monoid action: if $ra\not\mathrel{\RR}e$, then no further right
multiple of $ra$ can be $\RR$-equivalent to $e$, because principal right
ideals only decrease under right multiplication.
An $R_e$-axis packet, from any supplied live state $r$, returns either the
exact final live state or the first position at which the trajectory leaves
$R_e$, together with its exact live predecessor.  Here and below ``exact''
describes the value returned on the successful branch; every packet is a
bounded-error quantum algorithm.  The robust-composition convention from
Section~\ref{sec:prelim} supplies the required amplification.  We absorb
fixed powers of $N$ and $L(n)$ caused by finite-output amplification into
the notation $N^{O(1)}L(n)^{O(1)}$.

We first record a consequence of the Sandwich Lemma that replaces the
usual finite-semigroup stability lemma.

\begin{lemma}[Green stability]\label{lem:ags-green-stability}
  Let $H$ be a finite aperiodic monoid.  If $a,b\in H$,
  $a\mathrel{\JJ}b$, and $aH\subseteq bH$, then
  $a\mathrel{\RR}b$.  Dually, if $Ha\subseteq Hb$, then
  $a\mathrel{\LL}b$.
\end{lemma}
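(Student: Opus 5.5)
We argue directly from the Sandwich Lemma (\lemref{lem:sandwich}), proving the right-ideal statement; the left-ideal statement then follows by the symmetric argument, or by applying the first statement to the opposite monoid, which is again finite and aperiodic.

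From $aH\subseteq bH$, write $a=bu$ for some $u\in H$. From $a\mathrel{\JJ}b$ we have $b\in HaH$, so $b=paq$ for some $p,q\in H$. Substituting the first relation into the second gives
\[
 b=p(bu)q=p\cdot b\cdot(uq).
\]
The Sandwich Lemma, applied to this triple, gives $b=b(uq)$. Therefore
\[
 b=(bu)q=aq\in aH,
\]
so $bH\subseteq aH$. Combined with the hypothesis $aH\subseteq bH$, this gives $aH=bH$, that is, $a\mathrel{\RR}b$.

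For the dual statement, write $a=ub$ and $b=paq$. Then $b=(pu)\cdot b\cdot q$, and the Sandwich Lemma gives $b=(pu)b=pa\in Ha$. Hence $Ha=Hb$, so $a\mathrel{\LL}b$.

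There is no real obstacle in this argument. The one point needing care is to substitute $a=bu$ into the $\JJ$-witness so that $b$ itself appears sandwiched between two factors. It is this self-sandwich that lets aperiodicity, through the Sandwich Lemma, replace the usual finiteness-based stability lemma.
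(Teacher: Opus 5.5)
Your proof is correct and follows the paper's approach: substitute one ideal relation into the other and apply the Sandwich Lemma once. The only cosmetic difference is that you sandwich $b$ as $b=p\,b\,(uq)$ and use right absorption to get $b=aq$, whereas the paper sandwiches $a$ as $a=x\,a\,(yu)$ and uses left absorption to get $b=xay=ay$.
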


\begin{proof}
  Write $a=bu$ and, using $a\mathrel{\JJ}b$, write $b=xay$, with all
  factors in $H$.
  Recall the Sandwich Lemma (\lemref{lem:sandwich}): $pqr=q$ implies
  $pq=q=qr$.  Applying it to $a=x a(yu)$ gives $a=xa$, whence
  $b=xay=ay\in aH$.  Thus $aH=bH$.  The other assertion is symmetric.
\end{proof}

Let a finite aperiodic monoid $H$ act on the right of a finite set $X$.
For a strongly connected component $D$, define its killed action on
$D\cup\{\dagger\}$ by retaining $ya$ when it belongs to $D$ and sending
it to $\dagger$ otherwise, with $\dagger a=\dagger$.  Once a trajectory
leaves $D$ it cannot return: every intermediate state on a return path
would belong to the same component.  This verifies the action law.

Fix a base point $v\in D$ and call $u\in H$ a return element if $vu=v$.
Choose an idempotent $e$ whose
$\JJ$-class is minimal among the idempotent powers of return elements.
Here $x^\omega$ denotes an idempotent positive power of $x$, which exists
for every element of a finite monoid.
Such an $e$ exists because $1$ is a return element.  Every positive power
of a return element is again a return element, so $ve=v$.  We call $e$
a \emph{regular owner} of $D$ for this base point.  The next lemma
explains how its regular $\RR$-class simulates the component.

\begin{lemma}[Regular-owner cover]\label{lem:ags-owner-cover}
  The map
  \[
    \pi_e:R_e\longrightarrow D,\qquad \pi_e(r)=vr,
  \]
  is onto.  Extend it by $\pi_e(\dagger)=\dagger$.  Then, for every
  $r\in R_e$ and $a\in H$,
  \[
    ra\in R_e\quad\Longleftrightarrow\quad (vr)a\in D,
    \qquad \pi_e(r\mathbin{\cdot}a)=\pi_e(r)\mathbin{\cdot}a,
  \]
  where the dots denote the two killed actions.  Thus the map
  intertwines these actions, including their transitions to the dead state.
\end{lemma}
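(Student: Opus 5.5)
The plan is to establish the three claims in turn: surjectivity of $\pi_e$, the equivalence $ra\in R_e\iff (vr)a\in D$, and the intertwining identity. The last follows from the first two by unwinding the killed actions.

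First, $\pi_e$ lands in $D$. For $r\in R_e$ write $r=es$ and $e=rt$. Then $vr=ves=vs$ is reachable from $v$. Also $(vr)t=ve=v$, so $v$ is reachable from $vr$. Hence $vr\in D$.

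For surjectivity, take $y\in D$ and write $y=vs$ and $v=yt$. Then $st$ is a return element, and so is $e\,st$, since $ve=v$. The natural candidate is $r=es$. We have $v r=ves=vs=y$, and $r\in eH$. The point is to show $es\mathrel{\RR}e$, i.e.\ $e\in esH$.

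To do this, consider the return element $u=est\,e$. Its idempotent power $f=u^\omega$ is a return idempotent with $f\in eHe$, so $HfH\subseteq HeH$. Minimality of the $\JJ$-class of $e$ among idempotent powers of return elements then gives $f\mathrel{\JJ}e$. Since $fH\subseteq eH$, Green stability (\lemref{lem:ags-green-stability}) gives $f\mathrel{\RR}e$, so $e\in fH\subseteq esH$. Thus $es\in R_e$.

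The subtle point is the direction of minimality. We need that any return idempotent below $e$ in the $\JJ$-order lies in the $\JJ$-class of $e$. If minimality only means ``no strictly smaller class'', comparability holds here because $f\in HeH$, so this works.

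Next comes the equivalence. For $(\Rightarrow)$: if $ra\in R_e$, then $v(ra)=\pi_e(ra)\in D$ by the first step. For $(\Leftarrow)$: suppose $y'=vra\in D$. Choose $t'$ with $y't'=v$. Then $ra t'$ maps $v$ to $v$, and $e\,ra t'$ is a return element lying in $eH$. Set $g=(r a t' e)^\omega$; we can first replace $r$ by $er=r$, valid because $r\in eH$ and $e$ is idempotent. Then $g$ is a return idempotent with $g\in raH\cap HeH$. As before, minimality gives $g\mathrel{\JJ}e$, and stability gives $g\mathrel{\RR}e$. Therefore $e\in gH\subseteq raH\subseteq rH=eH$, so $ra\in R_e$.

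Finally, intertwining. If $ra\in R_e$, both sides equal $vra$. Otherwise both sides are $\dagger$: the left because $r\cdot a=\dagger$, the right because $(vr)a\notin D$ by the equivalence. The dead state is trivially preserved.

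I expect the main obstacle to be the repeated ``minimality plus Green stability'' step. It requires carefully constructing return elements sandwiched between copies of $e$, so that their idempotent powers lie in $eH$ and in $HeH$.
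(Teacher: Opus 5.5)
Your proof is correct and follows essentially the same route as the paper. It uses the same candidate $r=es$ and a return element built from $e$, $s$ and a return path, and it combines $\mathcal{J}$-minimality of $e$ with Green stability, both for surjectivity and for the converse direction of the equivalence. The only cosmetic difference is that you apply Green stability to the idempotent power $f$ (the trailing $e$ you append is unnecessary) and then deduce $e\in fH\subseteq esH$. The paper instead forces equality along $HfH\subseteq HuH\subseteq HrH\subseteq HeH$ and applies stability to $r$ itself.
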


\begin{proof}
  Let $y=va\in D$ and choose $b$ with $y b=v$.  Put $r=ea$ and
  $u=eab$.  Then $vu=v$, so the idempotent power $f=u^\omega$ is a return
  idempotent.  Moreover
  \[
    HfH\subseteq HuH\subseteq HrH\subseteq HeH.
  \]
  The minimal choice of $e$ forces equality throughout.  In particular
  $r\mathrel{\JJ}e$, while $rH\subseteq eH$; hence
  $r\mathrel{\RR}e$ by \lemref{lem:ags-green-stability}.  Finally
  $vr=vea=va=y$, proving surjectivity.

  If $r\mathrel{\RR}e$, then $vr$ is reachable from $v$, and a relation
  $e=rb$ shows that $(vr)b=v$; hence $vr\in D$.  This proves the forward
  implication after replacing $r$ by $ra$.

  Conversely, suppose $(vr)a\in D$, and choose $b$ with $vrab=v$.
  The idempotent power of $rab$ is a return idempotent, and
  \[
    H(rab)^\omega H\subseteq HrabH\subseteq HraH
      \subseteq HrH=HeH.
  \]
  Minimality again forces equality throughout.  Thus
  $ra\mathrel{\JJ}r$ and $raH\subseteq rH$, so Green stability gives
  $ra\mathrel{\RR}r\mathrel{\RR}e$.  The intertwining identity is
  associativity.
\end{proof}

The preceding lemma turns packets on regular $\RR$-classes into packets
for arbitrary finite actions.  Indeed, contract the strongly connected
components of a reachable orbit.  A trajectory visits each component at
most once.  In the current component choose the precomputed lift of its
current state under $\pi_e$ and
run the $R_e$ first-exit packet.  A live answer decodes through $\pi_e$.
On an exit, the returned predecessor and one query to the boundary letter
determine the exact state in the next component.  Since a reachable orbit
of a $q$-element monoid has at most $q$ states, robust sequential
composition gives the following statement.

\begin{lemma}[Action compiler]\label{lem:ags-action-compiler}
  Suppose the regular owners of all live strongly connected components in
  one reachable orbit of a $q$-element aperiodic monoid have axis packets of
  cost at most $B\sqrt\ell$ on every interval of length $\ell\leq n$.
  Then the endpoint of that orbit can be computed in
  \[
    q^{O(1)}L(n)^{O(1)}(B+1)\sqrt n
  \]
  queries, uniformly in the supplied live starting state.  If the action
  has no designated dead state, all components count as live.  When it has
  an absorbing dead state, the same bound returns the first-death position
  together with its exact live predecessor.
\end{lemma}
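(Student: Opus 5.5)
The plan is a sequential phase simulation with at most $q$ phases.

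\textbf{Setup.} Fix the reachable orbit $X=vH$ of the supplied starting state; it has at most $q$ states. Contract it to its DAG of strongly connected components. For each live component $D$ fix, in advance and classically, three pieces of data: a base point $v_D$, a regular owner $e_D$, and for every state $y\in D$ a lift $r_y\in R_{e_D}$ with $v_Dr_y=y$. The lift exists because $\pi_{e_D}$ is onto (\lemref{lem:ags-owner-cover}). This table depends only on $H$ and the action, so it costs no queries.

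\textbf{Phase loop.} The algorithm maintains a current position $i$ (initially $0$) and a current exact state $y\in D$. Each phase does the following.
\begin{enumerate}
\item Run the $R_{e_D}$ axis packet from live state $r_y$ on the interval $x_{i+1}\cdots x_n$.
\item If the packet returns a final live state $r'$, output $v_Dr'$. This is correct by the intertwining identity of \lemref{lem:ags-owner-cover} applied letter by letter.
\item If instead it returns a first exit position $j$ with live predecessor $r'$, the intertwining identity shows two things: $j$ is exactly the first position at which the orbit trajectory leaves $D$, and $v_Dr'$ is the exact state just before it. Query $x_j$ once and set $y\leftarrow (v_Dr')x_j$.
\item If $y$ is the dead state, or lies in a dead component, report death at $j$ with predecessor $v_Dr'$. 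Otherwise $y$ lies in a strictly later component; set $i\leftarrow j$ and continue.
\end{enumerate}
A trajectory never returns to a component it has left, so there are at most $q$ phases. Each phase is an interval packet of cost at most $B\sqrt{n}$ plus one query.

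\textbf{Error control.} Amplify each packet to error $1/(10q)$ by majority vote on its finite output. This costs an $O(\log q)$ factor, absorbed into $q^{O(1)}$. A union bound over at most $q$ phases then gives total error at most $1/10$. The total cost is $O(q\log q\,(B\sqrt n+1))\le q^{O(1)}L(n)^{O(1)}(B+1)\sqrt n$.

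Uniformity in the starting state holds because the table is precomputed for every live state. If the action has no designated dead state, every component is live and the loop ends only at a final live answer.

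\textbf{Main obstacle.} The delicate step is the correctness of the decoding. One must check that the packet's ``first exit from $R_{e_D}$'' coincides with the orbit's first exit from $D$, and not with an earlier spurious exit. This is exactly the equivalence $ra\in R_e\iff (vr)a\in D$ from \lemref{lem:ags-owner-cover}, applied inductively along the prefix. It uses the killed actions on both sides and the fact that leaving is irreversible.

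A secondary point is coherent composition. The next phase's start position and state are classical outputs of the previous phase, so I will run the phases adaptively, measuring between them. This is allowed here because the statement is a query bound, not a dual bound.
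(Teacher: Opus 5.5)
Your proposal is correct and is essentially the paper's own argument. You contract the orbit to its component DAG and run the precomputed-lift $R_e$ first-exit packet in each live component, decoding through $\pi_e$ via the intertwining of \lemref{lem:ags-owner-cover}. You then spend one query on the boundary letter and amplify over at most $q$ phases. One detail to add is an explicit cap of $q$ phases, aborting otherwise, so that the worst-case query bound also holds on branches where an earlier packet erred.
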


\begin{proof}
  Use the component-by-component procedure just described.  At most $q$
  owner packets and at most $q$ boundary-letter queries occur on any
  branch.  Amplify these finite-output calls and apply the robust-composition
  principle.  If the action has an absorbing dead state, stop at the first
  transition to it; no owner packet is invoked for the dead component.
\end{proof}

To connect these action routines to product equality, fix $t\in M$ and put
\[
  A_t=\{p\in M:t\in pM\}.
\]
Totalize $A_t$ by a dead state and define
\[
  p\mathbin{\cdot}a=
  \begin{cases}
    pa,&pa\in A_t,\\
    \dagger,&pa\notin A_t,
  \end{cases}
  \qquad \dagger\mathbin{\cdot}a=\dagger.
  \tag{\(*\)}\label{eq:ags-target-action}
\]

\begin{lemma}[Target-reachability action]\label{lem:ags-target-action}
  Equation~\eqref{eq:ags-target-action} defines a monoid action.  From the
  initial state $1$, its endpoint on $w$ is $[w]$ if
  $t\in[w]M$, and is $\dagger$ otherwise.  Consequently
  \[
    [w]=t
    \quad\Longleftrightarrow\quad
    \text{the endpoint of \eqref{eq:ags-target-action} is $t$}.
  \]
  Moreover, if $D$ is a live component, $p\in D$, and $e$ is the owner
  supplied by \lemref{lem:ags-owner-cover}, then
  \[
    MtM\subseteq MpM\subseteq MeM.
  \]
  In particular, if a regular class $K$ satisfies
  $K<_{\JJ}[t]_{\JJ}$, then $K<_{\JJ}[e]_{\JJ}$ and
  $h([e]_{\JJ})\leq h(K)-1$.
\end{lemma}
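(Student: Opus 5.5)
We check the three assertions of \lemref{lem:ags-target-action} in turn: the action law for \eqref{eq:ags-target-action}, the description of its endpoint, and the ideal containments for live components and their owners.

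\emph{Action law.} The key fact is that $A_t$ is closed under right division: if $pa\in A_t$ then $p\in A_t$, since $t\in paM\subseteq pM$. Fix $p$, $a$, $b$. If $pab\in A_t$, closure gives $pa\in A_t$, so both $(p\cdot a)\cdot b$ and $p\cdot(ab)$ equal $pab$. Otherwise both equal $\dagger$: either $pa\notin A_t$ and we pass through $\dagger$, or $pa\in A_t$ and $pab\notin A_t$. The identity acts trivially, and $\dagger$ is absorbing by definition, so this is a monoid action.

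\emph{Endpoint.} Start at $1\in A_t$ and read $w=x_1\cdots x_n$. By induction on $i$, the state after $i$ letters is $p_i$ while $p_i\in A_t$; once some prefix leaves $A_t$, the state is $\dagger$ forever. Closure under right division shows that $p_n\in A_t$ forces every $p_i\in A_t$. Hence the endpoint is $[w]$ exactly when $t\in[w]M$, and $\dagger$ otherwise. For the equivalence, suppose first that $[w]=t$. Then $t\in tM$, so the endpoint is $t$. Conversely, if the endpoint is the live state $t$, it equals $[w]$.

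\emph{Ideal containments.} Let $D$ be a live strongly connected component, $p\in D$, and $v\in D$ the base point with owner $e$. Since $p\in A_t$, we have $t\in pM$, which gives $MtM\subseteq MpM$. By \lemref{lem:ags-owner-cover}, $\pi_e$ is onto, so $p=vr$ for some $r\in R_e$. Then $p\in Mr$, and $r\mathrel{\RR}e$ gives $MrM=MeM$. Together these yield $MpM\subseteq MrM=MeM$.

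\emph{Height.} Suppose the regular class $K$ satisfies $K<_{\JJ}[t]_{\JJ}$. The containments just proved give $[t]_{\JJ}\leq_{\JJ}[e]_{\JJ}$, so $K<_{\JJ}[e]_{\JJ}$. The class $[e]_{\JJ}$ is regular, because it contains the idempotent $e$. Take a longest chain of regular classes above $[e]_{\JJ}$ and prepend $K$; this gives a chain from $K$ that is longer by one. Therefore $h(K)\geq h([e]_{\JJ})+1$.

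The only step needing care is the ideal containment, where the surjectivity of $\pi_e$ from \lemref{lem:ags-owner-cover} supplies $r\in R_e$ with $vr=p$. Everything else is direct unwinding of the definitions.
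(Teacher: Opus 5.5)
Your proof is correct and follows essentially the same route as the paper. The action law rests on leaving $A_t$ being irreversible: your closure of $A_t$ under right division is the contrapositive of the paper's observation $pabM\subseteq paM$. The containments $MtM\subseteq MpM\subseteq MeM$ come from $t\in pM$ together with the surjectivity of $\pi_e$ in \lemref{lem:ags-owner-cover}, and your chain-prepending argument for $h([e]_{\JJ})\le h(K)-1$ just spells out what the paper calls ``strictness and the definition of regular height.''
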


\begin{proof}
  If $pa\notin A_t$, then $t\notin paM$, and
  $pabM\subseteq paM$ shows that no continuation can return to $A_t$.
  This proves the action law, and induction on the word proves the endpoint
  formula.  Since $t\in tM$, the equality test follows.

  Liveness of $p$ gives $t=pz$ for some $z$, so
  $MtM\subseteq MpM$.  Let $v\in D$ be the base point used to choose
  the owner $e$.  By \lemref{lem:ags-owner-cover}, $p=vr$ for some
  $r\mathrel{\RR}e$.  Then $r\in eM$, so $p\in MeM$ and hence
  $MpM\subseteq MeM$.  The last assertion follows from strictness and
  the definition of regular height.
\end{proof}

\subsection{Recursion on regular height}
\label{sec:ags-regular-height}

We now build the regular-axis packets by induction on regular height,
using the query form of \lemref{lem:ags-local-step}.  For a horizon $n_0$, let
\[
  D_h(n_0)=\max
  \left\{\frac{T_e(r,\ell)}{\sqrt\ell}:
  \begin{array}{l}
    1\leq\ell\leq n_0,\ e^2=e,\ h([e]_{\JJ})\leq h,\\
    r\in R_e
  \end{array}\right\},
\]
where $T_e(r,\ell)$ is the cost of the $R_e$-axis packet from $r$.

\begin{proposition}[Regular-axis recurrence]\label{prop:ags-axis-recurrence}
  There is an absolute constant $c_1$ such that
  \[
    D_0(n_0)\leq N^{c_1}L(n_0)^{c_1},
    \qquad
    D_h(n_0)\leq
      N^{c_1}L(n_0)^{c_1}\bigl(D_{h-1}(n_0)+1\bigr)
      \quad(h\geq1).
    \tag{\(**\)}\label{eq:ags-axis-recurrence}
  \]
  Consequently
  \[
    D_h(n_0)\leq N^{O(h+1)}L(n_0)^{O(h+1)}.
    \tag{\(\dagger\)}\label{eq:ags-axis-solved}
  \]
\end{proposition}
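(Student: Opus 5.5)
The plan is to build an $R_e$-axis packet from product-equality tests and then bound those tests by \lemref{lem:ags-local-step}, using \lemref{lem:ags-target-action} to see that every recursive target lies strictly higher in the regular order. Fix an idempotent $e$ with $h([e]_{\JJ})\leq h$, a start $r\in R_e$, and an interval of length $\ell\leq n_0$. Since right ideals only shrink, the killed trajectory $r\mathbin{\cdot}x_1\cdots x_i$ stays live on an initial segment. First I find the last live index $i^*$ by binary search over $O(L(n_0))$ steps. The predicate "$rx_1\cdots x_i\in R_e$" holds iff $[x_1\cdots x_i]\in\{s: rs\mathrel{\RR}e\}$, which is a union of at most $N$ equality tests $[w]=s$. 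Then I recover the exact live state, or the exit predecessor, by testing $[x_1\cdots x_{i^*}]=s$ over the same candidates. One further query reads the exit letter. With amplification this costs $N^{O(1)}L(n_0)^{O(1)}$ times the cost of one equality test $[w]=s$ on a prefix of length at most $\ell$.

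For these equality tests I will not recurse on $P_s$ directly. Instead I compute the target-reachability action \eqref{eq:ags-target-action} for $t=s$ via the action compiler, \lemref{lem:ags-action-compiler}. Its cost is $N^{O(1)}L^{O(1)}(B+1)\sqrt\ell$, where $B$ bounds the axis packets of the owners of live components. When $r\in R_e$ and $rs\mathrel{\RR}e$, we get $MeM\subseteq MsM$.

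The obstacle is that these owners may lie in the $\JJ$-class of $e$ itself, which gives no descent. I would handle this by splitting owners by their class. If an owner $e'$ satisfies $[e]_{\JJ}<_{\JJ}[e']_{\JJ}$, its packet costs $D_{h-1}$. If $[e']_{\JJ}=[e]_{\JJ}$, I would not call a packet. Instead I would apply \lemref{lem:ags-local-step} to the membership problems $P_s$ for $s$ in the class of $e$: its hypotheses only require strict-parent tests, and every strict parent $s'$ of an element of the regular class $[e]_{\JJ}$ is handled by the compiler with owners $e''$ satisfying $[s']_{\JJ}\leq_{\JJ}[e'']_{\JJ}$, hence strictly above $[e]_{\JJ}$ in the regular order.

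Owners that are nonregular-adjacent are the delicate point. The owner of a component containing a strict parent $s'$ is regular and lies above $s'$, hence strictly above $e$, so its height is at most $h-1$. Thus all strict-parent tests cost $N^{O(1)}L^{O(1)}(D_{h-1}+1)\sqrt\ell$. One application of the query form of \lemref{lem:ags-local-step} then gives the same-class equality tests with one more factor $N^{c_0}L(n_0)^{c_0}$, and the binary search above yields \eqref{eq:ags-axis-recurrence}.

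For $h=0$ there is no regular class above, so I need a direct base case. The class of $1$ is the only regular class at height $0$: any regular class lies below $[1]_{\JJ}$, and a height-$0$ class has nothing above it. Hence $e=1$ and $R_1=\{1\}$ by \propref{prop:identity}, and the packet is a Grover search for the first non-identity letter. Taking the maximum of the constants then gives $c_1$, and unrolling the recurrence gives \eqref{eq:ags-axis-solved}.
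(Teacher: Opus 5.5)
Your argument is correct and uses the same ingredients as the paper: binary search for the first exit, a one-hot family of equality tests, the localized AGS step, and the target-reachability action with the action compiler for the strict-parent tests, whose owners have height at most $h-1$ by \lemref{lem:ags-target-action}. Where you differ is in how you reduce the axis packet to equality tests.

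The paper fixes the incoming state $c\in R_e$ and tests $c[w]=s$ only for $s\in R_e$, treating $c$ as a known prepended letter of a virtual input. Every target then lies in $[e]_{\JJ}$. The only extra observation needed is that an infix containing $c$ has product in $McM=MsM$, so its strict-parent tests are constantly false.

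You instead test raw prefix products against every $s$ with $rs\mathrel{\RR}e$. These targets can lie strictly above $[e]_{\JJ}$, including nonregular classes and $s=1$. That is what forces your case split: strictly higher targets go straight through the action compiler, and same-class targets go through the localized AGS step. The split is really by the class of the target, not of the owner. Every owner lies above its target, so an owner in $[e]_{\JJ}$ can only arise for a target in $[e]_{\JJ}$, and you never send such a target to the compiler.

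Your version avoids the virtual input with a known letter; the paper's keeps a single uniform family of targets inside $R_e$. Your plain prefix binary search costs an extra $O(\log n_0)$ factor compared with the paper's difference-block search, whose queried lengths have square roots summing to $O(\sqrt n)$. That loss is harmless here, since it is absorbed into $L(n_0)^{c_1}$. In the base case, use first-marked (minimum-finding) search rather than plain Grover, since the packet must return the first exit position; the cost is still $O(\sqrt\ell)$.
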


\begin{proof}
  The only regular class of height zero is $[1]_{\JJ}$.  Indeed
  $[1]_{\JJ}$ is the
  unique largest $\JJ$-class, and Proposition~\ref{prop:identity} implies
  that it contains only $1$.  Its $\RR$-class is also $\{1\}$.  The first
  exit is therefore the first nonidentity letter, which costs
  $O(\sqrt n)$ by minimum finding.  This proves the base case.

  Now let $h\geq1$ and assume the result at height $h-1$.  Fix $e$ of
  height at most $h$.  If $h([e]_{\JJ})=0$, use the base case.  Otherwise
  fix an incoming state $c\in R_e$ and a target $s\in R_e$; in particular
  $s\ne1$.  The predicate
  \[
    c[w]=s
  \]
  is the ordinary product-equality predicate $P_s$ after prepending the
  known letter $c$.  Apply the construction of \lemref{lem:ags-local-step}
  to this virtual input.  Its recursive targets $t$ satisfy
  $MsM\subsetneq MtM$.  An infix containing the initial $c$ has product
  in $McM=MsM$, so its equality test for such a $t$ is constantly false.
  All other recursive calls use only original unknown positions and
  retain the horizon $n_0$.  The known coordinate costs no query, and
  increasing the virtual length from $\ell$ to $\ell+1\leq2\ell$ changes
  only an absolute factor for $\ell\geq1$.

  To supply the strict-parent tests, use the target-reachability action
  of \lemref{lem:ags-target-action}.  Every one of its live component
  owners has height at most $h-1$, so
  \lemref{lem:ags-action-compiler} and the definition of $D_{h-1}$
  compute $P_t$ with normalized cost
  \[
    N^{O(1)}L(n_0)^{O(1)}(D_{h-1}(n_0)+1).
  \]
  The localized AGS step therefore computes $c[w]=s$ within the
  same form of bound.

  Run these tests for all $s\in R_e$.  Their answers are one-hot: the true
  coordinate, if one exists, is the exact live endpoint, and if none exists
  the killed endpoint is $\dagger$.  There are at most $N$ coordinates, so
  finite-output amplification changes only the polynomial factor in
  \eqref{eq:ags-axis-recurrence}.

  It remains to recover the first exit.  Right ideals decrease along a
  right trajectory, so after leaving $R_e$ the trajectory cannot return.
  Compute the endpoint of the whole interval.  On a dead answer maintain a
  known live cut $\ell$, its exact state, and a known dead upper cut $u$.
  Query the killed endpoint on the difference block from $\ell$ to the
  midpoint of $[\ell,u]$.  A live answer advances $\ell$ and retains the
  new exact state; a dead answer lowers $u$.  To handle integer rounding,
  put $g=u-\ell-1$.  Either child gap has its new value of $g$ at most
  $g/2$, while the queried block has length
  $\lfloor(u-\ell)/2\rfloor\leq g$ whenever $g\geq1$.
  After the initial whole-interval call, the queried lengths are
  therefore bounded by $n-1,(n-1)/2,(n-1)/4,\ldots$.
  The sum of their square roots is $O(\sqrt n)$.  When the gap has one
  letter, the retained state is the exact predecessor and one query gives
  the exiting letter.  Amplify each of the $O(\log n)$ adaptive calls to
  error $O(1/\log(n+2))$; its $O(\log\log(n+2))$ overhead is absorbed by
  $L(n_0)^{O(1)}$.  This proves the recurrence.  To solve it, put
  $A=N^{c_1}L(n_0)^{c_1}$ and enlarge $c_1$ so that $A\geq2$.  Then
  \[
    D_h(n_0)+1\leq2A\bigl(D_{h-1}(n_0)+1\bigr),
  \]
  and iteration proves \eqref{eq:ags-axis-solved}.
\end{proof}

\begin{proposition}[Regular-action compiler by height]
\label{prop:ags-action-height}
  If all live component owners in an action have regular height at most $h$, then
  its endpoint and first-death packet cost
  \[
    q^{O(h+1)}L(n)^{O(h+1)}\sqrt n,
    \tag{\(\ddagger\)}\label{eq:ags-action-height}
  \]
  where $q$ is the order of its transition monoid.
\end{proposition}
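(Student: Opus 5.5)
The plan is to combine the action compiler (\lemref{lem:ags-action-compiler}) with the solved regular-axis recurrence \eqref{eq:ags-axis-solved}. The only mismatch is that the axis packets were bounded in terms of $N=|M|$ for the monoid $M$ under discussion, while the statement is phrased in terms of the order $q$ of the action's transition monoid. So I will first pass from the given action to its transition monoid. Let $H$ be that transition monoid, of order $q$, with a query letter acting through its image in $H$. Computing the endpoint of the action on $w$ from a live state $y$ only requires simulating $y\cdot[w]_H$. By division monotonicity (\propref{prop:division-monotonicity}), or simply by composing queries with the lookup $a\mapsto\bar a\in H$, we may work with the action of $H$ on $X$ at the price of a constant factor per query. $H$ is again finite and aperiodic, being the transition monoid of an action of an aperiodic monoid; more to the point, it is the object on which we apply the previous results with $N$ replaced by $q$.

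Next, I will apply \propref{prop:ags-axis-recurrence} to $H$ itself, with horizon $n_0=n$. For every idempotent $e\in H$ with $h([e]_{\JJ})\le h$, computed in $H$, it gives $R_e$-axis packets of cost at most $D_h(n)\sqrt\ell\le q^{O(h+1)}L(n)^{O(h+1)}\sqrt\ell$ on every interval of length $\ell\le n$. The hypothesis says that every live component owner has regular height at most $h$. These owners are idempotents of $H$ chosen as in \lemref{lem:ags-owner-cover}, so their packets satisfy the hypothesis of \lemref{lem:ags-action-compiler} with $B=q^{O(h+1)}L(n)^{O(h+1)}$. The compiler then yields endpoint cost
\[
 q^{O(1)}L(n)^{O(1)}(B+1)\sqrt n=q^{O(h+1)}L(n)^{O(h+1)}\sqrt n .
\]
When a dead state is present, it returns the first-death position together with its exact live predecessor. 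Since a reachable orbit has at most $q$ states, the compiler's bound is uniform over starting states.

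The main obstacle is making sure regular height is measured in the right monoid. The hypothesis must refer to heights in $H$, the monoid whose Green structure the owner-cover lemma uses. If instead the action is by a larger monoid $M$ factoring through $H$, I would argue that a chain of regular $\JJ$-classes in $H$ above $[e]$ lifts to one in $M$, because idempotents lift along surjections of finite monoids and strict $\JJ$-order pulls back. Then height in $H$ is at most height in $M$, and the stated bound follows. I would also check that the ``uniform in the starting state'' amplification, which handles at most $q$ sequential packet calls, contributes only $q^{O(1)}L(n)^{O(1)}$, which the exponent $O(h+1)$ absorbs.
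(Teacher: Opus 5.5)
Your proposal is correct and is essentially the paper's proof, which simply combines \lemref{lem:ags-action-compiler} with the solved recurrence \eqref{eq:ags-axis-solved} of Proposition~\ref{prop:ags-axis-recurrence} applied to the $q$-element monoid. Your passage to the transition monoid just makes explicit that $N$ becomes $q$; in the paper's application to the right Cayley action of $T$, the two monoids coincide.

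One caution concerns your optional lifting remark. Regular height can increase under a quotient for an arbitrary idempotent. In the semilattice $\{1,e,f,0\}$ with $ef=0$, the map sending $e,0\mapsto\bar e$ and $f\mapsto\bar f$ onto the chain $1>\bar f>\bar e$ raises the height of $[e]$ from $1$ to $2$. So your argument works only for owners, and it must use that the return submonoid in the larger monoid is a full preimage. With that, the top-down lift of a chain ends at a return idempotent, which lies $\JJ$-above the owner's minimum return class.
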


\begin{proof}
  Combine \lemref{lem:ags-action-compiler} with
  Proposition~\ref{prop:ags-axis-recurrence}.
\end{proof}

\subsection{A dimension-sensitive matrix bound}
\label{sec:ags-matrix-bound}

The localized AGS construction depends only on the height of chains of
regular $\JJ$-classes.  In a matrix representation, ordinary matrix rank
bounds this height.

\begin{lemma}[Regular matrix-rank ceiling]
\label{lem:ags-matrix-rank-ceiling}
  Let $T\subseteq M_d(\mathbb Q)$ be a finite aperiodic matrix monoid.
  If $a,b\in T$ are regular, $b\in TaT$, and
  $\operatorname{rank}(a)=\operatorname{rank}(b)$, then
  $a\mathrel{\JJ}b$ in $T$.  Consequently every strict chain of regular
  $\JJ$-classes in $T$ has at most $d+1$ classes and at most $d$ strict
  steps.
\end{lemma}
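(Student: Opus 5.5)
The plan is to prove the first assertion through the regularity hypothesis combined with the Green stability lemma (\lemref{lem:ags-green-stability}). From the rank condition we will extract equalities of images and kernels. The chain bound then follows because rank is nonincreasing along the $\JJ$-order.

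Since $b\in TaT$, write $b=xay$ with $x,y\in T$. Let $e\in T$ be an idempotent with $e\mathrel{\RR}a$; regularity of the $\JJ$-class lets us choose $e$ in the class, and in a finite monoid a regular $\JJ$-class has an idempotent in every $\RR$- and $\LL$-class. Then $a=ea$ and $e=aa'$ for some $a'$, so $\operatorname{rank}(e)=\operatorname{rank}(a)$, and the same holds on the $\LL$ side.

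The key step is to show $xa\mathrel{\LL}a$ in $T$. Consider $xa$. We have $\operatorname{rank}(b)\le\operatorname{rank}(xa)\le\operatorname{rank}(a)=\operatorname{rank}(b)$, so all three are equal. Left multiplication by $x$ is therefore injective on the column space of $a$. I would convert this linear-algebra fact into monoid membership, namely $a\in T\,xa$, using aperiodicity and finiteness of $T$. To do this, consider the map $\phi:TaT\ni z\mapsto$ (restriction of $x$ to $\operatorname{im}(z)$). I expect the cleanest route goes through idempotents instead. Take an idempotent $f\mathrel{\LL}b$, which exists by regularity of $b$, with $f=b'b$. Then $f=b'xay$, so $\operatorname{rank}(f)=\operatorname{rank}(b)$. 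Now look at $g=(ayb'x)$, an element of $T$ with $a\,g=ayb'xa$; powers give $a g^k$. The idempotent power $g^\omega$ satisfies $\operatorname{rank}(ag^\omega)\ge\operatorname{rank}(f)$ because $b'x\,(ag)^k\,ay = b'xa\,(yb'xa)^k y$ involves $f^{k+1}=f$. Thus $\operatorname{rank}(ag^\omega)=\operatorname{rank}(a)$.

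A rank-preserving idempotent acting on the row space of $a$ must fix it: $ag^\omega$ and $a$ have the same kernel, hence the same rank. Since $ag^\omega a$-type identities give $\ker(ag^\omega)\supseteq\ker$ of…, the aim is to conclude $ag^\omega=a$ exactly. I plan to get this by writing $a=ae'$ for an idempotent $e'\mathrel{\LL}a$ and applying the Sandwich Lemma to $a=a\,g^\omega\,\cdots$. Once $a=a\,(yb'x)\,a\cdots$ is established, we obtain $a\in TbT$, so $MaM=MbM$, i.e. $a\mathrel{\JJ}b$.

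This conversion of equal ranks into an actual factorization $a\in TbT$ is the main obstacle. Rank equality gives only a linear isomorphism, while $\JJ$-equivalence requires witnesses inside the finite monoid $T$. The fallback is the standard argument that, in a finite monoid, $b\le_{\JJ}a$ with $b$ regular and $b\notin J_a$ forces the maximal subgroup of $J_b$ to act with strictly smaller rank, via the Rees--Suschkewitsch structure.

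For the consequence, take a strict regular chain $K_0<_{\JJ}\cdots<_{\JJ}K_k$. Rank is weakly decreasing downward because $TbT\subseteq TaT$ gives $\operatorname{rank}b\le\operatorname{rank}a$. By the first part, consecutive classes have distinct ranks, hence strictly decreasing ranks in $\{0,\dots,d\}$. This gives at most $d+1$ classes.
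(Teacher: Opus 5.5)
\textbf{There is a genuine gap.} Your argument for the chain consequence is fine. The first assertion, however, is not proved. Converting $\operatorname{rank}(a)=\operatorname{rank}(b)$ into a factorization $a\in TbT$ is exactly the step you leave as ``the main obstacle,'' and the route you sketch does not close it.

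\emph{Why your route fails.} The principle ``a rank-preserving idempotent acting on the row space of $a$ must fix it'' is false. Take $a=E_{11}$ and the idempotent $P=\begin{pmatrix}1&1\\0&0\end{pmatrix}$: then $\operatorname{rank}(aP)=\operatorname{rank}(a)$, yet $aP\neq a$. The Sandwich Lemma cannot rescue this, because it needs an identity of the form $a=paq$ as input, which is precisely what is missing. Also, the computation $ag=ayb'xa$ does not match your definition $g=ayb'x$.

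\emph{The wrong ingredients.} Aperiodicity and finiteness cannot be what turns the linear-algebra fact into monoid membership. In the aperiodic monoid $\{I,E_{12},E_{23},E_{13},0\}\subseteq M_3(\mathbb Q)$ we have $E_{13}=E_{12}E_{23}\in TE_{12}T$ with equal ranks, yet $E_{12}\notin TE_{13}T$. The conversion must come from an inner inverse supplied by regularity. Your maximal-subgroup fallback is not an argument here, since every maximal subgroup of an aperiodic monoid is trivial.

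\emph{The missing idea: an inner inverse acts as the identity on the relevant image or row space.} The paper puts $c=xa$. Equal ranks give $\operatorname{im}(c)=\operatorname{im}(b)$ and $\operatorname{row}(c)=\operatorname{row}(a)$.
\begin{itemize}
\item With $b^-\in T$ satisfying $bb^-b=b$, the map $bb^-$ fixes $\operatorname{im}(b)=\operatorname{im}(c)$. Hence $c=bb^-c$, and together with $b=cy$ this gives $c\mathrel{\RR}b$.
\item So $c$ is regular. For an inner inverse $c^-$, right multiplication by $c^-c$ fixes $\operatorname{row}(c)=\operatorname{row}(a)$. Hence $a=ac^-c$, and $a\mathrel{\LL}c$.
\end{itemize}

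\emph{Your own $b'$ also works if used this way.} Since $\operatorname{rank}(ay)=\operatorname{rank}(a)$, right multiplication by $y$ is injective on $\operatorname{row}(a)$, and $\operatorname{row}(ay)=\operatorname{row}(b)$ is fixed on the right by $b'b$. So for $p\in\operatorname{row}(a)$,
\[
(pyb'xa)\,y=pyb'b=py,
\]
hence $pyb'xa=p$. This gives $a=ayb'xa$, and substituting once more yields $a=(ayb')\,b\,(b'xa)\in TbT$. This route needs no idempotent powers, no Sandwich Lemma, and no aperiodicity, and it uses only the regularity of $b$.
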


\begin{proof}
  Write $b=xay$ and put $c=xa$.  Then
  \[
    \operatorname{rank}(b)\leq\operatorname{rank}(c)
      \leq\operatorname{rank}(a),
  \]
  so all three ranks are equal.  Thus
  $\operatorname{im}(b)=\operatorname{im}(c)$.  Choose an inner inverse
  $b^{-}\in T$, so $bb^{-}b=b$.  The map $bb^{-}$ is the identity on
  $\operatorname{im}(b)=\operatorname{im}(c)$, and hence
  \[
    c=bb^{-}c,
    \qquad b=cy.
  \]
  Therefore $c\mathrel{\RR}b$ in $T$, and in particular $c$ is regular.

  Similarly, $\operatorname{row}(c)\subseteq\operatorname{row}(a)$ and
  equality of ranks gives equality of row spaces.  If $c^{-}\in T$ is an
  inner inverse of $c$, right multiplication by $c^{-}c$ fixes every row
  in $\operatorname{row}(c)=\operatorname{row}(a)$.  Hence
  \[
    a=ac^{-}c,
    \qquad c=xa,
  \]
  and $a\mathrel{\LL}c$.  We conclude that
  $a\mathrel{\JJ}c\mathrel{\JJ}b$.

  Matrix rank is monotone along the $\JJ$-preorder.  The first assertion
  shows that it changes strictly at every strict comparison between
  regular classes.  The possible ranks are $0,1,\ldots,d$, proving the
  last assertion.
\end{proof}

Combining \lemref{lem:ags-matrix-rank-ceiling} with
Proposition~\ref{prop:ags-action-height}, applied to the right Cayley action of $T$,
gives
\begin{equation}
  Q_{1/3}(\operatorname{Prod}_{T,n})
  \leq \sqrt n\,|T|^{O(d+1)}L(n)^{O(d+1)}.
  \label{eq:ags-matrix-image-compiler}
\end{equation}
Long chains of nonregular $\JJ$-classes cause no additional loss.  The
target-reachability action of \lemref{lem:ags-target-action} charges their
recurrent components to the regular owners supplied by
\lemref{lem:ags-owner-cover}.

\subsection{Principal factors and simple coordinates}
\label{sec:ags-simple-coordinates}

We now construct matrix representations to which the preceding bound
applies.  The relevant representations are indexed by the nonzero regular
$\JJ$-classes.  We first describe the multiplication associated with one
such class.

For a semigroup $S$ and a nonempty two-sided ideal $I$, the \emph{Rees
quotient} $S/I$ identifies all elements of $I$ with one absorbing zero
and keeps distinct elements outside $I$ distinct.  Multiplication is
inherited from $S$, with any result in $I$ replaced by zero.  If $S$ is
a monoid and $I$ is proper, the quotient is a monoid with the same identity.

For a $\JJ$-class $J$ of $M$, its \emph{principal factor} $J^0$ is $J$
with a new zero adjoined and multiplication
\[
  x\circ y=
  \begin{cases}
    xy,&xy\in J,\\
    0,&xy\notin J.
  \end{cases}
\]
Equivalently, put $\mathcal I_J=MJM$ and
$\mathcal I_{<J}=\mathcal I_J\setminus J$.  When
$\mathcal I_{<J}$ is nonempty, the principal factor is
$\mathcal I_J/\mathcal I_{<J}$; otherwise we adjoin a zero.

Suppose now that $J$ is regular.  Index its $\RR$-classes by $I_J$ and
its $\LL$-classes by $\Lambda_J$.  The standard Rees description of a
regular principal factor, together with the triviality of the maximal
subgroups in an aperiodic monoid, identifies its nonzero elements with
pairs $(i,\nu)\in I_J\times\Lambda_J$.  There is a $0$--$1$ matrix
$P_J$, with rows indexed by $\Lambda_J$ and columns by $I_J$, such that
\[
  (i,\nu)\circ(j,\mu)=
  \begin{cases}
    (i,\mu),&(P_J)_{\nu j}=1,\\
    0,&(P_J)_{\nu j}=0.
  \end{cases}
\]
This is the \emph{sandwich matrix}: its entry records whether the product
stays in $J$.  If $X$ and $Y$ are coefficient matrices of linear
combinations of these pairs, their product has coefficient matrix
$XP_JY$.  The ordinary matrix rank
$d_J=\operatorname{rank}_{\mathbb Q}(P_J)$ will be the size of the
associated matrix representation.  The full matrix algebra of that size
has vector-space dimension $d_J^2$.

The monoid algebra $\mathbb Q[M]$ consists of rational linear
combinations of monoid elements, with multiplication extended linearly.
If a monoid has a zero, its \emph{contracted rational algebra} is obtained
by identifying the basis element
$0$ with the zero vector.  If $M$ has no zero, one may equivalently adjoin
a zero and then contract it.  In either case we obtain a unital algebra
$A$ of dimension at most $|M|$ into which the elements of $M$ embed
multiplicatively (with the monoid zero, if present, represented by the
zero vector).

The following form of Munn--Ponizovski\u\i{} theory
\cite{Munn55,Munn57} supplies the matrix coordinates and describes the
information that they lose.  A nonzero class means any class other than
$\{0\}$ when $M$ has a zero; if it has no zero, all regular classes are
included.

\begin{lemma}[Aperiodic Munn--Ponizovski\u\i{} coordinates]
\label{lem:ags-munn-decomposition}
  Let $M$ be a nontrivial finite aperiodic monoid and let $A$ be its contracted
  rational algebra.  For every nonzero regular $\JJ$-class $J$ of $M$, let
  $P_J$ be the $0$--$1$ sandwich matrix of its principal factor and
  $d_J=\operatorname{rank}_{\mathbb Q}(P_J)$.  There are representations
  $\rho_J:M\to M_{d_J}(\mathbb Q)$, one for each such $J$, with the
  following properties.
  \begin{enumerate}
    \item The linear extension $A\to M_{d_J}(\mathbb Q)$ of $\rho_J$ is
      surjective, and $\rho_J$ annihilates every element strictly below
      $J$ in the $\JJ$-order.
    \item The joint map
      \[
        \pi=(\rho_J)_J:A\longrightarrow\prod_{J\text{ nonzero regular}}
          M_{d_J}(\mathbb Q)
      \]
      has a nilpotent kernel $R$, of nilpotency index at most
      $\dim A\leq|M|$.
    \item $d_J^2\leq|J|$ and $\sum_J d_J^2\leq|M|$.
  \end{enumerate}
\end{lemma}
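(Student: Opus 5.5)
I would build the representations $\rho_J$ directly from the Rees sandwich description and then obtain the three properties by elementary linear algebra, avoiding any appeal to general Wedderburn theory.

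\emph{Construction of $\rho_J$.} Fix a nonzero regular class $J$. Let $V_J=\mathbb Q^{I_J}$ with basis $\{e_i\}$. Define a left action of $M$ on $V_J$ by $s\cdot e_i=e_j$ if $s$ maps the $\RR$-class $i$ into the $\RR$-class $j$ of $J$, that is, if $s(i,\nu)\in J$ and equals $(j,\nu)$. Otherwise set $s\cdot e_i=0$. Green's lemma (in the form of \lemref{lem:ags-green-stability}) shows that this is independent of $\nu$ and defines an action. This is the left Schützenberger action, and it is annihilated by $\mathcal I_{<J}$.

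The map $P_J:V_J\to\mathbb Q^{\Lambda_J}$ is $M$-equivariant for the dual action. So its image, equivalently $V_J/\ker P_J$, is an $M$-module of dimension $d_J$. I would take $\rho_J$ to be the action on $U_J=V_J/\ker P_J$.

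\emph{Property 1.} Surjectivity onto $\operatorname{End}(U_J)$ comes from the elements $(i,\nu)\in J$. These act on $V_J$ as the rank-one maps $e_j\mapsto (P_J)_{\nu j}e_i$, that is, as $e_iP_J^{(\nu)}$ where $P_J^{(\nu)}$ is row $\nu$. Their span is $V_J\otimes\operatorname{row}(P_J)$, which induces all of $\operatorname{End}(V_J/\ker P_J)$. Elements strictly below $J$ kill $V_J$ by construction.

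\emph{Property 3.} From $d_J\le\min(|I_J|,|\Lambda_J|)$ we get $d_J^2\le|I_J||\Lambda_J|=|J|$, using aperiodicity so that $\HH$-classes are singletons. Summing over disjoint classes gives $\sum_J d_J^2\le|M|$.

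\emph{Property 2 (main obstacle).} Nilpotence of $R=\ker\pi$ is the hard step. I would filter $A$ by the ideals spanned by $\mathcal I_J$, following a linear extension of the $\JJ$-order. This gives a chain
\[
0=A_0\subset A_1\subset\cdots\subset A_t=A,
\]
where each $A_k/A_{k-1}$ is the contracted algebra of a principal factor $J^0$.

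For a nonregular $J$, $J^0$ is null, so $A_k/A_{k-1}$ has square zero in $A/A_{k-1}$.

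For a regular $J$, the ideal $(A_k/A_{k-1})$ is isomorphic to the Munn algebra with sandwich $P_J$, and its radical is
\[
\{X:P_JXP_J=0\}.
\]
Elements of $R$ act as zero on $U_J$. I would show that every $r\in R$ multiplies $A_k$ into $A_{k-1}+\operatorname{rad}$, so that each layer is killed by a bounded number of factors from $R$.

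The cleanest route is this. Consider the regular $A$-module $A$ with a composition series refining the filtration. Every composition factor is either a trivial (zero-action) module or isomorphic to some $U_J$. The case of a trivial factor comes from null layers and from the radical parts $\ker P_J$ and $\operatorname{coker}$. Establishing this is where the care goes.

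Since $R$ annihilates every composition factor, $R$ lowers the composition series by at least one step. Hence $R^{\ell}=0$, where $\ell\le\dim A\le|M|$ is the length of the series.
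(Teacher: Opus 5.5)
Your construction of $\rho_J$ and your proofs of items 1 and 3 are fine. The module $V_J/\ker P_J$ is the coordinate-free form of the paper's $\rho_J(m)=D\lambda(m)D^-$. The argument for item 2, however, has a genuine gap.

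\emph{The composition-factor claim is false as stated.} $A$ is unital, so $1$ acts as the identity on every composition factor of the regular module, and no factor is ``zero-action.'' The factors coming from null layers and from $\ker P_J$ are annihilated by their own layer, but not by $1$ or by elements of higher classes. Already for $M=\{1,a,0\}$ with $a^2=0$, the null layer $\mathbb Q a$ is a copy of $U_{\{1\}}$. To see that $R$ kills such factors, you would need the classification half of Munn--Ponizovski\u\i{} theory: every simple $A$-module is isomorphic to $U_{J'}$ for its apex $J'$. That statement is equivalent in strength to what you are proving. An ideal kills all simple modules iff it lies in the Jacobson radical, iff (in finite dimension) it is nilpotent. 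So deferring it to ``where the care goes'' leaves the main step unproved.

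\emph{Your first sub-route does not close either.} It is true that $\rho_J(r)=0$ gives $P_J\lambda(r)=0$, so $rA_k\subseteq A_{k-1}+\{X:P_JX=0\}$. But a second factor $r'\in R$ acts on that subspace through $\lambda(r')$ restricted to $\ker P_J$, which $\rho_J(r')=0$ does not control. So no bounded number of left factors from $R$ is forced to vanish.

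\emph{The missing idea is to filter $R$ itself and multiply on both sides within a layer; this is what the paper does.} Put $K_k=R\cap A_k$ and let $x\in K_k$ have $J_k$-coefficient matrix $X$. Its lower components lie in classes not above $J_k$, so they act as zero on $V_{J_k}$. Hence $x$ acts on $V_{J_k}$ as $XP_{J_k}$, and $\rho_{J_k}(x)=0$ says exactly $P_{J_k}XP_{J_k}=0$. For $y,y'\in A_k$, the product $yxy'$ therefore has coefficient matrix $YP_{J_k}XP_{J_k}Y'=0$ modulo $A_{k-1}$, and null layers give zero directly. Thus $A_kK_kA_k\subseteq A_{k-1}$, so $K_k^3\subseteq K_{k-1}$ and $R^{3^t}=0$. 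The bound $\dim A$ on the index then follows because the powers of the proper ideal $R$ strictly decrease in dimension until they vanish. The paper runs this argument with the coarser filtration by $|MmM|\le s$ and the rank factorization $P_J=CD$, writing $P_JXP_J=C\rho_J(x)D$. It never needs the classification of simple modules.
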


\begin{proof}
  Fix a nonzero regular class $J$, with $\RR$-classes indexed by $I_J$
  and $\LL$-classes by $\Lambda_J$.  Use the principal-factor coordinates
  just described, so
  $|J|=|I_J||\Lambda_J|$.
  Choose a rank factorization $P_J=CD$ with $C$ of size
  $|\Lambda_J|\times d_J$ and $D$ of size $d_J\times|I_J|$, a right
  inverse $D^-$ of $D$ and a left inverse $C^-$ of $C$.

  \emph{The coordinate.}  Left multiplication by $m\in M$ carries each
  $\RR$-class of $J$ either onto an $\RR$-class of $J$ or entirely
  strictly below $J$, and leaving $J$ is absorbing.  Let
  $\lambda(m)\in\{0,1\}^{I_J\times I_J}$ be the matrix of this partial
  map, so that $m\mapsto\lambda(m)$ is a monoid homomorphism.  Right
  multiplication likewise gives $\mu(m)\in\{0,1\}^{\Lambda_J\times\Lambda_J}$,
  and the two actions intertwine through the sandwich,
  \[
    P_J\,\lambda(m)=\mu(m)\,P_J,
  \]
  because both sides have $(\nu,i)$ entry equal to the indicator that
  $x\,m\,y$ stays in $J$ for $x$ in the $\LL$-class $\nu$ and $y$ in the
  $\RR$-class $i$.  Multiplying by $C^-$ on the left gives
  $D\lambda(m)=C^-\mu(m)CD$.  Put
  \[
    \rho_J(m)=D\,\lambda(m)\,D^-.
  \]
  Then
  $\rho_J(m)\rho_J(m')
    =C^-\mu(m)C\,(DD^-)\,D\lambda(m')D^-
    =D\lambda(m)\lambda(m')D^-=\rho_J(mm')$,
  so $\rho_J$ is a representation.  If $m$ lies strictly below $J$ then
  $mx$ lies strictly below $J$ for every $x\in J$, so $\lambda(m)=0$ and
  $\rho_J(m)=0$.  For an element $z=(i,\nu)$ of $J$, left multiplication by
  $z$ sends the $\RR$-class $i'$ to $i$ when $(P_J)_{\nu i'}=1$ and below
  $J$ otherwise, so $\lambda(z)=e_i\,(P_J)_{\nu\cdot}$ and
  \[
    \rho_J(z)=D\,e_i\,(P_J)_{\nu\cdot}\,D^-
      =(De_i)\,(C_{\nu\cdot}DD^-)=D_{\cdot i}\,C_{\nu\cdot},
  \]
  the outer product of column $i$ of $D$ with row $\nu$ of $C$.  The
  columns of $D$ span $\mathbb Q^{d_J}$ and the rows of $C$ span its
  dual, so these outer products span $M_{d_J}(\mathbb Q)$.  This proves
  item~1.

  \emph{Nilpotency.}  For $s\geq0$ let $A_s\subseteq A$ be the span of
  the elements $m$ with $|MmM|\leq s$.  Each $A_s$ is a two-sided ideal,
  $A_0=0$ and $A_{|M|}=A$.  Distinct $\JJ$-classes whose principal ideals
  have the same size are incomparable, so modulo $A_{s-1}$ the elements
  of $A_s$ multiply as in the direct sum of the contracted algebras of the
  principal factors of the classes $J$ with $|MJM|=s$: writing $X_J$ for the
  coefficient matrix of $x\in A_s$ on a regular such class $J$, the
  product $xy$ has coefficient matrix $X_JP_JY_J$ on $J$ modulo
  $A_{s-1}$, and $\rho_J(x)=DX_JC$ by the formula above.  Let
  $K_s=\ker\pi\cap A_s$ and let $x\in K_s$, $y,y'\in A_s$.  On a regular
  class $J$ with $|MJM|=s$ the product $yxy'$ has coefficient matrix
  \[
    Y_JP_JX_JP_JY'_J=Y_JC\,(DX_JC)\,DY'_J=0
  \]
  modulo $A_{s-1}$, and a nonregular class with $|MJM|=s$ has zero
  principal-factor multiplication.  Hence $A_sK_sA_s\subseteq A_{s-1}$, so
  $K_s^3\subseteq K_{s-1}$ and $K_{|M|}^{3^{|M|}}\subseteq K_0=0$: the
  kernel $R=K_{|M|}$ is nilpotent.  For the index, note that $R$ is a
  proper ideal, since $\pi(1)$ is the tuple of identity matrices and the
  class of $1$ is regular.  If $R^{j+1}=R^j$ then $R^j=R^{j+d}$ for every
  $d$, so $R^j=0$; hence the dimensions of the nonzero powers of $R$
  strictly decrease, starting below $\dim A$, and $R^{\dim A}=0$.  This
  proves item~2.

  \emph{Counting.}  Since $P_J$ has $|\Lambda_J|$ rows and $|I_J|$
  columns, $d_J\leq\min\{|I_J|,|\Lambda_J|\}$, so $d_J^2\leq|I_J||\Lambda_J|
  =|J|$.  The classes are disjoint, so $\sum_Jd_J^2\leq\sum_J|J|\leq|M|$.
\end{proof}

The \emph{apex} of a simple representation is the unique minimal
$\JJ$-class on which the representation is nonzero.  This class is regular,
and all classes strictly below it act as zero.
The class $J$ in \lemref{lem:ags-munn-decomposition} is the apex of
$\rho_J$.  We call its full matrix-algebra image the associated
\emph{simple block}.
We call $R$ the \emph{radical} of $A$.  Classically $R$ is the Jacobson
radical, $\pi$ is onto, and one obtains the Munn--Ponizovski\u\i{}
isomorphism $A/\operatorname{Rad}(A)\cong\prod_J M_{d_J}(\mathbb Q)$
\cite{Munn55,Munn57}.  Neither the identification of $R$ with the
Jacobson radical nor the surjectivity of the joint map is used below:
the fixed-apex peel needs only item~1 and the bound $d_J^2\le|J|$, and
the lift of Section~\ref{sec:ags-radical-lift} needs only that $R$ is a
nilpotent ideal of index at most $N$.  The proof above establishes these
properties directly.

For example, the Brandt semigroup of $d$-by-$d$ matrix units consists
of $0$ and elements $e_{ij}$, with $e_{ij}e_{\ell r}=e_{ir}$ when
$j=\ell$ and zero otherwise.  Its nonzero class has sandwich matrix
$\mathrm I_d$, so its degree is $d$ and its cardinality is $d^2$.
For $d\geq2$, adjoining an identity gives a distinct monoid.  We will
return to these examples after the upper-bound proof.

Fix an integer threshold $t\geq2$.  Call a block $J$ \emph{small} if
$d_J\leq t$.  Its image
$\rho_J(M)$ is a finite aperiodic matrix monoid of order at most $N$, so
\eqref{eq:ags-matrix-image-compiler} computes its coordinate with cost
\begin{equation}
  \sqrt n\,N^{O(t+1)}L(n)^{O(t+1)}.
  \label{eq:ags-small-simple-coordinate}
\end{equation}
All small coordinates can be assembled with one further polynomial factor
in $N$: amplify each coordinate to error $O(1/N)$ and combine their finite
outputs.  The resulting amplification factor is absorbed in the notation in
\eqref{eq:ags-small-simple-coordinate}.

\subsection{Reduction at a fixed apex}
\label{sec:ags-apex-reduction}

For a block of large degree, we use a product algorithm for a smaller
Rees quotient.  Fix its apex $J$ and collapse the ideal
$\mathcal I_J=MJM$ to zero.  Outside this ideal, the quotient determines
the product exactly.  When the product enters the ideal, the localized
AGS step recovers the additional information needed for the
$J$-coordinate.

\begin{lemma}[Fixed-apex Rees peel]
\label{lem:ags-fixed-apex-peel}
  Let $J$ be a nonzero, nonidentity regular $\JJ$-class of a finite
  aperiodic monoid $M$, let $\rho_J$ be its simple rational representation,
  and put
  \[
    \mathcal I_J=MJM,
    \qquad \overline M_J=M/\mathcal I_J.
  \]
  Define the normalized quotient cost
  \[
    D_J(n)=\max_{1\leq\ell\leq n}
      \frac{Q_{1/3}(\operatorname{Prod}_{\overline M_J,\ell})}{\sqrt\ell}.
  \]
  Then
  \begin{equation}
    Q_{1/3}(\rho_J\circ\operatorname{Prod}_{M,n})
    \leq N^{O(1)}L(n)^{O(1)}(D_J(n)+1)\sqrt n.
    \label{eq:ags-fixed-apex-peel}
  \end{equation}
  The implicit constants are independent of $d_J$.  Moreover
  \begin{equation}
    |\overline M_J|=|M|-|\mathcal I_J|+1
      \leq |M|-d_J^2+1.
    \label{eq:ags-fixed-apex-drop}
  \end{equation}
\end{lemma}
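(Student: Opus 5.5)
The size bound \eqref{eq:ags-fixed-apex-drop} comes first, since it is purely combinatorial. The Rees quotient keeps $M\setminus\mathcal I_J$ and collapses $\mathcal I_J$ to one zero, so its order is $|M|-|\mathcal I_J|+1$. Since $J\subseteq\mathcal I_J$, \lemref{lem:ags-munn-decomposition}(3) gives $|\mathcal I_J|\ge|J|\ge d_J^2$. Because $J$ is not the class of $1$, the ideal $\mathcal I_J$ is proper, and $\overline M_J$ is a monoid with the same identity. It is also aperiodic, being a quotient of an aperiodic monoid.

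For the query bound, I would split according to whether the product $[w]$ lies in $\mathcal I_J$. Elements strictly below $J$ are annihilated by $\rho_J$ (item~1). Hence $\rho_J([w])$ is $0$ when $[w]\in\mathcal I_{<J}$, and it equals $\rho_J$ of an element of $J$ when $[w]\in J$. When $[w]\notin\mathcal I_J$, the quotient map is injective there, so $\rho_J([w])$ is read off from the quotient product.

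The algorithm proceeds in four steps.

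First, compute $\overline{[w]}$ with one call to the quotient algorithm, at cost $D_J(n)\sqrt n$. If the result is nonzero, output $\rho_J$ of the corresponding element.

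Second, if it is zero, locate the first prefix $p_i$ lying in $\mathcal I_J$. Membership of $p_i$ in $\mathcal I_J$ is monotone in $i$, and it is decided by whether the quotient prefix product is zero. Binary search therefore needs $O(\log n)$ quotient calls on prefixes. I would amplify each call and use the geometric-halving trick from the proof of Proposition~\ref{prop:ags-axis-recurrence} to keep the total at $O(\sqrt n)$ times the coefficient. The predecessor $p_{i-1}\notin\mathcal I_J$ is known exactly from the quotient, and one more query gives $x_i$, so $p_i$ is known exactly. If $p_i\notin J$, the answer is $0$.

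Third, if $p_i\in J$, track the remainder $p_i\,x_{i+1}\cdots x_n$ while it stays in $J$. I would decide $p_i[x_{i+1}\cdots x_n]=s$ for each $s\in J$ by applying the localized AGS step, \lemref{lem:ags-local-step}, to the virtual input with the known letter $p_i$ prepended, exactly as in the proof of Proposition~\ref{prop:ags-axis-recurrence}. Its recursive targets $t$ satisfy $MsM\subsetneq MtM$, so $t\notin\mathcal I_J$. Infixes containing the prepended letter have product in $\mathcal I_J$, so those tests are constantly false. The remaining infixes avoid the known letter, and for them the test ``product $=t$'' is equivalent to ``quotient product $=\bar t$'', because the quotient is injective off $\mathcal I_J$. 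Each such test is therefore answered by one call to the quotient algorithm, at normalized cost $D_J(n)$. The lemma then gives cost $N^{O(1)}L(n)^{O(1)}(D_J(n)+1)\sqrt n$ for each $s$.

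Fourth, take the one-hot vector of answers over the at most $N$ elements $s\in J$. If some $s$ accepts, output $\rho_J(s)$; otherwise the product has fallen below $J$, and we output $0$. Amplifying every finite-output call to error $O(1/(N\log n))$ and composing robustly only adds factors of the form $N^{O(1)}L(n)^{O(1)}$. No step depends on $d_J$.

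The main obstacle is the third step, specifically checking that every strict-parent target generated recursively, including those arising from splits inside the infix search, stays outside $\mathcal I_J$, so that quotient equality really decides equality in $M$. This follows from $MsM\subsetneq MtM$ with $s\in J$. If $t\in\mathcal I_J=MJM$, then $MtM\subseteq MJM=MsM$, which is a contradiction.
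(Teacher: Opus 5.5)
Your proposal is correct and follows the paper's proof essentially step for step: compute the quotient product, locate the first prefix entering $\mathcal I_J$ by difference-block binary search, then apply the localized AGS step to the virtual word with the entry element prepended, using injectivity of the Rees quotient off $\mathcal I_J$ to answer every strict-parent test. The only cosmetic difference is that the paper invokes Green stability to restrict the one-hot target family to the $\RR$-class $R_h$ of the entry element, whereas you range over all of $J$, which is equally valid.
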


\begin{proof}
  Write $q:M\to\overline M_J$ for the Rees quotient.  First compute the
  quotient product of the whole word.  If it is nonzero, then it has a
  unique lift $m\in M\setminus \mathcal I_J$.  The desired value $\rho_J(m)$ is a
  free table lookup.

  Suppose instead that the quotient product is zero.  Because $\mathcal I_J$
  is an ideal, there is a first prefix whose product enters $\mathcal I_J$.
  It can be
  found by difference-block binary search without a logarithmic loss in
  normalized cost.  Maintain a known nonzero quotient prefix state at a cut
  $\ell$ and a later zero cut $r$.  Query the quotient product of the
  difference block from $\ell$ to their midpoint, that is, of
  $[\ell,\ \ell+\lfloor(r-\ell)/2\rfloor)$.

  As in the first-exit search in the proof of
  \propref{prop:ags-axis-recurrence}, the gap minus one halves after each
  step.  The queried lengths are bounded by
  $n-1,(n-1)/2,(n-1)/4,\ldots$, whose square roots sum to $O(\sqrt n)$.
  At the final one-letter
  gap, the nonzero predecessor has a unique lift
  $p\in M\setminus \mathcal I_J$;
  querying the crossing letter $a$ returns the exact first entry
  $h=pa\in \mathcal I_J$.

  If $h$ lies strictly below $J$, then $\rho_J(h)=0$, and the entire
  continuation remains zero in this representation.  It remains to treat
  $h\in J$.  While a right continuation remains in $J$,
  \lemref{lem:ags-green-stability} gives
  \[
    hu\mathrel{\JJ}h\quad\Longrightarrow\quad hu\mathrel{\RR}h,
  \]
  because $hu\leq_{\RR}h$.  Thus the continuation either remains in the
  killed $\RR$-axis $R_h$ or exits strictly below $J$, after which the
  $J$-coordinate is zero.

  For each $s\in R_h$, apply \lemref{lem:ags-local-step} to the target
  predicate
  \[
    h\,\operatorname{Prod}(w)=s.
  \]
  Here $1\notin\mathcal I_J$, so the first entry consumes at least one raw
  letter.  Consequently, prepending the known $h$ to the remaining suffix
  still gives a virtual word of length at most $n$ and costs no query.
  Every recursive equality target $u$ used by the localized AGS step satisfies
  \[
    MJM\subsetneq MuM,
  \]
  and hence $u\notin \mathcal I_J$.  The Rees quotient is injective outside
  $\mathcal I_J$,
  so on every queried interval
  \[
    \operatorname{Prod}_M(v)=u
    \quad\Longleftrightarrow\quad
    \operatorname{Prod}_{\overline M_J}(q(v))=q(u).
  \]
  Exact quotient-product output therefore supplies every recursive target
  required by \lemref{lem:ags-local-step}.  Running the resulting
  one-hot family over $s\in R_h$ returns the exact live endpoint, or reports
  that the trajectory has left $R_h$.  Finite target multiplicity, the
  adaptive first-entry search, and bounded-error amplification contribute
  only $N^{O(1)}L(n)^{O(1)}$, proving
  \eqref{eq:ags-fixed-apex-peel}.

  Finally, $\mathcal I_J$ contains $J$, and
  \lemref{lem:ags-munn-decomposition} gives $|J|\geq d_J^2$.
  Collapsing $\mathcal I_J$ to one zero proves
  \eqref{eq:ags-fixed-apex-drop}.
\end{proof}

\begin{remark}
  The representation $\rho_J$ does \emph{not} factor through
  $M/\mathcal I_J$: the quotient collapses $J$, where $\rho_J$ is nonzero.
  \lemref{lem:ags-fixed-apex-peel} is an algorithmic oracle reduction.
  Terminal quotient product alone is insufficient; the first-entry and
  killed-axis packets recover the information lost at the quotient zero.
\end{remark}

\subsection{Lifting through the radical}
\label{sec:ags-radical-lift}

After all simple coordinates have been computed, we know the product only
modulo the radical $R$ of \lemref{lem:ags-munn-decomposition}, the
kernel of the joint coordinate map.  The next lemma lifts exact products through a
square-zero algebra kernel with polynomial overhead.  It is important that
no algebra section is assumed.

The algebraic input is a local-triviality result of Almeida, Margolis,
Steinberg, and Volkov \cite[Lemma~5.4]{AMSV09}.  Our contribution here is
the quantum reconstruction of the original product from quotient-product
calls, together with its query-cost bound and its iteration through
successive radical quotients.

\begin{lemma}[Square-zero product lift]
\label{lem:ags-square-zero-lift}
  Let $B\to C$ be a unital quotient of finite-dimensional rational
  algebras with square-zero kernel.  Let $S\subseteq B$ be a finite
  aperiodic submonoid containing $1_B$, and let $T$ be its image in $C$.
  If length-$\ell$ products in $T$, uniformly for $\ell\leq n$, have
  normalized query cost at most $D$, then products in $S$ have query cost
  \begin{equation}
    N^{O(1)}L(n)^{O(1)}(D+1)\sqrt n,
    \label{eq:ags-square-zero-lift}
  \end{equation}
  where $N$ is any common upper bound on $|S|$ and $|T|$.
\end{lemma}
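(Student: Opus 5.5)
We reduce the lift to finitely many product-equality tests handled by the localized AGS step (\lemref{lem:ags-local-step}) and the action compiler. The algebraic input is the local-triviality result \cite[Lemma~5.4]{AMSV09}. Let $K$ be the square-zero kernel and $\phi:S\to T$ the restriction of the quotient. For $t\in T$, the fibre $\phi^{-1}(t)\subseteq S$ lies in a coset $b+K$. Since $K^2=0$, the product of two lifts in fibres $t,t'$ is determined by $t,t'$ together with their $K$-components, and the latter combine affinely through the $(C,C)$-bimodule action on $K$. The upshot is a bound on how much of an $S$-product is invisible to $T$: its value is the $T$-value plus information recorded along a bounded number of ideal transitions.

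\textbf{Step 1 (first compute the coarse product).} We compute $\phi(x_1\cdots x_n)$ using the assumed $T$-algorithm, at cost $D\sqrt n$. For every target $s\in S$ we then need to decide whether $[x]=s$. We do this by downward induction on the $\JJ$-class of $s$ in $S$, as in \thmref{thm:main-ags}, but with a modification: each strict-parent test is answered through $T$ whenever $\phi$ is injective on the relevant elements.

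\textbf{Step 2 (the equality test for a target).} Fix $s$ with image $t=\phi(s)$. We test $[x]=s$ in two stages. First we test $\phi([x])=t$ using the $T$-algorithm. Then we locate a bounded number of transitions at which the $K$-component is created, by first-entry binary search with difference blocks. This search runs as in \lemref{lem:ags-fixed-apex-peel} and costs $O(\sqrt n)$ in normalized terms. Concretely, we write the product as $\sum$ over positions of the terms $\phi(\text{prefix})\,k_i\,\phi(\text{suffix})$. Local triviality lets us choose a section on each fibre for which this derivation-like correction is constant on each $\RR$- and $\LL$-phase of the prefix and suffix trajectories in $T$. Hence the correction changes only at phase boundaries, and there are at most $N$ of these. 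We find each boundary with the action compiler (\lemref{lem:ags-action-compiler}) applied to the killed right and left actions of $T$, whose endpoints are computable from the $T$-oracle at cost $N^{O(1)}L(n)^{O(1)}(D+1)\sqrt n$. At each boundary we query the boundary letter and add its correction.

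\textbf{Step 3 (assembly).} The total correction is the sum of the correction at the first letter and the corrections at the at most $N$ phase boundaries. Each boundary is found by one packet, and one-hot amplification over at most $N$ outputs contributes only $N^{O(1)}L(n)^{O(1)}$. This gives \eqref{eq:ags-square-zero-lift}.

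\textbf{Main obstacle.} The hard step is extracting from \cite[Lemma~5.4]{AMSV09} the precise statement that the kernel correction is locally constant within phases of the $T$-trajectory. This is needed because no algebra section is assumed, so a naive cocycle need not be multiplicative. The plan is to define the section on each regular $\RR$-class separately via the owner cover (\lemref{lem:ags-owner-cover}), and to verify multiplicativity within a class using aperiodicity and $K^2=0$.
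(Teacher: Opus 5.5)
Your Step~2 rests on the claim that, for a suitable fibrewise section, the $K$-correction changes only where the prefix leaves an $\RR$-class or the suffix leaves an $\LL$-class of $T$. You flag this as the main obstacle. It is in fact false, and no choice of section repairs it. Local triviality is a statement about loops at the \emph{pair} objects $o_i=(p_i,q_i)$ of the two-sided kernel category. It gives uniqueness of arrows only between objects in one strongly connected component, and that component can change while both Green classes stay fixed.

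For example, let $T\subseteq B_2^1\times B_2^1$ be generated by $a=(e_{12},e_{21})$, $b=(e_{21},0)$, $c=(0,e_{12})$, $p=(e_{11},0)$ and $q=(0,e_{11})$. Then $pab=p$ and $caq=q$, so $pa\mathrel{\RR}p$ and $aq\mathrel{\LL}q$. Yet $pau=p$ and $uaq=q$ force $u=(e_{21},e_{12})\notin T$, because the only elements of $T$ with both coordinates nonzero are $1$ and $a$.

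Now build $S$. Identify $B_2^1$ with $\{0,I\}\cup\{e_{ij}\}\subseteq M_2(\mathbb Q)$ and put $C=M_2(\mathbb Q)^2$. Let $K=M_2(\mathbb Q)$, with $C$ acting on the left through its first factor and on the right through its second, and let $B=C\oplus K$ be the trivial extension. The elements $\hat a=(a,0)$, $\hat a'=(a,e_{11})$, $\hat b=(b,0)$, $\hat c=(c,0)$, $\hat p=(p,0)$, $\hat q=(q,0)$ generate a finite aperiodic $S$ with image $T$. Finiteness holds because at most one occurrence of $\hat a'$ contributes to a $K$-part: two contributing occurrences would need a letter $b$ between them, which kills the earlier contribution. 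Aperiodicity holds because an element $1_G+\kappa$ of a subgroup, with $\kappa\in K$, has powers $1_G+m\kappa$, so subgroups are trivial.

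Compare the inputs $\hat p\,1^{j}\,\hat a\,1^{j'}\,\hat q$ and $\hat p\,1^{j}\,\hat a'\,1^{j'}\,\hat q$. They have the same $T$-letters, and their only phase boundaries are at the first and last positions, where the letters agree. Their products, however, are $(paq,0)$ and $(paq,\,p\cdot e_{11}\cdot q)=(paq,e_{11})$. This difference involves no section, so your procedure returns the same value on both inputs and is wrong on one of them.

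The missing idea is to follow the pair $o_i$ and its strongly connected component in the kernel category, as the paper does.
\begin{itemize}
\item Local triviality has a short direct proof. For a loop $s$ at $(p,q)$ and lifts $x,y$, the element $\delta=x(s-1)y\in K$ satisfies $xs^\ell y-xy=\ell\delta$. Finiteness and characteristic zero then give $\delta=0$. Hence any segment between two cuts in one component can be replaced by a tabled representative without changing the product.
\item The condensation is acyclic, so the cut path changes component at most $|T|^2-1$ times.
\item A dyadic recursion on cuts computes $T$-products of child intervals to obtain midpoint cuts, and stops whenever both endpoints share a component. It meets each component change in $O(\log(n+2))$ intervals and queries only the crossing letters.
\item Finally $\phi^{-1}(1)=\{1\}$, since $1+k$ with $k\in K$ is invertible and aperiodicity forces it to equal $1$. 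So the root arrow from $(1,t)$ to $(t,1)$ determines the product.
\end{itemize}

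Separately, your Step~1 should be dropped. A downward $\JJ$-induction in $S$ as in \thmref{thm:main-ags} compounds the localized AGS overhead along ideal chains, giving $(NL(n))^{O(\dJ(S)+1)}$ rather than $N^{O(1)}L(n)^{O(1)}$. Moreover, ``answered through $T$ whenever $\phi$ is injective'' leaves the noninjective case untreated. The lift needs only $T$-product calls and $N^{O(1)}L(n)^{O(1)}$ letter queries.
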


\begin{proof}
  \emph{The kernel category.}  Let $\phi:S\to T$ be the quotient map.
  We use the two-sided kernel category of Rhodes and Tilson~\cite{RT89},
  following the formulation in~\cite[Section~5]{AMSV09}.  It records how
  a segment product acts between known quotient contexts.
  Its objects are pairs $(p,q)\in T^2$.  A segment product
  $s\in S$ represents an arrow
  \[
    (p,q)\longrightarrow(p\phi(s),q')
    \quad\text{when}\quad \phi(s)q'=q.
  \]
  Two representatives $s,t$ with the same source $(p,q)$ and target
  $(p',q')$ are identified if $xsy=xty$ for every $x,y\in S$ with
  $\phi(x)=p$ and $\phi(y)=q'$.  Thus equivalent arrows can be
  substituted inside any compatible surrounding context.  Composition
  is represented by multiplication in $S$, and identity arrows by $1$.

  \emph{Loops and components.}  Let $K=\ker(B\to C)$.
  The faithful representation of $B$ by left multiplication has block
  upper triangular form in a basis adapted to $K$.  Since $K^2=0$,
  both diagonal actions factor through $B/K$, and the induced action of
  $B/K$ on itself is faithful.  Thus the block-diagonal image of $S$ is isomorphic to
  $T$, so~\cite[Lemma~5.4]{AMSV09} implies that the kernel category is
  locally trivial.  We give a direct verification in our notation.
  Consider a loop represented by $s$, so
  \[
    p\phi(s)=p,
    \qquad \phi(s)q=q.
  \]
  Choose compatible lifts $x,y\in S$ of $p,q$.  Then
  \[
    \delta=x(s-1)y\in K.
  \]
  Also $xs^j-x\in K$ and $(s-1)y\in K$.  Since $K^2=0$,
  \[
    xs^j(s-1)y=x(s-1)y=\delta
  \]
  for every $j\geq0$, and telescoping gives
  \begin{equation}
    xs^\ell y-xy=\ell\delta.
    \label{eq:ags-square-zero-loop}
  \end{equation}
  Two powers of $s$ coincide because $S$ is finite.  Subtracting the two
  corresponding instances of~\eqref{eq:ags-square-zero-loop} and using
  characteristic zero gives $\delta=0$.  Thus every local loop is the
  identity.  It follows that inside a strongly connected component there
  is at most one arrow between any fixed pair of objects.

  \emph{Finding the root arrow.}  For a word $s_1\cdots s_n$, put
  \[
    p_i=\phi(s_1\cdots s_i),
    \qquad q_i=\phi(s_{i+1}\cdots s_n),
    \qquad o_i=(p_i,q_i).
  \]
  The cuts $o_0,o_1,\ldots,o_n$ form a path in the kernel category.
  First compute the full $T$-product $t$, which determines the endpoint
  objects $o_0=(1,t)$ and $o_n=(t,1)$.  Use a fixed dyadic recursion
  on the interval of cuts.  If the two endpoint
  objects of a dyadic interval lie in one strongly connected component,
  the intervening arrow is the unique tabled arrow and no further query is
  needed.  Otherwise split at the midpoint, compute the $T$-products of
  the two child intervals, derive the midpoint cut from the known endpoint
  contexts, and continue.  At a one-letter interval, query that letter
  to determine its arrow.  The condensation graph is acyclic, so the
  cut path changes component at most $|T|^2-1$ times.  Every such change is
  contained in only $O(\log(n+2))$ dyadic intervals.  Hence only
  $N^{O(1)}L(n)^{O(1)}$ quotient-product calls and crossing letters survive.
  Robust adaptive composition yields the cost in
  \eqref{eq:ags-square-zero-lift}.

  \emph{Recovering the product.}  The root arrow determines the exact
  $S$-product.  To see this, note that $\phi^{-1}(1)=\{1\}$.
  If $\phi(s)=1$, then $s=1+k$ with
  $k\in K$ and $k^2=0$, so $s$ is invertible in $B$.  A power of $s$
  stabilizes because $S$ is finite and aperiodic; multiplying the stability
  identity by the algebra inverse gives $s=1$.  Thus the two outside
  context fibres at the root are singletons, and equality of root arrows
  is equality of the represented products.
\end{proof}

For a finite monoid $H$, write
\[
  \mathcal D_H(n)=\max_{1\leq\ell\leq n}
    \frac{Q_{1/3}(\operatorname{Prod}_{H,\ell})}{\sqrt\ell}.
\]
Let $A$ be the contracted rational algebra of $M$ and let
$R$ be its radical, the kernel of the joint coordinate map of
\lemref{lem:ags-munn-decomposition}.  It is nilpotent of index at most
$\dim A\leq N$.  For $r=1,2,4,\ldots$, consider
\[
  A/R^{2r}\longrightarrow A/R^r.
\]
Its kernel $R^r/R^{2r}$ is square-zero.  Choose $k$ so that $2^k$ is at
least the nilpotency index of $R$.
Applying \lemref{lem:ags-square-zero-lift} to the successive finite
images of $M$ uses $k=O(\log N)$ layers and proves
\begin{equation}
  \mathcal D_M(n)+1
  \leq (N L(n))^{O(\log N)}
    \bigl(\mathcal D_{M\bmod R}(n)+1\bigr).
  \label{eq:ags-radical-lift}
\end{equation}
Here $M\bmod R$ means the image of $M$ in
$A/R$, equivalently (since $A/R$ embeds in the product of the blocks)
the tuple of the simple coordinates in
\lemref{lem:ags-munn-decomposition}.  The argument above applies even
when the intermediate quotients do not split as algebras.
At the final layer $R^{2^k}=0$, so the terminal image is the image in $A$;
the contracted basis embedding is faithful and therefore recovers the
exact product in $M$.

\subsection{The size recurrence and the main theorem}
\label{sec:ags-size-recurrence}

Fix a horizon $n_0$, put $L=L(n_0)$, and define
\[
  C_s=\max\left\{
    \frac{Q_{1/3}(\operatorname{Prod}_{H,\ell})}{\sqrt\ell}:
    H\text{ aperiodic},\ |H|\leq s,\ 1\leq\ell\leq n_0
  \right\}.
\]
We use the global upper bounds $N$ and $L$ in every recursive call, so
$C_s$ is monotone in $s$ and all implicit constants below are uniform.

\begin{proposition}[Cube-root carrier recurrence]
\label{prop:ags-cuberoot-recurrence}
  For every $1\leq s\leq N$ and every integer $t\geq2$,
  \begin{equation}
    C_s\leq (N L)^{O(\log N)}
      \left((N L)^{O(t)}+(N L)^{O(1)}C_{s-t^2}\right),
    \label{eq:ags-cuberoot-recurrence}
  \end{equation}
  where the recursive term is omitted when $s\leq t^2$.  Consequently
  \begin{equation}
    C_N\leq
      (N L)^{O\left(t+\left(\frac{N}{t^2}+1\right)\log N\right)}.
    \label{eq:ags-cuberoot-recurrence-solved}
  \end{equation}
\end{proposition}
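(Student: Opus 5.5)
To prove \eqref{eq:ageq-placeholder-never-used} is unnecessary; the plan is as follows. Fix an aperiodic monoid $H$ with $|H|\leq s$ and a length $\ell\leq n_0$; we bound $Q_{1/3}(\operatorname{Prod}_{H,\ell})/\sqrt\ell$. If $H$ is trivial the cost is zero. Otherwise apply \lemref{lem:ags-munn-decomposition} to $H$, obtaining simple coordinates $\rho_J$ indexed by nonzero regular $\JJ$-classes $J$, with $d_J^2\leq|J|$. By the radical lift \eqref{eq:ags-radical-lift}, applied with the uniform parameters $N,L$ (valid since $|H|\leq N$), it suffices to bound the normalized cost of computing the tuple $(\rho_J(x_1\cdots x_\ell))_J$, losing a factor $(NL)^{O(\log N)}$. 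The tuple has at most $N$ coordinates, so after amplifying each to error $O(1/N)$ and combining finite outputs, the cost is $(NL)^{O(1)}$ times the largest single-coordinate cost.

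Split the coordinates by the threshold $t$. A small block, $d_J\leq t$, has image $\rho_J(H)$ a finite aperiodic matrix monoid of dimension at most $t$ and order at most $N$, so \eqref{eq:ags-small-simple-coordinate} gives normalized cost $(NL)^{O(t+1)}=(NL)^{O(t)}$ since $t\geq2$. A large block, $d_J>t$, has $d_J\geq t+1$, and $J\neq[1]_\JJ$ because the identity class is $\{1\}$ with $d=1\leq t$. Here \lemref{lem:ags-fixed-apex-peel} gives normalized cost $(NL)^{O(1)}(D_J+1)$, where $D_J$ is the normalized product cost in $\overline H_J=H/\mathcal I_J$. By \eqref{eq:ags-fixed-apex-drop}, $|\overline H_J|\leq s-d_J^2+1\leq s-(t+1)^2+1\leq s-t^2$, and Rees quotients of aperiodic monoids are aperiodic, so $D_J\leq C_{s-t^2}$ by monotonicity of $C$. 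If $s\leq t^2$ no large block exists, since $d_J^2\leq|J|\leq s$. Collecting terms, and absorbing the ``$+1$'' in the large-block bound into the small-block term, yields \eqref{eq:ags-cuberoot-recurrence}, with implicit constants uniform because every invoked lemma is stated with absolute exponents in $N$ and $L$.

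To solve the recurrence, write $X=(NL)^{c}$ for a common absolute constant $c$, so that $C_s\leq X^{\log N}(X^{t}+X\,C_{s-t^2})$. Unrolling it $j\leq\lceil s/t^2\rceil\leq N/t^2+1$ times gives
\[
C_N\leq\sum_{i=0}^{j}X^{(i+1)\log N+i}X^{t}\leq (j+1)\,X^{(j+1)(\log N+1)+t}.
\]
The factor $j+1\leq N+1$ is absorbed into the exponent, giving \eqref{eq:ags-cuberoot-recurrence-solved}. The main obstacle is ensuring that the constants really are uniform across the recursion. The quotient monoids must inherit the global horizon $L$ and size bound $N$, and not their own sizes; otherwise the $O(\cdot)$ in the exponent could compound. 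I would check this by noting that \lemref{lem:ags-fixed-apex-peel}, \lemref{lem:ags-square-zero-lift} and \eqref{eq:ags-matrix-image-compiler} each hold with $N$ replaced by any upper bound on the relevant monoid sizes.
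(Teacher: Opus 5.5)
Your proposal is correct and follows the paper's proof essentially step for step. You use the radical lift to reduce to the tuple of simple coordinates from \lemref{lem:ags-munn-decomposition}. You handle blocks with $d_J\le t$ by the matrix-dimension bound, and blocks with $d_J>t$ by the fixed-apex Rees peel into a quotient of order at most $s-t^2$. You then unroll the recurrence at most $\lceil N/t^2\rceil$ times, keeping the global parameters $N$ and $L$ throughout.

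Your explicit checks fill in small points the paper leaves implicit: that $J\neq[1]_{\JJ}$, that Rees quotients remain aperiodic, and that no large block exists when $s\le t^2$. Bounding the unrolled sum by $(j+1)$ times its largest term is equivalent to the paper's geometric-sum estimate.
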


\begin{proof}
  Fix a monoid $H$ of order at most $s$.  If it has one element, its
  query cost is zero.  Otherwise apply \lemref{lem:ags-munn-decomposition}.
  Every simple coordinate of degree at most $t$ is computed by
  \eqref{eq:ags-small-simple-coordinate}.  There are at most $s$ such
  coordinates, so their complete tuple costs $(NL)^{O(t)}$ in normalized
  form.

  If $d_J>t$, apply \lemref{lem:ags-fixed-apex-peel}.  Since $d_J$ is
  integral,
  \[
    |H/\mathcal I_J|\leq |H|-d_J^2+1\leq s-t^2.
  \]
  Thus this coordinate costs at most $(NL)^{O(1)}C_{s-t^2}$.
  Summing over the at most $s$ large coordinates contributes only another
  polynomial factor.  We have now computed the exact tuple of simple
  coordinates.
  Equation~\eqref{eq:ags-radical-lift} supplies the outer
  $(NL)^{O(\log N)}$ factor; its added $1$ is absorbed by the small-block
  term.  This proves
  \eqref{eq:ags-cuberoot-recurrence}.

  Every recursive call decreases the carrier size by at least $t^2$, so
  a branch has at most $\lceil N/t^2\rceil$ large-apex peels.  At the
  terminal node all blocks are small.  Write $a=(NL)^{O(\log N)}$
  for a uniform upper bound on the multiplier of a recursive term,
  increasing it so that $a\geq2$.  The additive small-block terms form
  a geometric sum bounded by twice its largest term.  Their total is
  therefore $(NL)^{O(t)}a^{O(N/t^2+1)}$, proving
  \eqref{eq:ags-cuberoot-recurrence-solved}.
\end{proof}

We can now state and prove the bound in terms of monoid size.  The choice
of threshold balances the matrix dimension against the number of
reductions to smaller monoids.

\Needspace{18\baselineskip}
\begin{theorem}[Cube-root-size AGS bound]
\label{thm:ags-cuberoot-size}
  There is a universal constant $C$ such that, for every finite aperiodic
  monoid $M$ of order $N$ and every $n\geq1$,
  \begin{equation}
    Q_{1/3}(\operatorname{Prod}_{M,n})
    \leq
    \min\left\{n,
      \sqrt n\,L(n)^{C(N\log(N+2))^{1/3}}
    \right\}.
    \label{eq:ags-cuberoot-final}
  \end{equation}
  Before the standard read-all absorption, the proof gives the convenient
  form
  \begin{equation}
    Q_{1/3}(\operatorname{Prod}_{M,n})
    \leq
    \sqrt n\,(N L(n))^{C(N\log(N+2))^{1/3}}.
    \label{eq:ags-cuberoot-unabsorbed}
  \end{equation}
  Consequently, if $S$ is a finite aperiodic semigroup, the same estimates
  hold after replacing $N$ by $|S|+1$.
\end{theorem}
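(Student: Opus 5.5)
The proof will specialize the solved recurrence \eqref{eq:ags-cuberoot-recurrence-solved} of Proposition~\ref{prop:ags-cuberoot-recurrence}, choose the threshold $t$, and then absorb powers of $N$ using the read-all bound. The trivial monoid needs no queries, so we assume $N\geq2$. Fix $n$ and take the horizon $n_0=n$, so that $L=L(n)$. By definition of $C_N$, we have $Q_{1/3}(\operatorname{Prod}_{M,n})\leq C_N\sqrt n$. The recurrence then gives
\[
 Q_{1/3}(\operatorname{Prod}_{M,n})\leq\sqrt n\,(NL)^{c\left(t+(N/t^2+1)\log N\right)}
\]
for every integer $t\geq2$, with an absolute constant $c$.

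Next choose $t=\max\{2,\lceil(N\log(N+2))^{1/3}\rceil\}$. Then $t=O((N\log(N+2))^{1/3})$. Also $(N/t^2)\log N\leq N\log(N+2)/(N\log(N+2))^{2/3}=(N\log(N+2))^{1/3}$. The remaining term $\log N$ is also $O((N\log(N+2))^{1/3})$. The exponent is therefore $O((N\log(N+2))^{1/3})$, which proves \eqref{eq:ags-cuberoot-unabsorbed}. This step is routine bookkeeping. The one point to check is that the integer rounding of $t$ and the lower bound $t\geq2$ keep the estimate uniform for small $N$; both are absorbed by enlarging $C$.

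For \eqref{eq:ags-cuberoot-final}, write $E=C(N\log(N+2))^{1/3}$ and split into two cases. If $L(n)\geq N$, then $NL(n)\leq L(n)^2$, so $(NL(n))^{E}\leq L(n)^{2E}$ and we double the constant. If $L(n)<N$, then $\log n=O(\log N)$, so $n=\sqrt n\cdot\sqrt n\leq\sqrt n\,2^{O(\log N)}$. Since $L(n)\geq2$, this is at most $\sqrt n\,L(n)^{O(\log N)}\leq\sqrt n\,L(n)^{E}$, and the read-all bound $n$ gives the claim. The paper's overview splits at $L(n)\geq N^{1/3}$ instead; the threshold $N$ is simpler here, because the exponent already exceeds $\log N$. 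In both cases the minimum with $n$ is attained by reading the whole input.

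Finally, for a semigroup $S$, pass to the monoid $S^1$, of order at most $|S|+1$. An upper bound for products in $S^1$ restricts to inputs from $S$; this is the division-monotonicity remark in the introduction's footnote. The only nontrivial ingredient is Proposition~\ref{prop:ags-cuberoot-recurrence}, which is taken as given. So I expect no real obstacle beyond the constant bookkeeping in the second step.
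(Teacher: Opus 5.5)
Your derivation of \eqref{eq:ags-cuberoot-unabsorbed} matches the paper: the same choice of $t$, $n_0=n$, and the same exponent bookkeeping. The gap is in the absorption step.

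\textbf{The gap.} In the case $L(n)<N$ you assert $\log n=O(\log N)$. This is false. The condition $L(n)=2+\log_2(n+2)<N$ only gives $\log_2 n<N$, that is, $n<2^N$ and $\log n=O(N)$; you have effectively compared $n$ with $N$ instead of $\log n$ with $N$. With the correct estimate, reading the input gives only $n\le\sqrt n\,2^{N/2}$, and this is not bounded by $\sqrt n\,L(n)^{E}$ with $E=C(N\log(N+2))^{1/3}$. For example, take $n=2^{N/2}$ with $N$ large. Then $L(n)<N$, so your second case applies, but $\sqrt n=2^{N/4}$ while $L(n)^{E}=2^{O(N^{1/3}\log^{4/3}(N+2))}$. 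Hence the range where $L(n)$ lies between roughly $N^{1/3}$ and $N$ is covered by neither of your cases. Your remark that the threshold $N$ works ``because the exponent already exceeds $\log N$'' is not the relevant comparison. The read-all branch needs $\tfrac12\log_2 n\le E\log_2 L(n)$, and below a threshold $\tau$ the quantity $\log_2 n$ can be as large as about $\tau$. This forces $\tau/2\lesssim E\log_2\tau$, so $\tau$ can be at most $N^{1/3}$ up to logarithmic factors; $\tau=N$ is far too large.

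\textbf{The fix.} Use the paper's split at $N^{1/3}$.
\begin{itemize}
\item If $L(n)\ge N^{1/3}$, then $N\le L(n)^3$, so $(NL(n))^{E}\le L(n)^{4E}$, and the unabsorbed bound suffices after enlarging $C$.
\item If $L(n)<N^{1/3}$, then $\log_2 n<N^{1/3}$, so $n\le\sqrt n\,2^{N^{1/3}}\le\sqrt n\,L(n)^{N^{1/3}}$, using $L(n)\ge2$. Since $N^{1/3}\le(N\log(N+2))^{1/3}$, reading the input gives the claim.
\end{itemize}

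The passage from semigroups to $S^1$ is fine as you wrote it.
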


\begin{proof}
  The one-element monoid needs no queries, so assume $N\geq2$.
  In Proposition~\ref{prop:ags-cuberoot-recurrence}, choose
  \[
    t=\max\left\{2,
      \left\lceil(N\log(N+2))^{1/3}\right\rceil
    \right\}.
  \]
  Then
  \[
    t+\frac{N}{t^2}\log N
      =O\bigl((N\log(N+2))^{1/3}\bigr).
  \]
  Taking $n_0=n$ in
  \eqref{eq:ags-cuberoot-recurrence-solved} proves
  \eqref{eq:ags-cuberoot-unabsorbed}.

  It remains to absorb the powers of $N$ by comparing with the read-all
  bound.  If $L(n)<N^{1/3}$, then $\log_2 n<N^{1/3}$, and reading
  every input letter gives
  \[
    n\leq\sqrt n\,2^{N^{1/3}}
      \leq\sqrt n\,L(n)^{N^{1/3}},
  \]
  since $L(n)\geq2$.  This is at most the desired upper bound because
  $N^{1/3}\leq(N\log(N+2))^{1/3}$.
  If $L(n)\geq N^{1/3}$, then $N\leq L(n)^3$, so the factor
  $(NL(n))^{O((N\log(N+2))^{1/3})}$ in
  \eqref{eq:ags-cuberoot-unabsorbed} is at most
  \[
    L(n)^{O((N\log(N+2))^{1/3})}.
  \]
  Combining the two cases with the read-all upper bound proves
  \eqref{eq:ags-cuberoot-final}.  Adjoining an identity proves the
  semigroup statement.
\end{proof}

\subsection{Structural examples and the remaining gap}
\label{sec:ags-structural-examples}

The upper-bound proof is complete.  We return to the Dyck monoids to
explain how their structure relates to the estimates used in that proof.
Their query lower bound was established in \secref{sec:dyck-lb}; the
following analysis identifies the sizes of their simple representations
and the quotients used in the recursion.

\paragraph{The Dyck principal factors.}
Recall from \propref{prop:dyck-monoid} that a nonzero element of $M_k$
is a partial translation with parameters $(a,b,e)$, and its $\JJ$-class
is determined by $a+b$.

These coordinates show how this family attains the size estimates used in
\thmref{thm:ags-cuberoot-size}: every nonzero principal factor is a full matrix-unit
semigroup, with equality in the simple-degree packing bound, and the recursive deepest-apex
peel has the corresponding exact carrier drop.

\begin{proposition}[Dyck principal factors and exact apex peeling]
\label{prop:dyck-munn-peel}
  For $0\le s\le k$, let $J_s$ be the level-$s$ $\JJ$-class of $M_k$, and write
  \[
    [s;i,j]:=(i,s-i,j-i),
    \qquad 0\le i,j\le s.
  \]
  Thus $[s;i,j]$ is the partial translation
  \[
    [i,i+k-s]\longrightarrow[j,j+k-s],
    \qquad h\longmapsto h+j-i.
  \]
  Put
  \[
    I_s=\{0\}\cup\bigcup_{u=s}^{k}J_u,
    \qquad I_{k+1}=\{0\}.
  \]
  Let $d_{J_s}$ denote the rational rank of the principal factor's sandwich matrix,
  equivalently the degree of its simple Munn block.
  Then:
  \begin{enumerate}
    \item $M_k$ is an inverse monoid, with
      $[s;i,j]^{-1}=[s;j,i]$ and $0^{-1}=0$;
    \item the principal factor $I_s/I_{s+1}$ is the Brandt semigroup of
      $(s+1)$-by-$(s+1)$ matrix units:
      \begin{equation}
        [s;i,j][s;p,q]=
        \begin{cases}
          [s;i,q],&j=p,\\
          0,&j\ne p
        \end{cases}
        \quad\text{in }I_s/I_{s+1};
        \label{eq:dyck-principal-matrix-units}
      \end{equation}
      consequently its sandwich matrix is the identity and
      \begin{equation}
        d_{J_s}=s+1,
        \qquad |J_s|=(s+1)^2=d_{J_s}^{2};
        \label{eq:dyck-block-square}
      \end{equation}
    \item the contracted rational algebra is radical-free and
      \begin{equation}
        \mathbb Q_0[M_k]\cong\bigoplus_{s=0}^{k}M_{s+1}(\mathbb Q),
        \qquad \operatorname{Rad}(\mathbb Q_0[M_k])=0;
        \label{eq:dyck-contracted-decomposition}
      \end{equation}
    \item if $k\ge1$, then $M_k^1J_kM_k^1=I_k=J_k\cup\{0\}$ and
      \begin{equation}
        M_k/I_k\cong M_{k-1}.
        \label{eq:dyck-top-peel}
      \end{equation}
      In particular,
      $|M_k|-|M_k/I_k|=(k+1)^2=d_{J_k}^2$.
  \end{enumerate}
\end{proposition}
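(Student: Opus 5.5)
The plan is to compute everything from the explicit composition law of \propref{prop:dyck-monoid},
\[
 x\cdot y=\bigl(\max(a_1,a_2-e_1),\ \max(b_1,b_2+e_1),\ e_1+e_2\bigr),
\]
with saturation to $0$ when $a+b>k$. We use the coordinates $[s;i,j]=(i,s-i,j-i)$, so that $a=i$, $b=s-i$, $e=j-i$. In these coordinates the partial translation has domain $[i,k-s+i]$ and image $[j,k-s+j]$. The order of items will be (2), then (1), then (3), then (4), since the inverse structure and the algebra decomposition both follow from the matrix-unit rule.

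For item (2) we multiply $[s;i,j][s;p,q]$ directly. The new parameters are
\[
 a'=\max(i,p-(j-i)),\qquad b'=\max(s-i,\,s-p+(j-i)).
\]
Hence $a'+b'\ge s$, with equality exactly when $p=j$. When $p=j$ the product is $(i,s-i,q-i)=[s;i,q]$. When $p\ne j$ one of the two maxima strictly exceeds its first argument, so $a'+b'>s$ and the product lies in $I_{s+1}$. This proves \eqref{eq:dyck-principal-matrix-units}. The principal factor is therefore the Brandt semigroup $B_{s+1}$. Its $\RR$-classes are indexed by $i$ and its $\LL$-classes by $j$, and the sandwich matrix is the identity. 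This gives $d_{J_s}=s+1$ and $|J_s|=(s+1)^2$.

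For item (1) we note that $[s;i,j][s;j,i][s;i,j]=[s;i,j]$ by the same computation, and that the idempotents $[s;i,i]$ commute, since they are partial identities on intervals. Alternatively, $M_k$ is a monoid of partial injections closed under inverse, with $\mathtt u^{-1}=\mathtt d$ and the inverse of a partial translation again a partial translation. Either argument shows $M_k$ is inverse.

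For item (3) we apply \lemref{lem:ags-munn-decomposition}. Every nonzero $\JJ$-class $J_s$ is regular with $d_{J_s}^2=|J_s|$. Hence
\[
 \sum_s d_{J_s}^2=|M_k|-1=\dim\mathbb Q_0[M_k].
\]
This is the step needing the most care, since the lemma asserts neither injectivity nor surjectivity of $\pi$. We plan to prove injectivity of $\pi$ by filtration. Take $x\in\ker\pi$ and let $s$ be minimal with $x$ having a nonzero coefficient on $J_s$. Recall from that proof that $\rho_{J_s}(x)=DX_{J_s}C$ with $P=\mathrm I$, so $C$ and $D$ are invertible. Then $\rho_{J_s}(x)=0$ forces $X_{J_s}=0$, because elements strictly above $J_s$ do not occur in $x$ and elements below are killed. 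This contradicts the choice of $s$, so $\ker\pi=0$ and $R=0$. A dimension count then gives surjectivity and \eqref{eq:dyck-contracted-decomposition}. The one subtle point is that elements above $J_s$ act nontrivially under $\rho_{J_s}$. We avoid this by choosing $s$ as the \emph{maximal} $a+b$ level present in $x$, that is, the lowest $\JJ$-class. The coefficient on that class is then the only contribution, modulo the argument that $\rho_{J_s}$ restricted to the span of $J_s$ is the injective map $X\mapsto DXC$. If that ordering is inconvenient, we will instead exhibit the isomorphism directly: send $[s;i,j]$ to the element $\sum_{u\ge s}$ of the corresponding matrix units under Möbius inversion along the chain of levels, which is the standard argument for the algebra of an inverse semigroup.

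For item (4), \propref{prop:dyck-monoid} shows that $J_k$ is the lowest nonzero level, so its ideal is $J_k\cup\{0\}=I_k$. In $M_k/I_k$ the elements are the triples with $a+b\le k-1$, together with zero, and products are saturated when $a+b\ge k$. This is exactly the composition law of $M_{k-1}$, with the identity map on triples. Counting gives
\[
 |M_k|-|M_k/I_k|=|J_k|=(k+1)^2=d_{J_k}^2.
\]
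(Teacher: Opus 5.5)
Items 1, 2 and 4 agree with the paper. In item 2 you multiply with the composition formula, while the paper intersects the level-$s$ intervals; this is the same computation.

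For item 3 you take a different route. The paper cites Munn's theorem for finite inverse-semigroup algebras to identify $A/\operatorname{Rad}(A)$ with $\bigoplus_s M_{s+1}(\mathbb Q)$, and then compares dimensions. You instead prove directly that the joint map $\pi$ of \lemref{lem:ags-munn-decomposition} is injective. You use the formula $\rho_J(x)=DX_JC$ from inside that lemma's proof, together with the invertibility of $C$ and $D$ forced by $P_J=\mathrm I$. The dimension count then gives surjectivity. Your route is self-contained: it needs neither item 1 nor any inverse-semigroup theory. It uses only that every nonzero class is regular with a square invertible sandwich matrix, so in effect it proves the relevant direction of Munn's semisimplicity criterion for this family. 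The paper's route is shorter, at the cost of importing the classical theorem and relying on the lemma's statement rather than a formula from its proof.

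There is, however, an error in your write-up of item 3. Your injectivity argument is correct as you first state it. There $s$ is the \emph{minimal} level in the support of $x$, that is, the \emph{highest} $\JJ$-class present. Then no element strictly above $J_s$ occurs, everything strictly below is annihilated, and $\rho_{J_s}(x)=DX_{J_s}C$.

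Your ``subtle point'' paragraph reverses this and takes the maximal $a+b$ level, which is the lowest class present. With that choice the higher classes in the support contribute nontrivially to $\rho_{J_s}(x)$. Then $\rho_{J_s}(x)=0$ no longer forces $X_{J_s}=0$, and the argument fails. Delete that paragraph, and also the M\"obius-inversion fallback, which is not needed. With the minimal-level choice, item 3 is complete.
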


\begin{proof}
  The displayed interval description shows that $[s;i,j]$ is a partial bijection onto
  $[j,j+k-s]$, whose set-theoretic inverse is $[s;j,i]$.  This proves the first assertion.

  Two level-$s$ intervals have equal length.  If the range of $[s;i,j]$ and the domain of
  $[s;p,q]$ have the same left endpoint, namely $j=p$, they coincide, and composition gives
  $[s;i,q]$.  If $j\ne p$, their intersection is strictly shorter (or empty), so the
  composite is zero or has level strictly greater than $s$, and hence lies in $I_{s+1}$.
  This proves~\eqref{eq:dyck-principal-matrix-units}.  Its sandwich matrix is
  the identity matrix of size $s+1$, of rational rank $s+1$.
  The pairs $(i,j)$ count the $(s+1)^2$ elements of $J_s$.
  This proves~\eqref{eq:dyck-block-square}.

  Munn's decomposition of a finite inverse-semigroup algebra~\cite{Munn55,Munn57} gives
  one matrix block over the maximal subgroup of every nonzero regular $\JJ$-class.
  Here all maximal subgroups are trivial and the block at $J_s$ has size $s+1$.  Therefore
  \[
    \mathbb Q_0[M_k]/\operatorname{Rad}(\mathbb Q_0[M_k])
      \cong\bigoplus_{s=0}^{k}M_{s+1}(\mathbb Q).
  \]
  The right-hand side has dimension
  $\sum_{s=0}^{k}(s+1)^2=|M_k|-1=\dim_{\mathbb Q}\mathbb Q_0[M_k]$.
  The radical consequently has dimension zero, proving
  \eqref{eq:dyck-contracted-decomposition}.

  Finally, $I_k=J_k\cup\{0\}$ by the $\JJ$-chain description in
  Proposition~\ref{prop:dyck-monoid}.  Send every level-$s$ triple with $s<k$ to the same
  triple in $M_{k-1}$ and send $I_k$ to zero.  The multiplication formula in the proof of
  Proposition~\ref{prop:dyck-monoid} is independent of the height bound: the height only
  decides whether the resulting level survives or becomes zero.  Hence this is a
  surjective homomorphism with zero fibre $I_k$, and it is injective on
  $M_k\setminus I_k$ because distinct surviving triples remain distinct.  It therefore
  induces the Rees-quotient isomorphism~\eqref{eq:dyck-top-peel}.
\end{proof}

Thus the Dyck family saturates $d_J^2\le|J|$, and along
$M_k\to M_{k-1}\to\cdots\to M_0$ every deepest-apex peel attains the corresponding
$d_J^2$ carrier drop.  It has no radical to lift through.  The fixed
$2^{\Theta(N^{1/3})}$ lower bound therefore already occurs in a radical-free family
attaining both equalities, whereas the additional logarithmic loss in the power of $L(n)$
in \thmref{thm:ags-cuberoot-size} is not witnessed by this family.

\paragraph{One Brandt factor.}
The appearance of large Brandt principal factors in
Proposition~\ref{prop:dyck-munn-peel} might suggest that one such factor
already carries the exponential lower bound.  In fact the opposite is true.
For $k\geq2$, put $[k]=\{1,\ldots,k\}$ and write
\[
  B_k^1=\{\identity,0\}\cup\{e_{ij}:i,j\in[k]\},
  \qquad
  e_{ij}e_{\ell r}=
  \begin{cases}
    e_{ir},&j=\ell,\\
    0,&j\ne\ell,
  \end{cases}
\]
where $\identity$ is the identity and $0$ is absorbing.

\begin{proposition}[Brandt products reduce to the two-state case]
\label{prop:brandt-two-colour}
  Put
  \[
    q_k(n)=Q_{1/3}(\operatorname{Prod}_{B_k^1,n}).
  \]
  There is a universal constant $C$ such that, for every $k\geq2$ and
  $n\geq1$,
  \[
    q_2(n)\leq q_k(n)
    \leq C\bigl(q_2(n)+\sqrt n\bigr).
  \]
  Consequently $q_k(n)=\Theta(q_2(n))$ uniformly in $k$.  In particular,
  there is an absolute constant $A$ such that, uniformly for all $k\geq2$,
  \[
    \Omega(\sqrt n)
    \leq Q_{1/3}(\operatorname{Prod}_{B_k^1,n})
    \leq O\!\left(\sqrt n\,(\log(n+2))^A\right).
  \]
\end{proposition}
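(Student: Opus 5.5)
The lower bound $q_2(n)\le q_k(n)$ should be immediate. The elements $\identity,0,e_{11},e_{12},e_{21},e_{22}$ form a submonoid of $B_k^1$ isomorphic to $B_2^1$, with the same multiplication. An algorithm for $\operatorname{Prod}_{B_k^1,n}$ therefore computes $\operatorname{Prod}_{B_2^1,n}$ when it is run on inputs over this sub-alphabet, with no query overhead.

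For the upper bound I would use a random two-colouring. Choose $c:[k]\to\{1,2\}$ uniformly at random and define $\phi_c:B_k^1\to B_2^1$ by $\identity\mapsto\identity$, $0\mapsto0$, and $e_{ij}\mapsto e_{c(i)c(j)}$. This map is not a homomorphism, but it has the following one-sided property, which is all the argument needs.

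Write the nonidentity letters of $x$, if none is zero, as $e_{i_1j_1},\ldots,e_{i_tj_t}$ in input order. Then
\[
  [x]=e_{i_1j_t}\quad\text{if }j_r=i_{r+1}\text{ for every }r<t,
  \qquad [x]=0\quad\text{otherwise}.
\]
If $[x]\neq0$, every consecutive pair matches, so the images also match and $[\phi_c(x)]=e_{c(i_1)c(j_t)}\ne0$. If $[x]=0$ and there is no zero letter, some pair has $j_r\ne i_{r+1}$. Then $c(j_r)\ne c(i_{r+1})$ with probability $1/2$, and in that event $[\phi_c(x)]=0$.

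The algorithm has three steps.
\begin{enumerate}
  \item Use minimum finding, or Grover search, with $O(\sqrt n)$ queries to decide whether every letter is $\identity$ and whether some letter is $0$. This handles the products $\identity$ and $0$ arising from these cases.
  \item Otherwise, find the first and the last nonidentity positions with $O(\sqrt n)$ queries and read them, which gives $i_1$ and $j_t$.
  \item Run the $B_2^1$ product algorithm on $\phi_c(x)$ for a constant number of independent colourings, each call amplified to small constant error. Each query to $\phi_c(x_i)$ is simulated with two queries to $x_i$, as in the proof of \propref{prop:division-monotonicity}. Output $0$ if any call returns $0$, and $e_{i_1j_t}$ otherwise.
\end{enumerate}
The error is one-sided apart from the bounded subroutine errors. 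Two or three colourings drive the total error below $1/3$, so the cost is $C(q_2(n)+\sqrt n)$ with $C$ independent of $k$. The colourings are classical randomness and are chosen before any query.

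The final displayed bounds then follow. $B_2^1$ is a fixed aperiodic monoid with six elements, so \thmref{thm:ags-cuberoot-size}, or \thmref{thm:main-ags}, gives $q_2(n)=O(\sqrt n\,(\log(n+2))^A)$ for an absolute $A$. The $\Omega(\sqrt n)$ lower bound comes from \thmref{thm:fixed-monoid-trichotomy}, using the letter $e_{11}\ne\identity$.

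I expect the main obstacle to be the justification of the one-sided property, specifically that a detected mismatch forces the image product to be zero. This holds because in $B_2^1$ a chain of matrix units is zero exactly when some consecutive pair mismatches, and no zero letter can be created or cancelled by $\phi_c$.
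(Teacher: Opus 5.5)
Your proposal is correct and follows essentially the same route as the paper: a random two-colouring $e_{ij}\mapsto e_{c(i)c(j)}$ with the one-sided zero-detection property, a constant number of independent colourings with amplified $B_2^1$ calls, a search for the first and last matrix-unit letters to recover $e_{i_1j_t}$, and restriction to a copy of $B_2^1$ for the lower bound. The differences are cosmetic. You test for explicit zero letters separately, whereas the paper lets $\Phi_h(0)=0$ handle them. You also leave implicit that $q_2(n)=\Omega(\sqrt n)$ absorbs the additive $\sqrt n$ term, which is what gives $q_k=\Theta(q_2)$.
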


\begin{proof}
  Suppose first that the input contains no zero letter.  Delete its identity
  letters and write the remaining word as
  \[
    e_{i_1j_1}e_{i_2j_2}\cdots e_{i_tj_t}.
  \]
  Its product is $\identity$ if $t=0$.  If $t>0$, its product is nonzero
  precisely when
  \[
    j_s=i_{s+1}\qquad(1\leq s<t),
  \]
  in which case it equals $e_{i_1j_t}$.

  Choose $h:[k]\to[2]$ by taking $h(1),\ldots,h(k)$ independently and
  uniformly from $[2]$, and encode each queried letter locally by
  \[
    \Phi_h(\identity)=\identity,
    \qquad \Phi_h(0)=0,
    \qquad \Phi_h(e_{ij})=e_{h(i),h(j)}\in B_2^1.
  \]
  This map need not be a homomorphism.  Nevertheless, if the original word
  has nonzero product, its encoded word has nonzero product for every $h$.
  If the original product is zero because
  $j_s\ne i_{s+1}$ for some $s$, then
  $\Pr_h[h(j_s)\ne h(i_{s+1})]=1/2$, and on this event the encoded word has
  zero product.  An explicit zero
  letter is preserved by every colouring.  Take six independent colourings
  and amplify each $B_2^1$ product call to error at most $1/100$.  Declare
  the original product zero if any call returns zero.  On a nonzero input
  the error is at most $6/100$; on a zero input it is at most
  $2^{-6}+6/100$.  Thus this is a bounded-error zero test.  A coherent query
  to an encoded letter is simulated using two original queries: query the
  original letter into a work register, compute its encoded value
  reversibly, and then unquery the work register.

  If the zero test reports nonzero, quantum first- and last-marked search
  finds the first and last matrix-unit letters using $O(\sqrt n)$ queries.
  If there is no matrix-unit letter, the product is $\identity$; otherwise,
  if the two extremal letters are $e_{ij}$ and $e_{\ell r}$, the product is
  $e_{ir}$.  Constant amplification of these searches proves the upper
  bound with bounded error.

  For the lower bound, the matrix units on any fixed two indices, together
  with $\identity$ and $0$, form a copy of $B_2^1$ inside $B_k^1$.  Restricting
  the input alphabet to this copy gives $q_2(n)\leq q_k(n)$.  Restriction
  further to $\{\identity,0\}$ is unstructured search, so
  $q_2(n)=\Omega(\sqrt n)$ and the additive term in the upper bound is
  absorbed.  Finally, $B_2^1$ is a fixed nontrivial aperiodic monoid, so the
  final displayed estimate follows from
  \thmref{thm:fixed-monoid-trichotomy}.
\end{proof}

The matching breadth lower bound for commutative monoids does not extend
to this family.  The word
$e_{12}e_{23}\cdots e_{k-1,k}e_{k1}$ has product $e_{11}$ and no proper
product-preserving subword, so $\beta_{B_k^1}(B_k^1)\geq k$.
Taking $k=n$, a putative lower bound
$\Omega(\min\{n,\sqrt{n\beta}\})$ would give $\Omega(n)$ queries,
contradicting the uniform upper bound above.

Consequently no lower bound of the form $\Omega(c^k\sqrt n)$, for a fixed
$c>1$, can hold uniformly as $k$ grows.  This conclusion uses the present
value-oracle model, in which one query reveals both indices of a matrix unit.
One isolated Brandt factor therefore does not explain the Dyck lower
bound.  The Dyck family has a chain of $k+1$ such factors; the comparison
does not assert that an arbitrary chain of Brandt factors is hard.
Indeed $B_k^1$ has only the
three $\JJ$-classes $\{\identity\}$, $\{e_{ij}:i,j\in[k]\}$, and $\{0\}$,
so $\dJ(B_k^1)=2$ independently of $k$, although the Munn degree of its
matrix-unit class is $k$.  Moreover, the fact that a Brandt principal factor
divides $M_k$ gives a reduction from its product problem \emph{to} the
product problem in $M_k$, which is the wrong direction for transferring
the Dyck lower bound to the factor.

\paragraph{The worst-case bound.}
For $N\ge1$, let $\mathfrak A_N$ be a fixed set of representatives of the isomorphism
classes of finite aperiodic monoids of order at most $N$.  For integers $N,n\ge1$, define
the worst-case aperiodic envelope
\[
  \mathcal Q_{\rm ap}(N,n)=
  \max\left\{
    Q_{1/3}(\operatorname{Prod}_{M,n}):
    M\in\mathfrak A_N
  \right\}.
\]

\begin{theorem}[Worst-case aperiodic envelope]
\label{thm:aperiodic-envelope}
  There is a universal constant $C>0$ such that, for all $N,n\ge1$,
  \[
    \mathcal Q_{\rm ap}(N,n)
    \le
    \min\left\{n,
      \sqrt n\,2^{C N^{1/3}}
      L(n)^{C N^{1/3}\log^{1/3}(N+2)}
    \right\}.
  \]
  Moreover, there are universal constants $a,b>0$ and an integer $N_0$ such that,
  whenever
  \[
    N_0\le N\le a\bigl(\log(n+2)\bigr)^3,
  \]
  one has
  \[
    \mathcal Q_{\rm ap}(N,n)\ge\sqrt n\,2^{bN^{1/3}}.
  \]
  Thus throughout this regime the fixed exponential $2^{\Theta(N^{1/3})}$ is optimal;
  only the power of $L(n)$ remains unmatched.
\end{theorem}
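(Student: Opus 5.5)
The upper bound will follow from the unabsorbed form \eqref{eq:ags-cuberoot-unabsorbed} of \thmref{thm:ags-cuberoot-size}, with the powers of $N$ kept separate from the powers of $L(n)$ rather than absorbed.

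\emph{Upper bound.} Write $\sigma=(N\log(N+2))^{1/3}$. Then \eqref{eq:ags-cuberoot-unabsorbed} gives
\[
 \sqrt n\,(NL(n))^{C\sigma}=\sqrt n\,N^{C\sigma}L(n)^{C\sigma}.
\]
The second factor is already $L(n)^{CN^{1/3}\log^{1/3}(N+2)}$. The first factor, $N^{C\sigma}=2^{C\sigma\log_2 N}$, carries an extra $\log^{4/3}N$ in the exponent, so it does not directly give $2^{CN^{1/3}}$. I will handle this with the same read-all case split used in that theorem, but with a threshold tuned to this envelope.
\begin{itemize}
\item If $\log_2 n\le N^{1/3}$, then $n\le\sqrt n\,2^{N^{1/3}/2}$, and reading the whole input suffices.
\item If $\log_2 n> N^{1/3}$, then $\log N\le 3\log L(n)$, so
\[
 N^{C\sigma}=L(n)^{O(\sigma\log N/\log L(n))}.
\]
\end{itemize}
In the second case a single power $\log N$ in the exponent converts into $\log L(n)$, which is what the statement allows. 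The remaining issue is that $\sigma\log N/\log L(n)$ is only $O(\sigma)$ once $\log L(n)\gtrsim\log N$, and this does hold in that case. So the bound becomes $L(n)^{O(\sigma)}$, which is within the stated form. Every $N$ is covered, since the monoids in $\mathfrak A_N$ have order $\le N$ and the bound is monotone in $N$.

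\emph{Lower bound.} This comes from the Dyck family via \corref{cor:dyck}. Given $N\ge N_0$, choose the largest $k$ with $|M_k|\le N$. By \propref{prop:dyck-monoid}, $|M_k|=\Theta(k^3)$, so $k=\Theta(N^{1/3})$. The condition $N\le a\log^3(n+2)$ with small $a$ ensures $1\le k\le\log_2 n$. I also need $n\ge2$, which follows from $N\ge N_0$ together with the hypothesis. \corref{cor:dyck} then gives $\Omega(c^k\sqrt n)$. Since $c^k=2^{\Omega(N^{1/3})}$, the constant in $\Omega$ can be absorbed by slightly shrinking $b$, once $N_0$ is large enough that $2^{b'N^{1/3}}$ dominates that constant.

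\emph{Main obstacle.} The delicate step is the bookkeeping in the upper bound. I must make sure that the $\log^{1/3}$ factor appears only on the $L(n)$ exponent and never on the $2^{N^{1/3}}$ factor. If the direct case split leaves a residual $\log^{4/3}N$ in the exponent of $2$, I would fall back on rerunning Proposition~\ref{prop:ags-cuberoot-recurrence} with the polynomial-in-$N$ overheads tracked separately, that is, with overheads of the form $N^{O(1)}$ times $L^{O(1)}$. I would then choose $t\approx (N\log N)^{1/3}$ and repeat the read-all split exactly as above.
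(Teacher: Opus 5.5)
Your proposal is correct and matches the paper's proof. The lower bound is the same instantiation of \corref{cor:dyck}, with $k=\Theta(N^{1/3})$ chosen so that $|M_k|\le N$ and $k\le\log_2 n$. Your read-all case split for the upper bound is exactly the absorption already carried out in the proof of \thmref{thm:ags-cuberoot-size}; the paper simply cites the absorbed form \eqref{eq:ags-cuberoot-final}, which is stronger than the envelope bound since it does not even need the factor $2^{CN^{1/3}}$, so the fallback in your ``main obstacle'' paragraph is unnecessary.
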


\begin{proof}
  The upper bound is \thmref{thm:ags-cuberoot-size}, applied to each monoid of order
  at most $N$.

  For the lower bound put $k=\lfloor N^{1/3}\rfloor-2$.  After increasing $N_0$, we have
  $k=\Omega(N^{1/3})$, while Proposition~\ref{prop:dyck-monoid} gives
  \[
    |M_k|=\sum_{j=1}^{k+1}j^2+1\le(k+2)^3\le N.
  \]
  Choose $a>0$ small enough that
  $N\le a(\log(n+2))^3$ implies $k\le\log_2 n$.  Corollary~\ref{cor:dyck} then gives
  \[
    Q_{1/3}(\operatorname{Prod}_{M_k,n})
      =\Omega\!\left(\sqrt n\,2^{bN^{1/3}}\right)
  \]
  for a sufficiently small universal $b>0$.  Decreasing the exponent constant and
  increasing $N_0$ absorbs
  the implicit multiplicative constant.
\end{proof}

\begin{corollary}[The polylogarithmic-size transition]
\label{cor:aperiodic-envelope-transition}
  For every fixed $0<\epsilon<1$, there is a constant $C_\epsilon$ such that, for all
  sufficiently large $n$,
  \[
    N\ge C_\epsilon\bigl(\log(n+2)\bigr)^3
    \quad\Longrightarrow\quad
    \mathcal Q_{\rm ap}(N,n)=\Omega_\epsilon(n^{1-\epsilon}).
  \]
  In addition, for $n\ge2$,
  \[
    N\ge\lfloor n/2\rfloor+1
    \quad\Longrightarrow\quad
    \mathcal Q_{\rm ap}(N,n)=\Theta(n).
  \]
\end{corollary}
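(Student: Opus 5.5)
The plan is to derive both assertions from \thmref{thm:aperiodic-envelope}, \corref{cor:dyck}, and the trivial read-all bound, choosing in each regime a Dyck monoid $M_k$ that fits inside the size budget $N$. Since $\mathcal Q_{\rm ap}(N,n)$ is monotone in $N$ (the representative sets $\mathfrak A_N$ are nested), it suffices to exhibit one aperiodic monoid of order at most $N$ whose product problem meets the claimed lower bound.

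For the first assertion, fix $0<\epsilon<1$ and use part (iii) of \corref{cor:dyck}. Take $k=\lceil c'_\epsilon\log_2 n\rceil$, where $c'_\epsilon$ is the constant from \thmref{thm:dyck-lb}. \propref{prop:dyck-monoid} gives $|M_k|\le(k+2)^3+1$. For large $n$ this is at most $C_\epsilon(\log(n+2))^3$ once $C_\epsilon$ is chosen to dominate $(c'_\epsilon/\ln 2+1)^3$ plus slack. Hence $M_k\in\mathfrak A_N$ whenever $N\ge C_\epsilon(\log(n+2))^3$. The corollary then yields $\Omega_\epsilon(n^{1-\epsilon})$ queries at length $n$, and this transfers to $\mathcal Q_{\rm ap}(N,n)$. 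The only bookkeeping is the conversion between $\log_2$ and natural logarithms, which is absorbed into $C_\epsilon$.

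For the second assertion, the upper bound $\mathcal Q_{\rm ap}(N,n)\le n$ holds by reading the input. For the lower bound, I would take a nonaperiodic-free but long-counting example. The cleanest choice is the capped counter $C_{K+1}$ with $K=\lfloor n/2\rfloor$. It is commutative and aperiodic of order $K+1\le N$, and \thmref{thm:commutative-beta} (or \thmref{thm:aperiodicity-index-lower}, since $\iota(C_{K+1})=K$) gives $\Omega(\min\{n,\sqrt{nK}\})=\Omega(n)$. \thmref{thm:aperiodicity-index-lower} applies directly with $\min\{n,\iota\}\ge\lfloor n/2\rfloor$, which is $\Omega(n)$ for $n\ge2$.

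The main obstacle is minor. It is to ensure that the constant $C_\epsilon$ is chosen uniformly, so that the Dyck size bound fits for all sufficiently large $n$, and that the hypothesis range of \thmref{thm:dyck-lb} ($n$ even) is handled. Padding with identities, as in the proof of \corref{cor:dyck}, already takes care of the parity requirement.
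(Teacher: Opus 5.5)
Your proposal is correct and follows the paper's proof. The first part takes the Dyck monoid $M_k$ from Corollary~\ref{cor:dyck}(iii) and uses monotonicity of $\mathcal Q_{\rm ap}$ in $N$. The second part uses the capped counter $C_{K+1}$ with $K=\lfloor n/2\rfloor$ for the lower bound and the read-all bound for the upper bound. You cite \thmref{thm:aperiodicity-index-lower} or \thmref{thm:commutative-beta} where the paper cites \propref{prop:jtrivial-log-fails}, but this is immaterial, since that proposition is itself an application of \thmref{thm:commutative-beta}.
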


\begin{proof}
  For the first statement, take the monoid $M_k$ supplied by
  Corollary~\ref{cor:dyck}(iii).  Its order is $O_\epsilon(\log^3n)$,
  and its product problem requires $\Omega_\epsilon(n^{1-\epsilon})$ queries.
  The claim follows because $\mathcal Q_{\rm ap}$ is monotone in $N$.

  For the second statement take $K=\lfloor n/2\rfloor$.  The capped counter
  $C_{K+1}=\{0,1,\ldots,K\}$ has order $K+1$, is aperiodic, and
  Proposition~\ref{prop:jtrivial-log-fails} gives
  $Q_{1/3}(\operatorname{Prod}_{C_{K+1},n})=\Omega(\sqrt{nK})=\Omega(n)$.
  Reading all inputs gives the matching upper bound.
\end{proof}

\begin{remark}[Where the logarithmic loss occurs]
  A large apex of degree $d_J$ owns at least $d_J^2$ physical monoid
  elements, which is the source of the cube-root balance.  The fixed-apex
  peel itself costs only a polynomial factor.  The extra
  $\log^{1/3}(N+2)$ in the exponent of $L(n)$ comes from paying the
  $O(\log N)$ radical-doubling lift at every large-apex descent.  A
  coefficient-one radical lift shared across all peels would remove this
  loss; it is not used here.
\end{remark}

\clearpage
\section{AI methodology}\label{sec:ai-methodology}

We use the categories proposed in the AI and TCS working-group
report~\cite[Section~2.1]{AITCS26}.
\begin{center}
\begin{tabular}{@{}ll@{}}
\textbf{Activity} & \textbf{AI used} \\
Asking the research question & Yes \\
Coming up with the approach & Yes \\
Proof development & Yes \\
Writing and exposition & Yes \\
Checking for bugs & Yes \\
Other supporting tasks & Yes
\end{tabular}
\end{center}

\paragraph{Details.}
This research began as an attempt to generalize the techniques in our
earlier paper on quantum divide and conquer~\cite{ABBLS23}.  We recognized
the monoid product problem as a general setting capturing a large class
of divide-and-conquer algorithms.  This led us to investigate what the
AGS bound implies quantitatively as a function of $|M|$, and when monoid
structure can improve it.

We had already carried out substantial work on the commutative idempotent
case before we found AI models capable of independently proving results
in this area.  We supplied our research notes to ChatGPT (GPT-5.6 Sol),
which fully resolved that case.

At the same time, we were building a quantum query complexity library in
Lean~4 with Claude Opus~4.8 and Claude Fable~5.  Finding an explicit
adversary dual solution for maximum with cost $O(\sqrt n)$ proved
surprisingly difficult.  GPT-5.6 produced such a construction, which was very nice.  We
recognized that the method could be generalized and, working with the
model, developed the essential-width theorem
(\thmref{thm:essential-width}).  This led to the resolution of the full
commutative case (\thmref{thm:commutative-beta}).

As a generalization of the best time to buy and sell stock problem,
Claude Fable~5 suggested upper unitriangular matrix multiplication over
the max-plus tropical semiring.  Close collaboration with GPT-5.6 eventually
led to a logarithmic exponent bound for this case.  A separate thread of work focused on
product breadth, with the initial aim of bounding query complexity for
general $\JJ$-trivial monoids.  This investigation identified insertion
monotonicity as a useful additional property.  This property holds in
stably ordered monoids with identity as the minimum element and abstracts
the role of the zero-weight diagonal entries in the tropical setting.
The general query bound emerged through further interactive work on
sampling short subwords and combining their positions in the original
input order.  Refinements eventually gave an exponent logarithmic in
product breadth, recovering the $O(\log k)$ dependence of our specialized
bound for $k$-by-$k$ unitriangular matrices.

The cube-root bound was also proved by GPT-5.6 over many sessions of
prompting and guidance from the authors.

We have developed over 100K lines of Lean~4 code for this project,
including quantum query complexity~\cite{palomar-2026-09-29-000001-v1}
and sunflower bounds~\cite{Lee26} libraries, both registered on Palomar.
The Lean formalization specific to this paper is available on GitHub and archived on
Zenodo~\cite{MonoidProduct26}.  It covers every named result in the paper, with the
qualifications described below.  The repository's correspondence table records, for each numbered
statement, the Lean declaration and its precise form.
Some intermediate results are expressed in Lean as bounds on adversary
dual costs rather than directly on quantum query complexity.  The two cost
estimates discussed in \remarkref{rem:ordered-formalization} are verified
within the iterative proof of \thmref{thm:ordered-beta-log-product}, rather
than stated as standalone lemmas.
The formalization was developed primarily with Claude models, most
recently Opus~5.5.

The Lean development also informed revisions to the
paper.  For example, comparison of the AGS estimate with its Lean
formulation led us to correct the accounting of the base-case cost and
revise the displayed bound in \thmref{thm:main-ags}.  Sometimes the process of formalization in Lean
led to a simpler proof than the original argument in the paper, which would then
feedback into improving the paper presentation of the proof.

AI tools also helped draft proof overviews, reorganize arguments, develop
examples, search the literature, and check claims against source papers.
We have extensively revised AI written drafts to improve readability and clarity.
We take full responsibility for the material in this paper.

\end{document}